\documentclass{birkjour}
\usepackage{amsthm}
\usepackage[T1]{fontenc}
\usepackage{amsmath}
\usepackage{amssymb}
\usepackage{dsfont}
\usepackage{amsxtra}
\usepackage{hyperref}
\usepackage[utf8]{inputenc}
\usepackage[english]{babel}
\usepackage[capitalize]{cleveref}
\usepackage{microtype}

\usepackage[shortlabels]{enumitem}
\SetEnumitemKey{:arabic}{
	label=(\arabic*)}
\setlist[enumerate]{:arabic}

\usepackage{bm}
\usepackage{mathrsfs}
\usepackage{mathtools,empheq}
\usepackage{etoolbox,xcolor}
\usepackage{xparse,tikz,ragged2e}

\newtheorem{theo}{Theorem}[section]
\newtheorem{cor}[theo]{Corollary}
\newtheorem{lem}[theo]{Lemma}
\newtheorem{prop}[theo]{Proposition}
\theoremstyle{definition}
\newtheorem{de}[theo]{Definition}
\theoremstyle{remark}
\newtheorem{rem}[theo]{Remark}

\numberwithin{equation}{section}

\crefname{lem}{Lemma}{Lemmas}
\crefname{theo}{Theorem}{Theorems}
\crefrangelabelformat{equation}{(#3#1#4)-(#5#2#6)}

\DeclareMathOperator{\tr}{Tr}
\DeclareMathOperator{\supp}{supp}

\DeclarePairedDelimiter\myp()
\DeclarePairedDelimiter\myt\{\}
\DeclarePairedDelimiter\myb[]
\DeclarePairedDelimiter\abs\lvert\rvert
\DeclarePairedDelimiter\norm\lVert\rVert

\newcommand{\id}{\, \mathrm{d}}
\newcommand{\ketbra}[2]{\lvert #1 \rangle \langle #2 \rvert}

\NewDocumentCommand\normt{ s o m m }{
	\IfBooleanTF{#1}{
		\norm*{#3}
	}{
		\IfNoValueTF{#2}{
			\norm{#3}
		}{
			\norm[#2]{#3}
		}
	}
	_{#4}
}

\DeclarePairedDelimiterX\innerp[2]\langle\rangle{
	#1,#2
}

\let\Set\relax

\providecommand\given{}
\newcommand\SetSymbol[1][]{%
	\nonscript\:#1\vert
	\allowbreak
	\nonscript\:
	\mathopen{}}
\DeclarePairedDelimiterX\Set[1]\{\}{%
	\renewcommand\given{\SetSymbol[\delimsize]}
	#1
}

\newcommand\sssup[1]{
	{\scriptscriptstyle\mathrm{#1}}
}

\newcommand\nonarg{{}\cdot{}}

\NewDocumentCommand\For{ s m }{
	\IfBooleanF{#1}{\quad}
	\quad
	\text{#2}
}

\begin{document}

\title[Fermionic systems in magnetic fields]{Semi-classical limit of confined fermionic systems in homogeneous magnetic fields}
\date{\today}

\author[S. Fournais]{Søren Fournais}
\address{Department of Mathematics, Aarhus University, Ny Munkegade 118, DK-8000 Aarhus C, Denmark} 
\email{fournais@math.au.dk}

\author[P.S. Madsen]{Peter S. Madsen}
\address{CNRS \& CEREMADE, Paris-Dauphine University, PSL University, 75016 Paris, France} 
\email{madsen@mceremade.dauphine.fr}

\begin{abstract}
	We consider a system of $ N $ interacting fermions in $ \mathbb{R}^3 $ confined by an external potential and in the presence of a homogeneous magnetic field.
	The intensity of the interaction has the mean-field scaling $ 1/N $.
	With a semi-classical parameter $ \hbar \sim N^{-1/3} $, we prove convergence in the large $ N $ limit to the appropriate Magnetic Thomas-Fermi type model with various strength scalings of the magnetic field.
\end{abstract}

\subjclass{Primary 99Z99; Secondary 00A00}
\keywords{Spectral theory, semi-classical analysis, mean-field limits, fermionic systems, magnetic fields.}
\date{today}
\maketitle

\setcounter{tocdepth}{1}
\tableofcontents

\section{Introduction and main results}
\label{sec:intro}
We consider a system of $ N $ fermionic particles in $ \mathbb{R}^3 $ with an exterior potential $ V $, and with the particles interacting pairwise through a potential $ w $.
The system is in the presence of a homogeneous magnetic field pointing along the $ z $-direction, i.e. of the form $ \bm{B} = \myp{0,0,b} $ for some $ b > 0 $.
That is, we can take the magnetic vector potential to be $ b A \myp{x} = \frac{b}{2} \myp{-x_2,x_1,0} $.

\subsection{The quantum mechanical model}
Given parameters $ \hbar, b > 0 $, we consider the mean-field Hamiltonian operator
\begin{equation}
\label{eq:hamiltonian}
	H_{N,\hbar,b} := \sum\limits_{j=1}^N \big( \myp{ \bm{\sigma} \cdot \myp{ -i \hbar \nabla_j + bA \myp{x_j}} }^2 + V\myp{x_j} \big) + \frac{1}{N} \sum\limits_{j < k}^N w \myp{x_j-x_k},
\end{equation}
where $ \bm{\sigma} = \myp{\sigma_1,\sigma_2,\sigma_3} $ is the vector of Pauli spin matrices
\begin{equation*}
	\sigma_1 =	\begin{pmatrix}
					0 & 1 \\
					1 & 0
				\end{pmatrix},
	\quad 
	\sigma_2 = \begin{pmatrix}
					0 & -i \\
					i & 0
	\end{pmatrix},
	\quad
	\sigma_3 = \begin{pmatrix}
					1 & 0 \\
					0 & -1
	\end{pmatrix}.
\end{equation*}
Note that the Pauli operator can also be written in the equivalent form
\begin{align}
	\myp{ \bm{\sigma} \cdot \myp{ -i \hbar \nabla + bA} }^2
	&= \myp{-i\hbar \nabla + bA}^2 \mathds{1}_{\mathbb{C}^2} + \hbar \bm{\sigma} \cdot \bm{B} \nonumber \\
	&= \myp{-i\hbar \nabla + bA}^2 \mathds{1}_{\mathbb{C}^2} + \hbar b \sigma_3.
\label{eq:paulieq}	
\end{align}
Since we are dealing with fermions, the operator $ H_{N,\hbar,b} $ must be restricted to the subspace $ \bigwedge\nolimits^N L^2 \myp{\mathbb{R}^3 ; \mathbb{C}^2} \subseteq L^2 \big( \mathbb{R}^{3N} ; \mathbb{C}^{2^N} \big) $ of anti-symmetric wave functions.
The anti-symmetry is due to the Pauli exclusion principle, stating that two identical fermionic particles cannot occupy the same quantum state.
The fact that the system is in a mean-field scaling is expressed by the prefactor $ 1/N $ in front of the interaction.
In mathematics, many-body fermionic systems in strong homogeneous magnetic fields have been considered before \cite{LieSolYng-06,LieSolYng-94,LieSolYng-95,Seiringer-01,HaiSei-01,HaiSei-01b} with Coulomb interactions (See also \cite{Ivrii-96,Ivrii-96b,Ivrii-97,Ivrii-99} and \cite{Sobolev-94} for more precise asymptotic results), and also at positive temperature \cite{HauYng-04} in the context of pressure functionals.
For references to the physics literature, see e.g. \cite{LieSolYng-06,LieSolYng-94}.
\begin{rem}[Relation between parameters]\label{rem:params}
	If the system is confined to a bounded domain, then the kinetic energy satisfies the usual (non-magnetic) Lieb-Thirring inequality
	\begin{align*}
		\MoveEqLeft[3] \innerp[\Big]{ \Psi }{ \sum\limits_{j=1}^N \myp{ \bm{\sigma} \cdot \myp{ -i \hbar \nabla_j + bA \myp{x_j}} }^2 \Psi } \\
		&\geq c \hbar^2 \int_{\mathbb{R}^3} \myp[\big]{ \rho_{\Psi}^{\myp{1}} \myp{x} }^{\frac{5}{3}} \id x - \hbar b N 
		\geq \tilde{c} \hbar^2 N^{\frac{5}{3}} - \hbar b N,
	\end{align*}
	(here the boundedness of the domain is used to get the second inequality) where $ \rho_{\Psi}^{\myp{1}} $ is the one-particle reduced position density of the normalized wave function	$ \Psi \in \bigwedge\nolimits^N L^2 \myp{\mathbb{R}^3 ; \mathbb{C}^2} $, defined in \eqref{eq:kpartdensity} below.
	This means, in case the magnetic field strength is small or vanishing (more precisely, if $ \hbar b \to 0 $), that we need to take $ \hbar $ of order 
	\begin{equation}
	\label{eq:scalingweak}
		\hbar \sim N^{-\frac{1}{3}},
	\end{equation}
	in order for the terms in \eqref{eq:hamiltonian} to be of the same order in $ N $.
	In the opposite case with a strong magnetic field (that is,	$ \hbar b \gg 1 $), one expects, by the magnetic Lieb-Thirring
	inequality (\eqref{eq:genlt} below),
	\begin{equation*}
		\innerp[\Big]{ \Psi}{ \sum\limits_{j=1}^N \myp{ \bm{\sigma} \cdot \myp{ -i \hbar \nabla_j + bA \myp{x_j}} }^2 \Psi }
		\gtrsim c \hbar^2 \frac{1}{\myp{b/\hbar}^2} \int \myp{ \rho_{\Psi}^{\myp{1}} \myp{x} }^3 \id x.
	\end{equation*}
	This inequality is not rigorous, but it is reasonable in a strong magnetic field where all particles are confined to the lowest Landau
	level.
	Assuming the inequality to hold, we get
	\begin{equation*}
		\innerp[\Big]{ \Psi}{ \sum\limits_{j=1}^N \myp{ \bm{\sigma} \cdot \myp{ -i \hbar \nabla_j + bA \myp{x_j}} }^2 \Psi } \geq \tilde{c} \frac{\hbar^4}{b^2} N^3
		= \tilde{c} \frac{\myp{\hbar^3 N}^2}{\myp{\hbar b}^2} N,
	\end{equation*}
	so in order to have energy balance we take in this case
	\begin{equation}
	\label{eq:scalingstrong}
		\hbar^3 N \sim \hbar b.
	\end{equation}
\end{rem}

Based on the observations of Remark~\ref{rem:params}, we introduce a new parameter $ \beta \geq 0 $ and define $ \hbar $ and $ b $ by
\begin{equation}
\label{eq:scaling0}
	\hbar := N^{-\frac{1}{3}} \myp{1+\beta}^{\frac{1}{5}},
	\qquad
	b := N^{\frac{1}{3}} \beta \myp{1+\beta}^{-\frac{3}{5}},
\end{equation}
or, equivalently, by the relations
\begin{equation}
\label{eq:scaling}
	\hbar b = \beta \myp{1+\beta}^{-\frac{2}{5}},
	\qquad \hbar^3 N = \myp{1+\beta}^{\frac{3}{5}},
	\qquad \frac{b}{\hbar^2 N} = \frac{\beta}{1+\beta}.
\end{equation}
This scaling convention interpolates between the two extreme cases \eqref{eq:scalingweak} ($ \beta = 0 $) and \eqref{eq:scalingstrong}
($ \beta \gg 1 $).
The notation is chosen to fit the notation in \cite{LieSolYng-94}, see also \cref{rem:mean-field} below.

In this paper we analyze the semi-classical limit of \eqref{eq:hamiltonian} as the number of particles tends to infinity and $ \hbar $ tends to zero.  
In light of \eqref{eq:scaling0} we must thus require that
\begin{equation}
\label{eq:scaling2}
	\lim_{N \to \infty} N^{- \frac{1}{3}} \beta^{\frac{1}{5}} = 0,
\end{equation}
in order to stay in the semi-classical regime, $\hbar \rightarrow 0$.
Being in the semi-classical regime is essential for our analysis since the extraction of semi-classical limiting measures depends on it.
In \cite{LieSolYng-06} limit models for the non-semi-classical regions, where \eqref{eq:scaling2} fails, are deduced.
At present, we do not know how to obtain such results with our (de Finetti type) techniques.

When $ \hbar $ and $ b $ satisfy the scaling convention \eqref{eq:scaling0}, we will instead denote the Hamiltonian \eqref{eq:hamiltonian} by $ H_{N,\beta} $, and the ground state energy of $ H_{N,\beta} $ restricted to $ \bigwedge^N L^2 \myp{\mathbb{R}^3;\mathbb{C}^2} $ will be denoted by
\begin{equation}
	E \myp{N,\beta} := \inf \sigma_{\bigwedge^N L^2 \myp{\mathbb{R}^3;\mathbb{C}^2}} \myp{H_{N,\beta}}.
\end{equation}

\begin{rem}
	\label{rem:mean-field}
	Physical systems do not usually come with a mean-field scaling, but	the Hamiltonian in question can sometimes be put in the form	\eqref{eq:hamiltonian} by rescaling appropriately.
	This is true e.g. for atoms \cite{LieSim-77,LieSim-77b,LieSolYng-94} (see also \cite{Thirring-81} where coherent states are used) and non-relativistic white dwarfs \cite{LieThi-84,LieYau-87}, with or without magnetic fields.
	In the case of an atom in a homogeneous	magnetic field of strength $ B $, the Hamiltonian is
	\begin{equation*}
		H_{N,B,Z} = \sum\limits_{j=1}^N \myp[\Big]{ \myp{ \bm{\sigma} \cdot \myp{ -i \nabla_j + B A \myp{x_j}} }^2 - \frac{Z}{\abs{x_j}} } + \sum\limits_{1 \leq j < k \leq N} \frac{1}{\abs{x_j-x_k}}.
	\end{equation*}
	Choosing parameters $ \beta := B Z^{-4/3} $ and	$ \ell := Z^{-1/3} \myp{1+\beta}^{-2/5} $, then	$ H_{N,B,Z} $ is for $ Z=N $ unitarily equivalent to
	$ Z\ell^{-1} H_{N,\beta} $, where $ H_{N,\beta} $ is given by \eqref{eq:hamiltonian} with	$ V \myp{x} = w \myp{x} = \abs{x}^{-1} $, and $ \hbar $ and $ b $ defined by \eqref{eq:scaling0}.
	The methods in the present paper cover the case of neutral systems ($ N = Z $) in the regime $ B \ll Z^3 $, cf. \cite{LieSolYng-94}.
\end{rem}
The analysis naturally splits into three cases, depending on the asymptotics of the parameter $ \beta $.
In the first case, when $ \beta \to 0 $, the presence of the magnetic field has no effect on the ground state energy of $ H_{N,\beta} $ to leading order, and the energy in the limit is described by the usual non-magnetic Thomas-Fermi theory.
In the second case, when $ \beta \to \beta_0 $ for some $ \beta_0 \in \myp{0,\infty} $, the energy in the semi-classical limit is described by a magnetic Thomas-Fermi theory, as already seen in \cite{LieSolYng-94} in the case of Coulomb interactions.
In the third case, when $ \beta $ goes to infinity, corresponding to a strong magnetic field, all particles are forced to stay in the lowest Landau band of the magnetic Laplacian, and the limit is described by a strong-field Thomas-Fermi theory.

Suitable upper bounds on the energy $ E \myp{N,\beta} $ will be provided by constructing appropriate trial states.
This is done by localizing in position space and using the Weyl asymptotics for magnetic Schrödinger operators obtained in \cite{LieSolYng-94} to contruct compactly supported Slater determinants.
Corresponding lower bounds are obtained using coherent states along with a fermionic (classical) de Finetti-Hewitt-Savage theorem.
The usefulness of classical de Finetti theorems \cite{DeFinetti-31,HewSav-55,DiaFre-80} has been known for a long time in the context of classical mechanics (e.g. \cite{BraHep-77,Spohn-81,MesSpo-82,CagLioMarPul-92,Kiessling-93}).
Recently, quantum de Finetti type theorems \cite{Stormer-69,HudMoo-75} have also been used to study mean-field problems in quantum mechanics in works by Lewin, Nam and Rougerie \cite{LewNamRou-14,LewNamRou-15b,LewNamRou-16c}, where the ground state energy of a mean-field \emph{Bose} system under rather general assumptions is shown to converge to the Hartree energy of the system.
The idea is further developed by Fournais, Lewin and Solovej in \cite{FouLewSol-18}, where it is used to treat the case of spinless fermions in weak magnetic fields, and by Lewin, Triay, and the second author of the present article to treat the case of Fermi systems at positive temperature, also in weak magnetic fields \cite{LewMadTri-19}.
See also \cite{Rougerie-15} for a thorough discussion of de Finetti theorems.
One of the main motivations for the present work is to extend the de Finetti technique to magnetic semiclassics.

We briefly remind the reader of the well-known fact that the spectrum of the Pauli operator $ \myp{ \bm{\sigma} \cdot \myp{ -i \hbar \nabla + bA \myp{x}} }^2 $ is parametrized by the Landau bands
\begin{equation}
\label{eq:landaubands}
	p^2 + 2 \hbar b j, \For{$ p \in \mathbb{R} $, $ j \in \mathbb{N}_0 $.}
\end{equation}
(See also equations \eqref{eq:HA} and \eqref{eq:uniequiv} in \cref{sec:semimeasures}).
Therefore, the phase space naturally becomes
\begin{equation}
\label{eq:phasespace}
	\Omega = \mathbb{R}^3 \times \mathbb{R} \times {\mathbb{N}}_0 \times \Set{\pm 1},
\end{equation}
where $ \mathbb{R}^3 $ is position, $ \mathbb{R} $ is momentum in the $ z $-direction (along the magnetic field), $ \mathbb{N}_0 $ is the quantized radius of the cyclotron orbit in 2D, and $ \Set{\pm 1} $ is the spin variable.
We will denote components of vectors $ \xi \in \Omega^k $ by $ \xi_{\ell} = \myp{u_{\ell},p_{\ell}, j_{\ell}, s_{\ell} } \in \Omega $.
For notational convenience, we will sometimes rearrange the variables by separating them into position and momentum components, i.e.
\begin{equation}
\label{eq:xi}
	\xi = \myp{u,p,j,s}, 
\end{equation}
with $ u \in \mathbb{R}^{3k} $, $ p \in \mathbb{R}^k $, $ j \in \mathbb{N}_0^k $, and $ s \in \Set{\pm 1}^k $.
Integration over $ \Omega^k $ will be done with respect to its natural measure, using the Lebesgue measure in the continuous variables and the counting measure in the discrete variables.

\subsection{Magnetic Thomas-Fermi theories}
We recall that the pressure 
(that is, the grand canonical minimal energy in the thermodynamic limit)
of the free Landau gas \cite[equation $ \myp{4.47} $]{LieSolYng-94}, i.e. a gas of non-interacting fermions in a homogeneous magnetic field, at chemical potential $ \nu \geq 0 $ is given by
\begin{equation}
\label{eq:landaupressure}
	P_B \myp{\nu} = \frac{B}{3 \pi^2} \myp[\Big]{ \nu^{\frac{3}{2}} + 2 \sum\limits_{j=1}^{\infty} \myb{2 j B -\nu}_-^{\frac{3}{2}} }.
\end{equation}
Here $ B > 0 $ is the magnetic field strength, and $ \gamma_- := \max \myp{-\gamma,0} $ denotes the negative part of a number $ \gamma \in \mathbb{R} $.
Clearly, $ P_B $ is a convex and continuously differentiable function with derivative
\begin{equation}
\label{eq:landaupressurederiv}
	P'_B \myp{\nu} = \frac{B}{2 \pi^2} \myp[\Big]{ \nu^{\frac{1}{2}} + 2 \sum\limits_{j=1}^{\infty} \myb{2 j B -\nu}_-^{\frac{1}{2}} }.
\end{equation}
In the mean-field scaling we will take $ B = \beta \myp{1+\beta}^{-2/5} $, and the pressure will come with an additional prefactor $ \myp{1+\beta}^{-3/5} $, cf. \eqref{eq:scaling}. Furthermore, the constant
\begin{equation}
	\label{eq:kbeta}
	k_{\beta} := \beta \myp{1+\beta}^{-\frac{2}{5}}
\end{equation}
will play a recurring role in the following.
It arises as half the distance between the Landau bands in the semi-classical limit (see \eqref{eq:scaling} and \eqref{eq:landaubands}).

\begin{de}[Magnetic Thomas-Fermi energy]
	Let $ V \in L_{\mathrm{loc}}^{5/2} \myp{\mathbb{R}^3} $	satisfy $ V \myp{x} \to \infty $ as $ \abs{x} \to \infty $, and let $ w \in L^{5/2} \myp{\mathbb{R}^3} + L_{\varepsilon}^{\infty} \myp{\mathbb{R}^3} $ be even.
	We define the magnetic Thomas-Fermi energy functional with parameter $ \beta > 0 $ by
	\begin{align}
		\mathcal{E}_{\beta}^{\sssup{MTF}} \myp{\rho} 
		= {}& \int_{\mathbb{R}^3} \tau_{\beta} \myp{\rho \myp{x}} \id x + \int_{\mathbb{R}^3} V \myp{x} \rho \myp{x} \id x \nonumber \\
		&+ \frac{1}{2} \iint_{\mathbb{R}^6} w \myp{x-y} \rho \myp{x} \rho \myp{y} \id x \id y
	\label{eq:mtffunct}
	\end{align}
	on the set
	\begin{equation*}
		\mathcal{D}^{\sssup{MTF}} = \Set[\big]{ \rho \in L^1 \myp{\mathbb{R}^3} \cap L^{\frac{5}{3}} \myp{\mathbb{R}^3} \given 0 \leq \rho, \ V \rho \in L^1 \myp{\mathbb{R}^3} },
	\end{equation*}
	where the energy density $ \tau_{\beta} $ is given by the Legendre transform of the scaled pressure
	\begin{equation}
	\label{eq:taubdef}
		\tau_{\beta} \myp{t} 
		= \sup_{\nu \geq 0} \myp[\big]{ t\nu - \myp{1+\beta}^{-\frac{3}{5}} P_{k_{\beta}} \myp{\nu} },
	\end{equation}
	with $ k_{\beta} = \beta \myp{1+\beta}^{-2/5} $ as in \eqref{eq:kbeta}.
	($ \tau_{\beta} \myp{t} $ is the canonical ground state energy of the canonical free Landau gas in the thermodynamic limit, at density $ t \geq 0 $ and in mean-field scaling).
	Furthermore, the magnetic Thomas-Fermi ground state energy is defined as the infimum
	\begin{equation}
	\label{eq:mtfenergy}
		E^{\sssup{MTF}} \myp{\beta} = \inf \Set[\big]{ \mathcal{E}_{\beta}^{\sssup{MTF}} \myp{\rho} \given \rho \in \mathcal{D}^{\sssup{MTF}}, \ \int_{\mathbb{R}^3} \rho \myp{x} \id x = 1 }.
	\end{equation}
\end{de}
Recall that the space $ L^{5/2} \myp{\mathbb{R}^3} + L_{\varepsilon}^{\infty} \myp{\mathbb{R}^3} $ consists of functions $ f $ satisfying that for each $ \varepsilon > 0 $ there exist $ f_1 \in L^{5/2} \myp{\mathbb{R}^3} $ and $ f_2 \in L^{\infty} \myp{\mathbb{R}^3} $ with $ \normt{f_2}{\infty} \leq \varepsilon $ and $ f = f_1 + f_2 $.
\begin{rem}
	\label{rem:mtfdomain}
	For any $ 0 \leq \rho \in L^1 \myp{\mathbb{R}^3} $ and $ \beta > 0 $ it is well known \cite[Proposition $ 4.2 $]{LieSolYng-94} that $ \rho \in L^{5/3} \myp{\mathbb{R}^3} $ if and only if $ \int \tau_{\beta} \myp{\rho \myp{x}} \id x $ is finite.
	In particular, we have the bound
	\begin{align}
		\int_{\mathbb{R}^3} \rho \myp{x}^{\frac{5}{3}} \id x
		\leq {}& \kappa_1 \myp{1+ \beta}^{\frac{2}{5}} \int_{\mathbb{R}^3} \tau_{\beta} \myp{\rho \myp{x}} \id x \nonumber \\
		&+ \kappa_2 \myp[\Big]{ \frac{\beta}{1+\beta} }^{\frac{2}{5}} \normt{\rho}{1}^{\frac{2}{3}} \myp[\Big]{\int_{\mathbb{R}^3} \tau_{\beta} \myp{\rho \myp{x}} \id x}^{\frac{1}{3}}
	\label{eq:rhoest2}
	\end{align}
	for some constants $ \kappa_1, \kappa_2 > 0 $.
	It follows that the domain of $ \mathcal{E}_{\beta}^{\sssup{MTF}} $ indeed is as stated above.
	Furthermore, it is not difficult to show that the functional is bounded from below on $ \mathcal{D}^{\sssup{MTF}} \cap \Set{\rho \given \int \rho \leq M} $ for each $ M > 0 $, when $ V $ and $ w $ satisfy the assumptions stated above.
\end{rem}
Similarly, we also define the \emph{strong Thomas-Fermi functional}
\begin{align*}
	\mathcal{E}^{\sssup{STF}} \myp{\rho} 
	= {}& \frac{4 \pi^4}{3} \int_{\mathbb{R}^3} \rho \myp{x}^3 \id x + \int_{\mathbb{R}^3} V \myp{x} \rho \myp{x} \id x \\
	&+ \frac{1}{2} \iint_{\mathbb{R}^6} w \myp{x-y} \rho \myp{x} \rho \myp{y} \id x \id y,
\end{align*}
along with the ordinary non-magnetic \emph{Thomas-Fermi functional}
\begin{align*}
	\mathcal{E}^{\sssup{TF}} \myp{\rho} 
	= {}& \frac{3}{5} c_{\sssup{TF}} \int_{\mathbb{R}^3} \rho \myp{x}^{\frac{5}{3}} \id x + \int_{\mathbb{R}^3} V \myp{x} \rho \myp{x} \id x \\
	&+ \frac{1}{2} \iint_{\mathbb{R}^6} w \myp{x-y} \rho \myp{x} \rho \myp{y} \id x \id y,
\end{align*}
where $ c_{\sssup{TF}} = \myp{3 \pi^2}^{2/3} $, and with corresponding ground state energies $ E^{\sssup{STF}} $ and $ E^{\sssup{TF}} $, both defined in complete analogy to \eqref{eq:mtfenergy}.

\subsubsection*{MTF theory with spin}
We also introduce a version of the magnetic Thomas-Fermi functional that keeps better track of the spin dependence.
We will mostly need this for the formulation of our main result, and most of the proofs in the paper will be done using the spin-summed version defined above.
We use a tilde ($ \sim $) to distinguish the relevant spin-dependent quantities from the corresponding spin-independent (or spin-summed)
ones.
The pressure of the free Laundau gas with complete spin polarization is
\begin{equation}
\label{eq:spinlandaupressure}
	\widetilde{P}_B \myp{\nu,s} = \frac{B}{3 \pi^2} \sum\limits_{j=0}^{\infty} \myb{ B \myp{2j+1+s} -\nu}_-^{\frac{3}{2}},
\end{equation}
where $ B > 0 $ is the magnetic field strength and $ s \in \Set{\pm 1} $ denotes the spin variable.
Note that the spin-summed pressure \eqref{eq:landaupressure} is recovered by summing the components of $ \widetilde{P}_B $, i.e. $ P_B \myp{\nu} = \widetilde{P}_B \myp{\nu,-1}+\widetilde{P}_B \myp{\nu,1} $.
As in the spin-summed case, $ \widetilde{P}_B $ is convex and continuously differentiable with derivative
\begin{equation}
\label{eq:spinlandaupressurederiv}
	\widetilde{P}_B' \myp{\nu,s} := \frac{\partial \widetilde{P}_B}{\partial \nu}\myp{\nu,s} = \frac{B}{2 \pi^2} \sum\limits_{j=0}^{\infty} \myb{ B \myp{2j+1+s} -\nu}_-^{\frac{1}{2}}.
\end{equation}
Again, the kinetic energy density is the Legendre transform of the scaled pressure
\begin{equation}
\label{eq:spintaubdef}
	\widetilde{\tau}_{\beta} \myp{t,s} 
	= \sup_{\nu \geq 0} \myp[\big]{ t\nu - \myp{1+\beta}^{-\frac{3}{5}} \widetilde{P}_{k_{\beta}} \myp{\nu,s} },
\end{equation}
with $ k_{\beta} $ given by \eqref{eq:kbeta}, and the corresponding energy functional
\begin{align}
	\widetilde{\mathcal{E}}_{\beta}^{\sssup{MTF}} \myp{\rho}
	= {}& \sum\limits_{s=\pm 1} \int_{\mathbb{R}^3} \widetilde{\tau}_{\beta} \myp{\rho \myp{x,s},s} \id x + \sum\limits_{s=\pm 1} \int_{\mathbb{R}^3} V \myp{x} \rho \myp{x,s} \id x \nonumber \\
	&+ \frac{1}{2} \sum\limits_{s_1,s_2 = \pm 1} \iint_{\mathbb{R}^6} w \myp{x-y} \rho \myp{x,s_1} \rho \myp{y,s_2} \id x \id y
\label{eq:spinmtffunct}
\end{align}
is defined on the set of densities
\begin{equation*}
	\widetilde{\mathcal{D}}^{\sssup{MTF}} 
	= \Set[\big]{ \rho \in L^1 \myp{\mathbb{R}^3 \times \Set{\pm 1}} \cap L^{\frac{5}{3}} \myp{\mathbb{R}^3 \times \Set{\pm 1}} \given 0 \leq \rho, \ V \rho \in L^1 \myp{\mathbb{R}^3 \times \Set{\pm 1}} }.
\end{equation*}
The fact that $ \widetilde{\mathcal{E}}_{\beta}^{\sssup{MTF}} $ is well defined on $ \widetilde{\mathcal{D}}^{\sssup{MTF}} $ is easily seen by showing the elementary bounds
\begin{equation*}
	2 \widetilde{\tau}_{\beta} \myp{t,-1} 
	\leq \tau_{\beta} \myp{2t} 
	\leq \widetilde{\tau}_{\beta} \myp{t,-1} + \widetilde{\tau}_{\beta} \myp{t,1} 
	\leq 2\widetilde{\tau}_{\beta} \myp{t,-1} + 4 k_{\beta} t
\end{equation*}
for each $ t \geq 0 $, and combining with the description of the domain of the spin-summed magnetic Thomas-Fermi functional in \cref{rem:mtfdomain}.
In \cref{sec:prelim} we will argue that the spin-dependent functional has the same ground state energy as the spin-independent functional,
\begin{equation*}
	\widetilde{E}^{\sssup{MTF}} \myp{\beta} = E^{\sssup{MTF}} \myp{\beta},
\end{equation*}
and that they both also coincide with the ground state energy of a Vlasov type functional on the phase space $ \Omega = \mathbb{R}^3 \times \mathbb{R} \times {\mathbb{N}}_0 \times \Set{\pm 1} $, which will be introduced in \eqref{eq:vlafunct}.

\subsection{Main results}
The main results of this paper are the asymptotics of the ground state energy of the $ N $-body Hamiltonian \eqref{eq:hamiltonian} to leading order in $ N $, along with weak convergence of approximate ground states to convex combinations of factorized states.
\begin{theo}[Convergence of energy]
\label{thm:energylowbound}
	Let $ w \in L^{5/2} \myp{\mathbb{R}^3} + L_{\varepsilon}^{\infty} \myp{\mathbb{R}^3} $ be an even function, and $ V \in L_{\mathrm{loc}}^{5/2} \myp{\mathbb{R}^3} $ with $ V \myp{x} \to \infty $ as $ \abs{x} \to \infty $.
	Let $ \myp{\beta_N} $ be a sequence of positive real numbers satisfying $ \beta_N \to \beta \in \myb{0,\infty} $ and \eqref{eq:scaling2}.
	Then we have convergence of the ground state energy per particle
	\begin{equation}
		\lim_{N \to \infty} \frac{E \myp{N,\beta_N}}{N} =	\begin{cases}
																E^{\sssup{TF}},					& \text{if } \beta = 0, \\
																E^{\sssup{MTF}} \myp{\beta},	& \text{if } 0< \beta < \infty, \\
																E^{\sssup{STF}},				& \text{if } \beta = \infty.
															\end{cases}
	\end{equation}
\end{theo}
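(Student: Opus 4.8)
The plan is to prove matching asymptotic upper and lower bounds for $E\myp{N,\beta_N}/N$ and to reduce the three cases to a single one. The key reduction is that $\beta\mapsto E^{\sssup{MTF}}\myp{\beta}$ extends continuously to $\myb{0,\infty}$ with $E^{\sssup{MTF}}\myp{0}=E^{\sssup{TF}}$ and $E^{\sssup{MTF}}\myp{\infty}=E^{\sssup{STF}}$; this follows from the Legendre formula \eqref{eq:taubdef} together with \eqref{eq:scaling}, since $\myp{1+\beta}^{-3/5}P_{k_\beta}\myp{\nu}$ converges as $\beta\to 0$ to the non-magnetic Thomas--Fermi pressure and, after the natural rescaling, to the lowest-Landau-band pressure as $\beta\to\infty$, so that $\tau_{\beta_N}\myp{t}\to\tfrac35 c_{\sssup{TF}}t^{5/3}$ and (rescaled) $\to\tfrac{4\pi^4}{3}t^3$ respectively. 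This continuity is recorded in \cref{sec:prelim}. Consequently it suffices to prove $E\myp{N,\beta_N}/N\to E^{\sssup{MTF}}\myp{\beta}$ by an argument uniform enough in $\beta_N$ to also cover $\beta=0$ and $\beta=\infty$.

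\emph{Upper bound.} Fix $\varepsilon>0$ and, using that bounded compactly supported densities are dense in $\mathcal{D}^{\sssup{MTF}}$, pick $\rho_\varepsilon\geq 0$ with $\int\rho_\varepsilon=1$ and $\mathcal{E}_\beta^{\sssup{MTF}}\myp{\rho_\varepsilon}\leq E^{\sssup{MTF}}\myp{\beta}+\varepsilon$. Partition $\supp\rho_\varepsilon$ into $M$ cubes $Q_1,\dots,Q_M$ of side $\ell$ and in each impose Dirichlet conditions on $\myp{\bm\sigma\cdot\myp{-i\hbar\nabla+bA}}^2+\sup_{Q_i}V$; filling its lowest $N_i\approx N\int_{Q_i}\rho_\varepsilon$ eigenfunctions yields orthogonal families whose wedge product is a Slater determinant $\Psi_N\in\bigwedge^N L^2\myp{\mathbb{R}^3;\mathbb{C}^2}$. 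The magnetic Weyl asymptotics of \cite{LieSolYng-94}, applied box by box, give that the kinetic-plus-external energy of $\Psi_N$ equals $N\int\myp{\tau_{\beta_N}\myp{\rho_\varepsilon}+V\rho_\varepsilon}+o\myp{N}$ once $\hbar\to 0$ with $\ell$ large on the relevant semiclassical scale --- the remainder here must be controlled uniformly in $\beta_N$, which is where $k_{\beta_N}$ and \eqref{eq:scaling} enter. Since $\Psi_N$ is quasi-free, its two-particle density is $\rho_{\Psi_N}^{\myp{1}}\otimes\rho_{\Psi_N}^{\myp{1}}$ minus the exchange term, so the mean-field interaction contributes $\tfrac12\iint w\myp{x-y}\rho_\varepsilon\myp{x}\rho_\varepsilon\myp{y}\id x\id y+o\myp{N}$, the locally singular part of $w\in L^{5/2}+L_\varepsilon^\infty$ and the exchange term being absorbed using the uniform $L^1\cap L^{5/3}$ bounds of \cref{rem:mtfdomain}. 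Letting $\ell\to 0$, $M\to\infty$, then $\varepsilon\to 0$ gives $\limsup_N E\myp{N,\beta_N}/N\leq E^{\sssup{MTF}}\myp{\beta}$.

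\emph{Lower bound.} Let $\Psi_N$ be an approximate minimizer. Using magnetic coherent states built from Gaussian wave packets in the field direction, shifted Landau orbitals in the transverse plane (indexed by $j\in\mathbb{N}_0$), and a spin label, one writes the one-body Pauli Hamiltonian as a phase-space average over $\Omega=\mathbb{R}^3\times\mathbb{R}\times\mathbb{N}_0\times\Set{\pm 1}$ of multiplication operators, up to errors governed by $\hbar$ (momentum localization) and the modulus of continuity of $V$. Transporting the reduced densities $\rho_{\Psi_N}^{\myp{k}}$ to $\Omega^k$ via this map, one obtains a lower bound of $E\myp{N,\beta_N}/N$ by the infimum over probability measures $\mu$ on $\Omega$ --- subject to the fermionic pointwise bound coming from the Pauli principle and the density of states of the Landau bands \eqref{eq:landaubands} --- of a Vlasov functional $\int_\Omega\myp{\abs{p}^2+k_{\beta_N}\myp{2j+1+s}+V\myp{u}}\id\mu+\tfrac12\iint w\myp{u-u'}\id\mu\id\mu$, minus $o\myp{1}$. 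The passage from the $N$-body state to this single-particle problem is where the fermionic (classical) de Finetti--Hewitt--Savage theorem enters: the coherent-state transform produces a consistent hierarchy of symmetric probability measures whose weak limits are the moments of a probability measure over single-particle measures on $\Omega$, so that the interaction passes to the limit as $\tfrac12\iint w\,\id\mu\id\mu$ for the barycentre $\mu$; the confinement hypothesis $V\to\infty$ supplies the tightness ruling out loss of mass. Identifying the limiting variational problem with the Vlasov functional \eqref{eq:vlafunct} and invoking the equalities $\widetilde{E}^{\sssup{MTF}}\myp{\beta}=E^{\sssup{MTF}}\myp{\beta}=$ (Vlasov energy), together with the degeneration to $E^{\sssup{TF}}$ and $E^{\sssup{STF}}$, yields $\liminf_N E\myp{N,\beta_N}/N\geq E^{\sssup{MTF}}\myp{\beta}$.

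\emph{Main obstacle.} The harder direction is the lower bound, and within it the uniformity across magnetic regimes. When $\beta_N\to\infty$ the inter-band spacing $k_{\beta_N}=\hbar b$ diverges, so both the coherent-state error estimates and the de Finetti step have to be carried out after projecting onto, or carefully paying for excitations out of, the lowest Landau band, and after rescaling the longitudinal momentum consistently with \eqref{eq:scaling}; keeping every semiclassical remainder and every density-of-states counting error uniform simultaneously for $\beta_N\to 0$, $\beta_N\to\beta_0\in\myp{0,\infty}$, and $\beta_N\to\infty$ --- rather than for a fixed $\beta$ --- is the principal source of technical work. A secondary difficulty is propagating the a priori $L^{5/3}$ control of the one-particle density (from the magnetic Lieb--Thirring inequality, cf. \cref{rem:params}) through the de Finetti limit, so that the locally singular part of $w$ can be handled by truncation.
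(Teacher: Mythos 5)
Your overall plan matches the paper's architecture: an upper bound by constructing Slater determinants and invoking the magnetic Weyl asymptotics, and a lower bound by building magnetic coherent states on a phase space, extracting a consistent hierarchy of limiting measures, and applying a fermionic de Finetti theorem to pass to the Vlasov problem. The upper bound differs in detail: you partition $\supp\rho_\varepsilon$ into many small boxes $Q_i$ with $\sup_{Q_i}V$ and a chemical-potential count per box, whereas the paper uses a single cube $C_R\supseteq\supp\rho$ with the effective potential $-r(x)$ (the Lagrange multiplier field solving $\rho(x)=(1+\beta)^{-3/5}P'_{k_\beta}(r(x))$) and a Feynman--Hellmann argument for the density; both routes should work, yours being the more classical ``Weyl law on a partition'' version at the cost of a double limit $\ell\to 0$, $M\to\infty$.

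There is, however, a genuine gap in the lower bound. You organize the whole argument around the discrete-Landau-level phase space $\Omega=\mathbb{R}^3\times\mathbb{R}\times\mathbb{N}_0\times\Set{\pm 1}$ and propose to make the coherent-state and de Finetti step uniform in $\beta_N$; but this construction cannot be made uniform as $\beta_N\to 0$. The transverse coherent states localize on the scale $\sqrt{\hbar/b}$ (cf.\ the definition $\varphi_{x,j}^{\hbar,b}(y_\perp)=\hbar^{-1/2}b^{1/2}e^{-i\frac{b}{2\hbar}x\times y_\perp}\varphi_j(\hbar^{-1/2}b^{1/2}(y_\perp-x))$), and with the scaling \eqref{eq:scaling0} one has $\hbar/b\sim N^{-2/3}\beta_N^{-1}$, which does \emph{not} go to zero when $\beta_N\lesssim N^{-2/3}$; position localization fails. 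Correspondingly, the band spacing $k_{\beta_N}\to 0$, the per-cell weight $b/(2\pi\hbar)^2\sim\beta_N/(1+\beta_N)\cdot N/(2\pi)^2$ degenerates, and the limit measure on $\Omega$ loses mass to $j\to\infty$; the Pauli bound $0\le m\le 1$ and the compatibility relation no longer produce a nondegenerate limit on $\Omega$. This is exactly why the paper treats $\beta_N\to 0$ separately in \cref{sec:lowerbound2}, switching to the continuous phase space $\mathbb{R}^3\times\mathbb{R}^3\times\Set{\pm 1}$ (and still needing an extra rescaling parameter $\alpha=(1+b)^{-1}$ and a momentum translation by $bA(x)$ to handle $b\to\infty$ with $\beta_N\to 0$). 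Your ``main obstacle'' discussion incorrectly pinpoints $\beta_N\to\infty$ as the problematic extreme; in the paper's framework the strong-field case sits comfortably inside the same Section~5 argument (the limit measure simply concentrates on $j=0$, $s=-1$), while it is $\beta_N\to 0$ that forces a separate construction.

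A secondary point: you assert that the continuity of $\beta\mapsto E^{\sssup{MTF}}(\beta)$ at $0$ and $\infty$ ``is recorded in \cref{sec:prelim}''; it is not. That section establishes the equalities $E^{\sssup{MTF}}=E^{\sssup{Vla}}$, $E^{\sssup{STF}}=E^{\sssup{Vla}}(\infty)$, $E^{\sssup{TF}}=E^{\sssup{Vla}}(0)$ via the bathtub principle, but there is no statement about continuity of the ground state energy in $\beta$, and the paper does not use such a reduction --- each regime is proved directly. If you want to keep your reduction you would need to prove this continuity (plausible via $\Gamma$-convergence of the energy densities $\tau_\beta$, but it requires care near $\beta=\infty$ where the natural density variable rescales), and even then the reduction does not rescue the lower bound, since what is missing is uniformity of the coherent-state errors, not continuity of the limit functional.
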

For the next theorem we recall that the $ k $-particle position density of a function $ \Psi \in \bigwedge^N L^2 \myp{\mathbb{R}^3 ; \mathbb{C}^2} \simeq
\bigwedge^N L^2 \myp{\mathbb{R}^3 \times \Set{\pm 1} ; \mathbb{C}} $ is given by
\begin{equation}
\label{eq:kpartdensityspin}
	\widetilde{\rho}_{\Psi}^{\, \myp{k}} \myp{z_1, \dotsc, z_k}
	= \binom{N}{k} \int_{\myp{\mathbb{R}^3 \times \Set{\pm 1}}^{\myp{N-k}}} \abs{\Psi \myp{z_1,\dotsc,z_N}}^2 \id z_{k+1} \cdots \id z_N.
\end{equation}
We will also need the spin-summed densities
\begin{equation}
\label{eq:kpartdensity}
	\rho_{\Psi}^{\myp{k}} \myp{x_1, \dotsc, x_k}
	= \binom{N}{k} \sum\limits_{s \in \Set{\pm 1}^N} \int_{\mathbb{R}^{3 \myp{N-k}}} \abs{\Psi \myp{x_1,\dotsc,x_N; s}}^2 \id x_{k+1} \cdots \id x_N.
\end{equation}
For the result on convergence of states below, we would like to call attention to the fact that the phase space changes in the extreme cases $ \beta = 0 $ and $ \beta = \infty $.
When $ \beta_N \to 0 $, the distance between the Landau bands (which is $ 2 \hbar b = 2 \beta_N \myp{1+\beta_N}^{-2/5} $) also tends to zero, in which case we recover the usual phase space $ \mathbb{R}^3 \times \mathbb{R}^3 \times \Set{\pm 1} $ with position, momentum, and spin variables, respectively.
In the other extreme case, where $ \beta_N \to \infty $, the magnetic field is so strong that all particles are confined to the lowest Landau band with spin pointing downwards, so the phase space here becomes $ \mathbb{R}^3 \times \mathbb{R} $, with the single copy of $ \mathbb{R} $ being momentum along the magnetic field.
\begin{theo}[Convergence of states]
	\label{thm:convstates}
	Suppose that the assumptions of \cref{thm:energylowbound} are satisfied.
	Let $ \Psi_N \in \bigwedge^N L^2 \myp{\mathbb{R}^3 ; \mathbb{C}^2} $ be a sequence of normalized approximate ground states, i.e. satisfying $ \innerp{ \Psi_N}{ H_{N,\beta_N} \Psi_N } = E \myp{N,\beta_N} + o \myp{N} $.
	Denote by $ \mathcal{M}_{\beta} $ the set of minimizers of the corresponding classical functional describing the ground state energy in the limit, that is,
	\begin{equation*}
		\mathcal{M}_{\beta}=	\begin{cases}
									\Set{ 0 \leq \rho \in L^1 \given \int \rho = 1, \ \mathcal{E}^{\sssup{TF}} \myp{\rho} = E^{ \sssup{TF}} }, 		& \text{if } \beta = 0, \\
									\Set{ 0 \leq \rho \in L^1 \given \int \rho = 1, \ \widetilde{\mathcal{E}}_{\beta}^{\sssup{MTF}} \myp{\rho} = \widetilde{E}^{\sssup{MTF}} \myp{\beta} },																& \text{if } 0< \beta < \infty, \\
									\Set{ 0 \leq \rho \in L^1 \given \int \rho = 1, \ \mathcal{E}^{\sssup{STF}} \myp{\rho} = E^{ \sssup{STF}} },	& \text{if } \beta = \infty,
								\end{cases}
	\end{equation*}
	where $ \rho \in L^1 $ means $ \rho \in L^1 \myp{\mathbb{R}^3} $ if	$ \beta = 0 $ or $ \beta = \infty $, and $ \rho \in L^1 \myp{\mathbb{R}^3 \times \Set{\pm 1}} $ if $ 0 < \beta < \infty $.
	
	Then there exist a subsequence $ \myp{N_\ell} \subseteq \mathbb{N} $ and a Borel probability measure $ \mathscr{P} $ on $ \mathcal{M}_{\beta} $ such that for $ \varphi \in L^{5/2} \myp{\mathbb{R}^3 \times \Set{\pm 1}} + L^{\infty} \myp{\mathbb{R}^3 \times \Set{\pm 1}} $ if $ k=1 $, and for any bounded and uniformly continuous function $ \varphi $ on $ \myp{\mathbb{R}^3 \times \Set{\pm 1}}^{k} $ if $ k \geq 2 $, we have as $ \ell $ tends to infinity,
	\begin{equation}
		\frac{k!}{N_\ell^k} \sum\limits_{s \in \Set{\pm 1}^k} \int_{\mathbb{R}^{3k}} \widetilde{\rho}_{\Psi_{N_\ell}}^{ \,(k)} \myp{x,s} \varphi \myp{x,s} \id x
		\longrightarrow \int_{\mathcal{M}_{\beta}} \myp[\Big]{ \int_{\mathbb{R}^{3k}} G_{\rho,\varphi}^{\beta,k} \myp{x} \id x} \id \mathscr{P} \myp{\rho}.
	\label{eq:convstates}
	\end{equation}
	The function $ G_{\rho,\varphi}^{\beta,k} $ is given by
	\begin{equation*}
		G_{\rho,\varphi}^{\beta,k} \myp{x}
		= \begin{cases}
			\sum\limits_{s \in \Set{\pm 1}^k} 2^{-k} \rho^{\otimes k} \myp{x} \varphi \myp{x,s}, 	& \text{if } \beta = 0, \\
			\sum\limits_{s \in \Set{\pm 1}^k} \rho^{\otimes k} \myp{x,s} \varphi \myp{x,s},			& \text{if } 0< \beta < \infty, \\
			\rho^{\otimes k} \myp{x} \varphi \myp{ x,\myp{-1}^{\times k} },							& \text{if } \beta = \infty,
		\end{cases}
	\end{equation*}
	where $ \myp{-1}^{\times k} $ denotes the $ k $-dimensional vector whose entries are all equal to~$ -1 $.
	Its presence is an expression of the fact that all the particles in this regime are confined to the lowest Landau band, with all spins pointing downwards.
\end{theo}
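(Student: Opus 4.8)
The plan is to run the de Finetti argument that already underlies the energy lower bound in \cref{thm:energylowbound}, but this time retaining the limiting object rather than only its energy. Fix a sequence $ \myp{\Psi_N} $ of normalized approximate ground states. Using the magnetic coherent states adapted to the Landau bands (the same ones entering the lower bound in \cref{thm:energylowbound}), I attach to each $ \Psi_N $ a hierarchy of $ k $-particle Husimi measures $ m_{\Psi_N}^{\myp{k}} $ on the phase space $ \Omega^k $, normalized so that $ \frac{k!}{N^k} m_{\Psi_N}^{\myp{k}} $ is a probability measure (equivalently $ m^{\myp{k}}_{\Psi_N}\myp{\Omega^k} = \binom{N}{k} $). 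Since $ \Psi_N $ is antisymmetric these measures are symmetric, they are asymptotically consistent under marginalization, and the fermionic nature of $ \Psi_N $ translates into a pointwise bound on the densities (the semi-classical counterpart of the Pauli principle). The confining hypothesis $ V \myp{x} \to \infty $, together with the energy bound $ \innerp{\Psi_N}{H_{N,\beta_N} \Psi_N} = E \myp{N,\beta_N} + o \myp{N} $ and the (magnetic) Lieb--Thirring inequality, gives tightness of $ \frac{1}{N} m_{\Psi_N}^{\myp{1}} $ on $ \Omega $: no mass escapes to spatial infinity, to infinite momentum, or to arbitrarily high Landau bands.

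\textbf{De Finetti extraction and identification of the limit.} Applying the classical (Hewitt--Savage / Diaconis--Freedman) de Finetti theorem for consistent hierarchies of symmetric measures, I extract a subsequence $ \myp{N_\ell} $ and a Borel probability measure $ P $ on the set of probability measures $ \mu $ on $ \Omega $ satisfying the semi-classical Pauli constraint, such that $ \frac{k!}{N_\ell^k} m_{\Psi_{N_\ell}}^{\myp{k}} \rightharpoonup \int \mu^{\otimes k} \id P \myp{\mu} $ weakly for every $ k $. Feeding this decomposition into the coherent-state lower bound for the energy — exactly as in \cref{thm:energylowbound}, where the $ N $-body energy per particle is bounded below by the Vlasov functional $ \mathcal{E}^{\sssup{Vla}}_{\beta} $ of \eqref{eq:vlafunct} evaluated on $ \frac{1}{N} m^{\myp{1}}_{\Psi_N} $ up to $ o \myp{1} $ errors — and using weak lower semicontinuity, I obtain
\begin{equation*}
	E^{\sssup{xTF}}
	= \lim_{\ell \to \infty} \frac{E \myp{N_\ell, \beta_{N_\ell}}}{N_\ell}
	\geq \int \mathcal{E}^{\sssup{Vla}}_{\beta} \myp{\mu} \id P \myp{\mu}
	\geq \inf \mathcal{E}^{\sssup{Vla}}_{\beta}
	= E^{\sssup{xTF}},
\end{equation*}
where $ E^{\sssup{xTF}} $ denotes $ E^{\sssup{TF}} $, $ \widetilde{E}^{\sssup{MTF}} \myp{\beta} $, or $ E^{\sssup{STF}} $ according to the value of $ \beta $. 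Hence equality holds throughout, so $ P $ is supported on minimizers of $ \mathcal{E}^{\sssup{Vla}}_{\beta} $. By the identification of minimizers carried out in \cref{sec:prelim} (in particular $ \widetilde{E}^{\sssup{MTF}} \myp{\beta} = E^{\sssup{MTF}} \myp{\beta} $ and the coincidence with the Vlasov minimum), the position--spin marginal $ \rho_\mu $ of any such $ \mu $ lies in $ \mathcal{M}_\beta $; pushing $ P $ forward along $ \mu \mapsto \rho_\mu $ produces the desired probability measure $ \mathscr{P} $ on $ \mathcal{M}_\beta $.

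\textbf{From Husimi measures to reduced densities.} It remains to replace $ m_{\Psi_N}^{\myp{k}} $ tested against the lift of $ \varphi $ to $ \Omega^k $ by the genuine density $ \widetilde{\rho}_{\Psi_N}^{\, \myp{k}} $ tested against $ \varphi $. For $ k \geq 2 $, since $ \varphi $ depends only on the position--spin variables and is bounded and uniformly continuous, and since the coherent states are localized in position at scale $ \sqrt{\hbar} \to 0 $, the two differ by $ o \myp{N^k} $; marginalizing $ \mu^{\otimes k} $ over the momentum and band variables then yields $ G^{\beta,k}_{\rho,\varphi} $. For $ k = 1 $ the same comparison requires the a priori bound that $ \frac{1}{N} \rho^{\myp{1}}_{\Psi_N} $ is bounded in $ L^{5/3} \myp{\mathbb{R}^3} $ (from the Lieb--Thirring bound, using interpolation with the $ L^1 $ bound in the strong-field case), which by duality with $ L^{5/2} $ and a density argument extends the convergence to all $ \varphi \in L^{5/2} + L^{\infty} $.

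\textbf{The three regimes and the expected obstacle.} The shape of $ G^{\beta,k}_{\rho,\varphi} $ records the regime. For $ 0 < \beta < \infty $ the phase space $ \Omega $ is fixed and $ \rho_\mu \myp{x,s} $ is precisely the position--spin marginal, giving $ \sum_s \rho^{\otimes k} \varphi $. For $ \beta_N \to 0 $ the band spacing $ 2 \hbar b = 2 k_{\beta_N} \to 0 $, so the discrete cyclotron index recombines with $ p $ into a full momentum variable and $ \Omega $ degenerates to $ \mathbb{R}^3 \times \mathbb{R}^3 \times \Set{\pm 1} $; moreover the Zeeman term $ \hbar b \sigma_3 $ becomes negligible, the limiting measure is spin-symmetric, and the spin marginal is uniform, producing the factor $ 2^{-k} $. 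For $ \beta_N \to \infty $ the diverging gap $ 2 \hbar b $ between the lowest and the higher bands forces all mass into $ j = 0 $, $ s = -1 $ (the unique zero mode of the Pauli operator), the kinetic density becomes $ \frac{4\pi^4}{3} \rho^3 $, and $ \varphi $ is evaluated at $ s = \myp{-1}^{\times k} $. I expect the main difficulty to be exactly this regime dependence: when $ \beta_N \to 0 $ or $ \beta_N \to \infty $ the relevant phase space changes with $ N $, so one cannot apply a de Finetti theorem on a single fixed space; one must first establish, uniformly in $ N $, that the Husimi mass concentrates on the correct asymptotic region of $ \Omega $ (recombined momentum for $ \beta = 0 $, lowest band with spin down for $ \beta = \infty $) and that the associated functionals $ \Gamma $-converge to $ \mathcal{E}^{\sssup{TF}} $, respectively $ \mathcal{E}^{\sssup{STF}} $, before the compactness and lower-bound arguments above can be carried out.
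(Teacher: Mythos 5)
Your strategy (Husimi measures from magnetic coherent states, consistency and the Pauli bound, tightness from confinement plus Lieb--Thirring, classical de Finetti, identification of the support of $\mathrm{P}$ via energy saturation, pushforward along $\mu \mapsto \rho_\mu$, and the convolution estimate to pass from Husimi marginals to true densities) is essentially the one the paper follows for $0 < \beta \leq \infty$. The weakness is in your final paragraph, where you flag the regime dependence as a problem but misdiagnose both how it bites and how it is resolved.

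For $\beta_N \to \infty$ you do not need to establish concentration on the lowest Landau band ``before'' applying de Finetti. The phase space $\Omega = \mathbb{R}^3\times\mathbb{R}\times\mathbb{N}_0\times\{\pm1\}$ is fixed, so de Finetti applies exactly as in the case $0<\beta<\infty$. The concentration is read off afterwards: the kinetic-energy identity of \cref{lem:spinkinenergy} together with the Lieb--Thirring bound and $\hbar b\to\infty$ forces
\[
\lim_{N\to\infty}\sum_{s=\pm1}\sum_{j\leq n}\int (j+1+s)\,m^{(1)}_{f,\Psi_N}\,\mathrm{d}p\,\mathrm{d}x = 0
\]
for each fixed $n$, so the limit measure $m^{(1)}_f$ vanishes outside $\{j=0, s=-1\}$; integrating the de Finetti decomposition then shows $\mathrm{P}$ is supported on that slice (this is \cref{cor:stateconvstrong}). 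Your proposed order of steps would make the argument needlessly awkward, since the concentration is a statement about the limit object, not the finite-$N$ measures.

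For $\beta_N\to0$ the gap is more substantial. The paper does not ``recombine'' the discrete cyclotron index with $p$; the Landau coherent states are simply abandoned, and one switches to ordinary Gaussian-type coherent states on $\mathbb{R}^3\times\mathbb{R}^3\times\{\pm1\}$ as in the non-magnetic case. Two genuine technical ingredients are needed that your sketch does not anticipate. First, even though $\hbar b\to 0$, the field strength $b$ itself may diverge, so tightness in momentum fails for the raw measures; one must translate the momentum variable via $\widetilde m^{(k)}_N(x,p,s) = m^{(k)}_{f,\Psi_N}(x,p-b\widetilde A(x),s)$, which leaves the position marginal unchanged but brings the kinetic energy into the form $\int p^2\widetilde m^{(1)}_N$. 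Second, the coherent-state localization scale must be chosen $N$-dependent, $\alpha = (1+b)^{-1}$, to make the IMS/error terms $\hbar b N\langle\Psi_N,\sigma_3\Psi_N\rangle$ and $Nb^2\hbar\alpha$ both $o(N)$; with your fixed-width $f^\hbar$ the quadratic error from $|A|^2 - |A|^2*|f^{\hbar}|^2$ is of order $N b^2\hbar$, which is not $o(N)$ when $b\to\infty$. The spin-uniformity giving the $2^{-k}$ also deserves a precise statement: it comes from the fact that minimizers of the weak-field Vlasov functional are, by the bathtub principle, necessarily spin-independent (\cref{rem:vlamin}), not merely from a symmetry heuristic.

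So the core de Finetti machinery is right and matches the paper, but the weak-field regime needs the momentum translation and the rescaled coherent states, and the strong-field concentration is a post-de-Finetti consequence rather than a prerequisite.
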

The convergence of energy and the convergence of states for $ k = 1 $ were both previously known in the case where the interaction $ w $ is Coulomb, and $ V \in L^{5/2} \myp{\mathbb{R}^3} + L^{\infty} \myp{\mathbb{R}^3} $ with $ V $ tending to zero at infinity \cite[Theorems~5.1\nobreakdash--5.3]{LieSolYng-94}.
The convergence of states result for $ k > 1 $ and the generality of the interaction $ w $ seem to be new.
\begin{rem}
	If the interaction $ w $ is of positive type, that is, if it has non-negative Fourier transform (which is the case e.g. for Coulomb interactions), then the limiting Thomas-Fermi functional is (strictly) convex in all three cases.
	This implies that any minimizer must be unique, so the de Finetti measures in \cref{thm:convstates} are forced to be supported on a single point.
	In other words, the outer integral in \labelcref{eq:convstates} disappears, and thus in this case the $ k $-particle densities converge weakly to pure tensor products of the unique Thomas-Fermi minimizers.
\end{rem}
\begin{rem}
	Analogues of \cref{thm:energylowbound,thm:convstates} also hold if the external potential $ V $ is not confining, but for brevity we will omit this generalization.
	In this case, the convergence of energy is the same as in \cref{thm:energylowbound}, but the statement for convergence of states is slightly different.
	Instead of being supported on the set of minimizers of the classical functional, the de Finetti measure in \cref{thm:convstates} will be supported on the set of weak limits of minimizing sequences for the functional.
	In this case, the lack of compactness at infinity forces one to use a weak version of the de Finetti theorem.
	See \cite{FouLewSol-18} for details in the case where there is no strong magnetic field.
\end{rem}
\subsubsection*{Organization of the paper}
In \cref{sec:prelim} we will recall a few results and preliminary observations that will be important for the later analysis.
\cref{sec:upperbound} is devoted to proving the upper energy bounds of \cref{thm:energylowbound} through construction of appropriate trial states.
In \cref{sec:semimeasures} we will construct semi-classical measures which in \cref{sec:lowerbound} will allow us to prove the lower energy bounds of \cref{thm:energylowbound} along with \cref{thm:convstates} in the case of strong magnetic fields, i.e. when $ \beta_N \to \beta \in \left( 0,\infty \right] $.
Finally, in \cref{sec:lowerbound2} we will treat the case $ \beta_N \to 0 $ where the magnetic field is negligible, so we can in this case use the semi-classical measures constructed in \cite{FouLewSol-18} on the usual phase space $ \mathbb{R}^3 \times \mathbb{R}^3 \times \Set{\pm 1} $.

\section{Preliminary observations}
\label{sec:prelim}

We start out by recalling a few results on Pauli operators and the magnetic Thomas-Fermi functional that will be important for our analysis.
\subsection{The semi-classical approximation and Lieb-Thirring bounds}
Here we briefly recall a few useful tools obtained in \cite{LieSolYng-94}.
We denote by $ V_- \myp{x} = \max\myp{-V \myp{x},0} $ the negative part of the potential $ V $.
Supposing that $ V_- \in L^{3/2} \myp{\mathbb{R}^3} \cap L^{5/2} \myp{\mathbb{R}^3} $ and denoting by $ e_j \myp{\hbar,b,V} $, $ j \geq 1 $, the negative eigenvalues for the operator
\begin{equation}
	H \myp{\hbar,b} = \myp{\bm{\sigma} \cdot \myp{-i \hbar \nabla + b A \myp{x}}}^2 + V\myp{x},
\end{equation}	
then we have the magnetic Lieb-Thirring inequality \cite[Theorem $ 2.1 $]{LieSolYng-94}
\begin{equation}
\label{eq:genlt}
	\sum\limits_{j=1}^{\infty} \abs{e_j \myp{\hbar,b,V}}
	\leq L_1 \frac{b}{\hbar^2} \int_{\mathbb{R}^3} V_-\myp{x}^{\frac{3}{2}} \id x + L_2 \frac{1}{\hbar^3} \int_{\mathbb{R}^3} V_-\myp{x}^{\frac{5}{2}} \id x,
\end{equation}
where for each $ 0 < \delta < 1 $ one can choose $ L_1 = \frac{4}{3} \myp{\pi \myp{1-\delta}}^{-1} $ and $ L_2 = 8 \sqrt{6} \myp{5 \pi \delta^2}^{-1} $.
The inequality can also be stated in terms of the $ 1 $-particle position density $ \rho_{\Psi}^{\myp{1}} $ of a many-body state $ \Psi \in \bigwedge^N L^2 \myp{\mathbb{R}^3; \mathbb{C}^2} $.
Namely, letting $ F_B $ denote the Legendre transform of the function $ v \mapsto L_1 B v^{3/2} + L_2 v^{5/2} $, that is,
\begin{equation}
\label{eq:fbdef}
	F_B \myp{t} = \sup_{v \geq 0} \myp{ tv - L_1 B v^{\frac{3}{2}} - L_2 v^{\frac{5}{2}} },
\end{equation}
then we have the lower bound \cite[Corollary $ 2.2 $]{LieSolYng-94}
\begin{equation}
\label{eq:genltcor}
	\innerp[\Big]{ \Psi}{ \sum\limits_{j=1}^N \myp{\bm{\sigma} \cdot \myp{-i \hbar \nabla_j + b A \myp{x_j}}}^2 \Psi }
	\geq \hbar^2 \int_{\mathbb{R}^3} F_{\frac{b}{\hbar}} \myp{ \rho_{\Psi}^{\myp{1}} \myp{x} } \id x
\end{equation}
on the kinetic energy of the state $ \Psi $.
We also have Weyl asymptotics for the Pauli operator \cite[Theorem $ 3.1 $]{LieSolYng-94}
\begin{equation}
\label{eq:pauliweyl}
	\lim_{\hbar \to 0} \frac{\sum\nolimits_j e_j \myp{\hbar,b,V}}{E_{\mathrm{scl}} \myp{\hbar,b,V}} = 1,
\end{equation}
uniformly in the magnetic field strength $ b $, where $ E_{\mathrm{scl}} \myp{\hbar,b,V} $ is the semi-classical expression for the sum of negative eigenvalues
\begin{equation}
\label{eq:paulisemi-classics}
	E_{\mathrm{scl}} \myp{\hbar,b,V} = - \frac{1}{\hbar^3} \int_{\mathbb{R}^3} P_{\hbar b} \myp{V_- \myp{x}} \id x,
\end{equation}
with $ P_{\hbar b} $ given in \eqref{eq:landaupressure}.
In our case, with the scaling relations \eqref{eq:scaling}, the Weyl asymptotics take the following form:
\begin{cor}
\label{cor:pauliweil}
	Suppose $ V_- \in L^{3/2} \myp{\mathbb{R}^3} \cap L^{5/2} \myp{\mathbb{R}^3} $, let $ \myp{\beta_N} $ be a sequence of positive real numbers satisfying $ \beta_N \to \beta \in \myb{0,\infty} $ and \eqref{eq:scaling2}, and define $ \hbar $ and $ b $ by \eqref{eq:scaling0}.
	Then the Weyl asymptotics \eqref{eq:pauliweyl} take the form
	\begin{equation}
		\lim_{N \to \infty} \frac{1}{N} \sum\limits_j e_j \myp{\hbar,b,V} 
		=	\begin{cases}
				- \frac{2}{15 \pi^2} \int V_- \myp{x}^{\frac{5}{2}} \id x,					& \text{if } \beta = 0, \\
				- \myp{1+\beta}^{-\frac{3}{5}} \int P_{k_{\beta}} \myp{V_- \myp{x}} \id x,	& \text{if } 0< \beta < \infty, \\
				- \frac{1}{3 \pi^2} \int V_- \myp{x}^{\frac{3}{2}} \id x,					& \text{if } \beta = \infty.
			\end{cases}
	\end{equation}
\end{cor}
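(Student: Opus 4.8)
The plan is to reduce the statement to the general Weyl asymptotics \eqref{eq:pauliweyl} and then evaluate the semi-classical expression $E_{\mathrm{scl}}$ along the scaling \eqref{eq:scaling0}. First I would check that we are genuinely in the semi-classical regime: since $t \mapsto t^{1/5}$ is concave and vanishes at the origin it is subadditive, so $\hbar = N^{-1/3}\myp{1+\beta_N}^{1/5} \leq N^{-1/3} + N^{-1/3}\beta_N^{1/5} \to 0$ by \eqref{eq:scaling2}. Excluding the trivial case $V_- \equiv 0$, the ratio in \eqref{eq:pauliweyl} is then well defined and converges to $1$ uniformly in $b$; combining this with the definition \eqref{eq:paulisemi-classics} of $E_{\mathrm{scl}}$ and the identities $\hbar^3 N = \myp{1+\beta_N}^{3/5}$ and $\hbar b = k_{\beta_N}$ from \eqref{eq:scaling}, one gets
\begin{equation*}
	\frac{1}{N} \sum\nolimits_j e_j \myp{\hbar,b,V} = - \myp{1+o(1)} \myp{1+\beta_N}^{-\frac{3}{5}} \int_{\mathbb{R}^3} P_{k_{\beta_N}} \myp{V_- \myp{x}} \id x .
\end{equation*}
So everything reduces to computing $\lim_{N \to \infty} \myp{1+\beta_N}^{-3/5} \int P_{k_{\beta_N}} \myp{V_-}$; the $o(1)$ factor is harmless because this limit will turn out to be finite, and by the Lieb--Thirring bound \eqref{eq:genlt} together with $b/\myp{\hbar^2 N} = \beta_N/\myp{1+\beta_N} \leq 1$ and $1/\myp{\hbar^3 N} = \myp{1+\beta_N}^{-3/5} \leq 1$ the quantities involved are in fact uniformly bounded.

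Next I would record a few elementary facts about the Landau pressure \eqref{eq:landaupressure}, all read off directly from the defining series: (a) $B \mapsto P_B \myp{\nu}$ is continuous, and $P_B \myp{\nu} \to \frac{2}{15\pi^2}\nu^{5/2}$ as $B \downarrow 0$, since $\frac{2B}{3\pi^2}\sum_{j \geq 1}\myb{2jB - \nu}_-^{3/2}$ is a Riemann sum for $\frac{1}{3\pi^2}\int_0^{\nu}\myp{\nu-t}^{3/2}\id t = \frac{2}{15\pi^2}\nu^{5/2}$ while $\frac{B}{3\pi^2}\nu^{3/2} \to 0$; (b) if $2B > \nu$ then every summand with $j \geq 1$ vanishes, so $P_B \myp{\nu} = \frac{B}{3\pi^2}\nu^{3/2}$, and hence $\frac{1}{B}P_B \myp{\nu} \to \frac{1}{3\pi^2}\nu^{3/2}$ as $B \to \infty$; and (c) the uniform bound $P_B \myp{\nu} \leq \frac{B}{3\pi^2}\nu^{3/2} + \frac{2}{15\pi^2}\nu^{5/2}$, obtained by comparing the (nonincreasing in $j$) sum with the corresponding integral. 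Because $\myp{1+\beta_N}^{-3/5} \leq 1$ and $\myp{1+\beta_N}^{-3/5}k_{\beta_N} = \beta_N/\myp{1+\beta_N} \leq 1$, bound (c) furnishes a single dominating function, a constant times $V_-^{3/2} + V_-^{5/2}$, which lies in $L^1 \myp{\mathbb{R}^3}$ by the hypothesis $V_- \in L^{3/2}\myp{\mathbb{R}^3} \cap L^{5/2}\myp{\mathbb{R}^3}$.

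Finally I would run dominated convergence in each of the three regimes. If $0 < \beta < \infty$, then $k_{\beta_N} \to k_{\beta}$ and $\myp{1+\beta_N}^{-3/5} \to \myp{1+\beta}^{-3/5}$, so (a) gives the limit $\myp{1+\beta}^{-3/5}\int P_{k_{\beta}}\myp{V_-}$. If $\beta = 0$, then $k_{\beta_N} \to 0$ and $\myp{1+\beta_N}^{-3/5} \to 1$, so (a) gives $\frac{2}{15\pi^2}\int V_-^{5/2}$. If $\beta = \infty$, then $k_{\beta_N} \to \infty$ while $\myp{1+\beta_N}^{-3/5}k_{\beta_N} = \beta_N/\myp{1+\beta_N} \to 1$, so by (b) for a.e.\ $x$ one eventually has $\myp{1+\beta_N}^{-3/5}P_{k_{\beta_N}}\myp{V_-\myp{x}} = \frac{\beta_N}{1+\beta_N}\frac{V_-\myp{x}^{3/2}}{3\pi^2}$, yielding $\frac{1}{3\pi^2}\int V_-^{3/2}$. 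Multiplying by $-1$ and inserting into the displayed reduction gives exactly the claimed formula in all three cases. The one step needing genuine care is the uniform-in-$B$ estimate (c): it has to control $P_B$ over the whole range of $B = k_{\beta_N}$ met along the sequence, including the degenerate limits $B \to 0$ and $B \to \infty$ where $P_B$ collapses to its $\nu^{5/2}$- and $\nu^{3/2}$-behaviours respectively; everything else is bookkeeping with the scaling relations \eqref{eq:scaling}.
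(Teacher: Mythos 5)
Your proof is correct and follows exactly the route the paper indicates (the paper omits the details, remarking only that the argument ``mainly consists of applying the dominated convergence theorem to \eqref{eq:pauliweyl}''). The one piece of content the paper leaves implicit and you rightly had to supply is the uniform dominating function $C\myp{V_-^{3/2}+V_-^{5/2}}$ for $(1+\beta_N)^{-3/5}P_{k_{\beta_N}}\myp{V_-}$, which you derive cleanly from the integral comparison $P_B\myp{\nu}\leq\frac{B}{3\pi^2}\nu^{3/2}+\frac{2}{15\pi^2}\nu^{5/2}$ together with the scaling bounds $(1+\beta_N)^{-3/5}\leq 1$ and $(1+\beta_N)^{-3/5}k_{\beta_N}\leq 1$.
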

The details of the proof, which mainly consists of applying the dominated convergence theorem to \eqref{eq:pauliweyl}, will be omitted.
\begin{rem}
\label{rem:pauliweyl}
	The Weyl asymptotics in \eqref{eq:pauliweyl} and \cref{cor:pauliweil} also hold true if the Pauli operator $ \myp{\bm{\sigma} \cdot \myp{-i \hbar \nabla + b A \myp{x}}}^2 $ is replaced by the Pauli operator in a cube $ C_R = \myp[\big]{-\frac{R}{2}, \frac{R}{2}}^3 $ with Dirichlet boundary conditions, denoted by $ \myp{\bm{\sigma} \cdot \myp{-i \hbar \nabla + b A \myp{x}}}^2_{C_R} $, (and $ V $ is replaced by a potential defined on $ C_R $).
	For further details on this, see e.g. \cite{MadThese}.
\end{rem}
Applying the generalized Lieb-Thirring inequality \cref{eq:genlt,eq:genltcor} yields the following important estimates.
The proof is a step-by-step imitation of the proof of \cite[Lemma $ 3.4 $]{FouLewSol-18}.
\begin{lem}
\label{lem:ltbounds}
	If $ V_-, w_- \in L^{5/2} \myp{\mathbb{R}^3} + L^{\infty} \myp{\mathbb{R}^3} $ and $ \beta_N > 0 $, then
	\begin{equation}
	\label{eq:energybound}
		H_{N,\beta_N} \geq \sum\limits_{j=1}^N \myp[\Big]{ \frac{1}{2} \myp{ \bm{\sigma} \cdot \myp{-i\hbar \nabla_j + bA\myp{x_j}} }^2 + V_+ \myp{x_j} } - CN \myp[\Big]{\frac{b+1}{\hbar^2 N} +1 }
	\end{equation}
	and for any normalized fermionic wave function $ \Psi \in \bigwedge^N L^2 \myp{\mathbb{R}^3 ; \mathbb{C}^2} $,
	\begin{align}
	\label{eq:kinbound}
		\MoveEqLeft[3] \innerp[\Big]{ \Psi}{ \sum\limits_{j=1}^N \myp[\big]{ \myp{ \bm{\sigma} \cdot \myp{ -i \hbar \nabla_j + bA \myp{x_j}} }^2 + V_+ \myp{x_j} } \Psi }
		+ \hbar^2 \int F_{\frac{b}{\hbar}} \myp{ \rho_{\Psi}^{\myp{1}} \myp{x} } \id x \nonumber \\
		&\leq 2 \innerp{\Psi}{ H_{N, \beta_N} \Psi }+ CN \myp[\Big]{\frac{b+1}{\hbar^2 N} +1},
	\end{align}
	with $ F_{b/\hbar} $ given by \eqref{eq:fbdef}.
	Furthermore, if $ \Psi_N \in \bigwedge^N L^2 \myp{\mathbb{R}^3; \mathbb{C}^2} $ is a sequence satisfying $ \innerp{ \Psi_N}{ H_{N,\beta_N} \Psi_N } \leq CN $, then for any $ f = f_1 + f_2 \in L^{3/2} \myp{\mathbb{R}^3} \cap L^{5/2} \myp{\mathbb{R}^3} + L^{\infty} \myp{\mathbb{R}^3} $, we have that
	\begin{align}
		\MoveEqLeft[3]	\frac{1}{N} \int_{\mathbb{R}^3} f \myp{x} \rho_{\Psi_N}^{\myp{1}} \myp{x} \id x + \frac{1}{N^2} \iint_{\mathbb{R}^6} f \myp{x-y} \rho_{\Psi_N}^{\myp{2}} \myp{x,y} \id x \id y \nonumber \\
		&\leq \tilde{C} \myp[\Big]{\frac{b+1}{\hbar^2 N} + \frac{1}{\hbar^3 N} + 1} \myp{ \normt{f_1}{\frac{3}{2}} + \normt{f_1}{\frac{5}{2}} + \normt{f_2}{\infty}}.
	\label{eq:ltdensity}
	\end{align}
\end{lem}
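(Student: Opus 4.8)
The plan is to prove the three estimates simultaneously, following the argument of \cite[Lemma~3.4]{FouLewSol-18} essentially step by step; the only genuinely new point is that the constants produced by the magnetic Lieb--Thirring inequality carry factors of $b$ and $\hbar$, and one has to check that the scaling relations \eqref{eq:scaling} make these collapse into the $N$-uniformly bounded combinations $\tfrac{b+1}{\hbar^2N}$ and $\tfrac1{\hbar^3N}$ appearing in the statement (recall $\tfrac{b}{\hbar^2N}=\tfrac{\beta}{1+\beta}\le1$ and $\tfrac1{\hbar^3N}=\myp{1+\beta}^{-3/5}\le1$). Throughout write $K:=\sum_{j=1}^N\myp{\bm{\sigma}\cdot\myp{-i\hbar\nabla_j+bA\myp{x_j}}}^2$, and split $V=V_+-V_-$ and $w\ge-w_-$.

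For \eqref{eq:energybound}, first decompose $V_-=u_1+u_2$ with $\norm{u_2}_\infty\le\varepsilon$ and $u_1\in L^{3/2}\myp{\mathbb R^3}\cap L^{5/2}\myp{\mathbb R^3}$: writing $V_-=h_1+h_2$ with $h_1\in L^{5/2}$ and $\norm{h_2}_\infty$ small, the set $\Set{V_->\varepsilon}$ has finite measure, so $u_1:=\myp{V_--\varepsilon}_+$ lies in $L^{3/2}$ as well as $L^{5/2}$. Applying \eqref{eq:genlt} to the one-body operator $\tfrac14\myp{\bm{\sigma}\cdot\myp{-i\hbar\nabla+bA}}^2-u_1$ (equivalently, to a Pauli operator with $\hbar$ and $b$ both scaled by $\tfrac12$, which preserves the ratio $b/\hbar$) and using the fermionic antisymmetry of $\Psi$ gives, as operators on $\bigwedge^N L^2\myp{\mathbb R^3;\mathbb C^2}$,
\[
\sum_{j=1}^N u_1\myp{x_j}\le\tfrac14K+C\myp[\Big]{\tfrac{b}{\hbar^2}\norm{u_1}_{3/2}^{3/2}+\tfrac1{\hbar^3}\norm{u_1}_{5/2}^{5/2}}\le\tfrac14K+CN,
\]
the last step by \eqref{eq:scaling}, while $\sum_j u_2\myp{x_j}\le\varepsilon N$; hence $\sum_j V_-\myp{x_j}\le\tfrac14K+CN$. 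For the interaction one bounds $\tfrac1N\innerp{\Psi}{\sum_{j<k}w_-\myp{x_j-x_k}\Psi}$ by a one-body expression — either via a Hölder estimate in a single particle variable combined with the kinetic energy in that variable, or via the fermionic pair-density bound $\rho^{(2)}_\Psi\myp{x,y}\le\rho^{(1)}_\Psi\myp{x}\rho^{(1)}_\Psi\myp{y}$, which reduces it (after decomposing $w_-$ as above) to a multiple of $\norm{w_-}_{5/2}\norm{\rho_\Psi^{\myp1}}_{5/3}+\varepsilon N$; the term $\norm{w_-}_{5/2}\norm{\rho_\Psi^{\myp1}}_{5/3}$ is absorbed into $\tfrac14\innerp{\Psi}{K\Psi}+CN$ by combining the kinetic lower bound \eqref{eq:genltcor} with the $F_{b/\hbar}$-analogue of \eqref{eq:rhoest2} (which, after inserting \eqref{eq:scaling}, again costs only an $O(N)$ remainder) and Young's inequality. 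Since the negative parts are each dominated by an arbitrarily small multiple $\theta$ of the kinetic operator plus $C_\theta N$, one gets $H_{N,\beta_N}\ge\myp{1-\theta}K+\sum_j V_+\myp{x_j}-C_\theta N\myp[\big]{\tfrac{b+1}{\hbar^2N}+1}$ for every $\theta>0$; taking $\theta=\tfrac12$ yields \eqref{eq:energybound}.

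For \eqref{eq:kinbound}, use $V_+\ge0$ and the refinement above with $\theta$ slightly below $\tfrac12$ to obtain $\innerp{\Psi}{\myp{K+\sum_j V_+\myp{x_j}}\Psi}\le2\innerp{\Psi}{H_{N,\beta_N}\Psi}+CN\myp[\big]{\tfrac{b+1}{\hbar^2N}+1}$ with enough slack left to add $\hbar^2\int F_{b/\hbar}\myp{\rho_\Psi^{\myp1}}\le\innerp{\Psi}{K\Psi}$ (by \eqref{eq:genltcor}) on the left while keeping the factor $2$ in front of the energy. For \eqref{eq:ltdensity}, the hypothesis $\innerp{\Psi_N}{H_{N,\beta_N}\Psi_N}\le CN$ together with \eqref{eq:kinbound} gives $\innerp{\Psi_N}{K\Psi_N}\le C'N$ and $\hbar^2\int F_{b/\hbar}\myp{\rho_{\Psi_N}^{\myp1}}\le C'N$; the latter, with the $F_{b/\hbar}$-version of \eqref{eq:rhoest2} and \eqref{eq:scaling}, gives $\int\myp{\rho_{\Psi_N}^{\myp1}}^{5/3}\le C''N^{5/3}$, i.e. $\norm{N^{-1}\rho_{\Psi_N}^{\myp1}}_{5/3}\le C''$. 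The one-body term is then estimated by $\tfrac1N\norm{f_1}_{5/2}\norm{\rho_{\Psi_N}^{\myp1}}_{5/3}+\norm{f_2}_\infty$ and the two-body term similarly, using $\rho^{(2)}_{\Psi_N}\myp{x,y}\le\rho^{(1)}_{\Psi_N}\myp{x}\rho^{(1)}_{\Psi_N}\myp{y}$ and Young's convolution inequality; the dependence on $\norm{f_1}_{3/2}$ is produced by the $\tfrac{b}{\hbar^2}\int\myp{\nonarg}^{3/2}$ term in \eqref{eq:genlt}, applied to $f_1$ after rescaling $f_1\mapsto\lambda f_1$ and optimising in $\lambda$ so as to linearise the inhomogeneous Lieb--Thirring constant against the (already $O(N)$) kinetic energy.

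The step I expect to require the most care is the uniform bookkeeping of the magnetic Lieb--Thirring constants: in each of the three bounds one must verify that the prefactors generated by \eqref{eq:genlt}, \eqref{eq:genltcor} and the $F_{b/\hbar}$-analogue of \eqref{eq:rhoest2} really do reassemble, via the identities \eqref{eq:scaling}, into $\tfrac{b+1}{\hbar^2N}$ and $\tfrac1{\hbar^3N}$, and do not leave behind spurious factors such as $\hbar b$ or $\myp{1+\beta}^{3/5}$ that would diverge along sequences with $\beta_N\to\infty$ — it is precisely the constraint \eqref{eq:scaling2} that prevents this. A secondary technical point is to phrase the interaction estimate as an operator inequality (as needed in \eqref{eq:energybound}): this forces one to use either the fermionic pair-density bound or a direct form bound of $w_-\myp{x_j-x_k}$ against the magnetic kinetic energy of the $j$-th particle, in place of a naive Sobolev estimate, which in the semi-classical/magnetic regime would produce a fatal $\hbar^{-3}$.
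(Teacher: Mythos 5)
Your overall architecture matches the paper's: decompose the negative parts of $V$ and $w$ into an $L^{3/2}\cap L^{5/2}$ piece and an $L^\infty$ piece, absorb the former into a fraction of the kinetic energy via the magnetic Lieb--Thirring inequality \eqref{eq:genlt}, bound the latter trivially, and derive \eqref{eq:kinbound} and \eqref{eq:ltdensity} from the resulting operator bound together with \eqref{eq:genltcor} and the Legendre-transform structure of $F_{b/\hbar}$. That part is sound.

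However, there is a genuine gap: the ``fermionic pair-density bound''
\[
\rho^{(2)}_\Psi(x,y)\le\rho^{(1)}_\Psi(x)\,\rho^{(1)}_\Psi(y)
\]
on which you rely — as one of the two options for the interaction term in \eqref{eq:energybound}, and as the \emph{only} stated tool for the two-body part of \eqref{eq:ltdensity} — is \emph{false} for general antisymmetric $\Psi$. For a single Slater determinant one has the identity $\rho^{(2)} = \tfrac12\bigl(\rho^{(1)}\otimes\rho^{(1)} - \lvert\gamma\rvert^2\bigr)$, which does imply a bound of that shape, but the map $\rho^{(1)}\mapsto\rho^{(1)}\otimes\rho^{(1)}$ is not affine, and for superpositions the inequality goes the wrong way. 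Concretely, take orthonormal $\phi_1,\dots,\phi_4$ with pairwise disjoint supports, let $\Phi_1=\phi_1\wedge\phi_2$, $\Phi_2=\phi_3\wedge\phi_4$ and $\Psi=c_1\Phi_1+c_2\Phi_2$ with $\lvert c_1\rvert^2=\tfrac14$; then the cross terms in $\rho^{(1)}_\Psi$ and $\rho^{(2)}_\Psi$ vanish, and at a point $(x,y)\in\supp\phi_1\times\supp\phi_2$ (where $\gamma_{\Phi_1}(x,y)=0$) one computes $\rho^{(2)}_\Psi(x,y)=\tfrac14\cdot\tfrac12\,\rho^{(1)}_{\Phi_1}(x)\rho^{(1)}_{\Phi_1}(y)$ while $\rho^{(1)}_\Psi(x)\rho^{(1)}_\Psi(y)=\tfrac1{16}\,\rho^{(1)}_{\Phi_1}(x)\rho^{(1)}_{\Phi_1}(y)$, so $\rho^{(2)}_\Psi>\rho^{(1)}_\Psi\otimes\rho^{(1)}_\Psi$ there. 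Moreover, for the \emph{operator} inequality \eqref{eq:energybound} a state-dependent density inequality of this type is in any case inadequate as written; what is needed is precisely the route you call ``Hölder in one variable against kinetic energy,'' namely the rearrangement
\[
\frac1N\,\Bigl\langle\Psi,\!\sum_{j<k}w_-(x_j-x_k)\Psi\Bigr\rangle
=\frac12\,\Bigl\langle\Psi,\!\sum_{j\ge2}w_-(x_1-x_j)\Psi\Bigr\rangle,
\]
valid for antisymmetric $\Psi$, followed by the magnetic Lieb--Thirring inequality applied, for each fixed $x_1$, to $\sum_{j\ge2}\tfrac14\myp{\bm{\sigma}\cdot\myp{-i\hbar\nabla_j+bA(x_j)}}^2-\tfrac12 w_1(x_1-x_j)$. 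This yields an $x_1$-uniform lower bound and hence the operator inequality \eqref{eq:wbound}. The same device, with $w$ replaced by $\varepsilon^{-1}f_1$ and a subsequent optimisation in $\varepsilon$, produces the two-body part of \eqref{eq:ltdensity}; your pair-density route must be replaced by this. The one-body part of \eqref{eq:ltdensity}, which you handle via $\lVert\rho^{(1)}_{\Psi_N}\rVert_{5/3}\lesssim N$ and Hölder (or, equivalently, via the Legendre dual of $F_{b/\hbar}$ as the paper does), is fine.

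A minor point: the exact numerical factor $2$ in \eqref{eq:kinbound} does not come from the coefficient $\tfrac12$ in \eqref{eq:energybound} alone; as you correctly anticipate, one should rerun the absorption argument with the kinetic-energy fraction $\theta$ taken small (not ``slightly below $\tfrac12$'') so that the coefficient of $K$ in \eqref{eq:energybound} is $1-2\theta$ close to $1$, leaving room to append $\hbar^2\int F_{b/\hbar}(\rho^{(1)}_\Psi)\le\langle\Psi,K\Psi\rangle$ while keeping the prefactor $2+O(\theta)$.
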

\begin{proof}
	The argument goes along the same lines as the proof of \cite[Lemma $ 3.4 $]{FouLewSol-18}.
	We write $ V_- = V_1 + V_2 $ and $ w_- = w_1+w_2 $ with $ V_1,w_1 \in L^{5/2} \myp{\mathbb{R}^3} \cap L^{3/2} \myp{\mathbb{R}^3} $ and $ V_2,w_2 \in L^{\infty} \myp{\mathbb{R}^3} $.
	We clearly have that
	\begin{equation}
	\label{eq:vneg2}
		\innerp[\Big]{ \Psi}{ \sum\limits_{j=1}^N -V_2 \myp{x_j} \Psi }
		\geq - \normt{V_2}{\infty} N
	\end{equation}
	for any normalized wave function $ \Psi $.
	Briefly denoting by $ H_{V_1} $ the operator $ H_{V_1} := \frac{1}{4} \myp{ \bm{\sigma} \cdot \myp{-i \hbar \nabla +b A}}^2 - V_1 $ and applying the magnetic Lieb-Thirring inequality \eqref{eq:genlt}, we obtain
	\begin{align}
		\innerp[\Big]{ \Psi}{ \sum\limits_{j=1}^N H_{V_1} \Psi }
		&\geq \text{Tr} \myp{H_{V_1}}_- \nonumber \\
		&\geq -2 L_1 \frac{b}{\hbar^2} \int_{\mathbb{R}^3} V_1 \myp{x}^{\frac{3}{2}} \id x -8 L_2 \frac{1}{\hbar^3} \int_{\mathbb{R}^3} V_1 \myp{x}^{\frac{5}{2}} \id x.
	\label{eq:vneg1}
	\end{align}
	Note that by symmetry we have
	\begin{align*}
		\innerp[\Big]{\Psi}{ \frac{1}{N} \sum\limits_{1 \leq k < \ell \leq N} w_- \myp{x_k-x_{\ell}} \Psi }
		&= \frac{N-1}{2} \innerp[\Big]{ \Psi}{ w_- \myp{x_1-x_2} \Psi } \nonumber \\
		&= \frac{1}{2} \innerp[\Big]{ \Psi}{ \sum\limits_{j=2}^N w_- \myp{x_1-x_j} \Psi },
	\end{align*}
	so applying what we have just shown to the last $ N-1 $ variables, we get
	\begin{align*}
		\MoveEqLeft[3]	\innerp[\Big]{ \Psi}{ \myp[\bigg]{ \sum\limits_{j=1}^N \frac{1}{4} \myp{ \bm{\sigma} \cdot \myp{-i\hbar \nabla_j + bA\myp{x_j}} }^2 - \frac{1}{2} \sum\limits_{j=2}^N w_1 \myp{x_1 - x_j} } \Psi } \\
		&\geq \innerp[\Big]{ \Psi}{ \myp[\bigg]{ \sum\limits_{j=2}^N \frac{1}{4} \myp{ \bm{\sigma} \cdot \myp{-i\hbar \nabla_j + bA\myp{x_j}} }^2 - \frac{1}{2} w_1 \myp{x_1 - x_j} } \Psi } \\
		&\geq -C_1 \frac{b}{\hbar^2} \int_{\mathbb{R}^3} w_1 \myp{x}^{\frac{3}{2}} \id x -C_2 \frac{1}{\hbar^3} \int_{\mathbb{R}^3} w_1 \myp{x}^{\frac{5}{2}} \id x.
	\end{align*}
	Hence we see that
	\begin{align}
		\MoveEqLeft[3]	\sum\limits_{j=1}^N \frac{1}{4} \myp{ \bm{\sigma} \cdot \myp{-i\hbar \nabla_j + bA\myp{x_j}} }^2 + \frac{1}{N} \sum\limits_{1 \leq k < \ell \leq N} w \myp{x_k-x_{\ell}} \nonumber \\
		&\geq -C_1 \frac{b}{\hbar^2} \int_{\mathbb{R}^3} w_1 \myp{x}^{\frac{3}{2}} \id x -C_2 \frac{1}{\hbar^3} \int_{\mathbb{R}^3} w_1 \myp{x}^{\frac{5}{2}} \id x - \frac{N-1}{2} \normt{w_2}{\infty}.
	\label{eq:wbound}
	\end{align}
	Combining \labelcref{eq:vneg2,eq:vneg1,eq:wbound} yields \eqref{eq:energybound}.
	We obtain \eqref{eq:kinbound} directly from \eqref{eq:energybound} by applying the Lieb-Thirring inequality \eqref{eq:genltcor}.
	
	Let us turn our attention towards the proof of \eqref{eq:ltdensity}.
	Note that it suffices to prove the estimate for non-negative functions $ f $.
	We will prove the one-body part of the estimate first.
	Clearly, since $ \Psi_N $ is normalized,
	\begin{equation}
	\label{eq:onebodyest2}
		\frac{1}{N} \int_{\mathbb{R}^3} f_2 \myp{x} \rho_{\Psi_N}^{\myp{1}} \myp{x} \id x
		\leq \normt{f_2}{\infty},
	\end{equation}
	so we may consider only $ f_1 \in L^{3/2} \myp{\mathbb{R}^3} \cap L^{5/2} \myp{\mathbb{R}^3} $.
	For any $ v \geq 0 $ we have by definition,
	\begin{equation*}
		\rho_{\Psi_N}^{\myp{1}} \myp{x} v - L_1 \frac{b}{\hbar} v^{\frac{3}{2}} - L_2 v^{\frac{5}{2}} 
		\leq F_{\frac{b}{\hbar}} \myp{ \rho_{\Psi_N}^{\myp{1}} \myp{x} },
	\end{equation*}
	so we replace $ v $ by $ \frac{1}{\varepsilon} f_1 \myp{x} $, integrate and apply \eqref{eq:kinbound} to obtain
	\begin{align*}
		\MoveEqLeft[3]	\int_{\mathbb{R}^3} f_1 \myp{x} \rho_{\Psi_N}^{\myp{1}} \myp{x} - \frac{L_1}{\varepsilon^{\frac{1}{2}}} \frac{b}{\hbar} f_1 \myp{x}^{\frac{3}{2}} - \frac{L_2}{\varepsilon^{\frac{3}{2}}} f_1 \myp{x}^{\frac{5}{2}} \id x \\
		&\leq \varepsilon \int_{\mathbb{R}^3} F_{\frac{b}{\hbar}} \myp{ \rho_{\Psi_N}^{\myp{1}} \myp{x} } \id x \leq \frac{CN}{\hbar^2} \myp[\Big]{\frac{b+1}{\hbar^2 N} +3} \varepsilon,
	\end{align*}
	implying the bound
	\begin{equation*}
		\int_{\mathbb{R}^3} f_1 \myp{x} \rho_{\Psi_N}^{\myp{1}} \myp{x} \id x
		\leq \frac{CN}{\hbar^2} \myp[\Big]{\frac{b+1}{\hbar^2 N} +3} \varepsilon + L_1 \frac{b}{\hbar \varepsilon^{\frac{1}{2}}} \normt{f_1}{\frac{3}{2}}^{\frac{3}{2}} + L_2 \frac{1}{\varepsilon^{\frac{3}{2}}} \normt{f_1}{\frac{5}{2}}^{\frac{5}{2}}.
	\end{equation*}
	With the choice $ \varepsilon = \myp{ \normt{f_1}{\frac{3}{2}} + \normt{f_1}{\frac{5}{2}} } \hbar^2 $, we get
	\begin{align}
		\frac{1}{N} \int_{\mathbb{R}^3} f_1 \myp{x} \rho_{\Psi_N}^{\myp{1}} \myp{x} \id x 
		\leq \widetilde{C} \myp[\Big]{\frac{b+1}{\hbar^2 N} + \frac{1}{\hbar^3 N} + 1} \myp[\big]{ \normt{f_1}{\frac{3}{2}} + \normt{f_1}{\frac{5}{2}} }
	\label{eq:onebodyest}
	\end{align}
	for some constant $ \widetilde{C} > 0 $, showing the one-body part of \eqref{eq:ltdensity}.
	To obtain the two-body estimate, we apply \eqref{eq:wbound} with $ w $ replaced by $ \frac{1}{\varepsilon} f_1 $ and use the anti-symmetry of $ \Psi_N $ to get
	\begin{align*}
		\MoveEqLeft[3]	-C_1 \frac{b}{\hbar^2 \varepsilon^{\frac{3}{2}}} \int_{\mathbb{R}^3}  f_1 \myp{x}^{\frac{3}{2}} \id x -C_2 \frac{1}{\hbar^3 \varepsilon^{\frac{5}{2}}} \int_{\mathbb{R}^3} f_1 \myp{x}^{\frac{5}{2}} \id x \\
		&\leq \innerp[\Big]{ \Psi_N }{ \myp[\bigg]{ \sum\limits_{j=1}^N \frac{1}{4} \myp{ \bm{\sigma} \cdot \myp{-i\hbar \nabla_j + bA\myp{x_j}} }^2 - \frac{1}{2} \sum\limits_{j=2}^N \frac{1}{\varepsilon} f_1 \myp{x_1-x_j} } \Psi_N } \\
		&\leq CN \myp[\Big]{\frac{b+1}{\hbar^2 N} +3} - \frac{1}{\varepsilon N} \iint_{\mathbb{R}^6} f_1 \myp{x-y} \rho_{\Psi_N}^{\myp{2}} \myp{x,y} \id x \id y,
	\end{align*}
	where the second inequality holds by \eqref{eq:kinbound}.
	Now we simply take $ \varepsilon = \normt{f_1}{\frac{3}{2}} + \normt{f_1}{\frac{5}{2}} $ and rearrange to obtain
	\begin{equation*}
		\frac{1}{N^2} \iint_{\mathbb{R}^6} f_1 \myp{x-y} \rho_{\Psi_N}^{\myp{2}} \myp{x,y} \id x \id y
		\leq \widetilde{C} \myp[\Big]{\frac{b+1}{\hbar^2 N} + \frac{1}{\hbar^3 N} + 1} \myp[\big]{ \normt{f_1}{\frac{3}{2}} + \normt{f_1}{\frac{5}{2}} }.
	\end{equation*}
	Combining this with the fact that
	\begin{equation*}
		\frac{1}{N^2} \iint_{\mathbb{R}^6} f_2 \myp{x-y} \rho_{\Psi_N}^{\myp{2}} \myp{x,y} \id x \id y
		\leq \normt{f_2}{\infty} \frac{1}{N^2} \binom{N}{2} \normt{\Psi_N}{2}^2
		\leq \frac{1}{2} \normt{f_2}{\infty},
	\end{equation*}
	we get the two-body estimate in \eqref{eq:ltdensity}, finishing the proof.
\end{proof}
\begin{cor}[to LT inequality]
	If $ \Psi \in \bigwedge^N L^2 \myp{\mathbb{R}^3; \mathbb{C}^2} $ is an $ N $-particle state with finite kinetic energy, then $ \rho_{\Psi}^{\myp{1}} \in L^{5/3} \myp{\mathbb{R}^3} $.
\end{cor}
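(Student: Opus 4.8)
The plan is to read the claim off the magnetic Lieb--Thirring lower bound \eqref{eq:genltcor}. Since $ \Psi $ has finite kinetic energy, that inequality gives
\begin{equation*}
	\hbar^2 \int_{\mathbb{R}^3} F_{\frac{b}{\hbar}} \myp{ \rho_{\Psi}^{\myp{1}} \myp{x} } \id x
	\leq \innerp[\Big]{ \Psi }{ \sum\limits_{j=1}^N \myp{\bm{\sigma} \cdot \myp{-i \hbar \nabla_j + b A \myp{x_j}}}^2 \Psi } < \infty ,
\end{equation*}
so it suffices to show that integrability of $ x \mapsto F_{b/\hbar} \myp{ \rho_{\Psi}^{\myp{1}} \myp{x}} $, together with the fact that $ \rho_{\Psi}^{\myp{1}} \in L^1 \myp{\mathbb{R}^3} $ with $ \normt{\rho_{\Psi}^{\myp{1}}}{1} = N $ (because $ \Psi $ is normalized), forces $ \rho_{\Psi}^{\myp{1}} \in L^{5/3} \myp{\mathbb{R}^3} $. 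This is now a purely elementary statement about the one-variable function $ F_B $ from \eqref{eq:fbdef}, for the fixed value $ B = b/\hbar $.

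The key input is the pointwise lower bound
\begin{equation*}
	F_B \myp{t} \geq \tilde{c}\, t^{\frac{5}{3}} - c'\, B^{\frac{5}{2}}, \qquad t \geq 0 ,
\end{equation*}
with constants $ \tilde{c}, c' > 0 $ depending only on $ L_1, L_2 $. To get it, I would absorb the middle term by Young's inequality with conjugate exponents $ 5/3 $ and $ 5/2 $, i.e. $ L_1 B v^{3/2} \leq \tfrac{1}{2} L_2 v^{5/2} + c'\, B^{5/2} $ for all $ v \geq 0 $; then the supremum in \eqref{eq:fbdef} is bounded below by $ \sup_{v \geq 0} \myp{ tv - \tfrac{3}{2} L_2 v^{5/2} } - c'\, B^{5/2} $, and the remaining single-term Legendre transform is computed explicitly to be $ \tilde{c}\, t^{5/3} $ with $ \tilde{c} = \tfrac{3}{5}\myp{\tfrac{4}{15 L_2}}^{2/3} $. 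Also $ F_B \myp{t} \geq 0 $, by taking $ v = 0 $ in \eqref{eq:fbdef}.

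Finally I would split $ \mathbb{R}^3 $ according to the size of $ \rho_{\Psi}^{\myp{1}} $. Set $ T_0 := \myp{ 2 c'/\tilde{c} }^{3/5} B^{3/2} $, so that the displayed bound yields $ F_B \myp{t} \geq \tfrac{\tilde{c}}{2} t^{5/3} $ for $ t \geq T_0 $. On $ \Set{ \rho_{\Psi}^{\myp{1}} \geq T_0 } $ this gives $ \int \myp{\rho_{\Psi}^{\myp{1}}}^{5/3} \leq \tfrac{2}{\tilde{c}} \int F_B \myp{ \rho_{\Psi}^{\myp{1}}} < \infty $, where I use $ F_B \geq 0 $ to enlarge the integration region back to all of $ \mathbb{R}^3 $; on $ \Set{ \rho_{\Psi}^{\myp{1}} < T_0 } $ I bound $ \myp{\rho_{\Psi}^{\myp{1}}}^{5/3} \leq T_0^{2/3}\, \rho_{\Psi}^{\myp{1}} $ and integrate to get at most $ T_0^{2/3} N < \infty $. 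Summing the two contributions yields $ \rho_{\Psi}^{\myp{1}} \in L^{5/3} \myp{\mathbb{R}^3} $. There is no genuine obstacle here; the only point needing a little care is that the additive constant $ B^{5/2} $ in the pointwise lower bound on $ F_B $ is harmless, precisely because $ F_B $ is non-negative and $ \rho_{\Psi}^{\myp{1}} $ is already known to lie in $ L^1 $, so the set where $ \rho_{\Psi}^{\myp{1}} $ is small contributes nothing new.
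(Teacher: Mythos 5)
Your proof is correct, and it implements the same underlying idea as the paper's (Lieb--Thirring bound plus the $L^1$ normalization) via a somewhat cleaner bookkeeping. The paper first checks by Markov's inequality that the truncation $\rho_{\Psi}^{(1)}\mathds{1}_{\{\rho_{\Psi}^{(1)}\leq M\}}$ lies in $L^{5/3}$, then feeds the test function $f = (\rho_{\Psi}^{(1)}\mathds{1}_{\{\rho_{\Psi}^{(1)}\leq M\}})^{2/3}$ into the inequality $\int \rho_{\Psi}^{(1)} f - L_1 \frac{b}{\hbar} f^{3/2} - L_2 f^{5/2} \leq \int F_{b/\hbar}(\rho_{\Psi}^{(1)})$, rearranges using $L_2 < 1$, and sends $M \to \infty$. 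You instead compute the Legendre transform more fully: you absorb the $L_1 B v^{3/2}$ term by Young's inequality to get the pointwise bound $F_B(t) \geq \tilde c\, t^{5/3} - c'B^{5/2}$, and then split $\mathbb{R}^3$ at the level $T_0 \sim B^{3/2}$, using the $F_B$ integral on the superlevel set (relying on $F_B\geq 0$ to extend the domain) and $\rho_{\Psi}^{(1)}\in L^1$ on the sublevel set. This removes both the truncation-and-limit step and the need for the normalization $L_2<1$, and it cleanly separates the elementary convexity estimate on $F_B$ from the analysis on $\mathbb{R}^3$. Both proofs yield the same kind of quantitative bound, of the form $\|\rho_{\Psi}^{(1)}\|_{5/3}^{5/3}\lesssim \hbar^{-2}\langle\Psi,\sum T_j\Psi\rangle + (b/\hbar)\,N$.
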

\begin{proof}
	Note first that for any $ M> 0 $, we have by Markov's inequality,
	\begin{align*}
		\int \myp{ \rho_{\Psi}^{\myp{1}} \myp{x} \mathds{1}_{ \Set{ \rho_{\Psi}^{\myp{1}} \leq M } } \myp{x} }^{\frac{5}{3}} \id x
		&\leq \int_{\Set{\rho_{\Psi}^{\myp{1}} \leq 1 }} \myp{ \rho_{\Psi}^{\myp{1}} \myp{x} }^{\frac{5}{3}} \id x + M^{\frac{5}{3}} \abs{ \Set{ \rho_{\Psi}^{\myp{1}} \geq 1 } } \\
		&\leq \myp[\big]{1+M^{\frac{5}{3}} } \int_{\mathbb{R}^3} \rho_{\Psi}^{\myp{1}} \myp{x} \id x,
	\end{align*}
	so $ \rho_{\Psi}^{\myp{1}} \mathds{1}_{ \Set{ \rho_{\Psi}^{\myp{1}} \leq M
	} } \in L^{5/3} \myp{\mathbb{R}^3} $.
	Since $ \Psi $ has finite kinetic energy, it follows from \eqref{eq:genltcor} that
	\begin{equation*}
		\int_{\mathbb{R}^3} \rho_{\Psi}^{\myp{1}} \myp{x} f \myp{x} - L_1 \frac{b}{\hbar} f \myp{x} ^\frac{3}{2} - L_2 f \myp{x}^{\frac{5}{2}} \id x \leq C
	\end{equation*}
	for any $ 0 \leq f \in L^{3/2} \myp{\mathbb{R}^3} \cap L^{5/2} \myp{\mathbb{R}^3} $, where $ C $ is independent of $ f $, and $ L_2 $ can be chosen such that $ 0 < L_2 < 1 $ by the Lieb-Thirring inequality \eqref{eq:genlt}.
	Choosing $ f = \myp{ \rho_{\Psi}^{\myp{1}} \mathds{1}_{ \Set{ \rho_{\Psi}^{\myp{1}} \leq M } } }^{2/3} $, we get
	\begin{equation*}
		\myp{1-L_2} \int_{\Set{\rho_{\Psi}^{\myp{1}} \leq M } } \rho_{\Psi}^{\myp{1}} \myp{x}^{\frac{5}{3}} \id x
		\leq C + L_1 \frac{b}{\hbar} \int_{\Set{\rho_{\Psi}^{\myp{1}} \leq M } } \rho_{\Psi}^{\myp{1}} \myp{x} \id x.
	\end{equation*}
	Taking $ M $ to infinity finishes the proof.
\end{proof}
\begin{lem}
\label{lem:densitybound}
	Suppose that $ \Psi_N \in \bigwedge^N L^2 \myp{\mathbb{R}^3; \mathbb{C}^2} $ is a sequence satisfying the kinetic energy bound
	\begin{equation*}
		\innerp[\Big]{ \Psi_N}{ \bigg( \sum\limits_{j=1}^N \myp{ \bm{\sigma} \cdot \myp{ -i \hbar \nabla_j + bA \myp{x_j}} }^2 \bigg) \Psi_N } 
		\leq \widetilde{C} N,
	\end{equation*}
	where $ \hbar $ and $ b $ satisfy the scaling relations	\eqref{eq:scaling0}.
	Then there exists a $ C > 0 $ such that $ \normt{ \rho_{\Psi_N}^{\myp{1}} }{\frac{5}{3}} \leq C N $ for all $ N $.
	In particular, if $ \frac{1}{N}\rho_{\Psi_N}^{\myp{1}} \rightharpoonup \rho $ weakly as functionals on $ C_c \myp{\mathbb{R}^3} $, then $ \rho \in L^1 \myp{\mathbb{R}^3} \cap L^{5/3} \myp{\mathbb{R}^3} $ and for any test function $ \varphi \in L^{5/2} \myp{\mathbb{R}^3} + L_{\varepsilon}^{\infty} \myp{\mathbb{R}^3} $ we have $ \int \frac{1}{N}\rho_{\Psi_N}^{\myp{1}} \varphi \to \int \rho \varphi $.
\end{lem}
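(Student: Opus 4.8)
The plan is to derive the $ L^{5/3} $-bound from the magnetic Lieb--Thirring inequality \eqref{eq:genltcor} exactly as estimate \eqref{eq:onebodyest} was obtained in the proof of \cref{lem:ltbounds}, while keeping all powers of $ N $ explicit; the cruder substitution used in the Corollary to the LT inequality above is not enough here, because it would leave an uncontrolled factor $ b/\hbar $. From the kinetic energy hypothesis and \eqref{eq:genltcor} one has $ \int_{\mathbb{R}^3} F_{b/\hbar}\myp{\rho_{\Psi_N}^{\myp{1}}\myp{x}}\id x\leq\hbar^{-2}\widetilde{C}N $, with $ F_{b/\hbar} $ as in \eqref{eq:fbdef}. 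Fix $ M>0 $, set $ X_M:=\int_{\{\rho_{\Psi_N}^{\myp{1}}\leq M\}}\big(\rho_{\Psi_N}^{\myp{1}}\big)^{5/3}\id x $, and insert $ v=\varepsilon^{-1}\big(\rho_{\Psi_N}^{\myp{1}}\myp{x}\,\mathds{1}_{\{\rho_{\Psi_N}^{\myp{1}}\leq M\}}\myp{x}\big)^{2/3} $ into the defining inequality $ \rho_{\Psi_N}^{\myp{1}}\myp{x}v-L_1\tfrac{b}{\hbar}v^{3/2}-L_2 v^{5/2}\leq F_{b/\hbar}\myp{\rho_{\Psi_N}^{\myp{1}}\myp{x}} $; integrating, using $ \int\rho_{\Psi_N}^{\myp{1}}=N $, and moving the $ L_2 $-term to the left side for $ \varepsilon\geq\myp{2L_2}^{2/3} $ yields
\[
	X_M\leq 2\widetilde{C}\,\varepsilon\,\hbar^{-2}N+2L_1\,\tfrac{b}{\hbar}\,\varepsilon^{-1/2}N .
\]
Choosing $ \varepsilon=\myp{2L_2}^{2/3}+\myp{\hbar b}^{2/3} $, so that $ \tfrac{b}{\hbar}\varepsilon^{-1/2}N\leq\myp{b/\hbar^2}^{2/3}N $, and invoking the scaling relations \eqref{eq:scaling} — which give $ \hbar^{-2}N=\myp{1+\beta_N}^{-2/5}N^{5/3}\leq N^{5/3} $ and $ \myp{b/\hbar^2}^{2/3}N=\myp[\big]{\tfrac{\beta_N}{1+\beta_N}}^{2/3}N^{5/3}\leq N^{5/3} $ — one obtains $ X_M\leq CN^{5/3} $ with $ C $ depending only on $ \widetilde{C},L_1,L_2 $, hence \emph{uniformly in $ N $ and in $ \beta_N $}. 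Letting $ M\to\infty $ by monotone convergence gives $ \normt{\rho_{\Psi_N}^{\myp{1}}}{\frac{5}{3}}\leq CN $.

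For the remaining assertions, the uniform bound $ \normt{N^{-1}\rho_{\Psi_N}^{\myp{1}}}{\frac{5}{3}}\leq C $ and reflexivity of $ L^{5/3}\myp{\mathbb{R}^3} $ imply that $ N^{-1}\rho_{\Psi_N}^{\myp{1}} $ has weak-$ L^{5/3} $ limit points along subsequences; since $ C_c\myp{\mathbb{R}^3} $ is dense in $ L^{5/2}\myp{\mathbb{R}^3} $ and the sequence is bounded in $ L^{5/3} $, each such limit point must agree with the given distributional limit $ \rho $, so $ \rho\in L^{5/3}\myp{\mathbb{R}^3} $ and, by uniqueness of the limit, $ N^{-1}\rho_{\Psi_N}^{\myp{1}}\rightharpoonup\rho $ weakly in $ L^{5/3}\myp{\mathbb{R}^3} $. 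Since $ \rho\geq 0 $, testing against $ \chi\in C_c\myp{\mathbb{R}^3} $ with $ 0\leq\chi\uparrow 1 $ gives, by monotone convergence, $ \int_{\mathbb{R}^3}\rho\leq\liminf_N N^{-1}\int\rho_{\Psi_N}^{\myp{1}}=1 $, so $ \rho\in L^1\myp{\mathbb{R}^3} $. Finally, for $ \varphi=\varphi_1+\varphi_2 $ with $ \varphi_1\in L^{5/2}\myp{\mathbb{R}^3} $ and $ \normt{\varphi_2}{\infty}\leq\varepsilon $: weak $ L^{5/3} $ convergence gives $ N^{-1}\int\rho_{\Psi_N}^{\myp{1}}\varphi_1\to\int\rho\varphi_1 $, while $ \abs[\big]{N^{-1}\int\rho_{\Psi_N}^{\myp{1}}\varphi_2}\leq\varepsilon $ and $ \abs[\big]{\int\rho\varphi_2}\leq\varepsilon\int\rho\leq\varepsilon $ by the $ L^1 $-estimates; letting $ \varepsilon\to 0 $ proves $ N^{-1}\int\rho_{\Psi_N}^{\myp{1}}\varphi\to\int\rho\varphi $.

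I expect the one genuinely delicate point to be the $ N $-bookkeeping in the first step: one must check that, after the near-optimal choice of $ \varepsilon $, \emph{both} resulting prefactors are $ O\myp{N^{5/3}} $ with a constant independent of $ \beta_N $ — which is precisely what the explicit scaling \eqref{eq:scaling0}--\eqref{eq:scaling} provides (in particular, no boundedness of $ \beta_N $ is needed). The rest is a routine combination of weak compactness in the reflexive space $ L^{5/3} $ with the fact that $ N^{-1}\rho_{\Psi_N}^{\myp{1}} $ is a probability density.
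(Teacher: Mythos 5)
Your proof is correct and follows essentially the same strategy as the paper: apply the magnetic Lieb--Thirring bound \eqref{eq:genltcor} with a test function proportional to $\rho_{\Psi_N}^{(1)\,2/3}$, then optimize a free parameter $\varepsilon$ using the explicit scaling relations \eqref{eq:scaling0}--\eqref{eq:scaling}. Two small differences in bookkeeping are worth noting, both favourable. First, your truncation at level $M$ makes the argument self-contained: the paper inserts $f = \varepsilon^2 \rho_{\Psi_N}^{(1)\,2/3}$ directly, which tacitly presupposes $\rho_{\Psi_N}^{(1)} \in L^{5/3}$ (this is exactly what the preceding ``Corollary to LT inequality'' supplies, and its own proof uses the same truncation device you employ). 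Second, your parameterization $v = \varepsilon^{-1}(\cdot)^{2/3}$ with $\varepsilon \geq (2L_2)^{2/3}$ differs from the paper's $f = \varepsilon^2(\cdot)^{2/3}$ with $0 < \varepsilon \leq 1$, but after the respective optimal choices ($\varepsilon = (2L_2)^{2/3} + (\hbar b)^{2/3}$ versus $\varepsilon = (1+\beta_N)^{-1/5}$) both yield the two terms $(1+\beta_N)^{-2/5}N^{5/3}$ and $(\beta_N/(1+\beta_N))^{2/3}N^{5/3}$ up to constants, each $\leq N^{5/3}$ uniformly in $\beta_N$, so the mechanisms are equivalent. The second half of your write-up (reflexivity of $L^{5/3}$, $\int\rho \leq 1$ via monotone increase of compactly supported cutoffs, splitting $\varphi = \varphi_1 + \varphi_2$) is precisely the ``standard functional analysis'' the paper omits, and your treatment is correct.
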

\begin{proof}
	For the duration of the proof we will denote $ \rho_N = \rho_{\Psi_N}^{\myp{1}} $.
	For any $ 0 \leq f \in L^{3/2} \myp{\mathbb{R}^3} \cap L^{5/2} \myp{\mathbb{R}^3} $ we have by \eqref{eq:genltcor} that
	\begin{equation*}
		\hbar^2 \int_{\mathbb{R}^3} \rho_N \myp{x} f \myp{x} - L_1 \frac{b}{\hbar} f \myp{x} ^\frac{3}{2} - L_2 f \myp{x}^{\frac{5}{2}} \id x
		\leq \widetilde{C} N.
	\end{equation*}
	Hence, noting by \eqref{eq:genlt} that we can take $ L_2 < 1 $, and choosing $ f = \varepsilon^2 \rho_N^{2/3} $ with $ 0 < \varepsilon < 1 $, we obtain
	\begin{equation*}
		\varepsilon^2 \myp{1-L_2 \varepsilon^3} \int_{\mathbb{R}^3} \rho_N \myp{x}^{\frac{5}{3}} \id x
		\leq \widetilde{C} \frac{N}{\hbar^2} + \varepsilon^3 L_1 \frac{b}{\hbar} \int_{\mathbb{R}^3} \rho_N \myp{x} \id x.
	\end{equation*}
	Inserting the definitions of $ \hbar $ and $ b $ \eqref{eq:scaling0} yields
	\begin{equation*}
		\normt{\rho_N}{\frac{5}{3}}^{\frac{5}{3}}
		\leq \frac{\widetilde{C} N}{\varepsilon^2 \myp{1-L_2}} \myp[\Big]{\frac{1}{\hbar^2} + \varepsilon^3 \frac{b}{\hbar}}
		= C' N^{\frac{5}{3}} \myp[\bigg]{ \frac{1}{ \varepsilon^2 \myp{1+\beta_N}^{\frac{2}{5}} } + \frac{\varepsilon \beta_N}{\myp{1+\beta_N}^{\frac{4}{5}} } },
	\end{equation*}
	so simply choosing $ \varepsilon = \myp{1 + \beta_N}^{-1/5} $ gives the desired bound.
	
	The last part of the lemma follows easily from standard methods in functional analysis, and the details will be omitted.
\end{proof}

\subsection{Energy functionals on phase space}
Instead of working with a functional on position densities, it will in some situations be much more convenient to use a functional defined on densities on phase space.
Hence we introduce the Vlasov energy functional, and note its connection to the magnetic Thomas-Fermi functionals.
\begin{de}[Magnetic Vlasov functional]
	We put $ \Omega = \mathbb{R}^3 \times \mathbb{R} \times \mathbb{N}_0 \times \Set{\pm 1} $, and for $ \beta > 0 $,
	\begin{equation*}
		\mathcal{D}^{\sssup{Vla}} = \Set{m \in L^1 \myp{\Omega} \given 0 \leq m \leq 1, \ V\rho_m, \myp{w \ast \rho_m} \rho_m \in L^1 \myp{\mathbb{R}^3}},
	\end{equation*}
	where
	\begin{equation*}
		\rho_m \myp{x} = \frac{1}{\myp{2 \pi}^2} \frac{\beta}{1+ \beta} \sum\limits_{s= \pm 1} \sum\limits_{j=0}^{\infty} \int_{\mathbb{R}} m \myp{x,p,j,s} \id p.
	\end{equation*}
	Letting $ k_{\beta} = \beta \myp{1+\beta}^{-2/5} $ as in \eqref{eq:kbeta}, we define a functional
	\begin{align}
		\mathcal{E}_{\beta}^{\sssup{Vla}} \myp{m}
		= {} & \frac{1}{\myp{2\pi}^2} \frac{\beta}{1+\beta} \sum\limits_{\substack{j\geq 0 \\ s = \pm 1}} \int_{\mathbb{R}} \int_{\mathbb{R}^3} \myp{p^2 + k_{\beta} \myp{2j+1+s }} m \myp{x,p,j,s} \id x \id p \nonumber \\
		&+ \int_{\mathbb{R}^3} V \myp{x} \rho_m \myp{x} \id x + \frac{1}{2} \iint_{\mathbb{R}^6} w \myp{x-y} \rho_m \myp{x} \rho_m \myp{y} \id x \id y.
	\label{eq:vlafunct}
	\end{align}
	Furthermore, we define the Vlasov ground state energy
	\begin{equation*}
		E^{\sssup{Vla}} \myp{\beta} = \inf \Set[\big]{\mathcal{E}_{\beta}^{\sssup{Vla}} \myp{m} \given m \in \mathcal{D}^{\sssup{Vla}}, \frac{1}{\myp{2 \pi}^2} \frac{\beta}{1+\beta} \int_{\Omega} m \myp{\xi} \id \xi = 1 }.
	\end{equation*}
\end{de}
The factors $ \beta \myp{1+\beta}^{-1} $ turn up naturally from the pressure of the free Landau gas \eqref{eq:landaupressure} equipped with the mean-field scaling \eqref{eq:kbeta}, also recalling the additional prefactor $ \myp{1+\beta}^{-3/5} $ in front of the pressure.
\begin{lem}
	\label{lem:functionalminima}
	Suppose that $ \rho \in \mathcal{D}^{\sssup{MTF}} $ and define a measure on $ \Omega = \mathbb{R}^3 \times \mathbb{R} \times \mathbb{N}_0 \times \Set{\pm 1} $ with density
	\begin{equation}
	\label{eq:mrhodef}
		m_{\rho} \myp{x,p,j,s} = \mathds{1}_{\Set{p^2 + k_{\beta} \myp{2j+1+s} \leq r \myp{x} } },
	\end{equation}
	where for (almost) each $ x \in \mathbb{R}^3 $, $ r \myp{x} $ is the unique solution to the equation
	\begin{equation}
	\label{eq:sxdef}
		\rho \myp{x} = \frac{1}{2 \pi^2} \frac{\beta}{1+\beta} \myp[\Big]{ {r \myp{x}}^{\frac{1}{2}} + 2\sum\limits_{j=1}^{\infty} \myb{2 k_{\beta} j -r \myp{x}}_-^{\frac{1}{2}} }
		= \myp{1+\beta}^{-\frac{3}{5}} P_{k_{\beta}}' \myp{r \myp{x}}.
	\end{equation}
	Then $ m_{\rho} \in \mathcal{D}^{\sssup{Vla}} $ and satisfies for almost all $ x \in \mathbb{R}^3 $
	\begin{equation*}
		\rho_{m_{\rho}} \myp{x}
		:= \frac{1}{\myp{2 \pi}^2} \frac{\beta}{1+\beta} \sum\limits_{s = \pm 1} \sum\limits_{j=0}^{\infty} \int_{\mathbb{R}} m_{\rho} \myp{x,p,j,s} \id p
		= \rho \myp{x},
	\end{equation*}
	\begin{equation}
	\label{eq:vlamtf}
		\frac{1}{\myp{2 \pi}^2} \frac{\beta}{1+\beta} \sum\limits_{s = \pm 1} \sum\limits_{j=0}^{\infty} \int_{\mathbb{R}} \myp{p^2 + k_{\beta} \myp{2j+ 1+s}} m_{\rho} \myp{x,p,j,s} \id p
		= \tau_{\beta} \myp{\rho \myp{x}},
	\end{equation}
	and $ \mathcal{E}_{\beta}^{\sssup{MTF}} \myp{\rho} = \mathcal{E}_{\beta}^{\sssup{Vla}} \myp{m_{\rho}} $.
	
	On the other hand, if $ m \in \mathcal{D}^{\sssup{Vla}} $, then	$ \rho_m \in \mathcal{D}^{\sssup{MTF}} $ and $ \mathcal{E}_{\beta}^{\sssup{MTF}} \myp{\rho_m} \leq \mathcal{E}_{\beta}^{\sssup{Vla}} \myp{m} $.
	In particular,
	\begin{equation*}
		E^{\sssup{MTF}} \myp{\beta} = E^{\sssup{Vla}} \myp{\beta}.
	\end{equation*}
\end{lem}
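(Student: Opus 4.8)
The lemma is a Legendre-transform (``bathtub principle'') statement: minimising the kinetic part of $\mathcal E_\beta^{\sssup{Vla}}$ over occupations $0\le m\le1$ with prescribed spatial density reproduces the Thomas--Fermi density $\tau_\beta$, the minimiser being the filled Fermi sea $m_\rho$ of \eqref{eq:mrhodef}. Everything reduces to one computation: for every $r\ge0$,
\[
	\frac{1}{\myp{2\pi}^2}\,\frac{\beta}{1+\beta}\sum_{s=\pm1}\sum_{j\ge0}\int_{\mathbb{R}}\bigl(r-p^2-k_\beta(2j+1+s)\bigr)_{+}\id p=\myp{1+\beta}^{-\frac35}P_{k_\beta}(r),
\]
together with the companion identity obtained the same way, with integrand $\mathds{1}_{\Set{p^2+k_\beta(2j+1+s)\le r}}$ and right-hand side $\myp{1+\beta}^{-3/5}P_{k_\beta}'(r)$. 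I would prove both by carrying out the elementary one-dimensional integrals $\int_{\mathbb{R}}(a-p^2)_+\id p=\tfrac43\myb{a}_+^{3/2}$ and $\int_{\mathbb{R}}\mathds{1}_{\Set{p^2\le a}}\id p=2\myb{a}_+^{1/2}$, splitting the sum over $s=\pm1$, reindexing the $s=+1$ contribution by $j\mapsto j+1$, comparing with the definitions \eqref{eq:landaupressure}--\eqref{eq:landaupressurederiv}, and using $k_\beta=\beta\myp{1+\beta}^{-2/5}$, so that $\frac{\beta}{\myp{1+\beta}k_\beta}=\myp{1+\beta}^{-3/5}$.

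For the forward direction, note first that $P_{k_\beta}'$ is continuous and strictly increasing (already the $\nu^{1/2}$ term is), with $P_{k_\beta}'(0)=0$ and $P_{k_\beta}'(\nu)\to\infty$; hence for a.e.\ $x$ there is a unique $r(x)\ge0$ solving \eqref{eq:sxdef}, and $x\mapsto r(x)=(P_{k_\beta}')^{-1}\bigl(\myp{1+\beta}^{3/5}\rho(x)\bigr)$ is measurable. The companion identity, at $r=r(x)$, is precisely \eqref{eq:sxdef}, giving $\rho_{m_\rho}=\rho$. For \eqref{eq:vlamtf}, writing $e=e(p,j,s)=p^2+k_\beta(2j+1+s)$ and using $e\,\mathds{1}_{\Set{e\le r(x)}}=r(x)\,\mathds{1}_{\Set{e\le r(x)}}-\bigl(r(x)-e\bigr)_+$, one integrates against $\frac{1}{\myp{2\pi}^2}\frac{\beta}{1+\beta}\sum_{s,j}\int_{\mathbb{R}}(\,\cdot\,)\id p$ and applies the key identity twice to get $r(x)\rho(x)-\myp{1+\beta}^{-3/5}P_{k_\beta}(r(x))$; since $r(x)$ is the point at which the supremum in \eqref{eq:taubdef} is attained (differentiate the concave map $\nu\mapsto\rho(x)\nu-\myp{1+\beta}^{-3/5}P_{k_\beta}(\nu)$), this equals $\tau_\beta(\rho(x))$. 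Integrating \eqref{eq:vlamtf} over $x$, and noting that the potential and interaction parts of $\mathcal E_\beta^{\sssup{Vla}}$ depend on $m_\rho$ only through $\rho_{m_\rho}=\rho$, yields $\mathcal E_\beta^{\sssup{MTF}}(\rho)=\mathcal E_\beta^{\sssup{Vla}}(m_\rho)$. That $m_\rho\in\mathcal D^{\sssup{Vla}}$ is routine: $0\le m_\rho\le1$; $\int_\Omega m_\rho=\frac{\myp{2\pi}^2\myp{1+\beta}}{\beta}\int\rho<\infty$; $V\rho_{m_\rho}=V\rho\in L^1$; and $w\ast\rho\in L^\infty$ by Young's inequality ($\rho\in L^1\cap L^{5/3}$, $w\in L^{5/2}+L^\infty$), so $\myp{w\ast\rho_{m_\rho}}\rho_{m_\rho}\in L^1$.

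For the converse, given $m\in\mathcal D^{\sssup{Vla}}$ I would establish the pointwise bound
\[
	\frac{1}{\myp{2\pi}^2}\,\frac{\beta}{1+\beta}\sum_{s,j}\int_{\mathbb{R}}e(p,j,s)\,m(x,p,j,s)\id p\ \ge\ \tau_\beta\bigl(\rho_m(x)\bigr)
\]
by introducing a chemical potential: for any $\nu\ge0$, the elementary inequality $(e-\nu)m\ge-(\nu-e)_+$ (valid since $0\le m\le1$) shows the left side is $\ge\nu\rho_m(x)-\frac{1}{\myp{2\pi}^2}\frac{\beta}{1+\beta}\sum_{s,j}\int_{\mathbb{R}}(\nu-e)_+\id p=\nu\rho_m(x)-\myp{1+\beta}^{-3/5}P_{k_\beta}(\nu)$ by the key identity, and taking the supremum over $\nu$ gives $\tau_\beta(\rho_m(x))$ by \eqref{eq:taubdef}. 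Integrating in $x$ (the lower-order terms again seeing only $\rho_m$) gives $\mathcal E_\beta^{\sssup{MTF}}(\rho_m)\le\mathcal E_\beta^{\sssup{Vla}}(m)$; when the right side is finite this forces $\int\tau_\beta(\rho_m)<\infty$, hence $\rho_m\in L^{5/3}$ by \cref{rem:mtfdomain} and $\rho_m\in\mathcal D^{\sssup{MTF}}$ (and there is nothing to prove when $\mathcal E_\beta^{\sssup{Vla}}(m)=\infty$). Finally $E^{\sssup{MTF}}(\beta)=E^{\sssup{Vla}}(\beta)$ follows: ``$\le$'' from the converse applied to near-optimal $m$ with $\frac{1}{\myp{2\pi}^2}\frac{\beta}{1+\beta}\int_\Omega m=1$, and ``$\ge$'' from the forward direction applied to near-optimal $\rho$, since $\frac{1}{\myp{2\pi}^2}\frac{\beta}{1+\beta}\int_\Omega m_\rho=\int\rho_{m_\rho}=\int\rho=1$. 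I do not expect a genuine obstacle: the content is entirely in the key identity of the first paragraph, and the only places needing care are the index shift in the spin sum together with tracking the constants $\tfrac{1}{\myp{2\pi}^2}$, $\tfrac{\beta}{1+\beta}$, $k_\beta$, $\myp{1+\beta}^{-3/5}$ through it, and---in the converse---restricting the assertion $\rho_m\in\mathcal D^{\sssup{MTF}}$ to the case of finite Vlasov energy, which is exactly the case where the inequality $\mathcal E_\beta^{\sssup{MTF}}(\rho_m)\le\mathcal E_\beta^{\sssup{Vla}}(m)$ carries information.
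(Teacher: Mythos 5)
Your proof is correct and follows essentially the same route as the paper: compute the phase-space integrals to identify $P_{k_\beta}$ and $P'_{k_\beta}$, use that the Legendre supremum in \eqref{eq:taubdef} is attained at $r(x)$, and reduce both directions to a pointwise bathtub-type minimisation at each $x$. The only cosmetic difference is in the converse: the paper constructs $m_{\rho_m}$ and cites the bathtub principle from Lieb--Loss, while you prove the pointwise lower bound directly via the chemical-potential inequality $(e-\nu)m\ge-(\nu-e)_+$ followed by optimisation over $\nu$; this is exactly the argument underlying the bathtub principle, so it is the same proof spelled out rather than cited, with the small bonus that it makes the restriction ``$\rho_m\in\mathcal D^{\sssup{MTF}}$ when the Vlasov energy is finite'' explicit.
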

\begin{rem}
	The assertions of the lemma also hold true when the magnetic Thomas-Fermi functional is replaced by the spin dependent functional $ \widetilde{\mathcal{E}}_{\beta}^{\sssup{MTF}} $.
	In particular,
	\begin{equation*}
		\widetilde{E}^{\sssup{MTF}} \myp{\beta} 
		= E^{\sssup{Vla}} \myp{\beta}
		= E^{\sssup{MTF}} \myp{\beta}.
	\end{equation*}
	The proof is exactly the same as in the spin-summed case, except that in this case, the equation
	\begin{equation*}
		\widetilde{\rho} \myp{x,s} = \myp{1+\beta}^{-\frac{3}{5}} \widetilde{P}_{k_{\beta}}' \myp{\widetilde{r} \myp{x,s},s}
	\end{equation*}
	does not uniquely define $ \widetilde{r} $ everywhere, because $ \widetilde{P}_{k_{\beta}}' \myp{\nu,1} = 0 $ for $ 0 \leq \nu \leq 2 k_{\beta} $.
	However, this can easily be remedied by instead defining
	\begin{equation*}
		\widetilde{r} \myp{x,s} = \max \big\{ r \geq 0 \mid \widetilde{\tau}_{\beta} \myp{\rho \myp{x,s},s} 
		= \rho \myp{x,s} r - \myp{1+\beta}^{-\frac{3}{5}} \widetilde{P}_{k_{\beta}} \myp{r ,s} \big\},
	\end{equation*}
	but we omit the details.
\end{rem}
\begin{proof}
	The idea is to fix a position density and minimize the Vlasov problem for each fixed position $ x \in \mathbb{R}^3 $.
	For any $ \nu \geq 0 $, we calculate the measure of the set
	\begin{align}
		\MoveEqLeft[3]	\abs{ \Set{ \myp{p,j,s} \in \mathbb{R} \times \mathbb{N}_0 \times \Set{\pm 1} \given p^2 + k_{\beta} \myp{2j+1+s } \leq \nu } } \nonumber \\
		&= \sum\limits_{j=0}^{\infty} \abs{ \Set{p^2 \leq \nu-2 k_{\beta} \myp{j+1}} } + \sum\limits_{j=0}^{\infty} \abs{ \Set{p^2 \leq \nu-2 k_{\beta} j} } \nonumber \\
		&= 2 \nu^{\frac{1}{2}} + 4 \sum\limits_{j=1}^{\infty} \myb{2 k_{\beta} j -\nu}_-^{\frac{1}{2}}
		= \frac{\myp{2 \pi}^2}{k_{\beta}} P_{k_{\beta}}' \myp{\nu}.
	\label{eq:pressure}
	\end{align}
	Supposing that $ \rho \in \mathcal{D}^{\sssup{MTF}} $, then for each $ x \in \mathbb{R}^3 $ we may choose $ r \myp{x} \geq 0 $ to be the unique solution of \eqref{eq:sxdef} and define $ m_{\rho} $ as in \eqref{eq:mrhodef}.
	The calculation above then clearly shows
	\begin{align*}
		\MoveEqLeft[3]	\frac{1}{\myp{2 \pi}^2} \frac{\beta}{1+\beta} \sum\limits_{s = \pm 1} \sum\limits_{j=0}^{\infty} \int_{\mathbb{R}} m_{\rho} \myp{x,p,j,s} \id p \\
		&= \frac{1}{\myp{2 \pi}^2} \frac{\beta}{1+\beta} \sum\limits_{s = \pm 1} \sum\limits_{j=0}^{\infty} \int_{\mathbb{R}} \mathds{1}_{\Set{p^2 + k_{\beta} \myp{2j + 1+s } \leq r \myp{x} }} \id p 
		= \rho \myp{x}.
	\end{align*}
	To see that \eqref{eq:vlamtf} holds, note that the supremum in \eqref{eq:taubdef} is attained exactly at the point $ r \myp{x} \geq 0 $, that is,
	\begin{equation}
	\label{eq:taueq}
		\tau_{\beta} \myp{\rho \myp{x}} 
		= \rho \myp{x} r \myp{x} - \myp{1+\beta}^{-\frac{3}{5}} P_{k_{\beta}} \myp{r \myp{x}}.
	\end{equation}
	Furthermore, using $ \myb{2 k_{\beta} j - r \myp{x}}_-^{3/2} = \myp{r \myp{x} - 2 k_{\beta} j} \myb{2 k_{\beta} j - r \myp{x}}_-^{1/2} $ along with the definition of the pressure $ P_{k_{\beta}} $, we also have
	\begin{equation*}
		2 \sum\limits_{j=1}^{\infty} 2 k_{\beta} j \myb{2 k_{\beta} j - r \myp{x}}_-^{\frac{1}{2}}
		= \frac{2 \pi^2}{k_{\beta}} P_{k_{\beta}}' \myp{r \myp{x}} r \myp{x} - \frac{3 \pi^2}{k_{\beta}} P_{k_{\beta}} \myp{r \myp{x}}.
	\end{equation*}
	With this in mind, we calculate, using the definition of $ m_{\rho} $,
	\begin{align}
		\MoveEqLeft[3]	\sum\limits_{s = \pm 1} \sum\limits_{j=0}^{\infty} \int_{\mathbb{R}} \myp{p^2 + k_{\beta} \myp{2j+ 1+s }} m_{\rho} \myp{x,p,j,s} \id p \nonumber \\
		&= \int_{\mathbb{R}} p^2 \mathds{1}_{ \Set{p^2 \leq r \myp{x} }} \id p + 2 \sum\limits_{j=1}^{\infty} \int_{\mathbb{R}} \myp{p^2 + 2 k_{\beta} j} \mathds{1}_{ \Set{p^2 + 2 k_{\beta} j \leq r \myp{x} } } \id p \nonumber \\
		&= \frac{2 \pi^2}{k_{\beta}} P_{k_{\beta}} \myp{r \myp{x}} + 4 \sum\limits_{j=1}^{\infty} 2 k_{\beta} j \myb{2 k_{\beta} j - r \myp{x}}_-^{\frac{1}{2}} \nonumber \\
		&= \frac{\myp{2 \pi}^2}{k_{\beta}} \myp{ P_{k_{\beta}}' \myp{r \myp{x}} r \myp{x} - P_{k_{\beta}} \myp{r \myp{x}}}
		\label{eq:pressure2} \\
		&= \myp{2 \pi}^2 \frac{1+\beta}{\beta} \tau_{\beta} \myp{\rho{\myp{x}}} \nonumber,
	\end{align}
	showing \eqref{eq:vlamtf}.
	This implies that $ m_{\rho} \in \mathcal{D}^{\sssup{Vla}} $ and $ \mathcal{E}_{\beta}^{\sssup{MTF}} \myp{\rho} = \mathcal{E}_{\beta}^{\sssup{Vla}} \myp{m_{\rho}} $, and hence	$ E^{\sssup{MTF}} \myp{\beta} \geq E^{\sssup{Vla}} \myp{\beta} $.
	
	On the other hand, for any $ m \in \mathcal{D}^{\sssup{Vla}} $ we may consider $ \rho_m \in \mathcal{D}^{\sssup{MTF}} $ and construct as above the measure $ m_{\rho_m} \in \mathcal{D}^{\sssup{Vla}} $.
	Then for each $ x \in \mathbb{R}^3 $, $ m_{\rho_m} \myp{x, \nonarg} $ is by construction a minimizer of the functional
	\begin{equation*}
		\mathcal{E} \myp{\widetilde{m}} 
		:= \frac{1}{\myp{2 \pi}^2} \frac{\beta}{ 1+\beta} \sum\limits_{s = \pm 1} \sum\limits_{j=0}^{\infty} \int_{\mathbb{R}} \myp{p^2 + k_{\beta} \myp{2j + 1+s }} \widetilde{m} \myp{p,j,s} \id p
	\end{equation*}
	defined on the set of densities $ \widetilde{m} \in L^1 \myp{\mathbb{R} \times \mathbb{N}_0 \times \Set{\pm 1}} $ satisfying $ 0\leq \widetilde{m} \leq 1 $ and
	\begin{equation*}
		\frac{1}{\myp{2 \pi}^2} \frac{\beta}{ 1+\beta} \sum\limits_{s = \pm 1} \sum\limits_{j=0}^{\infty} \int_{\mathbb{R}} \widetilde{m} \myp{p,j,s} \id p = \rho_m \myp{x}.
	\end{equation*}
	(This is the bathtub principle \cite[Theorem $ 1.14 $]{LieLos-01}).
	Hence for almost every $ x \in \mathbb{R}^3 $,
	\begin{align*}
		\MoveEqLeft[3]	\sum\limits_{s = \pm 1} \sum\limits_{j=0}^{\infty} \int_{\mathbb{R}} \myp{p^2 + k_{\beta} \myp{2j+ 1+s }} m_{\rho_m} \myp{x,p,j,s} \id p \\
		&\leq \sum\limits_{s = \pm 1} \sum\limits_{j=0}^{\infty} \int_{\mathbb{R}} \myp{p^2 + k_{\beta} \myp{2j+ 1+s }} m \myp{x,p,j,s} \id p,
	\end{align*}
	implying $ \mathcal{E}_{\beta}^{\sssup{MTF}} \myp{\rho_m} = \mathcal{E}_{\beta}^{\sssup{Vla}} \myp{m_{\rho_m}} \leq \mathcal{E}_{\beta}^{\sssup{Vla}} \myp{m} $.
	We conclude that $ E^{\sssup{MTF}} \myp{\beta} \leq E^{\sssup{Vla}} \myp{\beta} $, so we have the desired result.
\end{proof}

To handle the extreme cases where $ \beta_N \to \infty $ or $ \beta_N \to 0 $, we need to introduce a couple of extra Vlasov type functionals.
\begin{de}[Strong field Vlasov energy]
	Define a functional by
	\begin{align*}
		\mathcal{E}_{\infty}^{\sssup{Vla}} \myp{m} = {}& \frac{1}{\myp{2 \pi}^2} \int_{\mathbb{R}} \int_{\mathbb{R}^3} p^2 m\myp{x,p} \id x \id p + \int_{\mathbb{R}^3} V \myp{x} \rho_m \myp{x} \id x \\
		&+ \frac{1}{2} \iint_{\mathbb{R}^6} w \myp{x-y} \rho_m \myp{x} \rho_m \myp{y} \id x \id y
	\end{align*}
	on the set
	\begin{equation*}
		\mathcal{D}_{\infty}^{\sssup{Vla}} = \Set{m \in L^1 \myp{\mathbb{R}^3 \times \mathbb{R}} \given 0 \leq m \leq 1, \ V\rho_m, \myp{w \ast \rho_m} \rho_m \in L^1 \myp{\mathbb{R}^3}},
	\end{equation*}
	where
	\begin{equation*}
		\rho_m \myp{x} := \frac{1}{\myp{2 \pi}^2} \int_{\mathbb{R}} m \myp{x,p} \id p.
	\end{equation*}
	The functional has ground state energy
	\begin{equation*}
		E^{\sssup{Vla}} \myp{\infty} = \inf \Set[\big]{\mathcal{E}_{\infty}^{\mathrm{Vla}} \myp{m} \given m \in \mathcal{D}_{\infty}^{\sssup{Vla}}, \frac{1}{\myp{2 \pi}^2} \iint_{\mathbb{R}^3 \times \mathbb{R}} m \myp{x,p} \id x \id p = 1 }.
	\end{equation*}
\end{de}
\begin{lem}
	\label{lem:functionalminimastrong}
	Suppose that $ \rho $ is in the domain of the strong Thomas-Fermi functional $ \mathcal{E}^{\sssup{STF}} $ and define a measure on $ \mathbb{R}^3 \times \mathbb{R} $ with density $ m_{\rho} \myp{x,p} = \mathds{1}_{\Set{p^2 \leq 4 \pi^4 \rho \myp{x}^2} } $.
	Then $ m_{\rho} \in \mathcal{D}_{\infty}^{\sssup{Vla}} $ and satisfies
	\begin{equation*}
		\rho_{m_{\rho}} \myp{x}
		:= \frac{1}{\myp{2 \pi}^2} \int_{\mathbb{R}} m_{\rho} \myp{x,p} \id p
		= \rho \myp{x},
	\end{equation*}
	\begin{equation}
		\frac{1}{\myp{2 \pi}^2} \int_{\mathbb{R}} p^2 m_{\rho} \myp{x,p} \id p
		= \frac{4 \pi^4}{3} \rho\myp{x}^3,
	\end{equation}
	and $ \mathcal{E}^{\sssup{STF}} \myp{\rho} = \mathcal{E}_{\infty}^{\sssup{Vla}} \myp{m_{\rho}} $.
	On the other hand, if $ m \in \mathcal{D}_{\infty}^{\sssup{Vla}} $, then $ \mathcal{E}^{\sssup{STF}} \myp{\rho_m} \leq \mathcal{E}_{\infty}^{\sssup{Vla}} \myp{m} $.
	In particular, $ E^{\sssup{STF}} = E^{\sssup{Vla}} \myp{\infty} $.
\end{lem}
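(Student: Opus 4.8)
The plan is to follow, essentially verbatim, the two-step scheme used in the proof of \cref{lem:functionalminima}. In the strong-field regime only the lowest Landau band ($j=0$, $s=-1$) contributes, so the phase space collapses to $\mathbb{R}^3 \times \mathbb{R}$, the relevant symbol becomes simply $p \mapsto p^2$, and the associated one-dimensional ``pressure'' is that of a free Fermi gas. This makes the present statement a strict and slightly easier specialization of the general-$\beta$ case.

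First I would prove $E^{\sssup{STF}} \geq E^{\sssup{Vla}}\myp{\infty}$. Given $\rho$ in the domain of $\mathcal{E}^{\sssup{STF}}$, the sharp trial measure is $m_{\rho}\myp{x,p} = \mathds{1}_{\Set{p^2 \leq 4\pi^4 \rho\myp{x}^2}}$, the constant $4\pi^4$ being forced by the normalization $\rho_{m_\rho} = \rho$. A one-dimensional integration in $p$ then gives both $\rho_{m_\rho} = \rho$ and the claimed identity $\frac{1}{\myp{2\pi}^2}\int_{\mathbb{R}} p^2 m_{\rho}\myp{x,p}\id p = \frac{4\pi^4}{3}\rho\myp{x}^3$. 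Since $0 \leq m_\rho \leq 1$, since the integrability conditions defining $\mathcal{D}_\infty^{\sssup{Vla}}$ reduce to those already satisfied by the admissible $\rho$ (because $\rho_{m_\rho}=\rho$), and since the external and interaction terms of $\mathcal{E}_\infty^{\sssup{Vla}}\myp{m_\rho}$ are literally those of $\mathcal{E}^{\sssup{STF}}\myp{\rho}$, this yields $m_\rho \in \mathcal{D}_\infty^{\sssup{Vla}}$ and $\mathcal{E}^{\sssup{STF}}\myp{\rho} = \mathcal{E}_\infty^{\sssup{Vla}}\myp{m_\rho}$; taking the infimum over normalized $\rho$ gives the inequality.

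For the reverse inequality, given $m \in \mathcal{D}_\infty^{\sssup{Vla}}$ I would pass to $\rho_m$ and build $m_{\rho_m}$ as above. The key point is that for almost every fixed $x$, $m_{\rho_m}\myp{x,\nonarg}$ is the indicator of a sublevel set of $p \mapsto p^2$ at the level determined by $\rho_m\myp{x}$, so by the bathtub principle \cite[Theorem~$1.14$]{LieLos-01} it minimizes $\frac{1}{\myp{2\pi}^2}\int_{\mathbb{R}} p^2 \widetilde{m}\myp{p}\id p$ among $0 \leq \widetilde m \leq 1$ with prescribed marginal $\rho_m\myp{x}$. Integrating this pointwise comparison in $x$ shows that the kinetic term of $m$ dominates that of $m_{\rho_m}$, namely $\frac{4\pi^4}{3}\int \rho_m^3$; when $\mathcal{E}_\infty^{\sssup{Vla}}\myp{m}$ is finite this also certifies $\rho_m \in L^1 \cap L^3\myp{\mathbb{R}^3}$, hence $\rho_m$ lies in the domain of $\mathcal{E}^{\sssup{STF}}$, and since the remaining terms see only $\rho_m$ we get $\mathcal{E}^{\sssup{STF}}\myp{\rho_m} = \mathcal{E}_\infty^{\sssup{Vla}}\myp{m_{\rho_m}} \leq \mathcal{E}_\infty^{\sssup{Vla}}\myp{m}$. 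Combining the two inequalities gives $E^{\sssup{STF}} = E^{\sssup{Vla}}\myp{\infty}$.

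I do not anticipate a genuine obstacle, since this is a simpler analogue of \cref{lem:functionalminima}; the only steps demanding care are bookkeeping the constant $4\pi^4/3$ and checking domain membership on both sides — the ``$\geq$'' direction being immediate from $\rho_{m_\rho} = \rho$, and the ``$\leq$'' direction relying on the bathtub comparison to upgrade finiteness of the Vlasov energy to $\rho_m \in L^3$, which is exactly the mechanism that makes the two extremal problems agree.
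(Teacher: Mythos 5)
Your proposal is correct and takes essentially the same route as the paper, which explicitly states that this lemma "is proved in exactly the same way as Lemma~\ref{lem:functionalminima}, using the bathtub principle." Your two-step scheme — the explicit $p$-integrals giving $\rho_{m_\rho}=\rho$ and the kinetic density $\frac{4\pi^4}{3}\rho^3$ for one direction, and the pointwise-in-$x$ bathtub comparison for the other — is exactly the intended adaptation of that proof to the collapsed phase space $\mathbb{R}^3\times\mathbb{R}$.
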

This result is proved in exactly the same way as \cref{lem:functionalminima}, using the bathtub principle.
\begin{de}[Weak field Vlasov energy]
	Let $ b \geq 0 $ and define a functional by
	\begin{align*}
		\mathcal{E}_0^{\sssup{Vla}} \myp{m} = {}& \frac{1}{\myp{2 \pi}^3} \sum_{s= \pm 1} \iint_{\mathbb{R}^6} \myp{p+bA\myp{x}}^2 m\myp{x,p,s} \id x \id p \\
		&+ \int_{\mathbb{R}^3} V \myp{x} \rho_m \myp{x} \id x + \frac{1}{2} \iint_{\mathbb{R}^6} w \myp{x-y} \rho_m \myp{x} \rho_m \myp{y} \id x \id y
	\end{align*}
	on the set
	\begin{equation*}
		\mathcal{D}_0^{\sssup{Vla}} = \Set{m \in L^1 \myp{\mathbb{R}^6 \times \Set{\pm 1} } \given 0 \leq m \leq 1, \ V\rho_m, \myp{w \ast \rho_m} \rho_m \in L^1 \myp{\mathbb{R}^3}},
	\end{equation*}
	where
	\begin{equation*}
		\rho_m \myp{x} := \frac{1}{\myp{2 \pi}^3} \sum_{s= \pm 1} \int_{\mathbb{R}^3} m \myp{x,p,s} \id p.
	\end{equation*}
	The functional has ground state energy
	\begin{equation*}
		E^{\sssup{Vla}} \myp{0} = \inf \Set[\big]{\mathcal{E}_0^{\sssup{Vla}} \myp{m} \given m \in \mathcal{D}_0^{\sssup{Vla}}, \frac{1}{\myp{2 \pi}^3} \sum_{s= \pm 1} \iint_{\mathbb{R}^6} m \myp{x,p,s} \id x \id p = 1 }.
	\end{equation*}
\end{de}
\begin{lem}
	Suppose that $ \rho $ is in the domain of the usual Thomas-Fermi functional $ \mathcal{E}^{\sssup{TF}} $ and define a measure $ m_{\rho} $ on $ \mathbb{R}^3 \times \mathbb{R}^3 \times \Set{\pm 1} $ with density $ m_{\rho} \myp{x,p,s} = \mathds{1}_{ \Set{ \myp{p+bA\myp{x}}^2 \leq c_{\sssup{TF}} \rho \myp{x}^{2/3} } } $, where $ c_{\sssup{TF}} = \myp{ 3 \pi^2 }^{2/3} $.
	Then $ m_{\rho} \in \mathcal{D}_0^{\sssup{Vla}} $ and satisfies
	\begin{equation*}
		\rho_{m_{\rho}} \myp{x}
		:= \frac{1}{\myp{2 \pi}^3} \sum_{s=\pm 1} \int_{\mathbb{R}^3} m_{\rho} \myp{x,p,s} \id p
		= \rho \myp{x},
	\end{equation*}
	\begin{equation}
		\frac{1}{\myp{2 \pi}^3} \sum_{s= \pm 1} \int_{\mathbb{R}^3} \myp{p+bA\myp{x}}^2 m_{\rho} \myp{x,p,s} \id p
		= \frac{3}{5} c_{\sssup{TF}} \rho\myp{x}^{\frac{5}{3}},
	\end{equation}
	and $ \mathcal{E}^{\sssup{TF}} \myp{\rho} = \mathcal{E}_0^{\sssup{Vla}} \myp{m_{\rho}} $.
	On the other hand, if $ m \in \mathcal{D}_0^{\sssup{Vla}} $, then $ \mathcal{E}^{\sssup{TF}} \myp{\rho_m} \leq \mathcal{E}_0^{\sssup{Vla}} \myp{m} $.
	In particular, $ E^{\sssup{TF}} = E^{\sssup{Vla}} \myp{0} $.
\end{lem}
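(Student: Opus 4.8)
The plan is to mirror the proofs of \cref{lem:functionalminima,lem:functionalminimastrong}, applying the bathtub principle \cite[Theorem~$1.14$]{LieLos-01} fibrewise in the position variable~$x$. The crucial remark is that, since Lebesgue measure on $\mathbb{R}^3$ is translation invariant, the shift $p \mapsto p + bA\myp{x}$ is measure preserving; hence both the phase-space volume of a sublevel set $\Set{(p+bA\myp{x})^2 \leq \mu}$ and the second moment $\int (p+bA\myp{x})^2$ over it agree with their non-magnetic counterparts, and the magnetic vector potential drops out of every computation below. This is exactly why the limiting functional is the ordinary Thomas-Fermi functional, independent of~$b$.

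First I would treat the forward direction. Fix $x$ and substitute $q = p + bA\myp{x}$. For any $\mu \geq 0$ one has $\abs{\Set{q \in \mathbb{R}^3 \given \abs{q}^2 \leq \mu}} = \tfrac{4}{3}\pi \mu^{3/2}$ and $\int_{\abs{q}^2 \leq \mu} \abs{q}^2 \id q = \tfrac{4}{5}\pi \mu^{5/2}$. Summing over the spin index contributes a factor $2$, and with the choice $\mu = c_{\sssup{TF}}\rho\myp{x}^{2/3}$ the identity $\tfrac{8\pi}{3\myp{2\pi}^3}c_{\sssup{TF}}^{3/2} = 1$ (equivalently $c_{\sssup{TF}} = \myp{3\pi^2}^{2/3}$) gives $\rho_{m_\rho}\myp{x} = \rho\myp{x}$, and then $\tfrac{1}{\myp{2\pi}^3}\sum_{s=\pm1}\int (p+bA\myp{x})^2 m_\rho\myp{x,p,s}\id p = \tfrac{3}{5}c_{\sssup{TF}}\rho\myp{x}^{5/3}$. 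Integrating in $x$ and using that the potential and interaction terms of $\mathcal{E}_0^{\sssup{Vla}}$ depend on $m_\rho$ only through $\rho_{m_\rho} = \rho$ yields $\mathcal{E}_0^{\sssup{Vla}}\myp{m_\rho} = \mathcal{E}^{\sssup{TF}}\myp{\rho}$. Membership $m_\rho \in \mathcal{D}_0^{\sssup{Vla}}$ is immediate: $0 \leq m_\rho \leq 1$ by construction, $m_\rho \in L^1$ since $\int m_\rho = \myp{2\pi}^3 \int \rho < \infty$, and $V\rho_{m_\rho}, \myp{w\ast\rho_{m_\rho}}\rho_{m_\rho} \in L^1$ because $\rho$ lies in the domain of $\mathcal{E}^{\sssup{TF}}$ (the interaction term being controlled by $w \in L^{5/2} + L^\infty_\varepsilon$ together with $\rho \in L^1 \cap L^{5/3}$ via Young's and Hölder's inequalities, exactly as in \cref{rem:mtfdomain}).

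For the converse, let $m \in \mathcal{D}_0^{\sssup{Vla}}$; if its kinetic term is infinite the inequality is trivial, so assume it finite. Let $\rho_m$ be as in the definition and $m_{\rho_m}$ the indicator built from $\rho_m$ as above. For almost every $x$, the slice $m_{\rho_m}\myp{x,\nonarg}$ is, by the bathtub principle, a minimizer of $\widetilde{m} \mapsto \tfrac{1}{\myp{2\pi}^3}\sum_{s=\pm1}\int (p+bA\myp{x})^2 \widetilde{m}\myp{p,s}\id p$ among $0 \leq \widetilde{m} \leq 1$ subject to $\tfrac{1}{\myp{2\pi}^3}\sum_{s=\pm1}\int\widetilde{m}\myp{p,s}\id p = \rho_m\myp{x}$, with $m\myp{x,\nonarg}$ an admissible competitor; hence the $x$-fibre kinetic energy of $m_{\rho_m}$ is pointwise no larger than that of $m$. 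Integrating in $x$, and noting once more that the $V$- and $w$-terms coincide for $m$ and $m_{\rho_m}$ since $\rho_{m_{\rho_m}} = \rho_m$, gives $\mathcal{E}^{\sssup{TF}}\myp{\rho_m} = \mathcal{E}_0^{\sssup{Vla}}\myp{m_{\rho_m}} \leq \mathcal{E}_0^{\sssup{Vla}}\myp{m}$; in particular the fibre identity $\tfrac{3}{5}c_{\sssup{TF}}\rho_m\myp{x}^{5/3} = $ (fibre kinetic energy of $m_{\rho_m}$) together with finiteness of the left-hand side shows $\rho_m \in L^{5/3}$, so that $\rho_m$ indeed lies in the domain of $\mathcal{E}^{\sssup{TF}}$ (with $\rho_m \in L^1$ by Fubini and $V\rho_m \in L^1$ by definition of $\mathcal{D}_0^{\sssup{Vla}}$). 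Finally, passing to infima: $m_\rho$ satisfies the normalization $\tfrac{1}{\myp{2\pi}^3}\sum_{s=\pm1}\iint m_\rho = 1$ precisely when $\int\rho = 1$, and $\rho_m$ is normalized precisely when $m$ is, so the two inequalities give $E^{\sssup{Vla}}\myp{0} \leq E^{\sssup{TF}}$ and $E^{\sssup{TF}} \leq E^{\sssup{Vla}}\myp{0}$, hence equality.

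I do not anticipate a serious obstacle; the only points needing care are the bookkeeping of the constant $c_{\sssup{TF}} = \myp{3\pi^2}^{2/3}$ and the measurability of $x \mapsto \mu\myp{x}$ and of the resulting $m_{\rho_m}$ required to justify the fibrewise bathtub argument, both handled exactly as in \cref{lem:functionalminima}.
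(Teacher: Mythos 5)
Your proposal is correct and follows the same route the paper intends: a fibrewise bathtub argument as in \cref{lem:functionalminima} and \cref{lem:functionalminimastrong}, with the observation that the shift $p \mapsto p + bA(x)$ is measure preserving so all constants reduce to the non-magnetic ones. The constant bookkeeping ($c_{\sssup{TF}}^{3/2} = 3\pi^2$ in both the density and kinetic-density identities) checks out, and the domain membership and $L^{5/3}$ argument in the converse are handled in the expected way.
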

\begin{rem}
	\label{rem:vlamin}
	For any fixed density $ 0 \leq \rho \in L^1 \myp{\mathbb{R}^3} $, it follows from the uniqueness statement in \cite[Theorem $ 1.14 $]{LieLos-01} that for each fixed $ x \in \mathbb{R}^3 $, the measure $ m_{\rho} \myp{x,\nonarg} $ on $ \mathbb{R}^3 \times \Set{\pm 1} $ constructed above is the \emph{unique} minimizer of the functional
	\begin{equation*}
		m \mapsto \frac{1}{\myp{2 \pi}^3} \sum_{s= \pm 1} \int_{\mathbb{R}^3} \myp{p+b A\myp{x}}^2 m \myp{p,s} \id p
	\end{equation*}
	under the constraints $ 0 \leq m \leq 1 $ and $ \sum_{s= \pm 1} \int_{\mathbb{R}^3} m \myp{p,s} \id p = \myp{2 \pi}^3 \rho \myp{x} $.
	In particular, if $ m_0 $ is a minimizer of the Vlasov functional $ \mathcal{E}_0^{\sssup{Vla}} $, then the uniquess statement implies that
	\begin{equation*}
		m_0 \myp{x,p,s} = \mathds{1}_{ \Set{ \myp{p+bA\myp{x}}^2 \leq c_{\sssup{TF}} \rho_{m_0} \myp{x}^{2/3} } },
	\end{equation*}
	so the minimizers of $ \mathcal{E}_0^{\sssup{Vla}} $ are independent of the spin variable.
\end{rem}

\section{Upper energy bounds}
\label{sec:upperbound}
This section is devoted to proving the upper bounds in \cref{thm:energylowbound}, i.e.
\begin{prop}
\label{prop:energyupbound}
	With the assumptions in \cref{thm:energylowbound}, we have
	\begin{equation}
		\limsup_{N \to \infty} \frac{E \myp{N,\beta_N}}{N} 
		\leq	\begin{cases} 
					E^{\sssup{TF}}, 				&\text{if} \ \beta_N \to 0, \\
					E^{\sssup{MTF}} \myp{\beta},	&\text{if} \ \beta_N \to \beta \in \myp{0,\infty}, \\
					E^{\sssup{STF}},				&\text{if} \ \beta_N \to \infty.
				\end{cases}
	\end{equation}
\end{prop}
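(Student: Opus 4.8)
The plan is, for each $\varepsilon>0$, to build a sequence of normalized trial states $\Psi_N\in\bigwedge^N L^2\myp{\mathbb{R}^3;\mathbb{C}^2}$ whose energy per particle converges as $N\to\infty$ to within $\varepsilon$ of the claimed limit, and then to let $\varepsilon\to0$. By \cref{lem:functionalminima,lem:functionalminimastrong} (together with the weak-field analogue) the three limiting energies equal the Vlasov ground state energies $E^{\sssup{Vla}}\myp{0}$, $E^{\sssup{Vla}}\myp{\beta}$, $E^{\sssup{Vla}}\myp{\infty}$, whose minimizers are the explicit ``filled Fermi seas'' $\mathds{1}_{\Set{p^2+k_\beta\myp{2j+1+s}\le r\myp{x}}}$ and their degenerate analogues; the trial states will be quantum realizations of these. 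So first I would fix a density $\rho$ with $\int\rho=1$ in the domain of the relevant Thomas--Fermi type functional ($\mathcal{E}^{\sssup{TF}}$, $\mathcal{E}_{\beta}^{\sssup{MTF}}$, or $\mathcal{E}^{\sssup{STF}}$ according to the regime) and with energy at most $\varepsilon$ above the infimum, and then reduce to the case where $\rho$ is smooth, compactly supported and bounded. This reduction is routine, using the $L^{5/3}$ (resp. $L^3$) control of \cref{rem:mtfdomain}, convexity of the kinetic energy densities (so that mollification does not raise $\int\tau_\beta\myp{\rho}$, by Jensen's inequality), the coercivity coming from $V\myp{x}\to\infty$, and continuity of the external and interaction terms in the relevant Lebesgue spaces.

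For the trial state I would tile $\mathbb{R}^3$ by half-open cubes $Q_\alpha$ of side $\ell>0$ and, in each cube meeting $\supp\rho$, pick an integer $N_\alpha$ with $N_\alpha/N\to\rho_\alpha\abs{Q_\alpha}$ as $N\to\infty$, where $\rho_\alpha$ is the mean of $\rho$ over $Q_\alpha$, and with $\sum_\alpha N_\alpha=N$ (the finitely many unit corrections being immaterial). Inside $Q_\alpha$ let $\Psi_\alpha$ be the Slater determinant of the $N_\alpha$ lowest eigenfunctions of the Dirichlet Pauli operator $\myp{\bm{\sigma}\cdot\myp{-i\hbar\nabla+bA}}^2_{Q_\alpha}$ on $L^2\myp{Q_\alpha;\mathbb{C}^2}$, with $\hbar,b$ given by \eqref{eq:scaling0} at $\beta=\beta_N$. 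Since the cubes are disjoint, the union of all these orbitals is orthonormal, so $\Psi_N:=\bigwedge_\alpha\Psi_\alpha$ is a legitimate normalized fermionic state. Its one-particle density $\rho_{\Psi_N}^{\myp{1}}=\sum_\alpha\rho_{\Psi_\alpha}^{\myp{1}}$ obeys the pointwise bound $\rho_{\Psi_N}^{\myp{1}}\le CN$ (the local Dirichlet density being $\lesssim N_\alpha/\abs{Q_\alpha}=N\rho_\alpha$), and, extending the box orbitals by zero, $\innerp[\big]{\Psi_N}{\sum_{j=1}^N\myp{\bm{\sigma}\cdot\myp{-i\hbar\nabla_j+bA\myp{x_j}}}^2\Psi_N}=\sum_\alpha\sum_{i\le N_\alpha}t_i^{\myp{\alpha}}$ with $t_i^{\myp{\alpha}}$ the Dirichlet eigenvalues in $Q_\alpha$. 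By the Weyl asymptotics for the Dirichlet Pauli operator --- \cref{cor:pauliweil} together with \cref{rem:pauliweyl}, which hold uniformly in $b$ --- the rescaled densities $N^{-1}\rho_{\Psi_N}^{\myp{1}}$ converge, first as $N\to\infty$ and then as $\ell\to0$, to $\rho$ strongly in $L^1\myp{\mathbb{R}^3}\cap L^{5/3}\myp{\mathbb{R}^3}$.

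Then I would evaluate the energy per particle, $N^{-1}\innerp{\Psi_N}{H_{N,\beta_N}\Psi_N}$, term by term. Writing the kinetic part $N^{-1}\sum_{i\le N_\alpha}t_i^{\myp{\alpha}}$ in its Legendre-dual form $\max_{\mu}\myt{\tfrac{N_\alpha}{N}\mu-\tfrac1N\tr\myp{\myp{\bm{\sigma}\cdot\myp{-i\hbar\nabla+bA}}^2_{Q_\alpha}-\mu}_-}$ and feeding in the trace asymptotics with the constant potential $-\mu$ on $Q_\alpha$, one obtains, after $N\to\infty$ then $\ell\to0$, the limit $\int\tau_\beta\myp{\rho}$ when $0<\beta<\infty$, and $\tfrac35 c_{\sssup{TF}}\int\rho^{5/3}$ (resp. $\tfrac{4\pi^4}{3}\int\rho^3$) when $\beta=0$ (resp. $\beta=\infty$), since $\myp{1+\beta}^{-3/5}P_{k_\beta}\myp{\nu}\to\tfrac{2}{15\pi^2}\nu^{5/2}$ as $\beta\to0$ and $\to\tfrac1{3\pi^2}\nu^{3/2}$ as $\beta\to\infty$, so that $\tau_\beta$ degenerates to the corresponding kinetic density. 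The external term $N^{-1}\int V\rho_{\Psi_N}^{\myp{1}}=\int V\,\bigl(N^{-1}\rho_{\Psi_N}^{\myp{1}}\bigr)$ converges to $\int V\rho$ by the strong $L^{5/3}$ convergence and $V\in L_{\mathrm{loc}}^{5/2}$ (all densities being supported in a fixed compact set once $\ell\le1$). Finally, $\Psi_N$ being a single Slater determinant, the interaction contributes $\tfrac{1}{2N^2}\iint_{\mathbb{R}^6}w\myp{x-y}\rho_{\Psi_N}^{\myp{1}}\myp{x}\rho_{\Psi_N}^{\myp{1}}\myp{y}\id x\id y$ minus the exchange term $\tfrac{1}{2N^2}\iint_{\mathbb{R}^6}w\myp{x-y}\abs{\gamma_{\Psi_N}^{\myp{1}}\myp{x,y}}^2\id x\id y$ (with $\gamma_{\Psi_N}^{\myp{1}}$ the one-particle density matrix and the square modulus taken in Hilbert--Schmidt norm on $\mathbb{C}^{2\times2}$); the first quantity converges to $\tfrac12\iint_{\mathbb{R}^6}w\myp{x-y}\rho\myp{x}\rho\myp{y}\id x\id y$ since $w\in L^{5/2}+L_\varepsilon^\infty$ and $N^{-1}\rho_{\Psi_N}^{\myp{1}}\to\rho$ in $L^1\cap L^{5/3}$, while the second, estimated cube by cube via $\abs{\gamma_{\Psi_N}^{\myp{1}}\myp{x,y}}^2\le\rho_{\Psi_N}^{\myp{1}}\myp{x}\rho_{\Psi_N}^{\myp{1}}\myp{y}$ (valid since $\gamma_{\Psi_N}^{\myp{1}}\ge0$) together with $\rho_{\Psi_N}^{\myp{1}}\le CN$, is at most $C'\int_{\abs{z}\le\ell\sqrt3}\abs{w\myp{z}}\id z$ uniformly in $N$, hence vanishes as $\ell\to0$. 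Adding the contributions, $N^{-1}\innerp{\Psi_N}{H_{N,\beta_N}\Psi_N}\to\mathcal{E}\myp{\rho}\le E+\varepsilon$ with $\mathcal{E}$ and $E$ the appropriate functional and energy; letting $\ell\to0$ and then $\varepsilon\to0$ yields the three upper bounds.

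The step I expect to be genuinely delicate is the uniformity in $\beta_N$: since $\hbar$ and $b$ depend on $N$ through $\beta_N$, all the error terms --- in the Weyl asymptotics, in the convergence of the densities, and in the exchange estimate --- must be controlled uniformly as $\beta_N\to0$ or $\beta_N\to\infty$, so that this single construction degenerates correctly to the non-magnetic Thomas--Fermi and the strong-field pictures; in particular one must verify that the Legendre transforms $\tau_{\beta_N}$ pass to their limits jointly with the trace asymptotics (and that $\max_\mu$ commutes with $\lim_N$, which follows from the uniformity of the Weyl law). A minor but unavoidable nuisance is the bookkeeping at cubes straddling $\partial\supp\rho$ and the exact adjustment $\sum_\alpha N_\alpha=N$, which must perturb the energy only by $o\myp{N}$.
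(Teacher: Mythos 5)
Your construction is genuinely different from the paper's. You tile $\mathbb{R}^3$ by cubes $Q_\alpha$ of side $\ell$, fill the $N_\alpha$ lowest Dirichlet Pauli levels in each with $N_\alpha/N\to\rho_\alpha\abs{Q_\alpha}$, and take $\ell\to0$ after $N\to\infty$: the classical Lieb--Simon box method. The paper instead works in a single large box $C_R\supset\supp\rho$, defines the trial density matrix as the spectral projection $\mathds{1}_{(-\infty,0]}\myp[\big]{\myp{\bm{\sigma}\cdot\myp{-i\hbar\nabla+bA}}^2_{C_R}-r(x)}$ with an $x$-dependent chemical potential $r(x)$ determined by $\rho(x)=(1+\beta)^{-3/5}P'_{k_\beta}(r(x))$, and extracts the weak convergence of the density from the Weyl trace asymptotics by a Feynman--Hellmann differentiation (\cref{lem:trialstate}). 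Your version avoids Feynman--Hellmann but trades it for a piecewise-constant approximation and a second limit in $\ell$; both are legitimate, and your observation that $(1+\beta)^{-3/5}P_{k_\beta}(\nu)$ degenerates to $\tfrac{2}{15\pi^2}\nu^{5/2}$ (resp.\ $\tfrac{1}{3\pi^2}\nu^{3/2}$) as $\beta\to0$ (resp.\ $\beta\to\infty$) matches what underlies \cref{cor:pauliweil}.

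There is, however, a genuine gap in your exchange-term estimate: it hinges on the pointwise bound $\rho_{\Psi_N}^{(1)}\le CN$, which you justify only with the heuristic that the local Dirichlet density is $\lesssim N_\alpha/\abs{Q_\alpha}$. The density of a filled Dirichlet Fermi sea is not pointwise controlled by its mean for free. The natural route, a heat-kernel bound $\rho_\gamma(x)\le e\cdot e^{-T_{Q_\alpha}/\mu}(x,x)$ at the Fermi level $\mu$, cannot be closed by the diamagnetic inequality alone: comparing the Pauli heat kernel to the free Laplacian heat kernel leaves an uncompensated Zeeman factor $e^{t\hbar b}$ that diverges when $\beta_N\to\infty$ (recall $\hbar b=\beta_N(1+\beta_N)^{-2/5}$); one must retain the Landau structure to see the cancellation $e^{t\hbar b}/\sinh(t\hbar b)\to 2$, and then still compare the Dirichlet magnetic kernel to the free magnetic one. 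Neither step is in your argument. The simplest repair is to drop the hand-rolled exchange bound altogether and invoke the paper's \cref{lem:interactionbound}, which applies to \emph{any} Slater determinant and gives the exchange term $\le CN^{2/3}\myp[\big]{\tfrac1N\tr\myb{T\gamma}+\beta_N(1+\beta_N)^{-4/5}+1}=o(N)$ directly from the magnetic Lieb--Thirring inequality, with no pointwise density control at all. Relatedly, the strong $L^1\cap L^{5/3}$ convergence of $N^{-1}\rho_{\Psi_N}^{(1)}$ you invoke for the direct interaction term is more than is readily available from your construction; the paper establishes only weak convergence and handles the direct term by a Stone--Weierstrass approximation $w(x-y)\approx\sum_j g_j(x)h_j(y)$, which you would also need once the strong-convergence shortcut is removed.
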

We will prove \cref{prop:energyupbound} by constructing an appropriate trial state for the variational problem.
Let $ \myp{f_j}_{j=1}^N $ be functions in the magnetic Sobolev space $ H_{\hbar^{-1}b A}^1 \myp{\mathbb{R}^3 ; \mathbb{C}^2} $, orthonormal in $ L^2 \myp{\mathbb{R}^3 ; \mathbb{C}^2} $.
Consider the corresponding Hartree-Fock state (abbrv. HF state) $ \Psi \in \bigwedge^N L^2 \myp{\mathbb{R}^3 ; \mathbb{C}^2} $ defined by
\begin{align*}
	\Psi \myp{x, s} = \frac{1}{\sqrt{N!}} \det \myb{f_i \myp{x_j,s_j}}
	&= \frac{1}{\sqrt{N!}} \sum\limits_{\sigma \in S_N} \mathrm{sgn} \myp{\sigma} \prod\limits_{j=1}^N f_{\sigma\myp{j}} \myp{x_j,s_j},
\end{align*}
where $ S_N $ is the symmetric group of $ N $ elements.
The function $ \Psi $ is normalized in $ L^2 \myp[\big]{ \mathbb{R}^{3N}; \mathbb{C}^{2^N} } $ and its one-particle density matrix is $ \gamma_{\Psi} = \sum\nolimits_{j=1}^N \ketbra{f_j}{f_j} $, so $ \tr \myb{\gamma_{\Psi}} = N $ and $ \gamma_{\Psi} $ is the orthogonal projection onto the subspace in $ L^2 \myp{\mathbb{R}^3; \mathbb{C}^2} $ spanned by the $ f_j $'s.
Furthermore, $ \gamma_{\Psi} $ has integral kernel
\begin{equation*}
	\gamma_{\Psi} \myp{x_1,s_1 ; x_2,s_2} = \sum\limits_{j=1}^N f_j \myp{x_1,s_1} \overline{f_j \myp{x_2,s_2}}, 
\end{equation*}
and the one-particle position density is $ \rho_{\Psi}^{\myp{1}} \myp{x} = \sum\nolimits_{s=\pm 1} \sum\nolimits_{j=1}^N \abs{f_j \myp{x,s}}^2 $.
Note that $ \normt{\gamma_{\Psi}}{2}^2 = \tr \myb{\gamma_{\Psi}} = N $ since the $ f_j $'s are orthonormal.  One easily calculates the expectation of the energy in the state $ \Psi $ to be
\begin{align}
	\innerp{ \Psi}{ H_{N,\beta_N} \Psi }
	= {}& \tr \myb{ \myp{\bm{\sigma} \cdot \myp{-i \hbar \nabla + b A}}^2 \gamma_{\Psi} } + \int_{\mathbb{R}^3} V \myp{x} \rho_{\Psi}^{\myp{1}} \myp{x} \id x \nonumber \\
	&+ \frac{1}{2N} \iint_{\mathbb{R}^6} w \myp{x-y} \rho_{\Psi}^{\myp{1}} \myp{x} \rho_{\Psi}^{\myp{1}} \myp{y} \id x \id y \nonumber \\
	&- \frac{1}{2N} \sum\limits_{s_1,s_2 = \pm 1} \iint_{\mathbb{R}^6} w \myp{x-y} \abs{\gamma_{\Psi} \myp{x,s_1;y,s_2}}^2 \id x \id y.
\label{eq:hartree-fock-energy}
\end{align}
We proceed to derive a bound on the exchange term involving $ \abs{\gamma_{\Psi}}^2 $.
The bound in the lemma below should hold for general fermionic states, but this will not be needed, so for simplicity we only prove it for Slater determinants.
\begin{lem}[Bound on the exchange term]
\label{lem:interactionbound}
	Let $ \Psi $ be an $ N $-body Slater determinant as above, and $ w \in L^{5/2} \myp{\mathbb{R}^3} + L^{\infty} \myp{\mathbb{R}^3} $.
	There is a constant $ C > 0 $ such that for each $ N \geq 1 $,
	\begin{align}
		\MoveEqLeft[3]	\frac{1}{2N} \sum\limits_{s_1,s_2 = \pm 1} \iint_{\mathbb{R}^6} \abs{w \myp{x-y}} \abs{\gamma_{\Psi} \myp{x,s_1;y,s_2}}^2 \id x \id y \nonumber \\
		&\leq C N^{\frac{2}{3}} \myp[\Big]{ \frac{1}{N} \tr \myb{ \myp{ \bm{\sigma} \cdot \myp{-i \hbar \nabla + b A}}^2 \gamma_{\Psi} } + \beta_N \myp{1+\beta_N}^{-\frac{4}{5}} +1 }.
	\label{eq:interactionbound}
	\end{align}
\end{lem}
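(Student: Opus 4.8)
The plan is to write $ w = w_1 + w_2 $ with $ w_1 \in L^{5/2}\myp{\mathbb{R}^3} $ and $ w_2 \in L^{\infty}\myp{\mathbb{R}^3} $ (so $ \abs{w} \leq \abs{w_1} + \abs{w_2} $) and to estimate the two resulting contributions to the left-hand side of \eqref{eq:interactionbound} separately. The $ w_2 $-contribution is immediate: since $ \normt{\gamma_{\Psi}}{2}^2 = \tr \gamma_{\Psi} = N $,
\[
	\frac{1}{2N} \sum_{s_1,s_2 = \pm 1} \iint_{\mathbb{R}^6} \abs{w_2 \myp{x-y}} \abs{\gamma_{\Psi} \myp{x,s_1;y,s_2}}^2 \id x \id y \leq \frac{\normt{w_2}{\infty}}{2N} \normt{\gamma_{\Psi}}{2}^2 = \frac{\normt{w_2}{\infty}}{2}.
\]
For the $ w_1 $-contribution, the substitution $ z = x - y $ turns it into $ \frac{1}{2N} \int_{\mathbb{R}^3} \abs{w_1 \myp{z}} h \myp{z} \id z $, where $ h \myp{z} := \sum_{s_1,s_2 = \pm 1} \int_{\mathbb{R}^3} \abs{\gamma_{\Psi} \myp{y+z,s_1;y,s_2}}^2 \id y $, and by Hölder's inequality this is at most $ \frac{1}{2N} \normt{w_1}{5/2} \normt{h}{5/3} $. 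Everything then reduces to the estimate
\[
	\normt{h}{5/3} \leq C N^{\frac{2}{5}} \hbar^{-\frac{6}{5}} \myp*{ \tr \myb{ \myp{-i\hbar\nabla + bA}^2 \gamma_{\Psi} } }^{\frac{3}{5}},
\]
which is the heart of the argument.

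To prove it, I would fix $ y \in \mathbb{R}^3 $ and $ s_2 \in \Set{\pm 1} $ and view $ g_{y,s_2} \myp{z} := \myp{ \gamma_{\Psi} \myp{y+z,s_1;y,s_2} }_{s_1 = \pm 1} \in \mathbb{C}^2 $ as a $ \mathbb{C}^2 $-valued function of $ z $. Minkowski's integral inequality gives $ \normt{h}{5/3} \leq \sum_{s_2} \int_{\mathbb{R}^3} \normt{g_{y,s_2}}{L^{10/3}\myp{\mathbb{R}^3}}^2 \id y $ (with $ L^{10/3} $ taken in the variable $ z $), and the Gagliardo–Nirenberg inequality $ \normt{f}{L^{10/3}\myp{\mathbb{R}^3}} \leq C \normt{f}{2}^{2/5} \normt{\nabla\abs{f}}{2}^{3/5} $ together with the diamagnetic inequality applied in $ z $ (with the affine vector potential $ z \mapsto \hbar^{-1}bA\myp{y+z} $, legitimate since the $ f_j $ lie in the magnetic Sobolev space) yields
\[
	\normt{g_{y,s_2}}{L^{10/3}\myp{\mathbb{R}^3}}^2 \leq C \myp*{ \int_{\mathbb{R}^3} \abs{g_{y,s_2} \myp{z}}^2 \id z }^{\frac{2}{5}} \myp*{ \int_{\mathbb{R}^3} \abs*{\myp{\nabla_z + i\hbar^{-1}bA \myp{y+z}} g_{y,s_2} \myp{z}}^2 \id z }^{\frac{3}{5}}.
\]
Using $ \gamma_{\Psi} = \gamma_{\Psi}^* = \gamma_{\Psi}^2 $, one identifies the first integral with $ \gamma_{\Psi} \myp{y,s_2;y,s_2} $ — whose sum over $ s_2 $ and integral over $ y $ is $ \tr \gamma_{\Psi} = N $ — and the second with $ \myp{\gamma_{\Psi} \Pi^2 \gamma_{\Psi}} \myp{y,s_2;y,s_2} $, where $ \Pi := -i\nabla + \hbar^{-1}bA $, whose sum over $ s_2 $ and integral over $ y $ equals $ \tr\myb{ \Pi^2 \gamma_{\Psi} } = \hbar^{-2} \tr\myb{ \myp{-i\hbar\nabla+bA}^2 \gamma_{\Psi} } $. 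Hölder's inequality with exponents $ \frac{5}{2} $ and $ \frac{5}{3} $ in the $ \myp{y,s_2} $-integration then produces the claimed bound on $ \normt{h}{5/3} $.

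To finish, write $ T_{\Psi} := \tr\myb{ \myp{\bm{\sigma} \cdot \myp{-i\hbar\nabla+bA}}^2 \gamma_{\Psi} } $ for the quantity on the right of \eqref{eq:interactionbound}. By \eqref{eq:paulieq} and $ \abs{\tr\myb{\sigma_3 \gamma_{\Psi}}} \leq \tr \gamma_{\Psi} = N $ one has $ \tr\myb{ \myp{-i\hbar\nabla+bA}^2 \gamma_{\Psi} } \leq T_{\Psi} + \hbar b N $, so by the two displays above the $ w_1 $-contribution is bounded by a constant (depending on $ \normt{w_1}{5/2} $) times $ N^{-3/5} \hbar^{-6/5} \myp{ T_{\Psi} + \hbar b N }^{3/5} $. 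Inserting the scaling conventions \eqref{eq:scaling0} and \eqref{eq:scaling}, so that $ N^{-3/5}\hbar^{-6/5} = N^{-1/5}\myp{1+\beta_N}^{-6/25} $ and $ \hbar b = k_{\beta_N} = \beta_N\myp{1+\beta_N}^{-2/5} $, and using $ \myp{a+c}^{3/5} \leq a^{3/5} + c^{3/5} $, I would convert the summand containing $ T_{\Psi}^{3/5} $ by Young's inequality $ t^{3/5} \leq \varepsilon t + C_{\varepsilon} $ (with $ \varepsilon $ of order $ N^{-2/15} $) into $ C N^{2/3} \cdot \frac{1}{N} T_{\Psi} + O\myp{1} $, and note that the remaining summand equals $ C \normt{w_1}{5/2} N^{2/5} \beta_N^{3/5} \myp{1+\beta_N}^{-12/25} $, which a short case distinction according to whether $ \beta_N \leq 1 $ or $ \beta_N > 1 $ bounds by $ C N^{2/3} \myp{ \beta_N \myp{1+\beta_N}^{-4/5} + 1 } $; combined with the $ w_2 $-bound this yields \eqref{eq:interactionbound}. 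I expect the main obstacle to be obtaining the sharp power of $ N $ in the bound for $ \normt{h}{5/3} $: the cheaper estimates $ h\myp{z} \leq \int_{\mathbb{R}^3} \rho_{\Psi}^{\myp{1}}\myp{y+z}\, \rho_{\Psi}^{\myp{1}}\myp{y} \id y $ and $ \normt{h}{\infty} \leq \normt{\rho_{\Psi}^{\myp{1}}}{2}^2 $ (the latter via the non-sharp $ \normt{\rho_{\Psi}^{\myp{1}}}{3} \lesssim \tr\myb{-\Delta\gamma_{\Psi}} $) lose a factor $ \sim N^{3/5} $ and only give an $ O\myp{N} $ bound, which is too weak for the lemma; it is the use of the projection identities for $ \gamma_{\Psi}^2 $ and $ \gamma_{\Psi} \Pi^2 \gamma_{\Psi} $ through Minkowski's inequality, rather than through the position density alone, that saves the estimate.
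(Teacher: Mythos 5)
Your proof is correct, but it takes a genuinely different route from the paper's. The paper splits $w_1$ further into $L^{3/2}\cap L^{5/2}$ (always possible within $L^{5/2}+L^{\infty}$), applies the magnetic Sobolev estimate $\normt{f}{6}^2 \leq C\myp{\hbar^{-2}\normt{\myp{-i\hbar\nabla+bA}f}{2}^2 + \varepsilon^{-2}\normt{f}{2}^2}$ (scaling plus diamagnetic plus Sobolev endpoint) to each column $\gamma_{\Psi}\myp{\nonarg,s_1;y,s_2}$, bounds $\int \abs{w_1}\abs{f}^2 \leq \normt{w_1}{3/2}\normt{f}{6}^2$, and then integrates over $y$ and chooses the free scaling parameter $\varepsilon^2 = \hbar/b$ to balance the two resulting terms; the balancing is what produces the $N^{2/3}$ and $\beta_N\myp{1+\beta_N}^{-4/5}$ directly. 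You instead keep $w_1$ in $L^{5/2}$, pass to $z = x-y$, and pair $\normt{w_1}{5/2}$ with an $L^{5/3}$ norm of $h$; the sharp bound $\normt{h}{5/3} \lesssim N^{2/5}\hbar^{-6/5}\myp{\tr\myb{\myp{-i\hbar\nabla+bA}^2\gamma_{\Psi}}}^{3/5}$ then comes from Minkowski, the $L^{10/3}$ Gagliardo--Nirenberg interpolation, the diamagnetic inequality (in $z$, with the translated potential), and Hölder in $\myp{y,s_2}$ using $\gamma_{\Psi}^2=\gamma_{\Psi}$; the final balancing is done by Young's inequality rather than by a free parameter. Both give the lemma; the paper's version is a little shorter and avoids both interpolation and Minkowski, while yours avoids the further splitting of $w_1$ and isolates a cleanly quotable kernel estimate. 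Your closing remark about why the cruder $\rho^{\myp{1}}_{\Psi}$-based estimate for $h$ loses a factor $\sim N^{3/5}$ is a useful sanity check and correctly identifies what the projection identities buy.
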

\begin{proof}
	We mimic the proof of the analogous bound in \cite[Proposition $ 3.1 $]{FouLewSol-18}.
	Writing $ w = w_1 + w_2 $ with $ w_1 \in L^{3/2} \myp{\mathbb{R}^3} \cap L^{5/2} \myp{\mathbb{R}^3} $ and $ w_2 \in L^{\infty} \myp{\mathbb{R}^3} $.
	Using that $ \gamma_{\Psi} $ is a projection, we note that the contribution from $ w_2 $ is bounded by
	\begin{equation}
	\label{eq:w2bound}
		\frac{1}{2N} \sum\limits_{s_1,s_2 = \pm 1} \iint_{\mathbb{R}^6} \abs{w_2 \myp{x-y}} \abs{\gamma_{\Psi} \myp{x,s_1;y,s_2}}^2 \id x \id y
		\leq \frac{\normt{w_2}{\infty}}{2},
	\end{equation}
	so we concentrate on controlling the contribution from $ w_1 $.
	
	Now, for any function $ f $ in the magnetic Sobolev space $ H_{\hbar^{-1} bA}^1 \myp{\mathbb{R}^3} $, the diamagnetic inequality implies that $ \abs{f} \in H^1 \myp{\mathbb{R}^3} $.
	Defining $ f_{\varepsilon} \myp{x} := \varepsilon^{1/2} f \myp{\varepsilon x} $ for $ \varepsilon > 0 $, we have $ \normt{f_{\varepsilon}}{6} = \normt{f}{6} $, $ \normt{f_{\varepsilon}}{2}^2 = \varepsilon^{-2} \normt{f}{2}^2 $, and $ \normt{ \nabla \abs{f_{\varepsilon}} }{2}^2 = \normt{\nabla \abs{f}}{2}^2 $, so by the diamagnetic and Gagliardo-Nirenberg-Sobolev inequalities,
	\begin{align*}
		\normt{f}{6}^2 
		&= \normt{f_{\varepsilon}}{6}^2 
		\leq C \myp{ \normt{ \nabla\abs{f_{\varepsilon}} }{2}^2 + \normt{f_{\varepsilon}}{2}^2 }
		= C \myp{ \normt{ \nabla\abs{f} }{2}^2 + \varepsilon^{-2} \normt{f}{2}^2 } \\
		&\leq C \myp{ \hbar^{-2} \normt{ \myp{-i\hbar \nabla+ b A} f }{2}^2 + \varepsilon^{-2} \normt{f}{2}^2 }.
	\end{align*}
	Combining this with the Hölder inequality, we obtain
	\begin{align*}
		\int_{\mathbb{R}^3} \abs{w_1 \myp{x}} \abs{f \myp{x}}^2 \id x 
		\leq C \normt{w_1}{\frac{3}{2}} \myp{ \hbar^{-2} \normt{ \myp{-i\hbar \nabla+ b A} f }{2}^2 + \varepsilon^{-2} \normt{f}{2}^2 }.
	\end{align*}
	We will apply this to the function $ \gamma_{\Psi} \myp{\nonarg, s_1 ; y,s_2} $ for fixed $ y $, so we calculate, again using that $ \gamma_{\Psi} $ is a projection,
	\begin{equation*}
		\sum\limits_{s_1,s_2 = \pm 1} \int_{\mathbb{R}^3} \normt{ \myp{-i \hbar \nabla_x + b A \myp{x}} \gamma_{\Psi} \myp{\nonarg, s_1 ; y,s_2}}{2}^2 \id y
		= \tr \myb{ \myp{-i \hbar \nabla + b A}^2 \gamma_{\Psi} }.
	\end{equation*}
	Recalling \eqref{eq:paulieq} and noting that $ \tr \myb{\sigma_3 \gamma_{\Psi}} \geq -N $, we combine the bounds above to obtain
	\begin{align*}
		\MoveEqLeft[3]	\sum\limits_{s_1,s_2 = \pm 1} \iint_{\mathbb{R}^6} \abs{w_1 \myp{x-y}} \abs{\gamma_{\Psi} \myp{x,s_1;y,s_2}}^2 \id x \id y \nonumber \\
		&\leq C \normt{w_1}{\frac{3}{2}} \myp[\Big]{ \frac{1}{\hbar^2} \tr \myb{ \myp{ \bm{\sigma} \cdot \myp{-i \hbar \nabla + b A}}^2 \gamma_{\Psi} } - \frac{1}{\hbar^2} \tr \myb{ \hbar b \sigma_3 \gamma_{\Psi}} + \varepsilon^{-2} N } \nonumber \\
		&\leq C \normt{w_1}{\frac{3}{2}} \myp[\Big]{ \frac{1}{\hbar^2} \tr \myb{ \myp{ \bm{\sigma} \cdot \myp{-i \hbar \nabla + b A}}^2 \gamma_{\Psi} } + \frac{b}{\hbar} N + \varepsilon^{-2} N }.
	\end{align*}
	Now, choosing $ \varepsilon^2 = \hbar/b $ and recalling the definitions of $ \hbar $ and $ b $ \eqref{eq:scaling0}, we get
	\begin{align*}
		\MoveEqLeft[3]	\frac{1}{2N} \sum\limits_{s_1,s_2 = \pm 1} \iint_{\mathbb{R}^6} \abs{w_1 \myp{x-y}} \abs{\gamma_{\Psi} \myp{x,s_1;y,s_2}}^2 \id x \id y \\
		&\leq C \normt{w_1}{\frac{3}{2}} N^{\frac{2}{3}} \myp[\Big]{ \frac{1}{N} \tr \myb{ \myp{ \bm{\sigma} \cdot \myp{-i \hbar \nabla + b A}}^2 \gamma_{\Psi} } + \beta_N \myp{1+\beta_N}^{-\frac{4}{5}} },
	\end{align*}
	so combining with \eqref{eq:w2bound}, we obtain \eqref{eq:interactionbound}.
\end{proof}
Continuing \eqref{eq:hartree-fock-energy}, recalling \eqref{eq:scaling2} (the assumption $ N^{-1/3} \beta_N^{1/5} \to 0 $), and applying the min-max principle, we get the bound
\begin{align}
	\MoveEqLeft[3]	\limsup_{N \to \infty} \frac{E\myp{N,\beta_N}}{N} \nonumber \\
	\leq {}& \limsup_{N\to \infty} \inf\limits_{\substack{\Psi \ \text{HF-} \\ \text{state} }} \myt[\Big]{ \frac{1+C N^{-\frac{1}{3}}}{N} \tr \myb{ \myp{\bm{\sigma} \cdot \myp{-i \hbar \nabla + b A}}^2 \gamma_{\Psi} } \nonumber \\
	&+ \frac{1}{N} \int_{\mathbb{R}^3} V \myp{x} \rho_{\Psi}^{\myp{1}} \myp{x} \id x + \frac{1}{2N^2} \iint_{\mathbb{R}^6} w \myp{x-y} \rho_{\Psi}^{\myp{1}} \myp{x} \rho_{\Psi}^{\myp{1}} \myp{y} \id x \id y }.
\label{eq:uppartialbound}
\end{align}
We proceed to construct an appropriate trial state for this variational problem.
For $ R > 0 $ we denote by $ \myp{\bm{\sigma} \cdot \myp{-i \hbar \nabla + b A}}^2_{C_R} $ the Pauli operator in the cube $ C_R = \myp{-R/2, R/2}^3 $ with Dirichlet boundary conditions.
\begin{lem}
\label{lem:trialstate}
	Suppose that $ \beta_N \to \beta \in \myp{0,\infty} $ and let $ 0 \leq \rho \in C_c \myp{\mathbb{R}^3} $ be any function with $ \int_{\mathbb{R}^3} \rho \myp{x} \id x = 1 $.
	Define $ r \myp{x} $ to be the solution to the equation
	\begin{equation*}
		\rho \myp{x} = \myp{1+\beta}^{-\frac{3}{5}} P'_{k_{\beta}} \myp{r \myp{x}},
	\end{equation*}
	where $ P_{k_{\beta}} $ is the pressure of the free Landau gas \eqref{eq:landaupressure} and $ k_{\beta} = \beta \myp{1+\beta}^{-2/5} $,	cf. \cref{lem:functionalminima}. 	
	Furthermore, fix $ R > 0 $ large enough such that $ \supp \rho \subseteq C_R $.
	Then the sequence of density matrices $ \gamma_N $ given by the spectral projections
	\begin{equation}
	\label{eq:gamman}
		\gamma_N := \mathds{1}_{\left( -\infty, 0 \right]} \myp{ \myp{\bm{\sigma} \cdot \myp{-i \hbar \nabla + b A}}^2_{C_R} - r \myp{x} },
	\end{equation}
	satisfies
	\begin{equation}
	\label{eq:kinlimit}
		\lim_{N \to \infty} \frac{1}{N} \tr \myb{ \myp{\bm{\sigma} \cdot \myp{-i \hbar \nabla + b A}}^2 \gamma_N }
		= \int_{\mathbb{R}^3} \tau_{\beta} \myp{\rho \myp{x}} \id x
	\end{equation}
	and
	\begin{equation}
	\label{eq:traceconv}
		\lim_{N \to \infty} \frac{1}{N} \tr \myb{\gamma_N} = \int_{\mathbb{R}^3} \rho \myp{x} \id x = 1.
	\end{equation}
	Moreover, the densities $ \frac{1}{N}\rho_{\gamma_N} $ converge to $ \rho $ weakly in $ L^1 \myp{\mathbb{R}^3} $ and $ L^{5/3} \myp{\mathbb{R}^3} $, and the same conclusions also hold if $ \gamma_N $ is replaced by the projection $ \tilde{\gamma}_N $ onto the $ N $ lowest eigenvectors of the operator $ \myp{\bm{\sigma} \cdot \myp{-i \hbar \nabla + b A}}^2_{C_R} - r \myp{x} $.
\end{lem}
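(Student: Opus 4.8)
The plan is to realise everything in terms of the Weyl asymptotics of \cref{cor:pauliweil}. Write $ T_N := \myp{\bm{\sigma} \cdot \myp{-i \hbar \nabla + b A}}^2_{C_R} $ for the Dirichlet Pauli operator on $ C_R $ (with $ \hbar,b $ given by \eqref{eq:scaling0} for $ \beta = \beta_N $) and $ H_N := T_N - r $. Since $ P_{k_\beta}' $ is a strictly increasing homeomorphism of $ [0,\infty) $ vanishing at $ 0 $, the function $ r $ is continuous, bounded, nonnegative and supported in $ \supp \rho \subseteq C_R $; hence $ r \mathds{1}_{C_R} \in L^{3/2} \myp{\mathbb{R}^3} \cap L^{5/2} \myp{\mathbb{R}^3} $, and for any bounded $ \varphi $ supported in $ C_R $ and small $ \abs{t} $ the potential $ -\myp{r + t \varphi} $ has negative part $ \myp{r + t \varphi}_+ $, again bounded with compact support. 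The whole proof would then consist in feeding such perturbed potentials into \cref{cor:pauliweil} --- valid for the Dirichlet realisation on $ C_R $ by \cref{rem:pauliweyl} --- and reading off the spectral data by a trial-state comparison; I expect the passage from the energy (Weyl) limit to \emph{weak convergence of the densities} to be the only real obstacle, everything else being bookkeeping. I would also record that $ \tr \myb{T_N \gamma_N} = \tr \myb{\myp{\bm{\sigma} \cdot \myp{-i \hbar \nabla + b A}}^2 \gamma_N} $, since extending an element of the Dirichlet form domain by zero does not change its magnetic energy, so the two sides of \eqref{eq:kinlimit} agree.

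\emph{Density convergence (the crux).} For fixed bounded $ \varphi $ with $ \supp \varphi \subseteq C_R $ I would set
\begin{equation*}
	g_N \myp{t} := -\frac{1}{N} \tr \myb{\myp{H_N - t \varphi}_-} = \inf_{0 \leq \gamma \leq 1} \frac{1}{N} \tr \myb{\myp{H_N - t \varphi} \gamma} .
\end{equation*}
The potential $ -\myp{r + t \varphi} $ lies in the class of \cref{cor:pauliweil,rem:pauliweyl}, so for all $ t $ near $ 0 $
\begin{equation*}
	g_N \myp{t} \longrightarrow g_\infty \myp{t} := -\myp{1+\beta}^{-\frac{3}{5}} \int_{\mathbb{R}^3} P_{k_\beta} \myp[\big]{\myp{r \myp{x} + t \varphi \myp{x}}_+} \id x .
\end{equation*}
Because $ P_{k_\beta} \myp{s} = \frac{k_\beta}{3 \pi^2} s^{3/2} $ for $ 0 \leq s < 2 k_\beta $, the set $ \Set{r = 0} $ contributes a term of order $ t^{3/2} $, so $ g_\infty $ is differentiable at $ 0 $ with $ g_\infty' \myp{0} = -\myp{1+\beta}^{-3/5} \int P_{k_\beta}' \myp{r \myp{x}} \varphi \myp{x} \id x = -\int \rho \varphi $, the last equality being the defining relation for $ r $. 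On the other hand, as $ \gamma_N = \mathds{1}_{\left(-\infty,0\right]} \myp{H_N} $ realises the infimum defining $ g_N \myp{0} $, using it as a trial state in $ g_N \myp{\pm h} $ gives, for every $ h > 0 $,
\begin{equation*}
	\frac{g_N \myp{-h} - g_N \myp{0}}{h} \leq \frac{1}{N} \int \varphi \rho_{\gamma_N} \leq \frac{g_N \myp{0} - g_N \myp{h}}{h} .
\end{equation*}
Sending $ N \to \infty $, then $ h \to 0 $, squeezes $ \frac{1}{N} \int \varphi \rho_{\gamma_N} $ against $ -g_\infty' \myp{0} $ from both sides, whence $ \frac{1}{N} \int \varphi \rho_{\gamma_N} \to \int \varphi \rho $; taking $ \varphi \equiv \mathds{1}_{C_R} $ here gives \eqref{eq:traceconv}. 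To upgrade this to weak convergence I would use that the Lieb--Thirring inequality \eqref{eq:genlt}, applied to $ T_N - \myp{r+1} \mathds{1}_{C_R} $ (legitimate for the Dirichlet operator by domain monotonicity), bounds $ \tr \gamma_N $ and hence $ \frac{1}{N} \tr \myb{T_N \gamma_N} = \frac{1}{N} \tr \myb{H_N \gamma_N} + \frac{1}{N} \int r \rho_{\gamma_N} $ by a constant, whereupon \eqref{eq:genltcor}, exactly as in the proof of \cref{lem:densitybound}, gives $ \sup_N \frac{1}{N} \normt{\rho_{\gamma_N}}{\frac{5}{3}} < \infty $. With the resulting equi-integrability on the bounded set $ C_R $, the convergence above becomes $ \frac{1}{N} \rho_{\gamma_N} \rightharpoonup \rho $ weakly in $ L^{5/3} \myp{\mathbb{R}^3} $ and in $ L^1 \myp{\mathbb{R}^3} $. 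The delicate points here are that $ -\myp{r + t \varphi} $ stays admissible uniformly for small $ t $, the differentiability of $ g_\infty $ near $ \Set{r = 0} $, and the a priori $ L^{5/3} $ bound.

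\emph{Kinetic energy.} By \cref{cor:pauliweil} with potential $ -r $, $ \frac{1}{N} \tr \myb{H_N \gamma_N} \to -\myp{1+\beta}^{-3/5} \int P_{k_\beta} \myp{r \myp{x}} \id x $, while $ \frac{1}{N} \int r \rho_{\gamma_N} \to \int r \rho $ since $ r \in L^{5/2} \myp{C_R} $ and $ \frac{1}{N} \rho_{\gamma_N} \rightharpoonup \rho $ in $ L^{5/3} $; hence
\begin{equation*}
	\frac{1}{N} \tr \myb{T_N \gamma_N} \longrightarrow \int_{\mathbb{R}^3} \myp[\big]{\rho \myp{x} r \myp{x} - \myp{1+\beta}^{-3/5} P_{k_\beta} \myp{r \myp{x}}} \id x = \int_{\mathbb{R}^3} \tau_\beta \myp{\rho \myp{x}} \id x ,
\end{equation*}
the last step being \eqref{eq:taueq}, i.e.\ the fact (\cref{lem:functionalminima}) that $ r \myp{x} $ attains the supremum in the Legendre transform \eqref{eq:taubdef}. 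This proves \eqref{eq:kinlimit}.

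\emph{The projection $ \tilde{\gamma}_N $.} Let $ M_N := \tr \gamma_N $, so $ \abs{M_N - N} = o \myp{N} $ by \eqref{eq:traceconv}. Running the argument of the second paragraph with $ r $ replaced by $ r + \varepsilon \mathds{1}_{C_R} $ and $ \varphi \equiv \mathds{1}_{C_R} $ shows, for every $ \varepsilon > 0 $, that $ \frac{1}{N} \tr \myb{\mathds{1}_{(-\infty,\varepsilon]} \myp{H_N}} \to \myp{1+\beta}^{-3/5} \int_{C_R} P_{k_\beta}' \myp{r + \varepsilon} > 1 $, so eventually the $ N $-th eigenvalue of $ H_N $ is $ \leq \varepsilon $; with $ H_N \geq -\normt{r}{\infty} $, the eigenvalues entering the rank-$ o \myp{N} $ difference $ \gamma_N - \tilde{\gamma}_N $ thus stay in a fixed bounded interval. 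Therefore $ \tr \myb{H_N \myp{\gamma_N - \tilde{\gamma}_N}} $, $ \int r \myp{\rho_{\gamma_N} - \rho_{\tilde{\gamma}_N}} $, $ \tr \myb{\gamma_N - \tilde{\gamma}_N} $ and $ \normt{\rho_{\gamma_N} - \rho_{\tilde{\gamma}_N}}{1} $ are all $ o \myp{N} $, so \eqref{eq:kinlimit}, \eqref{eq:traceconv} (trivial for $ \tilde{\gamma}_N $, whose rank is $ N $) and the $ L^1 $-convergence of the densities transfer to $ \tilde{\gamma}_N $, and the weak $ L^{5/3} $-convergence follows since $ \frac{1}{N} \rho_{\tilde{\gamma}_N} $ is bounded in $ L^{5/3} $ (bounded kinetic energy per particle) and converges to $ \rho $ in $ L^1 $. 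The Dirichlet-versus-$ \mathbb{R}^3 $ comparison and the passage from $ \gamma_N $ to $ \tilde{\gamma}_N $ are routine; the density convergence of the second paragraph is where the real work lies.
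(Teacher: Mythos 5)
Your proof is correct and rests, as the paper's does, on the Dirichlet Weyl asymptotics (\cref{cor:pauliweil} and \cref{rem:pauliweyl}) combined with a variational/Feynman--Hellmann squeeze. There are two modest differences in execution. For the density convergence, you work with signed test functions $\varphi$ and argue differentiability of $t\mapsto -\myp{1+\beta}^{-3/5}\int P_{k_\beta}\myp{\myp{r+t\varphi}_+}$ at $t=0$, noting that $\Set{r=0}$ only contributes $O\myp{\abs{t}^{3/2}}$ because $P_{k_\beta}\myp{s}\sim s^{3/2}$ near zero (equivalently, $P_{k_\beta}\circ\myb{\cdot}_+$ is $C^1$ since $P'_{k_\beta}\myp{0}=0$); the paper instead restricts to nonnegative $g$ and uses one-sided difference quotients with monotone convergence, which sidesteps the differentiability discussion entirely. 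For the passage from $\gamma_N$ to $\tilde\gamma_N$, you count eigenvalues directly: running the Weyl count with $r$ shifted by $\varepsilon$ shows the $N$-th eigenvalue of $H_N$ eventually lies below any fixed $\varepsilon>0$, so the rank-$o\myp{N}$ difference $\gamma_N-\tilde\gamma_N$ is confined to a bounded spectral window, and all the needed traces are $o\myp{N}$. The paper reaches the same conclusion by sandwiching $\tilde\gamma_N$ between the auxiliary projections $\gamma_{N,\pm\varepsilon}=\mathds{1}_{\left(-\infty,\pm\varepsilon\right]}\myp{H_N}$ and invoking a uniform-continuity estimate \eqref{eq:punifbound} on $P'_{k_\beta}$. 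Both routes are valid; yours is a touch more streamlined, while the paper's gives a slightly more explicit quantitative control on the error.
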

\begin{proof}
	For the duration of the proof, we will employ the notation $ \smash{T_{C_R}^{\beta_N}} = \myp{\bm{\sigma} \cdot \myp{-i \hbar \nabla + b A}}^2_{C_R} $.
	By domain inclusions it is not difficult to see that in the sense of quadratic forms $ \tr \myb{ \smash{T_{C_R}^{\beta_N}} \gamma_N } = \tr \myb{ \myp{\bm{\sigma} \cdot \myp{-i \hbar \nabla + b A}}^2 \gamma_N } $, and that the same equality holds when $ \gamma_N $ is replaced by $ \tilde{\gamma}_N $.
	Thus, it is sufficient to show \eqref{eq:kinlimit} using $ \smash{T_{C_R}^{\beta_N} } $ instead of the Pauli operator on the whole space.
	
	Note also that the quadratic form domain of $ \smash{T_{C_R}^{\beta_N}} - r \myp{x} $, $ H_0^1 \myp{C_R ; \mathbb{C}^2} $, is compactly embedded in $ L^2 \myp{C_R ; \mathbb{C}^2} $, so that $ \smash{T_{C_R}^{\beta_N}} - r \myp{x} $ has compact resolvent, and hence it has purely discrete spectrum.
	This implies that $ \gamma_N $ is a projection onto a finite-dimensional subspace of $ L^2 \myp{C_R} $, and hence $ \rho_{{\gamma}_N} $ is an $ L^1 $-function.
	
	Using the Weyl asymptotics from \cref{cor:pauliweil,rem:pauliweyl} and recalling \eqref{eq:taueq}, we obtain in the semi-classical
	limit
	\begin{align}
		\lim_{N \to \infty} \frac{1}{N} \tr \myb{ \myp{ T_{C_R}^{\beta_N} - r \myp{x} } \gamma_N }
		&= - \myp{1+\beta}^{-\frac{3}{5}} \int_{C_R} P_{k_{\beta}} \myp{r \myp{x}} \id x \nonumber \\
		&= \int_{C_R} \tau_{\beta} \myp{\rho \myp{x}} \id x - \int_{C_R} \rho \myp{x} r \myp{x} \id x.
	\label{eq:dirichletenergy}
	\end{align}
	Let now $ g \in L^{\infty} \myp{C_R} $ be real and non-negative. By a Feynman-Hellmann type argument, we shall see that
	\begin{equation}
	\label{eq:rhogammalim}
		\lim_{N \to \infty} \frac{1}{N} \tr \myb{g \myp{x} \gamma_N} = \int_{C_R} g\myp{x} \rho \myp{x} \id x.
	\end{equation}
	To this end, note first for any real $ \delta $ that any function in the range of $ \gamma_N $ is also in the domain of $ \smash{T_{C_R}^{\beta_N}} - r \myp{x} + \delta g \myp{x} $.
	Hence we have by the variational principle
	\begin{equation*}
		\tr \myb{ \myp{ T_{C_R}^{\beta_N} - r \myp{x} + \delta g \myp{x} } \gamma_N }
		\geq \tr \myp{ T_{C_R}^{\beta_N} - r \myp{x} + \delta g \myp{x} }_-,
	\end{equation*}
	so that
	\begin{align}
		\delta \tr \myb{g \myp{x} \gamma_N} 
		&\geq \tr \myp{ T_{C_R}^{\beta_N} - r \myp{x} + \delta g \myp{x} }_- - \tr \myp{ T_{C_R}^{\beta_N} - r \myp{x} }_-.
	\label{eq:tracevariation}
	\end{align}
	Hence for $ \delta < 0 $, we get by \cref{cor:pauliweil},
	\begin{align*}
		\MoveEqLeft[3]	\limsup_{N \to \infty} \frac{1}{N} \tr \myb{g \myp{x} \gamma_N} \\
		&\leq \frac{1}{\delta} \lim_{N \to \infty} \frac{1}{N} \myp[\big]{ \tr \myp{ T_{C_R}^{\beta_N} - r \myp{x} + \delta g \myp{x} }_- - \tr \myp{ T_{C_R}^{\beta_N} - r \myp{x} }_- } \\
		&= \myp{1+\beta}^{-\frac{3}{5}} \int_{C_R} \frac{P_{k_\beta} \myp{r \myp{x} - \delta g \myp{x}} - P_{k_\beta} \myp{r \myp{x}}}{- \delta g \myp{x}} g \myp{x} \id x.
	\end{align*}
	Since $ g $ is non-negative and $ P_{k_\beta} $ is convex and increasing, the integrand above decreases pointwise to $ P'_{k_\beta} \myp{r \myp{x}} $ as $ \delta \to 0_- $, on the set where $ g \myp{x} \neq 0 $.
	This implies by the monotone convergence theorem and definition of $ r $ that
	\begin{align*}
		\limsup_{N \to \infty} \frac{1}{N} \tr \myb{g \myp{x} \gamma_N}
		&\leq \myp{1+\beta}^{-\frac{3}{5}} \int_{C_R} P'_{k_\beta} \myp{r \myp{x}} g \myp{x} \id x \\
		&= \int_{C_R} \rho \myp{x} g \myp{x} \id x.
	\end{align*}
	In the same way we get from \eqref{eq:tracevariation} for positive $ \delta $, that
	\begin{align*}
		\MoveEqLeft[3]	\liminf_{N \to \infty} \frac{1}{N} \tr \myb{g \myp{x} \gamma_N} \\
		&\geq \myp{1+\beta}^{-\frac{3}{5}} \int_{C_R} \frac{P_{k_\beta} \myp{\myb{r \myp{x} - \delta g \myp{x}}_+} - P_{k_\beta} \myp{r \myp{x}}}{- \delta g \myp{x}} g \myp{x} \id x \\
		&\xrightarrow{\delta \to 0_+} \myp{1+\beta}^{-\frac{3}{5}} \int_{C_R} P'_{k_\beta} \myp{r \myp{x}} g \myp{x} \id x 
		= \int_{C_R} \rho \myp{x} g \myp{x} \id x,
	\end{align*}
	since the fraction in the integral this time increases to $ P'_{k_\beta} \myp{r \myp{x}} $ on the set where $ g \myp{x} \neq 0 $.
	It follows that \eqref{eq:rhogammalim} holds,
	and by extension that for arbitrary $ g \in L^{\infty} \myp{\mathbb{R}^3} $, we have
	\begin{equation*}
		\int_{\mathbb{R}^3} g \myp{x} \frac{\rho_{\gamma_N} \myp{x}}{N} \id x 
		= \frac{1}{N} \tr \myb{ \myp{g \mathds{1}_{C_R}} \myp{x} \gamma_N}
		\to \int_{\mathbb{R}^3} g \myp{x} \rho \myp{x} \id x,
	\end{equation*}
	as $ N $ tends to infinity, so $ \frac{1}{N} \rho_{\gamma_N} \rightharpoonup \rho $ weakly in $ L^1 \myp{\mathbb{R}^3} $, as advertised.
	Taking $ g = \mathds{1}_{C_R} $ in \eqref{eq:rhogammalim} yields \eqref{eq:traceconv}, implying that $ \smash{T_{C_R}^{\beta_N}} -r \myp{x} $ has $ N + o \myp{N} $ negative eigenvalues.
	Noting that $ r $ is bounded by construction, we can take $ g = r $ in \eqref{eq:rhogammalim} and combine with \eqref{eq:dirichletenergy} to obtain
	\begin{equation}
	\label{eq:kindirgammaconv}
		\lim_{N \to \infty} \frac{1}{N} \tr \myb{ T_{C_R}^{\beta_N} \gamma_N } 
		= \int_{C_R} \tau_{\beta} \myp{\rho \myp{x}} \id x.
	\end{equation}
	Finally, applying \cref{lem:densitybound} to get weak convergence in $ L^{5/3} \myp{\mathbb{R}^3} $, we have proven the lemma for $ \gamma_N $.
	
	We want to see that the assertions of the lemma also hold for $ \tilde{\gamma}_N $.
	The fact that the dimension of the range of $ \gamma_N $ is $ N + o \myp{N} $ immediately implies that $ \normt{\rho_{{\gamma}_N} - \rho_{\tilde{\gamma}_N}}{1} = \tr \myb{ \abs{\gamma_N - \tilde{\gamma}_N}} = o \myp{N} $.
	Hence for any $ g \in L^{\infty} \myp{\mathbb{R}^3} $ we have
	\begin{equation}
	\label{eq:tildeweakconv}
		\tr \myb{g \myp{x} \myp{\gamma_N - \tilde{\gamma}_N}} = o \myp{N}.
	\end{equation}	
	In other words, $ \frac{1}{N}\rho_{\tilde{\gamma}_N} $ has the same weak limit in $ L^1 \myp{\mathbb{R}^3} $ as $ \frac{1}{N}\rho_{\gamma_N} $.
	Note by \eqref{eq:landaupressurederiv} that $ P'_{k_{\beta}} $ is continuous and increasing, which along with continuity of $ \rho $ implies that $ r $ is continuous.
	Also, it is clear that $ \mathrm{supp} \, r = \mathrm{supp} \, \rho \subseteq C_R $, so we get by uniform continuity that for each $ \delta > 0 $ there is some $ \varepsilon \in \left(0,\delta \right] $ such that for all $ x \in C_R $, we have
	\begin{equation}
	\label{eq:punifbound}
		\myp{1+\beta}^{-\frac{3}{5}} \abs{ P_{k_\beta}' \myp{ \myb{r \myp{x} \pm \varepsilon}_+} - P_{k_\beta}' \myp{r \myp{x}}} \leq \delta.
	\end{equation}
	Use this $ \varepsilon $ to define
	\begin{equation*}
		\gamma_{N,\pm \varepsilon} 
		:= \mathds{1}_{\left(-\infty,\pm \varepsilon \right]} \myp{ T_{C_R}^{\beta_N} -r \myp{x} } 
		= \mathds{1}_{\left(-\infty,0 \right]} \myp{ T_{C_R}^{\beta_N} - \myp{r \myp{x} \pm \varepsilon} }.
	\end{equation*}
	Redoing the argument used to prove \eqref{eq:rhogammalim}, we obtain
	\begin{equation*}
		\lim_{N \to \infty} \frac{1}{N} \tr \myb{\gamma_{N,\pm \varepsilon}} 
		= \myp{1+\beta}^{-\frac{3}{5}} \int_{C_R} P_{k_\beta}' \myp{ \myb{r \myp{x} \pm \varepsilon}_+} \id x,
	\end{equation*}
	and since $ P_{k_\beta}' $ strictly increasing on $ \left[ 0,\infty \right) $, we have
	\begin{equation*}
		\eta_{\pm} := \pm \myp{1+\beta}^{-\frac{3}{5}} \int_{C_R} P_{k_\beta}' \myp{ \myb{r \myp{x} \pm \varepsilon}_+} - P_{k_\beta}' \myp{r \myp{x}} \id x
		> 0
	\end{equation*}
	as long as $ \varepsilon $ is small enough, implying
	\begin{equation*}
		\lim_{N \to \infty} \frac{1}{N} \tr \myb{\gamma_{N, -\varepsilon}}
		\leq 1 - \eta_-
		\leq 1 + \eta_+
		\leq \lim_{N \to \infty} \frac{1}{N} \tr \myb{\gamma_{N, \varepsilon}}.
	\end{equation*}
	These bounds yield for $ N $ large enough that $ \tr \myb{\gamma_{N, -\varepsilon}} \leq N \leq \tr \myb{\gamma_{N, \varepsilon}} $, so
	\begin{equation}
		\gamma_{N,-\varepsilon} \leq \tilde{\gamma}_N \leq \gamma_{N,\varepsilon}.
	\end{equation}
	Similarly, using \eqref{eq:punifbound} along with the fact that $ P_{k_\beta}' $ is increasing, we have
	\begin{equation*}
		1 - \delta R^3
		\leq \lim_{N \to \infty} \frac{1}{N} \tr \myb{\gamma_{N, -\varepsilon}}
		\leq \lim_{N \to \infty} \frac{1}{N} \tr \myb{\gamma_{N, \varepsilon}}
		\leq 1 + \delta R^3.
	\end{equation*}
	Now, for each $ N \geq 1 $, there are two cases; either $ \tilde{\gamma}_N $ is a subprojection of $ \gamma_N $, or the converse is true.
	In case $ \tilde{\gamma}_N \leq \gamma_N $, we have $ \gamma_N - \tilde{\gamma}_N \geq \gamma_N - \gamma_{N,-\varepsilon} $, where the latter is the spectral projection of $ \smash{T_{C_R}^{\beta_N}} - r \myp{x} $ corresponding to the interval $ \left( -\varepsilon ,0 \right] $.
	Hence we have
	\begin{align*}
		0 &\geq \tr \myb{ \myp{ T_{C_R}^{\beta_N} - r \myp{x} } \myp{\gamma_N-\tilde{\gamma}_N} }
		\geq \tr \myb{ \myp{ T_{C_R}^{\beta_N} - r \myp{x} } \myp{\gamma_N-\gamma_{N,-\varepsilon}} } \\
		&\geq - \varepsilon \tr \myb{\gamma_N - \gamma_{N,-\varepsilon}}
		\geq - \varepsilon \myp{\delta R^3 N + o \myp{N}}.
	\end{align*}
	The other case, where $ \gamma_N \leq \tilde{\gamma}_N $, is handled similarly.
	Here we get the bound
	\begin{align*}
		0 &\leq -\tr \myb{ \myp{ T_{C_R}^{\beta_N} - r \myp{x} } \myp{\gamma_N-\tilde{\gamma}_N} }
		\leq \tr \myb{ \myb{ T_{C_R}^{\beta_N} - r \myp{x} } \myp{\gamma_{N,\varepsilon}-\gamma_N} } \\
		&\leq \varepsilon \tr \myb{\gamma_{N,\varepsilon} - \gamma_N} 
		\leq \varepsilon \myp{\delta R^3 N + o \myp{N}}.
	\end{align*}
	In either case,
	\begin{equation*}
		\tr \myb{ \myp{ T_{C_R}^{\beta_N} - r \myp{x} } \myp{\gamma_N-\tilde{\gamma}_N} } = o \myp{N},
	\end{equation*}
	so combining with \eqref{eq:tildeweakconv}, we obtain $ \tr \myb{ \smash{T_{C_R}^{\beta_N}} \myp{\gamma_N-\tilde{\gamma}_N} } = o \myp{N} $, meaning that \eqref{eq:kindirgammaconv} also holds for $ \tilde{\gamma}_N $, finishing the proof.
\end{proof}
In the regimes where either $ \beta_N \to 0 $ or $ \beta_N \to \infty $, we modify the proof above to obtain similar results:
\begin{lem}
	\label{lem:trialstate2}
	Let $ 0 \leq \rho \in C_c \myp{\mathbb{R}^3} $ be any function with $ \int_{\mathbb{R}^3} \rho \myp{x} \id x = 1 $, and fix $ R > 0 $ large enough such that $ \supp \rho \subseteq C_R $.
	\begin{enumerate}
		\item If $ \beta_N \to 0 $, then the sequence of density matrices $ \gamma_N $ given by the spectral projections
		\begin{equation*}
			\gamma_N := \mathds{1}_{\left( -\infty, 0 \right]} \myp[\big]{ \myp{\bm{\sigma} \cdot \myp{-i \hbar \nabla + b A}}^2_{C_R} - c_{\sssup{TF}} \rho \myp{x}^{\frac{2}{3}} }
		\end{equation*}
		satisfies
		\begin{equation*}
			\lim_{N \to \infty} \frac{1}{N} \tr \myb{ \myp{\bm{\sigma} \cdot \myp{-i \hbar \nabla + b A}}^2 \gamma_N }
			= \frac{3}{5} c_{\sssup{TF}} \int_{\mathbb{R}^3} \rho \myp{x}^{\frac{5}{3}} \id x.
		\end{equation*}
		
		\item If $ \beta_N \to \infty $ and \eqref{eq:scaling2} holds, then the sequence of density matrices $ \gamma_N $ given by the spectral projections
		\begin{equation*}
			\gamma_N := \mathds{1}_{\left( -\infty, 0 \right]} \myp{ \myp{\bm{\sigma} \cdot \myp{-i \hbar \nabla + b A}}^2_{C_R} - 4 \pi^4 \rho\myp{x}^2 }
		\end{equation*}
		satisfies
		\begin{equation*}
			\lim_{N \to \infty} \frac{1}{N} \tr \myb{ \myp{\bm{\sigma} \cdot \myp{-i \hbar \nabla + b A}}^2 \gamma_N }
			= \frac{4 \pi^4}{3} \int_{\mathbb{R}^3} \rho \myp{x}^3 \id x.
		\end{equation*}
	\end{enumerate}
	Moreover, in both cases we have
	\begin{equation*}
		\lim_{N \to \infty} \frac{1}{N} \tr \myb{\gamma_N} = \int_{\mathbb{R}^3} \rho \myp{x} \id x = 1,
	\end{equation*}
	and the densities $ \frac{1}{N}\rho_{\gamma_N} $ converge to $ \rho $ weakly in $ L^1 \myp{\mathbb{R}^3} $ and in
	$ L^{5/3} \myp{\mathbb{R}^3} $.
	The same conclusions also hold if $ \gamma_N $ is replaced by the projection $ \tilde{\gamma}_N $ onto the $ N $ lowest eigenvectors of the operator used to define $ \gamma_N $.
\end{lem}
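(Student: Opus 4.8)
\emph{Proof idea.} The argument is a line-by-line adaptation of the proof of \cref{lem:trialstate}, the only change being that the pointwise chemical potential $r\myp{x}$ is replaced by $W\myp{x} := c_{\sssup{TF}} \rho\myp{x}^{2/3}$ in case (1) and by $W\myp{x} := 4\pi^4 \rho\myp{x}^2$ in case (2). Write $T_{C_R}^{\beta_N} := \myp{\bm{\sigma}\cdot\myp{-i\hbar\nabla + bA}}^2_{C_R}$. Exactly as in \cref{lem:trialstate}, the quadratic form domain $H_0^1\myp{C_R;\mathbb{C}^2}$ is compactly embedded in $L^2\myp{C_R;\mathbb{C}^2}$, so $T_{C_R}^{\beta_N}$ has compact resolvent, $\gamma_N$ and $\tilde\gamma_N$ are finite-rank projections with $L^1$ densities, and $\tr\myb{T_{C_R}^{\beta_N}\gamma_N} = \tr\myb{\myp{\bm{\sigma}\cdot\myp{-i\hbar\nabla+bA}}^2\gamma_N}$ (and likewise for $\tilde\gamma_N$); hence it suffices to work with the Dirichlet operator throughout.

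Since $\rho\in C_c\myp{\mathbb{R}^3}$, the potential $W$ is bounded with $\supp W = \supp\rho\subseteq C_R$, so $W_- = W\in L^{3/2}\myp{\mathbb{R}^3}\cap L^{5/2}\myp{\mathbb{R}^3}$ and \cref{cor:pauliweil,rem:pauliweyl} apply with $V$ replaced by $-W$; in case (2), the hypothesis \eqref{eq:scaling2} is precisely what guarantees $\hbar\to 0$, so that these Weyl asymptotics are available. They give $\tfrac1N\tr\myb{\myp{T_{C_R}^{\beta_N} - W\myp{x}}\gamma_N}\to -\tfrac{2}{15\pi^2}\int W^{5/2}$ in case (1) and $\to -\tfrac{1}{3\pi^2}\int W^{3/2}$ in case (2). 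A Feynman--Hellmann/variational argument identical to the one in \cref{lem:trialstate} — perturbing the operator to $T_{C_R}^{\beta_N} - W\myp{x} + \delta g\myp{x}$ for $0\leq g\in L^{\infty}\myp{C_R}$, invoking the variational principle together with \cref{cor:pauliweil}, and letting $\delta\to 0_\pm$ using that the relevant limiting pressure ($\nu\mapsto\tfrac{2}{15\pi^2}\nu^{5/2}$ in case (1), $\nu\mapsto\tfrac{1}{3\pi^2}\nu^{3/2}$ in case (2)) is convex and increasing, with derivative at $W\myp{x}$ equal to $\rho\myp{x}$ in view of the elementary identities $\tfrac{1}{3\pi^2}\myp{c_{\sssup{TF}}\rho^{2/3}}^{3/2} = \rho$ and $\tfrac{1}{2\pi^2}\myp{4\pi^4\rho^2}^{1/2} = \rho$ — then yields $\tfrac1N\tr\myb{g\myp{x}\gamma_N}\to\int_{C_R} g\rho$. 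Taking $g = \mathds{1}_{C_R}$ gives $\tfrac1N\tr\myb{\gamma_N}\to 1$; an arbitrary bounded $g$ gives $\tfrac1N\rho_{\gamma_N}\rightharpoonup\rho$ weakly in $L^1\myp{\mathbb{R}^3}$; and $g = W$, combined with the Weyl limit above, gives that $\tfrac1N\tr\myb{T_{C_R}^{\beta_N}\gamma_N}$ converges to the sum of $\int W\rho$ and that Weyl limit, which a short computation (using $c_{\sssup{TF}} = \myp{3\pi^2}^{2/3}$) identifies with $\tfrac35 c_{\sssup{TF}}\int\rho^{5/3}$ in case (1) and with $\tfrac{4\pi^4}{3}\int\rho^3$ in case (2). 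As the kinetic energy is thus $O\myp{N}$, \cref{lem:densitybound} upgrades the weak $L^1$ convergence of $\tfrac1N\rho_{\gamma_N}$ to weak $L^{5/3}$ convergence.

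Finally, all conclusions are transferred from $\gamma_N$ to the rank-$N$ projection $\tilde\gamma_N$ exactly as in the last part of the proof of \cref{lem:trialstate}: $W$ is continuous (since $\rho^{2/3}$, resp.\ $\rho^2$, is) and supported in $C_R$, and the relevant pressure derivative is continuous and strictly increasing, so uniform continuity lets one choose, for each $\delta>0$, an $\varepsilon\in(0,\delta]$ for which the sandwich $\gamma_{N,-\varepsilon}\leq\tilde\gamma_N\leq\gamma_{N,\varepsilon}$ holds for all large $N$, where $\gamma_{N,\pm\varepsilon} := \mathds{1}_{(-\infty,0]}\myp{T_{C_R}^{\beta_N} - \myp{W\myp{x}\pm\varepsilon}}$; squeezing then gives $\tr\myb{\myp{T_{C_R}^{\beta_N} - W\myp{x}}\myp{\gamma_N - \tilde\gamma_N}} = o\myp{N}$, and since $\norm{\rho_{\gamma_N} - \rho_{\tilde\gamma_N}}_{1} = \tr\myb{\abs{\gamma_N - \tilde\gamma_N}} = o\myp{N}$ as well, all of the above limits persist with $\tilde\gamma_N$ in place of $\gamma_N$. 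The only point that requires genuine care beyond a mechanical translation of \cref{lem:trialstate} is the arithmetic identification of the non-magnetic, resp.\ strong-field, semi-classical constants with $\tfrac35 c_{\sssup{TF}}$, resp.\ $\tfrac{4\pi^4}{3}$ — i.e.\ the compatibility of the $\beta=0$ and $\beta=\infty$ Weyl laws of \cref{cor:pauliweil} with the Thomas--Fermi and strong Thomas--Fermi kinetic energy densities — which is the elementary computation indicated above.
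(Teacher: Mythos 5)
Your proposal is correct and follows exactly the route the paper intends: the paper's own proof is simply the remark that the argument of Lemma~\ref{lem:trialstate} applies mutatis mutandis, and you have faithfully carried out that translation, replacing $r(x)$ by $W(x)=c_{\sssup{TF}}\rho^{2/3}$ or $4\pi^4\rho^2$, substituting the limiting pressures $\frac{2}{15\pi^2}\nu^{5/2}$ and $\frac{1}{3\pi^2}\nu^{3/2}$ from Corollary~\ref{cor:pauliweil}, and verifying the elementary identities $\frac{1}{3\pi^2}W^{3/2}=\rho$ and $\frac{1}{2\pi^2}W^{1/2}=\rho$ that make the Feynman--Hellmann step close. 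No gaps.
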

\begin{proof}
	The proof of \cref{lem:trialstate} also holds mutatis mutandis for this lemma, and we omit the details.
\end{proof}
Using the trial states constructed above we can now show the upper bound on the energy.
\begin{proof}[Proof of \cref{prop:energyupbound}.]
	Let $ 0\leq \rho \in C_c \myp{\mathbb{R}^3} $ with $ \int \rho \myp{x} \id x = 1 $, and take $ \tilde{\gamma}_N $ as in either \cref{lem:trialstate} or \cref{lem:trialstate2}, depending on the sequence $ \myp{\beta_N} $.
	Since $ V \in L_{\mathrm{loc}}^{5/2} \myp{\mathbb{R}^3} $ and $ \rho_{\tilde{\gamma}_N} $ is supported inside the box $ C_R $, we get by weak convergence of $ \frac{1}{N} \rho_{\tilde{\gamma}_N} $ that
	\begin{equation*}
		\frac{1}{N} \int_{\mathbb{R}^3} V \myp{x} \rho_{\tilde{\gamma}_N} \myp{x} \id x 
		\longrightarrow \int_{C_R} V \myp{x} \rho \myp{x} \id x
	\end{equation*}
	as $ N $ tends to infinity.
	By the Stone-Weierstrass theorem, we may approximate $ w \myp{x-y} $ in $ L^{5/2} \myp{C_R^2} $ by a function of the form $ w_0 = \sum_{j=1}^k g_j \otimes h_j $ with $ g_j, h_j \in C \myp{C_R} $.
	Denoting $ D_w \myp{\rho} := \iint w \myp{x-y} \rho \myp{x} \rho \myp{y} \id x \id y $, we have
	\begin{equation*}
		D_w \myp{N^{-1} \rho_{\tilde{\gamma}_N}}
		= D_{w_0} \myp{N^{-1} \rho_{\tilde{\gamma}_N}} + D_{w-w_0} \myp{N^{-1} \rho_{\tilde{\gamma}_N}},
	\end{equation*}
	where by the Hölder inequality,
	\begin{equation*}
		\abs{D_{w-w_0} \myp{N^{-1} \rho_{\tilde{\gamma}_N}}}
		\leq \normt{\myp{w-w_0} \myp{x-y}}{L^{5/2} \myp{C_R^2}} \normt{N^{-1} \rho_{\tilde{\gamma}_N}}{L^{5/3} \myp{C_R}}^2.
	\end{equation*}
	Here, the right hand side can be made arbitrarily small, since $ N^{-1} \rho_{\tilde{\gamma}_N} $ is bounded in $ L^{5/3} \myp{\mathbb{R}^3} $, by \cref{lem:densitybound}.
	By the weak convergence of $ N^{-1} \rho_{\tilde{\gamma}_N} $ and the explicit form of $ w_0 $,
	\begin{equation*}
		D_{w_0} \myp{N^{-1} \rho_{\tilde{\gamma}_N}} 
		\longrightarrow D_{w_0} \myp{\rho},
	\end{equation*}
	where $ D_{w_0} \myp{\rho} $ is arbitrarily close to $ D_w \myp{\rho} $ (again using the Hölder inequality), so we conclude that
	\begin{equation*}
		\frac{1}{N^2} \iint_{\mathbb{R}^6} w \myp{x-y} \rho_{\tilde{\gamma}_N} \myp{x} \rho_{\tilde{\gamma}_N} \myp{y} \id x \id y 
		\longrightarrow \iint_{C_R^2} w \myp{x-y} \rho \myp{x} \rho \myp{y} \id x \id y.
	\end{equation*}
	Hence, continuing from \eqref{eq:uppartialbound} with $ \gamma_{\Psi} = \tilde{\gamma}_N $, we find (for instance in the case where $ \beta_N \to \beta \in \myp{0,\infty} $)
	\begin{equation*}
		\limsup_{N \to \infty} \frac{E\myp{N,\beta_N}}{N} 
		\leq \mathcal{E}_{\beta}^{\sssup{MTF}} \myp{\rho}.
	\end{equation*}
	If $ \beta_N \to 0 $, or $ \beta_N \to \infty $, we obtain analogous bounds by appealing to \cref{lem:trialstate2}.
	This concludes the proof since the Thomas-Fermi ground state energy can be obtained by minimizing over compactly supported, continuous functions, and $ \rho \in C_c \myp{\mathbb{R}^3} $ is arbitrary.
\end{proof}

\section{Semi-classical measures}
\label{sec:semimeasures}
Having established the upper bound on the energy, we turn our attention towards proving the lower bound.
In order to do this, we will construct semi-classical measures using coherent states, and see that these measures have some very nice properties in the limit as the number of particles tends to infinity.
Afterwards, a de Finetti theorem may be applied to yield general information about the structure in the limit.
The constructions in this section are only useful for dealing with the case where $ \beta_N \to \beta > 0 $.
In the case where $ \beta_N \to 0 $, it is more convenient to use the same semi-classical measures as in \cite{FouLewSol-18}.
This case is treated in \cref{sec:lowerbound2}.

The first step is to diagonalize the three-dimensional magnetic Laplacian, i.e., we consider
\begin{equation}
\label{eq:HA}
	H_A = \myp{-i\nabla + A}^2 
	= H_{A^{\perp}} - \partial_{x_3}^2,
\end{equation}
where $ A^{\perp} \myp{x_1,x_2} = \frac{1}{2} \myp{-x_2,x_1} $, and $ H_{A^{\perp}} := \myp{ -i \nabla + A^{\perp} }^2 $ acts on $ L^2 \myp{\mathbb{R}^2} $.
Letting $ \mathcal{F}_2 $ denote the partial Fourier transform in the second variable on $ L^2 \myp{\mathbb{R}^2} $, and $ T $ the unitary operator on $ L^2 \myp{\mathbb{R}^2} $ defined by $ \myp{T \varphi} \myp{x_1, \xi} = \varphi \myp{x_1+\xi, \xi} $, an elementary calculation shows that
\begin{equation}
\label{eq:uniequiv}
	H_{A^{\perp}} e^{i \frac{1}{2} x_1 x_2} \mathcal{F}_2^{-1} T
	= e^{i \frac{1}{2} x_1 x_2} \mathcal{F}_2^{-1} T \myp[\Big]{ \myp[\Big]{ -\frac{ \id^2}{\id x_1^2} + x_1^2 } \otimes \mathds{1}_{L^2 \myp{\mathbb{R}}} }.
\end{equation}
It is very well known that the harmonic oscillator admits an orthonormal basis of eigenfunctions $ \myp{f_j}_{j \geq 0} $ for $ L^2 \myp{\mathbb{R}} $, with $ \myp{ -\frac{ \id ^2}{\id x^2} + x^2 } f_j = \myp{2j+1} f_j $ and $ f_0 \myp{x} = \smash{\pi^{-\frac{1}{4}} e^{- \frac{1}{2}x^2}} $.
In particular, equation \eqref{eq:uniequiv} means for any $ j \geq 0 $ and any normalized Schwartz function $ v $ on $ \mathbb{R} $, that $ e^{i \frac{1}{2} x_1 x_2} \mathcal{F}_2^{-1} T \myp{f_j \otimes v} $ is a normalized eigenfunction for $ H_{A^{\perp}} $ with corresponding eigenvalue $ 2 j+1 $.

Suppose that $ \varphi $ is an eigenfunction for $ H_{A^{\perp}} $ corresponding to $ 2j+1 $.
If we scale the magnetic field and instead consider $ H_{BA^{\perp}} = \myp{-i \nabla + B A^{\perp}}^2 $, and denote $ x \times y = x_1 y_2 - x_2 y_1 $ for $ x, y \in \mathbb{R}^2 $, we see for any fixed $ y \in \mathbb{R}^2 $ that $ \widetilde{\varphi}_{y,B} \myp{x} := \sqrt{B} e^{- i \frac{B}{2} y \times x} u_j \myp{ \sqrt{B} \myp{x-y} } $ is an eigenfunction for $ H_{BA^{\perp}} $ corresponding to the eigenvalue $ B \myp{2j +1} $.

\subsection{Coherent states}
Throughout this subsection, $ \hbar $ and $ b $ will denote arbitrary positive numbers, that is, the scaling relations \eqref{eq:scaling}
will not be needed.
For $ f \in L^2 \myp{\mathbb{R}^3} $ we denote by $ f^{\hbar} $ the function
\begin{equation*}
	f^{\hbar} \myp{y} = \hbar^{-\frac{3}{4}} f \myp{ \hbar^{-\frac{1}{2}} y }.
\end{equation*}
\begin{de}
\label{def:coherentstates}
	We fix a normalized $ f \in L^2 \myp{\mathbb{R}^3} $, and for each $ j \in \mathbb{N}_0 $ we fix a normalized eigenfunction $ \varphi_j $ in the $ j $'th Landau level of $ H_{A^{\perp}} $.
	For fixed $ x \in \mathbb{R}^2 $, $ u \in \mathbb{R}^3 $, $ p \in \mathbb{R} $, and $ \hbar, b > 0 $, we define functions $ \varphi_{x,j}^{\hbar,b} $ on $ \mathbb{R}^2 $ and $ f_{x,u,p,j}^{\hbar, b} $ on $ \mathbb{R}^3 $ by
	\begin{equation}
	\label{eq:phidef}
		\varphi_{x,j}^{\hbar,b} \myp{y_{\perp}} = \hbar^{-\frac{1}{2}} b^{\frac{1}{2}} e^{-i \frac{b}{2 \hbar} x \times y_{\perp}}
		\varphi_j \myp{ \hbar^{-\frac{1}{2}} b^{\frac{1}{2}} \myp{y_{\perp}-x} },
	\end{equation}
	and
	\begin{equation}
	\label{eq:fdef}
		f_{x,u,p,j}^{\hbar, b} \myp{y} = \varphi_{x,j}^{\hbar,b} \myp{y_{\perp}} f^{\hbar} \myp{y-u} e^{i \frac{p y_3}{\hbar}},
	\end{equation}
	where $ y_{\perp} = \myp{y_1,y_2} $ denotes the part of $ y $ orthogonal to the magnetic field.
\end{de}
Note that $ \smash{\varphi_{x,j}^{\hbar,b}} $ by construction is an eigenfunction for the operator $ H_{\hbar^{-1}bA^{\perp}} $ corresponding to the eigenvalue $ \hbar^{-1} b \myp{2j+1} $.
The purpose of $ \varphi_{x,j}^{\hbar,b} $ in \eqref{eq:fdef} is to control momentum in the coordinates perpendicular to the magnetic field (fixing the radius of the 2D cyclotron orbit), which also explains why only a one-dimensional momentum variable is present in the exponential factor (cf. the usual coherent states \eqref{eq:coherentstate}).
The presence of $ f^{\hbar} $ in \eqref{eq:fdef} is to localize in position space around $ y $ on a length scale of $ \sqrt{\hbar} $.
Later we will put further assumptions on the function $ f $, but for now it can be any normalized $ L^2 $-function.

We will use \eqref{eq:phidef} and \eqref{eq:fdef} to build Landau level projections and some resolutions of the identity.
Recall that for any normalized function $ v \in L^2 \myp{\mathbb{R}} $ we have a resolution of the identity
\begin{equation}
\label{eq:vres}
	\frac{1}{2\pi} \int_{\mathbb{R}^2} \ketbra{v_{x,p}}{v_{x,p}} \id x \id p = \mathds{1}_{L^2 \myp{\mathbb{R}}},
\end{equation}
where $ v_{x,p} \myp{y} = v \myp{y-x} e^{ipy} $, $ x,p \in \mathbb{R} $.
We will use shortly that if $ u \in L^2 \myp{\mathbb{R}} $ is any other function, then
\begin{align}
	\MoveEqLeft[3]	\frac{1}{2 \pi} \int_{\mathbb{R}^2} \innerp{ \psi }{ v_{x,p} }
	\innerp{ u_{x,p} }{ \psi } \id x \id p \nonumber \\
	&= \frac{1}{2 \pi} \int_{\mathbb{R}^2} \innerp{ u }{ \psi_{-x,-p} } \innerp{ \psi_{-x,-p} }{ v } \id x \id p
	= \innerp{ u}{v } \normt{\psi}{}^2.
\label{eq:vres2}
\end{align}
\begin{lem}
	\label{lem:phires}
	Let $ \Pi_j^{\myp{2}} $ denote the projection onto the $ j $'th Landau level of the operator $ H_{\hbar^{-1}bA^{\perp}} $.
	We have that
	\begin{equation}
	\label{eq:phires}
		\frac{b}{2 \pi \hbar} \sum\limits_{j=0}^{\infty} \int_{\mathbb{R}^2} \ketbra{\varphi_{x,j}^{\hbar,b}}{\varphi_{x,j}^{\hbar,b}} \id x
		= \mathds{1}_{L^2 \myp{\mathbb{R}^2}},
	\end{equation}
	and
	\begin{equation}
	\label{eq:landauproj}
		\frac{b}{2 \pi \hbar} \int_{\mathbb{R}^2} \ketbra{\varphi_{x,j}^{\hbar,b}}{\varphi_{x,j}^{\hbar,b}} \id x 
		= \Pi_j^{\myp{2}}.
	\end{equation}
\end{lem}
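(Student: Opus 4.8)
The plan is to reduce, in two bookkeeping steps, to the standard resolution of identity \eqref{eq:vres} for one–dimensional coherent states. First, \eqref{eq:phires} follows from \eqref{eq:landauproj} by summing over $j\geq 0$, because the Landau bands exhaust the spectrum of $H_{\hbar^{-1}bA^{\perp}}$, i.e. $\sum_{j\geq 0}\Pi_j^{\myp{2}}=\mathds{1}_{L^2\myp{\mathbb{R}^2}}$ in the strong operator topology; so it is enough to prove \eqref{eq:landauproj}. Second, I would scale out $\hbar$ and $b$: with $\lambda:=\sqrt{b/\hbar}$ and the unitary dilation $\myp{D_\lambda\psi}\myp{y}=\lambda\psi\myp{\lambda y}$ on $L^2\myp{\mathbb{R}^2}$, homogeneity of $A^{\perp}$ gives $D_\lambda^{*}H_{\hbar^{-1}bA^{\perp}}D_\lambda=\lambda^2 H_{A^{\perp}}$, hence $D_\lambda^{*}\Pi_j^{\myp{2}}D_\lambda=\Pi_j$, the projection onto the $j$'th Landau level of $H_{A^{\perp}}$; meanwhile \eqref{eq:phidef} gives directly $\varphi_{x,j}^{\hbar,b}=D_\lambda\varphi_{\lambda x,j}$, where $\varphi_{x,j}\myp{y}:=e^{-i\frac12 x\times y}\varphi_j\myp{y-x}$ is the $\hbar=b=1$ state. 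Conjugating \eqref{eq:landauproj} by $D_\lambda^{*}$ and substituting $x\mapsto\lambda x$ (the Jacobian $\lambda^2$ absorbs the prefactor $b/\myp{2\pi\hbar}=\lambda^2/\myp{2\pi}$) reduces the claim to $\frac{1}{2\pi}\int_{\mathbb{R}^2}\ketbra{\varphi_{x,j}}{\varphi_{x,j}}\id x=\Pi_j$.

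For this reduced identity I would use the diagonalization \eqref{eq:uniequiv}: with $U:=e^{i\frac12 x_1x_2}\mathcal{F}_2^{-1}T$ one has $U^{*}H_{A^{\perp}}U=\myp{-\frac{\id^2}{\id x^2}+x^2}\otimes\mathds{1}_{L^2\myp{\mathbb{R}}}$, whose Hermite eigenbasis $\myp{f_k}_{k\geq 0}$ yields $\Pi_j=U\myp{\ketbra{f_j}{f_j}\otimes\mathds{1}_{L^2\myp{\mathbb{R}}}}U^{*}$. Writing the fixed eigenfunction as $\varphi_j=U\myp{f_j\otimes v_0}$ with $v_0\in L^2\myp{\mathbb{R}}$ normalized, the key observation is that $\varphi_{x,j}=\tau_x\varphi_j$, where $\tau_x\psi\myp{y}=e^{-i\frac12 x\times y}\psi\myp{y-x}$ is the symmetric-gauge magnetic translation by $x$; the $\tau_x$ are unitary, commute with $H_{A^{\perp}}$, and satisfy the Heisenberg–Weyl relation $\tau_a\tau_b=e^{\frac{i}{2}a\times b}\tau_{a+b}$. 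Commutation with $H_{A^{\perp}}$ forces $U^{*}\tau_xU$ to be block-diagonal over the harmonic-oscillator eigenspaces, so $U^{*}\varphi_{x,j}=f_j\otimes w_x$ for a unique $w_x\in L^2\myp{\mathbb{R}}$, and $x\mapsto w_x$ is the orbit of $v_0$ under a projective unitary representation of $\mathbb{R}^2$ carrying the standard symplectic cocycle $e^{\frac{i}{2}a\times b}$. By the Stone–von Neumann theorem this is the Schrödinger representation up to a unitary $W$ on $L^2\myp{\mathbb{R}}$, i.e. $w_x=e^{i\theta\myp{x}}W v_{s,p}$ with $v_{s,p}\myp{t}=\myp{W^{*}v_0}\myp{t-s}e^{ipt}$ under a measure-preserving relabelling $x\leftrightarrow\myp{s,p}$; then \eqref{eq:vres} (with \eqref{eq:vres2} pinning down the constant) gives $\frac{1}{2\pi}\int_{\mathbb{R}^2}\ketbra{w_x}{w_x}\id x=\mathds{1}_{L^2\myp{\mathbb{R}}}$, and therefore $\frac{1}{2\pi}\int_{\mathbb{R}^2}\ketbra{\varphi_{x,j}}{\varphi_{x,j}}\id x=U\myp{\ketbra{f_j}{f_j}\otimes\mathds{1}_{L^2\myp{\mathbb{R}}}}U^{*}=\Pi_j$.

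The step I expect to be the crux is the identification in the previous paragraph: showing that under $U$ the family $\myp{\varphi_{x,j}}_{x\in\mathbb{R}^2}$ is \emph{exactly} a family of standard $1$D coherent states — equivalently that the magnetic translations act inside each Landau level as the Heisenberg–Weyl group — with the normalization $\frac{b}{2\pi\hbar}$ (the density of states per Landau level) reproduced with no spurious Jacobian. Rather than invoking Stone–von Neumann abstractly, the cleanest fully rigorous route is probably to compute $U^{*}\varphi_{x,j}$ directly from \eqref{eq:phidef}: it is an explicit Gaussian times plane wave, and one reads off its $L^2\myp{\mathbb{R}}$-factor as a coherent state $v_{s,p}$ with $\myp{s,p}$ a linear image of $x$ of unit Jacobian, after which \eqref{eq:vres}–\eqref{eq:vres2} apply verbatim. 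The remaining points — strong convergence of $\sum_j\Pi_j^{\myp{2}}$, and commuting the weakly convergent operator-valued integrals past $U$ and $D_\lambda$, which is done by pairing with test vectors — are routine.
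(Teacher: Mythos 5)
Your ``cleanest fully rigorous route'' is exactly the paper's proof: reduce to $\hbar=b=1$ by the dilation $V_\alpha$ with $\alpha=b/\hbar$ (your $D_\lambda$ with $\lambda=\sqrt{\alpha}$), write $\varphi_{x,j}^{\hbar,b}=V_\alpha\varphi_{\sqrt{\alpha}x,j}^{1,1}$, conjugate the magnetic translation $\widetilde{\varphi}_x$ through $U=T^{*}\mathcal{F}_2e^{-\frac{i}{2}(\cdot)_1(\cdot)_2}$ by a direct Fourier computation (the paper's \eqref{eq:ucomm}), observe that it acts on the $L^2(\mathbb{R})$ tensor factor as a translate-and-modulate, and then invoke \eqref{eq:vres}--\eqref{eq:vres2}. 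One small imprecision in how you phrase that step: since $\varphi_j$ is an \emph{arbitrary} normalized eigenfunction (not built from the Hermite ground state), $U^{*}\varphi_{x,j}$ is not literally ``an explicit Gaussian times plane wave''; rather $U\varphi_j=f_j\otimes v$ for some normalized $v\in L^2(\mathbb{R})$ which need not be Gaussian, and the resolution \eqref{eq:vres} holds for \emph{any} such $v$ --- which is precisely why the paper expands $U\varphi_j=\sum_k c_kf_j\otimes v_k$ and applies \eqref{eq:vres2}. Your Stone--von Neumann detour is a genuine alternative and would also work (it gives the ``unit Jacobian'' for free from the intertwiner's unitarity), but it is heavier machinery than needed here and requires checking irreducibility and strong continuity; also note a harmless sign slip: with $\tau_a\psi(y)=e^{-\frac{i}{2}a\times y}\psi(y-a)$ one gets $\tau_a\tau_b=e^{-\frac{i}{2}a\times b}\,\tau_{a+b}$, not $e^{+\frac{i}{2}a\times b}$. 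The rest --- obtaining \eqref{eq:phires} from \eqref{eq:landauproj} by summing, and that $D_\lambda^{*}\Pi_j^{(2)}D_\lambda=\Pi_j$ by the spectral theorem --- is fine.
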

Keeping the notation from \cite{LieSolYng-94}, we use the superscript $ \myp{2} $ on the Landau projection to emphasize that we are working on a two-dimensional space (perpendicular to the magnetic field direction).
\begin{proof}
	For $ \varphi \in L^2 \myp{\mathbb{R}^2} $ we denote $ \widetilde{\varphi}_x \myp{y} = e^{-\frac{i}{2} x \times y} \varphi \myp{y-x} $, and recalling the isomorphism \eqref{eq:uniequiv}, we furthermore define a unitary operator $ U := T^{\ast} \mathcal{F}_2 e^{-i \frac{1}{2} \myp{\nonarg}_1 \myp{\nonarg}_2} $.
	Utilizing the usual properties of the Fourier transform, we have for any function $ \varphi $ that
	\begin{align*}
		\MoveEqLeft[3]	\mathcal{F}_2 \myb[\big]{ e^{-\frac{i}{2} \myp{\nonarg}_1 \myp{\nonarg}_2} \widetilde{\varphi}_x } \myp{y_1,\xi} \\
		&= e^{-\frac{i}{2}x_1x_2} \mathcal{F}_2 \myb[\big]{ e^{-\frac{i}{2} \myp{\myp{\nonarg}_1+x_1} \myp{\myp{\nonarg}_2-x_2}} \varphi \myp{\nonarg-x} } \myp{y_1,\xi} \\
		&= e^{-\frac{i}{2}x_1x_2} e^{-i x_2 \xi} \mathcal{F}_2 \myb[\big]{ e^{-\frac{i}{2} \myp{\myp{\nonarg}_1+x_1} \myp{\nonarg}_2} \varphi \myp{\myp{\nonarg}_1-x_1, \myp{\nonarg}_2} } \myp{y_1,\xi} \\
		&= e^{-\frac{i}{2}x_1x_2} e^{-i x_2 \xi} \mathcal{F}_2 \myb[\big]{ e^{-\frac{i}{2} \myp{\nonarg}_1 \myp{\nonarg}_2} \varphi } \myp{y_1-x_1,\xi+x_1},
	\end{align*}
	and so
	\begin{equation}
	\label{eq:ucomm}
		\myp{U \widetilde{\varphi}_x}  \myp{y_1,\xi}
		= e^{-\frac{i}{2}x_1x_2} e^{-i x_2 \xi} \myp{ U \varphi } \myp{y_1,\xi+x_1}.
	\end{equation}
	Introducing the parameter $ \alpha := b/\hbar $ and denoting by $ V_{\alpha} $ the unitary operator given by $ \myp{V_\alpha \psi} \myp{y} := \sqrt{\alpha} \psi \myp{ \sqrt{\alpha} y } $, then $ \varphi_{x,j}^{\hbar,b} \myp{y} = \myp{V_\alpha \varphi_{\sqrt{\alpha} x,j}^{1,1} } \myp{y} $.
	Since $ U \varphi_j $ is an eigenvector corresponding to the $ j $'th eigenvalue of $ \myp{-\frac{\id^2}{\id x_1^2} + x_1^2 } \otimes \mathds{1}_{L^2 \myp{\mathbb{R}}} $, we can write
	\begin{equation*}
		U \varphi_j = \sum\limits_{k=1}^{\infty} c_k f_j \otimes v_k,
	\end{equation*}
	where $ \myp{v_k} $ is any orthonormal basis of $ L^2 \myp{\mathbb{R}} $, and $ \sum_k \abs{c_k}^2 = 1 $.
	Combining this with \eqref{eq:ucomm} and using \eqref{eq:vres2}, we obtain
	\begin{align*}
		\MoveEqLeft[3]	\frac{b}{2 \pi \hbar} \int_{\mathbb{R}^2} \abs[\big]{ \innerp[\big]{\varphi_{-x,j}^{\hbar,b} }{ \psi } }^2 \id x \\
		&= \frac{\alpha}{2 \pi} \int_{\mathbb{R}^2} \abs[\big]{ \innerp[\big]{ U\varphi_j \myp{\myp{\nonarg}_1,\myp{\nonarg}_2-\sqrt{\alpha} x_1} e^{i \sqrt{\alpha}x_2 \myp{\nonarg}_2} }{ U V_{\alpha}^{\ast} \psi } }^2 \id x \nonumber \\
		&= \frac{1}{2 \pi} \int_{\mathbb{R}^2} \sum\limits_{\ell,k} \overline{c_{\ell}} c_k \innerp[\big]{ f_j \otimes \myp{v_{\ell}}_{x_1,x_2} }{ U V_{\alpha}^{\ast} \psi } \innerp[\big]{ U V_{\alpha}^{\ast} \psi }{ f_j \otimes \myp{v_k}_{x_1,x_2} } \id x \nonumber \\
		&= \sum\limits_{\ell,k} \overline{c_{\ell}} c_k \innerp[\big]{ U V_{\alpha}^{\ast} \psi }{ \myp{ \ketbra{f_j}{f_j} \otimes \innerp{ v_{\ell} }{ v_k } \mathds{1}_{L^2 \myp{\mathbb{R}}}} U V_{\alpha}^{\ast} \psi } \nonumber \\
		&= \innerp[\big]{ U V_{\alpha}^{\ast} \psi }{ \myp{ \ketbra{f_j}{f_j} \otimes \mathds{1}_{L^2 \myp{\mathbb{R}}} } U V_{\alpha}^{\ast} \psi }.
	\label{eq:phiproj}
	\end{align*}
	This actually shows \eqref{eq:landauproj}, since $ \Pi_j^{\myp{2}} = V_{\alpha}U^{\ast} \myp{\ketbra{f_j}{f_j} \otimes \mathds{1}_{L^2 \myp{\mathbb{R}}} } UV_{\alpha}^{\ast} $ by the unitary equivalence \eqref{eq:uniequiv}.
	Summing over all $ j $, we also get
	\begin{equation*}
	\label{eq:phires1}
		\frac{b}{2 \pi \hbar} \sum\limits_{j=0}^{\infty} \int_{\mathbb{R}^2} \abs[\big]{ \innerp[\big]{ \varphi_{x,j}^{\hbar,b} }{ \psi }}^2 \id x
		= \normt{U V_{\alpha}^{\ast} \psi}{}^2
		= \normt{\psi}{}^2,
	\end{equation*}
	concluding the proof.
\end{proof}
\begin{de}
	Using the functions from \eqref{eq:fdef}, we define operators on $ L^2 \myp{\mathbb{R}^3} $ by
	\begin{equation}
	\label{eq:coherentdef}
		P_{u,p,j}^{\hbar,b} := \int_{\mathbb{R}^2} \ketbra{f_{x,u,p,j}^{\hbar,b}}{f_{x,u,p,j}^{\hbar,b}} \id x.
	\end{equation}
\end{de}
Applying the lemma above and using \eqref{eq:vres}, it is easy to show the following
\begin{lem}
	\label{lem:pres}
	The $ P_{u,p,j}^{\hbar,b} $ yield a resolution of the identity on $ L^2 \myp{\mathbb{R}^3} $, i.e.,
	\begin{equation*}
		\frac{b}{\myp{2 \pi \hbar}^2} \sum\limits_{j=0}^{\infty} \int_{\mathbb{R}} \int_{\mathbb{R}^3} P_{u,p,j}^{\hbar,b} \id u \id p = \mathds{1}_{L^2 \myp{\mathbb{R}^3}}.
	\end{equation*}
	Furthermore, $ P_{u,p,j}^{\hbar,b} $ is a trace class operator with $ \mathrm{Tr} \myp{ P_{u,p,j}^{\hbar,b} } = 1 $.
\end{lem}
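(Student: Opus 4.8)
The plan is to establish the resolution of the identity in the weak (quadratic form) sense and to read off the trace statement from the very same computation. Since each $\ketbra{f_{x,u,p,j}^{\hbar,b}}{f_{x,u,p,j}^{\hbar,b}}$ is a positive operator, the operator $\frac{b}{\myp{2\pi\hbar}^2}\sum_{j}\int_{\mathbb{R}}\int_{\mathbb{R}^3}P_{u,p,j}^{\hbar,b}\id u\id p$ is associated with a positive sesquilinear form, so by polarization it suffices to prove
\begin{equation*}
	\frac{b}{\myp{2\pi\hbar}^2}\sum_{j=0}^{\infty}\int_{\mathbb{R}}\int_{\mathbb{R}^3}\int_{\mathbb{R}^2}\abs[\big]{\innerp[\big]{f_{x,u,p,j}^{\hbar,b}}{\psi}}^2\id x\id u\id p = \normt{\psi}{L^2\myp{\mathbb{R}^3}}^2
\end{equation*}
for every $\psi\in L^2\myp{\mathbb{R}^3}$. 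I would integrate out the variables $x$, then $p$, then $u$, one at a time.

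For the $x$-integration and the sum over $j$, the key observation is that $\varphi_{x,j}^{\hbar,b}$ depends only on the variables $y_\perp=\myp{y_1,y_2}$ perpendicular to the field. Writing $\Phi_u\myp{y} := \overline{f^\hbar\myp{y-u}}\,\psi\myp{y}$ and denoting by $\Phi_u\myp{\nonarg,y_3}$ its slice at height $y_3$ (which lies in $L^2\myp{\mathbb{R}^2}$ for a.e.\ $y_3$), one has
\begin{equation*}
	\innerp[\big]{f_{x,u,p,j}^{\hbar,b}}{\psi} = \int_{\mathbb{R}} e^{-ipy_3/\hbar}\,\innerp[\big]{\varphi_{x,j}^{\hbar,b}}{\Phi_u\myp{\nonarg,y_3}}_{L^2\myp{\mathbb{R}^2}}\id y_3 .
\end{equation*}
Applying \cref{lem:phires} to each slice $\Phi_u\myp{\nonarg,y_3}$ and using Tonelli's theorem to interchange the sum over $j$ with the integrals over $x$ and $y_3$, the factor $\frac{b}{2\pi\hbar}\sum_j\int_{\mathbb{R}^2}\abs{\innerp{\varphi_{x,j}^{\hbar,b}}{\Phi_u\myp{\nonarg,y_3}}}^2\id x$ collapses to $\normt{\Phi_u\myp{\nonarg,y_3}}{L^2\myp{\mathbb{R}^2}}^2$. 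Next, the $p$-integration is Plancherel's theorem in $y_3$: after the substitution $p=\hbar q$ the kernel $e^{-ipy_3/\hbar}$ is the Fourier kernel, contributing a factor $2\pi\hbar$ and turning the accumulated $x,p,y_3$-integration into $2\pi\hbar\,\normt{\Phi_u}{L^2\myp{\mathbb{R}^3}}^2$. Finally, the $u$-integration is translation invariance: $\int_{\mathbb{R}^3}\normt{\Phi_u}{L^2\myp{\mathbb{R}^3}}^2\id u = \normt{f^\hbar}{L^2\myp{\mathbb{R}^3}}^2\,\normt{\psi}{L^2\myp{\mathbb{R}^3}}^2 = \normt{\psi}{L^2\myp{\mathbb{R}^3}}^2$, since $\normt{f^\hbar}{L^2\myp{\mathbb{R}^3}}=1$. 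Bookkeeping the powers of $\hbar$ and $b$ confirms that the combined prefactor is precisely $b/\myp{2\pi\hbar}^2$.

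For the trace claim, $P_{u,p,j}^{\hbar,b}$ is a positive operator, so for any orthonormal basis $\myp{e_n}$ of $L^2\myp{\mathbb{R}^3}$ one computes $\tr P_{u,p,j}^{\hbar,b}=\sum_n\int_{\mathbb{R}^2}\abs{\innerp{f_{x,u,p,j}^{\hbar,b}}{e_n}}^2\id x$; interchanging the sum with the $x$-integral (Tonelli) and using Parseval's identity gives $\tr P_{u,p,j}^{\hbar,b} = \int_{\mathbb{R}^2}\normt{f_{x,u,p,j}^{\hbar,b}}{L^2\myp{\mathbb{R}^3}}^2\id x$. Since $\abs{f_{x,u,p,j}^{\hbar,b}\myp{y}}^2 = \abs{\varphi_{x,j}^{\hbar,b}\myp{y_\perp}}^2\abs{f^\hbar\myp{y-u}}^2$, and the change of variables $x\mapsto\hbar^{-1/2}b^{1/2}\myp{y_\perp-x}$ has Jacobian exactly cancelling the prefactor of $\varphi_{x,j}^{\hbar,b}$, so that $\int_{\mathbb{R}^2}\abs{\varphi_{x,j}^{\hbar,b}\myp{y_\perp}}^2\id x = \normt{\varphi_j}{L^2\myp{\mathbb{R}^2}}^2 = 1$ for every $y_\perp$, one more application of Tonelli reduces this to $\normt{f^\hbar}{L^2\myp{\mathbb{R}^3}}^2 = 1$; being positive with finite trace, $P_{u,p,j}^{\hbar,b}$ is trace class. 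I expect the only delicate part to be justifying the chain of Fubini--Tonelli and Plancherel interchanges together with the tracking of the $\hbar$- and $b$-powers; conceptually, once \cref{lem:phires} disposes of the Landau (perpendicular) variables slice by slice, what remains is the standard resolution of the identity built from the position translations by $u$ and the momentum boosts $e^{ipy_3/\hbar}$ along the field, combined with the normalization $\normt{f^\hbar}{L^2\myp{\mathbb{R}^3}}=1$.
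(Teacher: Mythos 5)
Your proposal is correct and is essentially the paper's own argument: the paper packages the $y_3$-Fourier transform into an auxiliary function $g_{u,p}^{\hbar}$ before applying \cref{lem:phires}, whereas you apply \cref{lem:phires} slice-by-slice in $y_3$ and then invoke Plancherel afterwards, which is the same computation with the Plancherel step moved. One small caveat about your announced order ``$x$, then $p$, then $u$'': the phires collapse of $\sum_j\int_{\mathbb{R}^2}\abs{\innerp{\varphi_{x,j}^{\hbar,b}}{\Phi_u(\nonarg,y_3)}}^2\id x$ is not yet visible in the original integrand because the $y_3$-integral sits inside the modulus-square, so the $p$-Plancherel step (which converts $\abs{\int e^{-ipy_3/\hbar}(\cdots)\id y_3}^2$ into $2\pi\hbar\int\abs{\cdots}^2\id y_3$) must logically come before, not after, the $x$- and $j$-collapse; since the integrand is nonnegative Tonelli licenses this reorganization, and with it the constants confirm the claim exactly as you compute.
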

\begin{proof}
	Recall that for $ y \in \mathbb{R}^3 $ we denote by $ y_{\perp} = \myp{y_1,y_2} \in \mathbb{R}^2 $ the coordinates of $ y $ orthogonal to the magnetic field.
	Let $ \psi \in L^2 \myp{\mathbb{R}^3} $ and define an auxiliary function
	\begin{align}
		g_{u,p}^{\hbar} \myp{y_{\perp}}
		:=& \innerp[\big]{ f^{\hbar} \myp{y_\perp-u_{\perp}, \nonarg-u_3} e^{i \frac{p  }{\hbar} \myp{\nonarg}} }{ \psi \myp{y_{\perp}, \nonarg} }_{L^2 \myp{\mathbb{R}}} \nonumber \\
		=&\sqrt{2 \pi} \mathcal{F}_3 \myb[\big]{ \overline{f^{\hbar} \myp{\nonarg-u}} \psi} \myp[\big]{y_{\perp},\frac{p}{\hbar}},
	\label{eq:fourierg}
	\end{align}
	with $ \mathcal{F}_3 $ being the partial Fourier transform in the third variable.
	Using \cref{lem:phires}, we calculate
	\begin{align}
	\label{eq:gproj}
		\innerp{ \psi}{ P_{u,p,j}^{\hbar,b} \psi }
		&= \int_{\mathbb{R}^2} \abs[\big]{ \innerp[\big]{ f_{x,u,p,j}^{\hbar,b} }{ \psi } }^2 \id x \nonumber \\
		&= \int_{\mathbb{R}^2} \abs[\big]{ \innerp[\big]{ \varphi_{x,j}^{\hbar,b} }{ g_{u,p}^{\hbar} } }^2 \id x 
		= \frac{2 \pi \hbar}{b} \innerp[\big]{ g_{u,p}^{\hbar} }{ \Pi_j^{\myp{2}} g_{u,p}^{\hbar} },
	\end{align}
	implying that
	\begin{align*}
		\MoveEqLeft[3] \frac{b}{\myp{2 \pi \hbar}^2} \sum\limits_{j=0}^{\infty} \int_{\mathbb{R}} \int_{\mathbb{R}^3} \innerp[\big]{ \psi }{ P_{u,p,j}^{\hbar,b} \psi } \id u \id p \\
		&= \frac{1}{2 \pi \hbar} \sum\limits_{j=0}^{\infty} \int_{\mathbb{R}} \int_{\mathbb{R}^3} \innerp[\big]{ g_{u,p}^{\hbar} }{ \Pi_j^{\myp{2}} g_{u,p}^{\hbar} } \id u \id p \\
		&= \frac{1}{\hbar} \int_{\mathbb{R}} \int_{\mathbb{R}^3} \int_{\mathbb{R}^2} \abs[\big]{\mathcal{F}_3 \myb[\big]{ \overline{f^{\hbar} \myp{\nonarg-u}} \psi} \myp[\big]{y_{\perp},\frac{p}{\hbar}}}^2 \id y_{\perp} \id u \id p \\
		&= \int_{\mathbb{R}^3} \int_{\mathbb{R}^3} \abs{ f^{\hbar} \myp{y-u} \psi \myp{y} }^2 \id y \id u = \innerp{ \psi }{ \psi }.
	\end{align*}
	
	To calculate the trace of $ P_{u,p,j}^{\hbar,b} $, we take an arbitrary orthonormal basis $ \myp{\psi_{\ell}} $ of $ L^2 \myp{\mathbb{R}^3} $ and use the definition of the coherent states \eqref{eq:phidef} and \eqref{eq:fdef}
	\begin{align*}
		\mathrm{Tr} \myp{ P_{u,p,j}^{\hbar,b} } 
		&= \sum\limits_{\ell=1}^{\infty} \innerp[\big]{ \psi_{\ell} }{ P_{u,p,j}^{\hbar,b} \psi_{\ell} }
		= \int_{\mathbb{R}^2} \normt[\big]{ f_{x,u,p,j}^{\hbar,b} }{2}^2 \id x \\
		&= \int_{\mathbb{R}^2} \int_{\mathbb{R}^3} \frac{b}{\hbar} \abs{ \varphi_j \myp{ \hbar^{-\frac{1}{2}} b^{\frac{1}{2}} \myp{y_{\perp}-x} } }^2 \abs{ f^{\hbar} \myp{y-u} }^2 \id y \id x = 1.
		\qedhere
	\end{align*}
\end{proof}

\subsection{Semi-classical measures on phase space}
Let $ \mathcal{P}_{\pm 1} $ denote the projections onto the spin-up and spin-down components in $ \mathbb{C}^2 $, that is,
\begin{equation*}
	\mathcal{P}_{1} =	\begin{pmatrix}
							1 & 0 \\
							0 & 0
						\end{pmatrix}, 
	\qquad 
	\mathcal{P}_{-1} =	\begin{pmatrix}
							0 & 0 \\
							0 & 1
						\end{pmatrix}.
\end{equation*}
We recall that the phase space is $ \Omega = \mathbb{R}^3 \times \mathbb{R} \times \mathbb{N}_0 \times \Set{\pm 1} $, and that we use the notational convention \eqref{eq:xi}.
We define $ k $-particle semi-classical measures as follows.
\begin{de}
	For $ \Psi_N \in \bigwedge^N L^2 \myp{\mathbb{R}^3 ; \mathbb{C}^2} $ normalized and $ 1 \leq k \leq N $, the $ k $-particle semi-classical measure on $ \Omega^k $ is the measure with density
	\begin{equation}
	\label{eq:semimeas}
		m_{f,\Psi_N}^{\myp{k}} \myp{\xi}
		= \frac{N!}{\myp{N-k}!} \innerp[\Big]{ \Psi_N }{ \myp[\Big]{ \bigotimes_{\ell=1}^k P_{u_{\ell},p_{\ell},j_{\ell}}^{\hbar,b} \mathcal{P}_{s_{\ell}} } \otimes \mathds{1}_{N-k}  \Psi_N }_{L^2 \myp{ \mathbb{R}^{3N};\mathbb{C}^{2^N} } },
	\end{equation}
	where $ \mathds{1}_{N-k} $ is the identity acting on the last $ N-k $ components of $ \Psi_N $.
\end{de}
\begin{rem}
	In \cite{FouLewSol-18}, the semi-classical measures are defined using creation and annihilation operators, and a formula similar to \eqref{eq:semimeas} is a consequence.
	By expanding the coherent state operators $ P_{u,p,j}^{\hbar,b} \mathcal{P}_s $ in a basis of $ L^2 \myp{\mathbb{R}^3; \mathbb{C}^2} $, a similar approach is also possible in our present case with a magnetic field.
	This provides a slightly different proof of \cref{lem:spinmeasureprop} below.
\end{rem}
The semi-classical measures have the following basic properties.
The upper bound in \eqref{eq:spinmeasurebound} below is a manifestation of the Pauli exclusion principle.
\begin{lem}
\label{lem:spinmeasureprop}
	The function $ m_{f,\Psi_N}^{\myp{k}} $ is symmetric on $ \Omega^k $ and satisfies
	\begin{equation}
	\label{eq:spinmeasurebound}
		0 \leq m_{f,\Psi_N}^{\myp{k}} \leq 1,
	\end{equation}
	\begin{equation}
	\label{eq:spincompatibility1}
		\frac{b^k}{\myp{2 \pi \hbar}^{2k}} \int_{\Omega^k} m_{f,\Psi_N}^{\myp{k}} \myp{\xi} \id \xi
		= \frac{N!}{\myp{N-k}!},
	\end{equation}
	and for $ k \geq 2 $,
	\begin{align}
		\frac{b}{\myp{2 \pi \hbar}^2} \int_{\Omega} m_{f,\Psi_N}^{\myp{k}} \myp{\xi_1, \dotsc, \xi_k} \id \xi_k 
		= \myp{N-k+1} m_{f,\Psi_N}^{\myp{k-1}} \myp{\xi_1, \dotsc, \xi_{k-1}}.
	\label{eq:spincompatibility2}
	\end{align}
\end{lem}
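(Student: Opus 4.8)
The plan is to read off all four assertions directly from \eqref{eq:semimeas} together with \cref{lem:pres}. Symmetry of $m_{f,\Psi_N}^{\myp{k}}$ is immediate: a permutation of $\xi_1,\dots,\xi_k$ amounts to conjugating $\myp[\big]{\bigotimes_{\ell=1}^k P_{u_\ell,p_\ell,j_\ell}^{\hbar,b}\mathcal{P}_{s_\ell}}\otimes\mathds{1}_{N-k}$ by the corresponding permutation operator on the first $k$ tensor slots, which acts on the antisymmetric $\Psi_N$ by a sign that cancels in the expectation value. Likewise $m_{f,\Psi_N}^{\myp{k}}\geq 0$ in \eqref{eq:spinmeasurebound} is easy: by \eqref{eq:coherentdef} each $P_{u,p,j}^{\hbar,b}$ is a nonnegative operator acting on the spatial variable and $\mathcal{P}_s$ a projection acting on the spin variable, so $P_{u,p,j}^{\hbar,b}\mathcal{P}_s=P_{u,p,j}^{\hbar,b}\otimes\mathcal{P}_s\geq 0$, and a tensor product of nonnegative operators is nonnegative.

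For \eqref{eq:spincompatibility1} and \eqref{eq:spincompatibility2} the one input needed is the resolution of the identity of \cref{lem:pres}, which with $\sum_{s=\pm1}\mathcal{P}_s=\mathds{1}_{\mathbb{C}^2}$ yields
\begin{equation*}
	\frac{b}{\myp{2\pi\hbar}^2}\int_{\Omega}P_{u,p,j}^{\hbar,b}\mathcal{P}_s\id\xi=\mathds{1}_{L^2\myp{\mathbb{R}^3;\mathbb{C}^2}} .
\end{equation*}
Integrating \eqref{eq:semimeas} over the last variable $\xi_k\in\Omega$ (Tonelli applies, the integrand being nonnegative) replaces the $k$-th tensor factor by $\frac{\myp{2\pi\hbar}^2}{b}\mathds{1}$, and since $\frac{N!}{\myp{N-k}!}\big/\frac{N!}{\myp{N-k+1}!}=N-k+1$ this is precisely \eqref{eq:spincompatibility2}; iterating over all $k$ variables and using $\normt{\Psi_N}{}^2=1$ gives \eqref{eq:spincompatibility1}.

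The substance of the lemma is the Pauli-type upper bound $m_{f,\Psi_N}^{\myp{k}}\leq 1$. Expanding $P_{u,p,j}^{\hbar,b}\mathcal{P}_s=\int_{\mathbb{R}^2}\ketbra{g_{x,\xi}}{g_{x,\xi}}\id x$ with $g_{x,\xi}:=f_{x,u,p,j}^{\hbar,b}\otimes e_s\in L^2\myp{\mathbb{R}^3;\mathbb{C}^2}$ ($e_s$ the unit spin vector) gives
\begin{equation*}
	m_{f,\Psi_N}^{\myp{k}}\myp{\xi}=\frac{N!}{\myp{N-k}!}\int_{\myp{\mathbb{R}^2}^k}\innerp[\Big]{\Psi_N}{\myp[\Big]{\bigotimes_{\ell=1}^k\ketbra{g_{x_\ell,\xi_\ell}}{g_{x_\ell,\xi_\ell}}}\otimes\mathds{1}_{N-k}\,\Psi_N}\id x_1\cdots\id x_k .
\end{equation*}
Using antisymmetry of $\Psi_N$ to compress the rank-one tensor to $\ketbra{\Pi_k G}{\Pi_k G}$, with $G=g_{x_1,\xi_1}\otimes\cdots\otimes g_{x_k,\xi_k}$ and $\Pi_k$ the antisymmetrizing projection, the integrand equals $\frac{\myp{N-k}!}{N!}\innerp{\Pi_k G}{\gamma_{\Psi_N}^{\myp{k}}\Pi_k G}$ for the $k$-particle density matrix normalized by $\tr\gamma_{\Psi_N}^{\myp{k}}=N!/\myp{N-k}!$. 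The fermionic (Pauli) bound $0\leq\gamma_{\Psi_N}^{\myp{k}}\leq k!$ together with Hadamard's inequality $\normt{\Pi_k G}{}^2=\frac{1}{k!}\det\myp[\big]{\innerp{g_{x_i,\xi_i}}{g_{x_j,\xi_j}}}_{i,j}\leq\frac{1}{k!}\prod_\ell\normt{g_{x_\ell,\xi_\ell}}{}^2$ bounds the integrand by $\frac{\myp{N-k}!}{N!}\prod_\ell\normt{g_{x_\ell,\xi_\ell}}{}^2$; and $\int_{\mathbb{R}^2}\normt{g_{x,\xi}}{}^2\id x=\int_{\mathbb{R}^2}\normt{f_{x,u,p,j}^{\hbar,b}}{}^2\id x=\tr P_{u,p,j}^{\hbar,b}=1$ by \cref{lem:pres}, so after integrating over $\myp{\mathbb{R}^2}^k$ the combinatorial factors cancel and $m_{f,\Psi_N}^{\myp{k}}\myp{\xi}\leq 1$.

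This last estimate is the only \emph{real} obstacle: the point is to line up three separate ingredients — the fermionic constraint on $\gamma_{\Psi_N}^{\myp{k}}$, Hadamard's inequality, and the exact normalization $\tr P_{u,p,j}^{\hbar,b}=1$ — so that the constants $N!/\myp{N-k}!$, the $1/k!$ from antisymmetrization, and the $k!$ from the density-matrix normalization conspire to leave exactly the constant $1$. As indicated in the remark preceding the lemma, the cleanest bookkeeping is via second quantization: spectrally decomposing $P_{u,p,j}^{\hbar,b}\mathcal{P}_s=\sum_n\ketbra{\chi_{\xi,n}}{\chi_{\xi,n}}$ (so $\sum_n\normt{\chi_{\xi,n}}{}^2=\tr P_{u,p,j}^{\hbar,b}=1$) one gets $m_{f,\Psi_N}^{\myp{k}}\myp{\xi}=\sum_{n_1,\dots,n_k}\normt{a\myp{\chi_{\xi_k,n_k}}\cdots a\myp{\chi_{\xi_1,n_1}}\Psi_N}{}^2$, which makes $m_{f,\Psi_N}^{\myp{k}}\geq 0$ manifest and reduces the upper bound to the telescoping estimate $\sum_n\normt{a\myp{\chi_{\xi_k,n}}\Phi}{}^2=\innerp{\Phi}{\mathrm{d}\Gamma\myp{P_{u_k,p_k,j_k}^{\hbar,b}\mathcal{P}_{s_k}}\Phi}\leq\tr\myp{P_{u_k,p_k,j_k}^{\hbar,b}\mathcal{P}_{s_k}}\normt{\Phi}{}^2=\normt{\Phi}{}^2$ valid on the fermionic sector, whence $m_{f,\Psi_N}^{\myp{k}}\leq m_{f,\Psi_N}^{\myp{k-1}}\leq\cdots\leq m_{f,\Psi_N}^{\myp{1}}\leq\normt{\Psi_N}{}^2=1$.
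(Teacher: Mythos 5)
Your treatment of symmetry, of $m_{f,\Psi_N}^{\myp{k}}\geq 0$, and of the normalization and compatibility relations \eqref{eq:spincompatibility1}--\eqref{eq:spincompatibility2} via \cref{lem:pres} matches the paper and is fine. Your \emph{second} argument for $m_{f,\Psi_N}^{\myp{k}}\leq 1$ — the telescoping second-quantized bound $\sum_n \normt{a(\chi_{\xi_k,n})\Phi}{}^2 = \innerp{\Phi}{\mathrm{d}\Gamma(P_{u_k,p_k,j_k}^{\hbar,b}\mathcal{P}_{s_k})\Phi}\leq \normt{\Phi}{}^2$ — is correct, and it is essentially the argument the paper gives: the paper decomposes $P_{u,p,j}^{\hbar,b}=\sum \lambda_n\ketbra{\psi_n}{\psi_n}$, shows that $N\ketbra{\psi}{\psi}\mathcal{P}_s\otimes\mathds{1}_{N-1}$ is a projection on $\bigwedge^N L^2(\mathbb{R}^3;\mathbb{C}^2)$ (the Pauli bound $a^*(\psi\otimes e_s)a(\psi\otimes e_s)\leq 1$ in disguise), and then ``applies this $k$ times'', which is precisely your iteration $m^{\myp{k}}\leq m^{\myp{k-1}}\leq\cdots\leq 1$.

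Your \emph{first} argument for the upper bound, however, contains a genuine error. The operator inequality $\gamma_{\Psi_N}^{\myp{k}}\leq k!$ (with the normalization $\tr\gamma_{\Psi_N}^{\myp{k}}=N!/(N-k)!$) is \emph{false} for $k\geq 2$: it is not a fermionic Pauli bound. By Yang's theorem on off-diagonal long-range order, the largest eigenvalue of the two-particle density matrix of a BCS-type $N$-fermion state is of order $N$ in this normalization, so $\gamma^{\myp{2}}\leq 2$ fails badly for large $N$, and the analogous bounds fail for all $k\geq 2$ (Sasaki). What \emph{is} true — and all you actually need — is the weaker inequality $\innerp{\Pi_k G}{\gamma_{\Psi_N}^{\myp{k}}\Pi_k G}\leq\prod_\ell\normt{g_{x_\ell,\xi_\ell}}{}^2$ for \emph{pure tensor} $G=g_{x_1,\xi_1}\otimes\cdots\otimes g_{x_k,\xi_k}$, which follows from writing the left side as $\normt{a(g_{x_k,\xi_k})\cdots a(g_{x_1,\xi_1})\Psi_N}{}^2$ and iterating $a^*(g)a(g)\leq\normt{g}{}^2$. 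This is exactly your second argument; the combination ``$\gamma^{\myp{k}}\leq k!$ plus Hadamard'' would give the same numbers but is not a valid derivation. You should drop the first route (or replace the unavailable operator bound by the pure-tensor estimate) and keep only the second, which is sound and coincides with the paper's proof.
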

\begin{proof}
	We will start out by proving \eqref{eq:spinmeasurebound}, and we will concentrate on the case $ k=1 $, since the proof easily generalizes to $ k \geq 2 $.
	Note that $ 0 \leq \smash{m_{f,\Psi_N}^{\myp{k}}} $ obviously holds, as the $ \smash{P_{u,p,j}^{\hbar,b}} $'s are positive operators.
	Since $ \smash{P_{u,p,j}^{\hbar,b}} $ is trace class, we may write $ P_{u,p,j}^{\hbar,b} = \sum\nolimits_{k} \lambda_k \ketbra{\psi_k}{\psi_k} $, where the $ \psi_k $ constitute an orthonormal basis of $ L^2 \myp{\mathbb{R}^3} $, and $ \sum\nolimits_k \lambda_k = \mathrm{Tr} \myp{ P_{u,p,j}^{\hbar,b}} =1 $.
	Note that for any $ \psi \in L^2 \myp{\mathbb{R}^3} $ we can rewrite, as operators acting on $ \bigwedge^N L^2 \myp{\mathbb{R}^3;\mathbb{C}^2} $,
	\begin{equation*}
		N \myp{\ketbra{\psi}{\psi} \mathcal{P}_s \otimes \mathds{1}_{N-1}}
		= \sum\limits_{k=1}^N \mathds{1}_{k-1} \otimes \ketbra{\psi}{\psi}\mathcal{P}_s \otimes \mathds{1}_{N-k},
	\end{equation*}
	where
	\begin{align*}
		\MoveEqLeft[3]	\myp[\Big]{ \sum\limits_{k=1}^N \mathds{1}_{k-1} \otimes \ketbra{\psi}{\psi}\mathcal{P}_s \otimes \mathds{1}_{N-k} }^2
		= \sum\limits_{k=1}^N \mathds{1}_{k-1} \otimes \ketbra{\psi}{\psi}\mathcal{P}_s \otimes \mathds{1}_{N-k} \\
		&+ 2 \sum\limits_{1\leq k < \ell \leq N} \mathds{1}_{k-1} \otimes \ketbra{\psi}{\psi}\mathcal{P}_s \otimes \mathds{1}_{\ell-k-1} \otimes \ketbra{\psi}{\psi}\mathcal{P}_s \otimes \mathds{1}_{N-\ell}.
	\end{align*}
	Each term in the last sum acts as zero on anti-symmetric functions, implying for any $ \psi \in L^2 \myp{\mathbb{R}^3} $ that $ N \ketbra{\psi}{\psi}\mathcal{P}_s \otimes \mathds{1}_{N-1} $ is an orthogonal projection on $ \bigwedge^N L^2 \myp{\mathbb{R}^3;\mathbb{C}^2} $.
	We arrive at the conclusion that
	\begin{equation*}
		m_{f,\Psi_N}^{\myp{1}} \myp{u,p,j,s}
		= \sum\limits_{k=0}^{\infty} \lambda_k N \innerp[\big]{ \Psi_N }{ \myp{\ketbra{\psi_k}{\psi_k} \mathcal{P}_s \otimes \mathds{1}_{N-1}} \Psi_N }
		\leq 1.
	\end{equation*}
	The result for general $ k $ follows by applying what we have just shown $ k $ times, so \eqref{eq:spinmeasurebound} holds.
	
	The compatibility relation \eqref{eq:spincompatibility2} follows by applying \cref{lem:pres}:
	\begin{align*}
		\MoveEqLeft[3]	\frac{b}{\myp{2 \pi \hbar}^2} \int_{\Omega}	m_{f,\Psi_N}^{\myp{k}} \myp{\xi_1, \dotsc, \xi_k} \id \xi_k \\
		&= \frac{N!}{\myp{N-k}!} \sum\limits_{s_k = \pm 1} \innerp[\Big]{ \Psi_N }{ \myp[\Big]{ \bigotimes_{\ell=1}^{k-1} P_{u_{\ell},p_{\ell},j_{\ell}}^{\hbar,b} \mathcal{P}_{s_{\ell}} } \otimes \mathcal{P}_{s_k} \otimes \mathds{1}_{N-k} \Psi_N } \\
		&= \myp{N-k+1} m_{f,\Psi_N}^{\myp{k-1}} \myp{\xi_1, \dotsc, \xi_{k-1}}.
	\end{align*}
	Finally, \eqref{eq:spincompatibility1} is obtained by repeating this $ k-1 $ more times.
\end{proof}
The next two lemmas assert some particularly nice properties of the semi-classical measures, which will prove to be of great importance
later.
The first one states that the position densities of the measures are close to the position densities \eqref{eq:kpartdensityspin} of the wave function $ \Psi_N $.
\begin{lem}[Position densities]
\label{lem:spinmomentint}
	Let $ \Psi \in \bigwedge^N L^2 \myp{\mathbb{R}^3 ; \mathbb{C}^2} $ be any normalized wave function, and suppose that that $ f $ is a real, $ L^2 $-normalized and even function.
	We have for $ 1 \leq k \leq N $ that
	\begin{equation}
		\frac{b^k}{\myp{2 \pi \hbar}^{2k}} \sum\limits_{j \in \myp{\mathbb{N}_0}^k} \int_{\mathbb{R}^k} m_{f,\Psi}^{\myp{k}} \myp{u,p, j, s} \id p
		= k! \myp[\big]{ \widetilde{\rho}_{\Psi}^{\, \myp{k}} \ast \myp{ \abs{ f^{\hbar} }^2 }^{\otimes k} } \myp{u,s},
	\end{equation}
	where the convolution in the right hand side is the ordinary position space convolution in each spin component of $ \widetilde{\rho}_{\Psi}^{\, \myp{k}} $.
\end{lem}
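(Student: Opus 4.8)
The plan is to first prove a \emph{reduced} resolution of the identity for the coherent-state operators --- integrating out the momentum variable $p$ and summing over the Landau index $j$ collapses $P_{u,p,j}^{\hbar,b}$ to multiplication by the mollifier $\abs{f^\hbar(\nonarg-u)}^2$ --- and then to tensorize this over the first $k$ particle slots and unfold the definition of $\widetilde{\rho}_\Psi^{\,\myp{k}}$. Concretely, I claim that, as a quadratic form on $L^2(\mathbb{R}^3)$,
\begin{equation*}
	\frac{b}{(2\pi\hbar)^2}\sum_{j=0}^\infty\int_{\mathbb{R}} P_{u,p,j}^{\hbar,b}\id p = M_{\abs{f^\hbar(\nonarg-u)}^2},
\end{equation*}
where $M_g$ denotes multiplication by $g$. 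Fixing $\psi\in L^2(\mathbb{R}^3)$ and recalling from the proof of \cref{lem:pres} (equations \eqref{eq:fourierg}--\eqref{eq:gproj}) that $\innerp{\psi}{P_{u,p,j}^{\hbar,b}\psi}=\tfrac{2\pi\hbar}{b}\innerp{g_{u,p}^\hbar}{\Pi_j^{\myp{2}}g_{u,p}^\hbar}$, I sum over $j$ using $\sum_j\Pi_j^{\myp{2}}=\mathds{1}_{L^2(\mathbb{R}^2)}$ (which follows from \eqref{eq:phires} and \eqref{eq:landauproj}) to get $\tfrac{2\pi\hbar}{b}\normt{g_{u,p}^\hbar}{L^2(\mathbb{R}^2)}^2$; then I integrate over $p$, substitute $p_3=p/\hbar$, and apply the Plancherel identity for $\mathcal{F}_3$ in the third variable to turn $\int_{\mathbb{R}}\normt{g_{u,p}^\hbar}{}^2\id p$ into $2\pi\hbar\int_{\mathbb{R}^3}\abs{f^\hbar(y-u)}^2\abs{\psi(y)}^2\id y$. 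Collecting the constants $\tfrac{b}{(2\pi\hbar)^2}\cdot\tfrac{2\pi\hbar}{b}\cdot 2\pi\hbar=1$ proves the claim. This is exactly the computation at the end of the proof of \cref{lem:pres}, carried out with the $u$-integral omitted.

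\textbf{Step 2: tensorization and unfolding.} In \eqref{eq:semimeas} the variables $(p_\ell,j_\ell)$ act only on the $\ell$-th tensor slot of $\bigotimes_{\ell=1}^k P_{u_\ell,p_\ell,j_\ell}^{\hbar,b}\mathcal{P}_{s_\ell}\otimes\mathds{1}_{N-k}$, so applying Step 1 slot by slot (all terms being positive, so Tonelli applies and the weak operator convergence passes through the tensor product and the inner product with $\Psi$) yields
\begin{equation*}
	\frac{b^k}{(2\pi\hbar)^{2k}}\sum_{j\in(\mathbb{N}_0)^k}\int_{\mathbb{R}^k} m_{f,\Psi}^{\myp{k}}\myp{u,p,j,s}\id p = \frac{N!}{(N-k)!}\innerp[\Big]{\Psi}{\myp[\Big]{\bigotimes_{\ell=1}^k M_{\abs{f^\hbar(\nonarg-u_\ell)}^2}\mathcal{P}_{s_\ell}}\otimes\mathds{1}_{N-k}\,\Psi}.
\end{equation*}
Writing the argument of $\Psi$ as $(x_1,\tau_1,\dots,x_N,\tau_N)$, the operator $M_{\abs{f^\hbar(\nonarg-u_\ell)}^2}$ multiplies by $\abs{f^\hbar(x_\ell-u_\ell)}^2$ and $\mathcal{P}_{s_\ell}$ pins $\tau_\ell=s_\ell$ for $\ell\le k$, so the right-hand side equals
\begin{equation*}
	\frac{N!}{(N-k)!}\int_{\mathbb{R}^{3k}}\prod_{\ell=1}^k\abs{f^\hbar(x_\ell-u_\ell)}^2\Bigg(\sum_{\tau_{k+1},\dots,\tau_N}\int_{\mathbb{R}^{3(N-k)}}\abs{\Psi(x_1,s_1,\dots,x_k,s_k,x_{k+1},\tau_{k+1},\dots)}^2\id x_{k+1}\cdots\id x_N\Bigg)\id x_1\cdots\id x_k.
\end{equation*}

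\textbf{Step 3: identifying the density and the constant.} By definition \eqref{eq:kpartdensityspin} the inner parenthesis is $\binom{N}{k}^{-1}\widetilde{\rho}_\Psi^{\,\myp{k}}(x,s)$. Since $f$ is real and even, $\abs{f^\hbar}^2$ is even in each $\mathbb{R}^3$ block, so $\prod_\ell\abs{f^\hbar(x_\ell-u_\ell)}^2=(\abs{f^\hbar}^2)^{\otimes k}(u-x)$ and the position integral is precisely the convolution $\big(\widetilde{\rho}_\Psi^{\,\myp{k}}\ast(\abs{f^\hbar}^2)^{\otimes k}\big)(u,s)$ in the $3k$ position variables at the fixed spin vector $s$, as in the statement. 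Finally $\tfrac{N!}{(N-k)!}\binom{N}{k}^{-1}=k!$, which gives the asserted identity.

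\textbf{Main obstacle.} Nothing here is deep; the one delicate point is Step 1 --- keeping track of the factors of $2\pi$, the powers of $\hbar$ hidden in the scalings of $f^\hbar$ and of $g_{u,p}^\hbar$, and the precise Plancherel normalization of $\mathcal{F}_3$ --- together with checking that the reduced resolution of the identity holds in a strong enough (weak operator) sense to commute the $j$-sum and $p$-integral past the tensor product and the sandwich by $\Psi$ in Step 2. Both are handled exactly as in the proof of \cref{lem:pres}.
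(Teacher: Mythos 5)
Your proof is correct and follows essentially the same route as the paper's: both rely on the same core computation (summing the Landau-level projections via \cref{lem:phires} and using Plancherel in the third variable to collapse the coherent-state operators to multiplication by $\abs{f^{\hbar}}^2$), and both then unfold the definition of $\widetilde{\rho}_{\Psi}^{\,\myp{k}}$. The only difference is organizational: you isolate the one-slot reduced resolution of the identity $\frac{b}{(2\pi\hbar)^2}\sum_j\int_{\mathbb{R}}P^{\hbar,b}_{u,p,j}\,dp = M_{\abs{f^\hbar(\nonarg-u)}^2}$ as a separate step and then tensorize, whereas the paper's proof carries out the $k$-fold computation in one pass by writing $\bigotimes_\ell P^{\hbar,b}_{u_\ell,p_\ell,j_\ell}$ as an $x$-integral of rank-one projections and applying the Fourier/Landau argument directly to the $k$-fold tensor; the latter sidesteps having to justify the interchange of the weak-operator limit with the tensorization and the sandwiching by $\Psi$, which you correctly flag and handle by positivity and Tonelli.
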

\begin{proof}
	For notational convenience we introduce an arbitrary $ \Phi \in L^2 \myp{\mathbb{R}^{3N}} $.
	Think of $ \Phi $ as being one of the spin components of $ \Psi $.
	Note first that
	\begin{equation*}
		P_{u_1,p_1,j_1}^{\hbar,b} \otimes \cdots \otimes P_{u_k,p_k,j_k}^{\hbar,b}
		= \int_{\mathbb{R}^{2k}} \ketbra{\otimes_{\ell=1}^k f_{x_\ell,u_\ell,p_\ell,j_\ell}^{\hbar,b}}{\otimes_{\ell=1}^k f_{x_\ell,u_\ell,p_\ell,j_\ell}^{\hbar,b}} \id x,
	\end{equation*}
	and that for each fixed $ y \in \mathbb{R}^{3 \myp{N-k}} $ we have as in \eqref{eq:fourierg} that
	\begin{align*}
		\MoveEqLeft[3]	\innerp[\big]{ \otimes_{\ell=1}^k f_{x_\ell,u_\ell,p_\ell,j_\ell}^{\hbar,b} }{ \Phi \myp{\nonarg,y} }_{L^2 \myp{\mathbb{R}^{3k}}} \\
		&= \myp{2 \pi}^{\frac{k}{2}} \innerp[\big]{ \otimes_{\ell=1}^k \varphi_{x_\ell,j_\ell}^{\hbar,b} }{ \mathcal{F}_3^{\otimes k} \myb[\big]{ \myp{f^{\hbar} }^{\otimes k} \myp{\nonarg- u} \Phi \myp{\nonarg, y} } \myp{ \nonarg, \hbar^{-\frac{1}{2}} p } }_{L^2 \myp{\mathbb{R}^{2k}}}.
	\end{align*}
	Combining these observations and using \cref{lem:phires}, we get
	\begin{align*}
		&\frac{1}{\myp{2 \pi}^k} \sum\limits_{j \in \myp{\mathbb{N}_0}^k} \int_{\mathbb{R}^k} \innerp[\big]{ \Phi }{ \myp{ P_{u_1,p_1,j_1}^{\hbar,b} \otimes \cdots \otimes P_{u_k,p_k,j_k}^{\hbar,b}} \otimes \mathds{1}_{N-k} \Phi } \id p \\
		&= \frac{1}{\myp{2 \pi}^k} \sum\limits_{j \in \myp{\mathbb{N}_0}^k} \int_{\mathbb{R}^k} \int_{\mathbb{R}^{3\myp{N-k}}} \int_{\mathbb{R}^{2k}} \abs[\big]{ \innerp[\big]{ \otimes_{\ell=1}^k f_{x_\ell,u_\ell,p_\ell,j_\ell}^{\hbar,b} }{ \Phi \myp{\cdot,y} } }^2 \id x \id y \id p \\
		&= \frac{\myp{2\pi \hbar}^k}{b^k} \int_{\mathbb{R}^k} \int_{\mathbb{R}^{3\myp{N-k}}} \normt[\big]{\mathcal{F}_3^{\otimes k} \myb[\big]{ \myp{f^{\hbar} }^{\otimes k} \myp{\nonarg- u} \Phi \myp{\nonarg, y} } \myp{ \nonarg, \hbar^{-\frac{1}{2}} p } }{L^2 \myp{\mathbb{R}^{2k}}}^2 \id y \id p \\
		&= \frac{\myp{2 \pi}^k \hbar^{2k}}{b^k} \int_{\mathbb{R}^{3\myp{N-k}}} \normt[\big]{ \myp{f^{\hbar} }^{\otimes k} \myp{\nonarg- u} \Phi \myp{\nonarg, y}}{L^2 \myp{\mathbb{R}^{3k}}}^2 \id y.
	\end{align*}
	Applying this to $ \Psi $ and using that $ f $ is even, we obtain
	\begin{align*}
		\MoveEqLeft[1]	\sum\limits_{j \in \myp{\mathbb{N}_0}^k} \int_{\mathbb{R}^k} m_{f,\Psi}^{\myp{k}} \myp{ u, p, j, s} \id p \\
		& = \sum\limits_{r \in \Set{\pm 1}^N} \sum\limits_{j \in \myp{\mathbb{N}_0}^k} \int_{\mathbb{R}^k} \frac{N!}{\myp{N-k}!} \innerp[\Big]{ \Psi \myp{\nonarg ; r} }{ \myp[\Big]{ \bigotimes_{\ell=1}^{k-1} P_{u_{\ell},p_{\ell},j_{\ell}}^{\hbar,b} \mathcal{P}_{s_{\ell}} } \Psi \myp{\nonarg ; r} } \id p \\
		&= \frac{\myp{2 \pi \hbar}^{2k} N!}{b^k \myp{N-k}!} \sum_{r \in \Set{\pm 1}^{N-k}} \int_{\mathbb{R}^{3\myp{N-k}}} \normt[\big]{ \myp{f^{\hbar} }^{\otimes k} \myp{\nonarg - u} \Psi \myp{ \nonarg, y ; s,r}}{L^2 \myp{\mathbb{R}^{3k}}}^2 \id y \\
		& = \frac{\myp{2 \pi \hbar}^{2k}}{b^k} k! \myp[\big]{ \widetilde{\rho}_{\Psi}^{\, \myp{k}} \ast \myp{ \abs{ f^{\hbar} }^2 }^{\otimes k} } \myp{u,s},
	\end{align*}
	concluding the proof.
\end{proof}

\begin{lem}[Kinetic energy]
\label{lem:spinkinenergy}
	Let $ \Psi \in \bigwedge^N L^2 \myp{\mathbb{R}^3 ; \mathbb{C}^2} $ be normalized, and suppose that $ f \in C_c^{\infty} \myp{\mathbb{R}^3} $ is real-valued, $ L^2 $-normalized and even.
	Then we have
	\begin{align}
	\label{eq:spinkinenergy}
		\MoveEqLeft[3]	\innerp[\Big]{ \Psi }{ \sum\limits_{j=1}^N \myp{ \bm{\sigma} \cdot \myp{-i\hbar \nabla_j + bA \myp{x_j}}}^2 \Psi }
		=- \hbar N \int_{\mathbb{R}^3} \myp{ \nabla f \myp{u} }^2 \id u \nonumber \\
		&+ \frac{b}{\myp{2 \pi \hbar}^2} \sum\limits_{s= \pm 1} \sum\limits_{j=0}^{\infty} \int_{\mathbb{R}} \int_{\mathbb{R}^3} \myp{p^2 + \hbar b \myp{2j+1+s}} m_{f,\Psi}^{\myp{1}} \myp{u,p,j,s} \id u \id p.
	\end{align}
\end{lem}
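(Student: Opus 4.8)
The plan is to start from the pointwise identity \eqref{eq:paulieq}, which rewrites the Pauli kinetic energy as $\sum_{j=1}^N\myp{\myp{-i\hbar\nabla_j+bA\myp{x_j}}^2+\hbar b\,\myp{\sigma_3}_j}$, and to treat the magnetic Laplacian part and the spin part separately. By antisymmetry of $\Psi$ both sums reduce to one-particle quantities, so it suffices to establish two things. First, the one-particle operator identity
\[
	\myp{-i\hbar\nabla+bA}^2 = -\hbar\myp[\Big]{\int_{\mathbb{R}^3}\myp{\nabla f}^2}\mathds{1}_{L^2\myp{\mathbb{R}^3}} + \frac{b}{\myp{2\pi\hbar}^2}\sum_{j=0}^{\infty}\int_{\mathbb{R}}\int_{\mathbb{R}^3}\myp{p^2+\hbar b\myp{2j+1}}P_{u,p,j}^{\hbar,b}\id u\id p,
\]
as a quadratic form on, say, $C_c^{\infty}\myp{\mathbb{R}^3}$. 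Second, that the spin term contributes exactly the $\hbar b s$-part of the right-hand side of \eqref{eq:spinkinenergy}: writing $\sigma_3=\mathcal{P}_1-\mathcal{P}_{-1}$ and inserting the resolution of the identity of \cref{lem:pres}, one gets $\hbar b\,\innerp{\Psi}{\sum_j\myp{\sigma_3}_j\Psi}=\frac{b}{\myp{2\pi\hbar}^2}\sum_{s=\pm1}\sum_{j}\int_{\mathbb{R}}\int_{\mathbb{R}^3}\hbar b s\,m_{f,\Psi}^{\myp{1}}\myp{u,p,j,s}\id u\id p$. Applying the one-particle identity to the spin-summed one-particle reduced density matrix of $\Psi$ (diagonalized into an orthonormal family of eigenfunctions, which lie in the relevant form domain since $\Psi$ has finite kinetic energy), multiplying by $N$, and adding the spin contribution then yields \eqref{eq:spinkinenergy}.

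For the one-particle identity I would reuse the auxiliary function $g_{u,p}^{\hbar}$ on $\mathbb{R}^2$ from the proof of \cref{lem:pres} (equation \eqref{eq:fourierg}) together with \eqref{eq:gproj}, $\innerp{\psi}{P_{u,p,j}^{\hbar,b}\psi}=\tfrac{2\pi\hbar}{b}\innerp{g_{u,p}^{\hbar}}{\Pi_j^{\myp{2}}g_{u,p}^{\hbar}}$, where $\Pi_j^{\myp{2}}$ is the $j$-th spectral projection of $H_{\hbar^{-1}bA^{\perp}}$, with eigenvalue $\hbar^{-1}b\myp{2j+1}$. Since $\sum_{j}\myp{p^2+\hbar b\myp{2j+1}}\Pi_j^{\myp{2}}=p^2\mathds{1}_{L^2\myp{\mathbb{R}^2}}+\myp{-i\hbar\nabla_{\perp}+bA^{\perp}}^2$ on $L^2\myp{\mathbb{R}^2}$, the right-hand side of the one-particle identity tested against $\psi$ becomes $\tfrac{1}{2\pi\hbar}\int\!\int\myp{p^2\lVert g_{u,p}^{\hbar}\rVert^2+\lVert\myp{-i\hbar\nabla_{\perp}+bA^{\perp}}g_{u,p}^{\hbar}\rVert^2}\id u\id p$, the norms being in $L^2\myp{\mathbb{R}^2}$. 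By \eqref{eq:fourierg}, $g_{u,p}^{\hbar}$ is $\sqrt{2\pi}$ times the partial Fourier transform $\mathcal{F}_3$ of $\overline{f^{\hbar}\myp{\cdot-u}}\psi$ evaluated at third frequency $p/\hbar$; since $\mathcal{F}_3$ commutes with $-i\hbar\nabla_{\perp}+bA^{\perp}$, Plancherel in the third variable performs the $p$-integral and turns the two terms into $\hbar^2\lVert\partial_{x_3}\myp{\overline{f^{\hbar}\myp{\cdot-u}}\psi}\rVert^2$ and $\lVert\myp{-i\hbar\nabla_{\perp}+bA^{\perp}}\myp{\overline{f^{\hbar}\myp{\cdot-u}}\psi}\rVert^2$ in $L^2\myp{\mathbb{R}^3}$.

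Expanding these two expressions by the Leibniz rule and integrating over $u$ (using $\lVert f^{\hbar}\rVert_2=1$ and translation invariance) produces three groups of terms: the $\psi$--$\psi$ terms reassemble into $\innerp{\psi}{\myp{-i\hbar\partial_{x_3}}^2\psi}+\innerp{\psi}{\myp{-i\hbar\nabla_{\perp}+bA^{\perp}}^2\psi}=\innerp{\psi}{\myp{-i\hbar\nabla+bA}^2\psi}$; the $f^{\hbar}$--$f^{\hbar}$ terms give $\hbar^2\lVert\nabla f^{\hbar}\rVert^2\lVert\psi\rVert^2$; and the cross terms vanish because $f$ is real, so that $\int\myp{\partial_i f^{\hbar}}f^{\hbar}=\tfrac12\int\partial_i\myp{f^{\hbar}}^2=0$ for each $i=1,2,3$. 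Since the scaling $f^{\hbar}\myp{y}=\hbar^{-3/4}f\myp{\hbar^{-1/2}y}$ gives $\hbar^2\lVert\nabla f^{\hbar}\rVert_{L^2\myp{\mathbb{R}^3}}^2=\hbar\lVert\nabla f\rVert_{L^2\myp{\mathbb{R}^3}}^2$, rearranging yields the one-particle identity, and hence the lemma.

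I expect the only genuine work to be the bookkeeping of the $2\pi$ and $\hbar$ factors through the partial Fourier transform and the $u$-integration, together with the (routine but necessary) justification that the cross terms vanish; here the smoothness and compact support of $f$ assumed in the statement make the relevant integrations by parts immediate, and $\psi$ may be taken in $C_c^{\infty}\myp{\mathbb{R}^3}$ by a density argument. Note that evenness of $f$ is not used in this lemma, only that it is real-valued, and that no estimate beyond \cref{lem:phires} and \cref{lem:pres} is needed.
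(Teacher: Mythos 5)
Your argument is correct and follows essentially the same route as the paper's proof: both use \eqref{eq:fourierg}--\eqref{eq:gproj} and Plancherel in the third variable to turn the coherent-state integrals into $\int_{\mathbb{R}^3}\hbar^2\normt{\partial_3(f^{\hbar}(\cdot-u)\psi)}{2}^2\,du$ and $\int_{\mathbb{R}^3}\normt{(-i\hbar\nabla_\perp+bA^\perp)(f^{\hbar}(\cdot-u)\psi)}{2}^2\,du$, and your Leibniz-expansion-plus-cross-term-cancellation after integrating in $u$ is the same computation as the paper's IMS localization step \eqref{eq:imsloc}--\eqref{eq:localization}. Your observation that only real-valuedness (not evenness) of $f$ is needed here is also correct -- evenness is used only in the position-density identity of \cref{lem:spinmomentint}.
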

\begin{proof}
	The assumption that $ f $ is both smooth and compactly supported is far from optimal, but it will be sufficient for our purposes.
	The assertion of the lemma will hold as long as $ f^{\hbar} $ satisfies the following version of the IMS localization formula \cite[equation $ \myp{3.18} $]{LieSolYng-94}
	\begin{equation}
	\label{eq:imsloc}
		\innerp[\big]{ \psi }{ f^{\hbar} \myp{-i \hbar \nabla + b A}^2 f^{\hbar} \psi }
		= \innerp[\big]{ \psi }{ \myp{f^{\hbar} }^2 \myp{-i \hbar \nabla + b A}^2 \psi } + \hbar^2 \innerp[\big]{ \psi }{ \myp{ \nabla f^{\hbar} }^2 \psi }
	\end{equation}
	for any $ \psi $ in the domain of $ \myp{-i \hbar \nabla + b A}^2 $.
	Since $ f $ is normalized, the IMS formula yields
	\begin{align}
		\innerp[\big]{ \psi }{ \myp{-i \hbar \nabla + b A}^2 \psi }
		= {}& \int_{\mathbb{R}^3} \innerp[\big]{ \psi }{ f^{\hbar} \myp{\nonarg-u}^2 \myp{-i \hbar \nabla + b A}^2 \psi } \id u \nonumber \\
		= {}& \int_{\mathbb{R}^3} \innerp[\big]{ \psi }{ f^{\hbar} \myp{\nonarg-u} \myp{-i \hbar \nabla + b A}^2 f^{\hbar} \myp{\cdot-u} \psi } \id u \nonumber \\
		&- \hbar \normt{\psi}{2}^2 \int_{\mathbb{R}^3} \myp{\nabla f \myp{u}}^2 \id u.
	\label{eq:localization}
	\end{align}
	
	Returning to the semi-classical measures, note by \eqref{eq:fourierg} and \eqref{eq:gproj} that
	\begin{align*}
		\MoveEqLeft[3]	\frac{b}{\myp{2 \pi \hbar}^2} \sum\limits_{j=0}^{\infty} \int_{\mathbb{R}} \int_{\mathbb{R}^3} p^2 \innerp[\big]{ \psi }{ P_{u,p,j}^{\hbar,b} \psi } \id u \id p
		= \frac{1}{2 \pi \hbar} \int_{\mathbb{R}} \int_{\mathbb{R}^3} p^2 \innerp[\big]{ g_{u,p}^{\hbar} }{ g_{u,p}^{\hbar} } \id u \id p \\
		&= \frac{1}{\hbar} \int_{\mathbb{R}} \int_{\mathbb{R}^3} \int_{\mathbb{R}^2} p^2 \abs[\big]{ \mathcal{F}_3 \myb[\big]{ f^{\hbar} \myp{\nonarg-u} \psi } \myp{y, \hbar^{-1} p } }^2 \id y \id u \id p \\
		&= \hbar^2 \int_{\mathbb{R}} \int_{\mathbb{R}^3} \int_{\mathbb{R}^2} \abs[\big]{ \mathcal{F}_3 \myb[\big]{ \partial_3^2 \myp{ f^{\hbar} \myp{\nonarg-u} \psi } } \myp{y,p} }^2 \id y \id u \id p \\
		&= \int_{\mathbb{R}^3} \innerp[\big]{ f^{\hbar} \myp{\nonarg-u} \psi }{ -\hbar^2 \partial_3^2 \myp{ f^{\hbar} \myp{\nonarg-u} \psi } } \id u.
	\end{align*}
	Similarly, also using \eqref{eq:fourierg} and \eqref{eq:gproj}, and recalling \eqref{eq:HA}, we get
	\begin{align*}
		\MoveEqLeft[3]	\frac{b}{\myp{2 \pi \hbar}^2} \sum\limits_{j=0}^{\infty} \int_{\mathbb{R}} \int_{\mathbb{R}^3} \hbar b \myp{2j+1} \innerp[\big]{ \psi }{ P_{u,p,j}^{\hbar,b} \psi } \id u \id p \\
		&= \frac{\hbar}{2 \pi} \sum\limits_{j=0}^{\infty} \int_{\mathbb{R}} \int_{\mathbb{R}^3} \innerp[\big]{ g_{u,p}^{\hbar} }{ H_{\hbar^{-1}bA^{\perp}} \Pi_j^{\myp{2}} g_{u,p}^{\hbar} } \id u \id p \\
		&= \frac{\hbar}{2 \pi} \int_{\mathbb{R}} \int_{\mathbb{R}^3} \innerp[\big]{ g_{u,p}^{\hbar} }{ H_{\hbar^{-1}bA^{\perp}} g_{u,p}^{\hbar} } \id u \id p \\
		&= \int_{\mathbb{R}^3} \innerp[\big]{ f^{\hbar} \myp{\nonarg-u} \psi }{ \hbar^2 \myp{ H_{\hbar^{-1}bA^{\perp}} \otimes \mathds{1}_{L^2 \myp{\mathbb{R}}} } \myp{ f^{\hbar} \myp{\nonarg-u} \psi } } \id u.
	\end{align*}
	Since $ \myp{-i \hbar \nabla + bA}^2 = \hbar^2 \myp{ H_{\hbar^{-1}bA^{\perp}} - \partial_3^2} $, combining these with \eqref{eq:localization} yields
	\begin{align*}
		\MoveEqLeft[3]	\innerp[\big]{ \psi }{ (-i \hbar \nabla + b A)^2 \psi }
		= - \hbar \normt{\psi}{2}^2 \int_{\mathbb{R}^3} \myp{\nabla f \myp{u}}^2 \id u \\
		&+ \frac{b}{\myp{2 \pi \hbar}^2} \sum\limits_{j=0}^{\infty} \int_{\mathbb{R}} \int_{\mathbb{R}^3} \myp{ p^2 + \hbar b \myp{2j+1} } \innerp[\big]{ \psi }{ P_{u,p,j}^{\hbar,b} \psi } \id u \id p.
	\end{align*}
	Now, using \eqref{eq:paulieq}, we get for $ \Phi \in L^2 \myp{\mathbb{R}^3; \mathbb{C}^2} $,
	\begin{align*}
		\MoveEqLeft[3]	\innerp[\big]{ \Phi }{ \myp{\bm{\sigma} \cdot \myp{-i \hbar \nabla + bA}}^2 \Phi } \\ 
		={}& \begin{aligned}[t]
			\frac{b}{\myp{2 \pi \hbar}^2} \sum\limits_{j=0}^{\infty} \int_{\mathbb{R}} \int_{\mathbb{R}^3} \myp[\Big]{ &(p^2 + \hbar b \myp{2 j+1} ) \innerp[\big]{ \Phi }{ P_{u,p,j}^{\hbar,b} \mathds{1}_{\mathbb{C}^2} \Phi } \\
			&+ \hbar b \innerp[\big]{ \Phi }{ P_{u,p,j}^{\hbar,b} \sigma_3 \Phi } } \id u \id p 
		\end{aligned} \\
		&- \hbar \normt{\Phi}{L^2 \myp{\mathbb{R}^3; \mathbb{C}^2}}^2 \int_{\mathbb{R}^3} \myp{\nabla f \myp{u}}^2 \id u \\
		={}&  \frac{b}{\myp{2 \pi \hbar}^2} \sum\limits_{s=\pm 1} \sum\limits_{j=0}^{\infty} \int_{\mathbb{R}} \int_{\mathbb{R}^3} (p^2 + \hbar b \myp{2j+1+s} ) \innerp[\big]{ \Phi }{ P_{u,p,j}^{\hbar,b} \mathcal{P}_s \Phi } \id u \id p \\
		&- \hbar \normt{\Phi}{L^2 \myp{\mathbb{R}^3; \mathbb{C}^2}}^2 \int_{\mathbb{R}^3} \myp{\nabla f \myp{u}}^2 \id u.
	\end{align*}
	Applying this to the first component of the wave function $ \Psi $ while keeping all other variables fixed, we finally obtain
	\begin{align*}
		\MoveEqLeft[3]	\innerp[\Big]{ \Psi }{ \sum\limits_{j=1}^N \myp{ \bm{\sigma} \cdot \myp{-i\hbar \nabla_j + bA \myp{x_j}}}^2 \Psi } \\
		={}& N \int_{\myp{\mathbb{R}^3 \times \Set{\pm 1}}^{N-1}} \innerp[\big]{ \Psi \myp{\nonarg ,z} }{ \myp{\bm{\sigma} \cdot \myp{ -i \hbar \nabla + bA}}^2 \Psi \myp{\nonarg ,z} }_{L^2 \myp{\mathbb{R}^3; \mathbb{C}^2}} \id z \\
		={}& \frac{b}{\myp{2 \pi \hbar}^2} \sum\limits_{s=\pm 1} \sum\limits_{j=0}^{\infty} \int_{\mathbb{R}} \int_{\mathbb{R}^3} \myp{p^2 + \hbar b \myp{2j+1+s}} N \innerp[\big]{ \Psi }{ P_{u,p,j}^{\hbar,b} \mathcal{P}_{s} \Psi } \id u \id p \\
		& - \hbar N \int_{\mathbb{R}^3} \myp{\nabla f \myp{u}}^2 \id u,
	\end{align*}
	finishing the proof.
\end{proof}

\subsection{Limiting measures, strong magnetic fields}
\label{sec:limitmeasures}
We now fix a real-valued, even and normalized function $ f \in L^2 \myp{\mathbb{R}^3} $, along with a sequence $ \myp{\Psi_N}_{N\geq1} $ of normalized functions with $ \Psi_N \in \bigwedge^N L^2 \myp{\mathbb{R}^3; \mathbb{C}^2} $ for
each $ N $.
We investigate the measures $ \smash{m_{f,\Psi_N}^{\myp{k}}} $ in the limit as $ N $ tends to infinity, when $ \beta_N $ is a sequence with $ \beta_N \to \beta $, $ 0 < \beta \leq \infty $.
This corresponds to the regime where the distance between the Landau bands of the Pauli operator remains bounded from below.
\begin{lem}
\label{lem:measurelim}
	For each $ k \geq 1 $ there is a symmetric function $ m_f^{\myp{k}} \in L^1 \myp{\Omega^k} \cap L^{\infty} \myp{\Omega^k} $ with $ 0\leq \smash{m_f^{\myp{k}}} \leq 1 $ such that, along a common (not displayed) subsequence in $ N $,
	\begin{equation}
	\label{eq:measurelim}
		\int_{\Omega^k} m_{f,\Psi_N}^{\myp{k}} \myp{\xi} \varphi \myp{\xi} \id \xi
		\longrightarrow \int_{\Omega^k} m_{f}^{\myp{k}} \myp{\xi} \varphi \myp{\xi} \id \xi
	\end{equation}
	for all $ \varphi \in L^1 \myp{\Omega^k} + L_{\varepsilon}^{\infty} \myp{\Omega^k} $, as $ N $ tends to infinity.
\end{lem}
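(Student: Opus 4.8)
The plan is to produce $m_f^{(k)}$ by a weak-$*$ compactness argument in $L^\infty(\Omega^k)$, to diagonalize over $k$ so that a single subsequence serves all orders at once, and finally to enlarge the class of admissible test functions from $L^1(\Omega^k)$ to $L^1(\Omega^k)+L^\infty_\varepsilon(\Omega^k)$ using a uniform $L^1$ bound on the densities $m_{f,\Psi_N}^{(k)}$.

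First I would record two uniform bounds on the family $(m_{f,\Psi_N}^{(k)})_N$. The pointwise bound $0\le m_{f,\Psi_N}^{(k)}\le1$ is \eqref{eq:spinmeasurebound}. For the $L^1$ bound I combine the total-mass identity \eqref{eq:spincompatibility1} with the scaling relation $b/(\hbar^2 N)=\beta_N/(1+\beta_N)$ from \eqref{eq:scaling}: since $b^k/(2\pi\hbar)^{2k}=\bigl((2\pi)^{-2}b\hbar^{-2}\bigr)^k$, one obtains
\begin{equation*}
	\int_{\Omega^k} m_{f,\Psi_N}^{(k)}(\xi)\id\xi
	= \myp[\Big]{\frac{(2\pi)^2(1+\beta_N)}{N\beta_N}}^{k}\frac{N!}{(N-k)!}
	\le \myp[\Big]{\frac{(2\pi)^2(1+\beta_N)}{\beta_N}}^{k},
\end{equation*}
and since $\beta_N\to\beta\in(0,\infty]$ the right-hand side is bounded by a constant $C_k<\infty$ independent of $N$. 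Hence $(m_{f,\Psi_N}^{(k)})_N$ is bounded in $L^1(\Omega^k)\cap L^\infty(\Omega^k)$.

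Because $\Omega=\mathbb{R}^3\times\mathbb{R}\times\mathbb{N}_0\times\Set{\pm1}$ is $\sigma$-finite, $L^1(\Omega^k)$ is separable, so bounded subsets of $L^\infty(\Omega^k)=L^1(\Omega^k)^*$ are weak-$*$ sequentially compact. For each fixed $k$ I extract a subsequence along which $m_{f,\Psi_N}^{(k)}$ converges weak-$*$ to some $m_f^{(k)}\in L^\infty(\Omega^k)$; taking nested subsequences and diagonalizing yields one subsequence valid for all $k\ge1$ simultaneously. The weak-$*$ limit automatically satisfies $0\le m_f^{(k)}\le1$ a.e. (test against nonnegative $L^1$ functions supported on $\{m_f^{(k)}<0\}$, resp.\ $\{m_f^{(k)}>1\}$), and it is symmetric (symmetrize the test function, using symmetry of each $m_{f,\Psi_N}^{(k)}$ from \cref{lem:spinmeasureprop}). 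Testing against the indicators of a sequence of finite-measure sets exhausting $\Omega^k$ and then invoking monotone convergence gives $\normt{m_f^{(k)}}{1}\le C_k$, so $m_f^{(k)}\in L^1(\Omega^k)\cap L^\infty(\Omega^k)$. Weak-$*$ convergence already yields \eqref{eq:measurelim} for every $\varphi\in L^1(\Omega^k)$.

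It remains to pass to $\varphi\in L^1(\Omega^k)+L^\infty_\varepsilon(\Omega^k)$. Given such a $\varphi$ and $\delta>0$, fix $\varepsilon$ with $2C_k\varepsilon<\delta$ and decompose $\varphi=\varphi_1+\varphi_2$ with $\varphi_1\in L^1(\Omega^k)$ and $\normt{\varphi_2}{\infty}\le\varepsilon$; then
\begin{equation*}
	\abs[\Big]{\int_{\Omega^k}\myp*{m_{f,\Psi_N}^{(k)}-m_f^{(k)}}\varphi\id\xi}
	\le \abs[\Big]{\int_{\Omega^k}\myp*{m_{f,\Psi_N}^{(k)}-m_f^{(k)}}\varphi_1\id\xi}
	+ \varepsilon\myp[\big]{\normt{m_{f,\Psi_N}^{(k)}}{1}+\normt{m_f^{(k)}}{1}},
\end{equation*}
where the last term is $\le 2C_k\varepsilon<\delta$ uniformly in $N$ and the first term vanishes as $N\to\infty$ by the previous step, so $\limsup_{N}\abs[\big]{\int_{\Omega^k}(m_{f,\Psi_N}^{(k)}-m_f^{(k)})\varphi\id\xi}\le\delta$; letting $\delta\downarrow0$ gives \eqref{eq:measurelim} in full. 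I do not expect a genuine obstacle; the only delicate point is the uniform $L^1$ bound, and this is precisely where the hypothesis $\beta>0$ enters (it keeps $\beta_N$ bounded away from $0$, hence $(1+\beta_N)/\beta_N$ bounded) — which is why the complementary regime $\beta_N\to0$ is treated separately in \cref{sec:lowerbound2}.
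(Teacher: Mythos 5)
Your proposal is correct and fills in exactly the ``standard exercise in functional analysis'' that the paper leaves to the reader: weak-$*$ compactness in $L^\infty(\Omega^k)$ from the uniform $L^\infty$ and $L^1$ bounds of \cref{lem:spinmeasureprop}, a diagonal extraction to get one subsequence for all $k$, and a density/$3\varepsilon$ argument to pass from $L^1$ test functions to $L^1+L^\infty_\varepsilon$. You also correctly pinpoint where the standing hypothesis $\beta_N\to\beta\in(0,\infty]$ of \cref{sec:limitmeasures} is used, namely in turning \eqref{eq:spincompatibility1} into an $N$-uniform $L^1$ bound.
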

The proof of this lemma is a standard exercise in functional analysis, using the boundedness of the sequence $ \myp{m_{f,\Psi_N}^{\myp{k}} }_{N\geq k} $ both in $ L^1 \myp{\Omega^k} $ and in $ L^{\infty} \myp{\Omega^k} $, and we leave the details to the reader.

If the sequence of measures $ \myp{ m_{f,\Psi_N}^{\myp{k}} }_{N\geq k} $ is \emph{tight}, that is, if
\begin{equation*}
	\lim\limits_{R \to \infty} \limsup\limits_{N\to \infty} \int_{\abs{\xi_1}+ \cdots + \abs{\xi_k}\geq R} m_{f,\Psi_N}^{\myp{k}} \myp{\xi} \id \xi =0.
\end{equation*}
then all the properties of the measures in \cref{lem:spinmeasureprop} carry over to the limit, and the weak convergence in \cref{lem:measurelim} is strengthened.
We collect these observations in the lemma below, but the proof (which is elementary) will be omitted.
The key ingredient for the proof is the fact that
\begin{equation*}
	\int_{\abs{\xi} \leq R} m_{f,\Psi_N}^{\myp{k}} \myp{\xi} \id \xi
	\xrightarrow{R \to \infty} \int_{\Omega^k} m_{f,\Psi_N}^{\myp{k}} \myp{\xi} \id \xi
	= \frac{\myp{2 \pi \hbar}^{2k}}{b^k} \frac{N!}{\myp{N-k}!}
\end{equation*}
uniformly in $ N $ as $ R $ tends to infinity, whenever $ \myp{ m_{f,\Psi_N}^{\myp{k}} }_{N\geq k} $ is tight.
\begin{lem}
\label{lem:limmeasures}
	Suppose that $ \myp{ m_{f,\Psi_N}^{\myp{1}} }_{N \in \mathbb{N}} $ is a tight sequence.
	Then we have
	\begin{enumerate}
		\item $ \myp{ m_{f,\Psi_N}^{\myp{k}} }_{N\geq k} $ is also tight for each $ k \geq 1 $.
		
		\item The limit measures $ m_f^{\myp{k}} $ are probability measures.
		More precisely,
		\begin{equation}
		\label{eq:measurelimprob}
			\frac{1}{\myp{2 \pi}^{2k}} \frac{\beta^k}{\myp{1+\beta}^k} \int_{\Omega^k} m_f^{\myp{k}} \myp{\xi} \id \xi = 1.
		\end{equation}
		
		\item The compatibility relation \eqref{eq:spincompatibility2} is preserved in the limit, that is, for $ k \geq 2 $ and almost every $ \xi \in \Omega^{k-1} $,
		\begin{equation}
		\label{eq:measurelimcomp}
			\frac{1}{\myp{2 \pi}^2} \frac{\beta}{1+\beta} \int_{\Omega} m_f^{\myp{k}} \myp{\xi,\xi_k} \id \xi_k
			= m_f^{\myp{k-1}} \myp{\xi}.
		\end{equation}
		
		\item The convergence in \eqref{eq:measurelim} holds on all of $ L^1 \myp{\Omega^k} + L^{\infty} \myp{\Omega^k} $.
	\end{enumerate}
\end{lem}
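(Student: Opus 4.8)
The plan is to deduce all four assertions from the compatibility relations and mass identities of \cref{lem:spinmeasureprop} together with the weak convergence of \cref{lem:measurelim}, using tightness throughout to control the behaviour near infinity in phase space.

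For (1), I would observe that $\myt{\sum_{\ell=1}^k\abs{\xi_\ell}\geq R}\subseteq\bigcup_{\ell=1}^k\myt{\abs{\xi_\ell}\geq R/k}$, so by the symmetry of $m_{f,\Psi_N}^{\myp{k}}$ the tail integral over $\myt{\sum_\ell\abs{\xi_\ell}\geq R}$ is at most $k\int_{\abs{\xi_k}\geq R/k}m_{f,\Psi_N}^{\myp{k}}\myp{\xi}\id\xi$. Iterating \eqref{eq:spincompatibility2} to integrate out $\xi_1,\dotsc,\xi_{k-1}$ expresses this last quantity as $\myp{\myp{2\pi\hbar}^2/b}^{k-1}\frac{\myp{N-1}!}{\myp{N-k}!}\int_{\abs{\xi_k}\geq R/k}m_{f,\Psi_N}^{\myp{1}}\myp{\xi_k}\id\xi_k$, and by \eqref{eq:scaling} the prefactor $\myp{\myp{2\pi\hbar}^2/b}^{k-1}\frac{\myp{N-1}!}{\myp{N-k}!}$ converges to a finite constant as $N\to\infty$. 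Taking first $\limsup_N$ and then $R\to\infty$, and invoking the assumed tightness of $\myp{m_{f,\Psi_N}^{\myp{1}}}$, gives (1).

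For (2) and (3) the essential point is that the natural test functions here — the constant $1$, and $\varphi\otimes 1$ for $\varphi$ on $\Omega^{k-1}$ — do not decay at infinity and hence are not covered by \cref{lem:measurelim}; the remedy is to truncate. For fixed $R$ I would split $1=\mathds{1}_{\abs{\xi}\leq R}+\mathds{1}_{\abs{\xi}>R}$, apply the weak convergence to the (now $L^1$) truncated piece, bound the remainder uniformly in $N$ using $0\leq m_{f,\Psi_N}^{\myp{k}}\leq 1$ and the tightness from (1), and then let $R\to\infty$. Combining with the mass identity \eqref{eq:spincompatibility1} and the scaling relations \eqref{eq:scaling}, which together give $\frac{\beta_N^k}{\myp{2\pi}^{2k}\myp{1+\beta_N}^k}\int_{\Omega^k}m_{f,\Psi_N}^{\myp{k}}=\frac{N!}{N^k\myp{N-k}!}\to 1$, this shows that $\int_{\Omega^k}m_f^{\myp{k}}$ equals the limit of $\int_{\Omega^k}m_{f,\Psi_N}^{\myp{k}}$, hence \eqref{eq:measurelimprob}. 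For (3) I would test \eqref{eq:spincompatibility2} against $\varphi\in L^1\myp{\Omega^{k-1}}\cap L^\infty\myp{\Omega^{k-1}}$ and divide by $N$, so that the left-hand prefactor $\frac{b}{\myp{2\pi\hbar}^2}$ becomes $\frac{1}{\myp{2\pi}^2}\frac{\beta_N}{1+\beta_N}$ and the right-hand one becomes $\frac{N-k+1}{N}$; passing to the limit, with the same truncation trick applied to $\varphi\otimes 1$ (the tail being controlled by $\normt{\varphi}{\infty}$ times the tightness bound), and using that $\varphi$ is arbitrary, yields \eqref{eq:measurelimcomp}.

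Finally, for (4), given $\varphi=\varphi_1+\varphi_2$ with $\varphi_1\in L^1\myp{\Omega^k}$ and $\varphi_2\in L^\infty\myp{\Omega^k}$, I would write $\varphi_2=\varphi_2\mathds{1}_{\abs{\xi}\leq R}+\varphi_2\mathds{1}_{\abs{\xi}>R}$; since $\varphi_1+\varphi_2\mathds{1}_{\abs{\xi}\leq R}\in L^1\myp{\Omega^k}$, \cref{lem:measurelim} handles it, while the contribution of $\varphi_2\mathds{1}_{\abs{\xi}>R}$ is bounded by $\normt{\varphi_2}{\infty}$ times the tail mass, which is small uniformly in $N$ by (1) and small for $m_f^{\myp{k}}$ by the finiteness from (2); letting $R\to\infty$ after $N\to\infty$ closes the argument. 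The only genuine obstacle throughout is the non-integrability of these constant-at-infinity test functions, which is precisely the reason tightness is assumed; everything else is bookkeeping with the compatibility relations and the explicit constants of \eqref{eq:scaling}.
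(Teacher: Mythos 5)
Your proof is correct and takes essentially the same approach the paper intends: the paper omits the proof entirely, remarking only that the key ingredient is the uniform-in-$N$ convergence $\int_{\abs{\xi}\leq R}m_{f,\Psi_N}^{\myp{k}}\to\int_{\Omega^k}m_{f,\Psi_N}^{\myp{k}}=\frac{\myp{2\pi\hbar}^{2k}}{b^k}\frac{N!}{\myp{N-k}!}$ as $R\to\infty$ under tightness, and your truncation argument in parts (2)--(4), together with the reduction of $k$-particle tails to $1$-particle tails via symmetry and the compatibility relation \eqref{eq:spincompatibility2}, is exactly this.
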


We now formulate the de Finetti theorem which serves as the main abstract tool in our proof of the lower bound of the energy in \cref{thm:energylowbound}.
The version of the theorem below is essentially \cite[Theorem $ 2.6 $]{FouLewSol-18}.
For some additional details, see e.g. \cite{MadThese}.
\begin{theo}[de Finetti]
\label{thm:finetti}
	Let $ M \subseteq \Omega $ be a locally compact subset, and $ m^{\myp{k}} \in L^1 \myp{M^k} $ a family of symmetric positive densities satisfying for some $ c > 0 $ and all $ k \geq 1 $ that $ 0 \leq m^{\myp{k}} \leq 1 $, and
	\begin{equation*}
		c \int_M m^{\myp{k}} \myp{\xi_1, \dotsc, \xi_k} \id \xi_k = m^{\myp{k-1}} \myp{\xi_1, \dotsc, \xi_{k-1}}
	\end{equation*}
	with $ m^{\myp{0}} = 1 $.
	Then there exists a unique Borel probability measure $ \mathrm{P} $ on the set
	\begin{equation*}
		\mathcal{S} = \Set[\big]{\mu \in L^1 \myp{M} \given 0 \leq \mu \leq 1, \quad c \int_M \mu \myp{\xi} \id \xi = 1}
	\end{equation*}
	such that for all $ k \geq 1 $, in the sense of measures,
	\begin{equation}
	\label{eq:definetti}
		m^{\myp{k}} = \int_{\mathcal{S}} \mu^{\otimes k} \id \mathrm{P} \myp{\mu}.
	\end{equation}
\end{theo}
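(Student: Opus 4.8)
The plan is to reduce the statement to the classical Hewitt--Savage de Finetti theorem on the sequence space $M^{\mathbb{N}}$, and then to \emph{upgrade} the resulting representation -- which a priori produces a measure on probability measures -- to a representation by $L^1$-densities, using the uniform bound $0\le m^{\myp{k}}\le 1$. First I would rescale: set $\nu_k := c^k m^{\myp{k}}\,\mathrm{d}\xi$ on $M^k$. The normalization $m^{\myp{0}}=1$ together with the stated consistency relation gives, by an immediate induction, $\int_{M^k} m^{\myp{k}}\,\mathrm{d}\xi = c^{-k}$, so each $\nu_k$ is a probability measure; symmetry of $m^{\myp{k}}$ makes $\nu_k$ symmetric; and the same consistency relation says precisely that the marginal of $\nu_k$ onto the first $k-1$ coordinates is $\nu_{k-1}$. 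Thus $(\nu_k)_{k\ge1}$ is a projective family of symmetric probability measures on the Polish space $M$ (a locally compact second countable Hausdorff space is Polish), so by the Kolmogorov extension theorem there is an exchangeable Borel probability measure $\nu$ on $M^{\mathbb{N}}$ with finite-dimensional marginals $\nu_k$. The Hewitt--Savage theorem then yields a unique Borel probability measure $P$ on $\mathcal{M}_1(M)$, the space of probability measures on $M$, with $\nu=\int \pi^{\otimes\mathbb{N}}\,\mathrm{d}P(\pi)$; in particular $\nu_k=\int_{\mathcal{M}_1(M)}\pi^{\otimes k}\,\mathrm{d}P(\pi)$ for every $k$.

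The core of the argument is then to show that $P$ is concentrated on the absolutely continuous measures whose density lies in $[0,c]$; once this is known, writing $\pi=c\mu_\pi\,\mathrm{d}\xi$ with $\mu_\pi\in\mathcal{S}$ and unfolding $\pi^{\otimes k}=c^k\mu_\pi^{\otimes k}\,\mathrm{d}\xi^{\otimes k}$ turns the Hewitt--Savage representation into $m^{\myp{k}}=\int \mu_\pi^{\otimes k}\,\mathrm{d}P(\pi)=\int_{\mathcal{S}}\mu^{\otimes k}\,\mathrm{d}\mathrm{P}(\mu)$, where $\mathrm{P}$ is the pushforward of $P$ under the Borel map $\pi\mapsto\mu_\pi=c^{-1}\,\mathrm{d}\pi/\mathrm{d}\xi$. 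To prove the concentration claim I would fix a countable algebra $\mathcal{A}$ of Borel subsets of $M$ of finite $\mathrm{d}\xi$-measure that generates the Borel $\sigma$-algebra and is stable under finite Boolean operations (e.g.\ dyadic boxes in the continuous variables times singletons in the discrete ones, together with their finite unions, intersections and complements). For $B\in\mathcal{A}$ with $\mathrm{d}\xi(B)>0$, the bound $m^{\myp{k}}\le1$ gives $\int \pi(B)^k\,\mathrm{d}P(\pi)=\nu_k(B^k)=c^k\int_{B^k}m^{\myp{k}}\le c^k\,\mathrm{d}\xi(B)^k$, hence $\int (\pi(B)/\mathrm{d}\xi(B))^k\,\mathrm{d}P\le c^k$ for all $k$, which forces $\pi(B)\le c\,\mathrm{d}\xi(B)$ for $P$-a.e.\ $\pi$ (if it failed on a set of positive $P$-measure the $k$-th moments would grow like $(c+\varepsilon)^k$); when $\mathrm{d}\xi(B)=0$ the same computation gives $\pi(B)=0$ $P$-a.e. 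Intersecting over the countable family $\mathcal{A}$, $P$-a.e.\ $\pi$ satisfies $\pi(B)\le c\,\mathrm{d}\xi(B)$ for all $B\in\mathcal{A}$ simultaneously; writing an arbitrary open set as a countable disjoint union of elements of $\mathcal{A}$ and then invoking outer regularity of $\mathrm{d}\xi$ together with inner regularity of $\pi$, this extends to $\pi(E)\le c\,\mathrm{d}\xi(E)$ for all Borel $E$, i.e.\ $\pi\ll\mathrm{d}\xi$ with $0\le\mathrm{d}\pi/\mathrm{d}\xi\le c$, so $\mu_\pi\in\mathcal{S}$.

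For uniqueness, the map $\Phi:\mathcal{S}\to\mathcal{M}_1(M)$, $\mu\mapsto c\mu\,\mathrm{d}\xi$, is a Borel isomorphism onto its image (the absolutely continuous probability measures with density $\le c$, a Borel subset of $\mathcal{M}_1(M)$), its inverse being the Radon--Nikodym map; so if $\mathrm{P}'$ is another measure realizing $m^{\myp{k}}=\int \mu^{\otimes k}\,\mathrm{d}\mathrm{P}'$, then $\Phi_*\mathrm{P}'$ satisfies $\nu_k=\int \pi^{\otimes k}\,\mathrm{d}(\Phi_*\mathrm{P}')(\pi)$ for all $k$, whence $\Phi_*\mathrm{P}'=P$ by the uniqueness in Hewitt--Savage and therefore $\mathrm{P}'=\Phi^{-1}_*P=\mathrm{P}$. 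The main obstacle is the second paragraph: passing from ``for each test set, the estimate holds for $P$-almost every $\pi$'' to ``for $P$-almost every $\pi$, the estimate holds for all Borel sets'', which is exactly the point where the countable generating algebra, the moment trick powered by $m^{\myp{k}}\le1$, and the regularity of Radon measures on $M$ must be combined; the remaining bookkeeping (measurability of the Radon--Nikodym map, Kolmogorov extension over a Polish space, exchangeability of $\nu$) is routine.
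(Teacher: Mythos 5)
The paper does not prove this theorem itself but cites Theorem~2.6 of Fournais--Lewin--Solovej \cite{FouLewSol-18} (with further details in \cite{MadThese}), whose proof runs exactly along your lines: Kolmogorov extension to an exchangeable probability measure on $M^{\mathbb{N}}$, the classical Hewitt--Savage theorem, and then a moment argument powered by $m^{(k)}\le 1$ to show that the representing measure concentrates on absolutely continuous $\pi$ with $0\le \mathrm{d}\pi/\mathrm{d}\xi\le c$. Your proposal is correct and matches that route; two minor cosmetic slips are that $\mathcal{A}$ should be taken as a countable \emph{ring} of finite-$\xi$-measure sets rather than an algebra when $\xi(M)=\infty$, and that the final extension from the ring to all Borel sets needs only monotonicity of $\pi$ together with outer regularity of the reference measure, not inner regularity of $\pi$.
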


\section{Lower energy bounds, strong fields}
\label{sec:lowerbound}
Throughout this section we suppose that the potentials $ V $ and $ w $ satisfy the assumptions of \cref{thm:energylowbound}, and that $ \myp{\beta_N} $ is a sequence satisfying $ \beta_N \to \beta $ with $ 0 < \beta \leq \infty $ and \eqref{eq:scaling2}.
We further assume that the auxiliary function $ f $ is smooth and compactly supported.
\begin{lem}
\label{lem:tightness}
	Suppose that $ \Psi_N \in \bigwedge^N L^2 \myp{\mathbb{R}^3; \mathbb{C}^2} $ is a sequence satisfying the energy bound $ \innerp{ \Psi_N }{ H_{N,\beta_N} \Psi_N } \leq C N $.
	Then the corresponding semi-classical measures $ \myp{ \smash{m_{f,\Psi_N}^{\myp{k}}} }_{N\geq k} $ are tight.
\end{lem}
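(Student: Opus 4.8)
The plan is to reduce to $k=1$ and then control the phase-space tails of $m_{f,\Psi_N}^{\myp{1}}$ coming from the position variable $u$, the momentum variable $p$, and the Landau index $j$ separately (the spin variable is bounded, hence irrelevant): the position tail will be handled by the coercivity of $V$, and the $p$- and $j$-tails by the kinetic energy bound. By the first assertion of \cref{lem:limmeasures}, tightness of $\myp{m_{f,\Psi_N}^{\myp{1}}}_N$ already implies tightness of $\myp{m_{f,\Psi_N}^{\myp{k}}}_{N\geq k}$ for every $k$, so it suffices to treat $k=1$.

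First I would extract the usable consequences of the hypothesis $\innerp{\Psi_N}{H_{N,\beta_N}\Psi_N}\leq CN$. Taking expectations in the operator inequality \eqref{eq:energybound} and inserting the scaling relations \eqref{eq:scaling} together with the semiclassical constraint \eqref{eq:scaling2}, one checks that the prefactor $\myp{b+1}/\myp{\hbar^2N}+1$ stays bounded, so $\innerp{\Psi_N}{\sum_j\myp[\big]{\tfrac12\myp{\bm{\sigma}\cdot\myp{-i\hbar\nabla_j+bA\myp{x_j}}}^2+V_+\myp{x_j}}\Psi_N}\leq C'N$. Since both summands are non-negative, this yields at once the kinetic energy bound $\innerp{\Psi_N}{\sum_j\myp{\bm{\sigma}\cdot\myp{-i\hbar\nabla_j+bA\myp{x_j}}}^2\Psi_N}\leq 2C'N$ and the confinement bound $\int V_+\,\rho_{\Psi_N}^{\myp{1}}\leq C'N$. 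Feeding the kinetic bound and the fact that $\hbar\to0$ (hence $\hbar N=o\myp{N}$, from \eqref{eq:scaling2}) into the identity \eqref{eq:spinkinenergy} of \cref{lem:spinkinenergy}, and using $2j+1+s\geq0$, one gets that $\int_{\Omega}\myp{p^2+\hbar b\myp{2j+1+s}}\,m_{f,\Psi_N}^{\myp{1}}\id\xi$ is bounded uniformly in $N$; here one uses $\myp{2\pi\hbar}^2/b=\myp{2\pi}^2\myp{1+\beta_N}/\myp{\beta_N N}$ and $\beta_N\to\beta>0$.

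For the position tail I would apply \cref{lem:spinmomentint} with $k=1$, summed over $s$, which gives $\tfrac{b}{\myp{2\pi\hbar}^2}\sum_{s,j}\int_{\mathbb{R}}m_{f,\Psi_N}^{\myp{1}}\myp{u,p,j,s}\id p=\myp{\rho_{\Psi_N}^{\myp{1}}\ast\abs{f^\hbar}^2}\myp{u}$; since $f\in C_c^\infty$ and $\hbar\to0$, the support of $\abs{f^\hbar}^2$ eventually lies in a ball of radius $R/2$, so $\int_{\abs u\geq R}\myp{\rho_{\Psi_N}^{\myp{1}}\ast\abs{f^\hbar}^2}\leq\int_{\abs x\geq R/2}\rho_{\Psi_N}^{\myp{1}}\leq\myp{\inf_{\abs x\geq R/2}V_+}^{-1}C'N$, and multiplying by $\myp{2\pi\hbar}^2/b$ and using $V\myp x\to\infty$ makes the $\Set{\abs u\geq R}$-mass of $m_{f,\Psi_N}^{\myp{1}}$ vanish uniformly as $R\to\infty$. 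For the $p$- and $j$-tails I would use Chebyshev's inequality against the bounded quantity of the previous step: $\int_{\abs p\geq R}m_{f,\Psi_N}^{\myp{1}}\id\xi\leq R^{-2}\int_{\Omega}p^2\,m_{f,\Psi_N}^{\myp{1}}\id\xi$ and, since $\hbar b=k_{\beta_N}$ (cf.\ \eqref{eq:kbeta}) is bounded below by a positive constant for $N$ large and $2j+1+s\geq2j$, $\int_{j\geq R}m_{f,\Psi_N}^{\myp{1}}\id\xi\leq\myp{2k_{\beta_N}R}^{-1}\int_{\Omega}\hbar b\myp{2j+1+s}\,m_{f,\Psi_N}^{\myp{1}}\id\xi$; both right-hand sides are $O\myp{1/R}$ uniformly in $N$. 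Since $\Set{\abs u+\abs p+j\geq R}\subseteq\Set{\abs u\geq R/3}\cup\Set{\abs p\geq R/3}\cup\Set{j\geq R/3}$, adding the three estimates gives tightness for $k=1$, hence for all $k$ by \cref{lem:limmeasures}. The only step requiring genuine care is the bookkeeping of the $N$-dependent prefactors $\myp{2\pi\hbar}^2/b$, $\hbar N$, $\myp{b+1}/\myp{\hbar^2N}$ and $k_{\beta_N}$: all of these are controlled uniformly precisely because $\beta_N\to\beta\in\myp{0,\infty]$ together with \eqref{eq:scaling2}, which is exactly where the standing hypothesis $\beta>0$ of this section is used; everything else is a routine combination of \cref{lem:spinmomentint,lem:spinkinenergy} with Chebyshev's inequality and the coercivity of $V$.
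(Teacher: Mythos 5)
Your proof is correct and follows essentially the same route as the paper's (the paper only sketches the argument): position tightness from \cref{lem:spinmomentint} plus the confinement of $V$, momentum tightness ($p$ and $j$) from \cref{lem:spinkinenergy} plus the kinetic-energy bound, and the upgrade to general $k$ via \cref{lem:limmeasures}. You have merely fleshed out the bookkeeping of the prefactors $\myp{2\pi\hbar}^2/b$, $\hbar b = k_{\beta_N}$, etc., and spelled out the Chebyshev steps, which is exactly the "straightforward" content that the paper leaves implicit.
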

\begin{proof}
	The proof is straightforward, and we only outline it.
	One first uses the energy bound combined with \eqref{eq:scaling2} and the Lieb-Thirring bound \eqref{eq:kinbound} to conclude that $ \frac{1}{N} \smash{\widetilde{\rho}_{\Psi_N}^{\, \myp{1}}} $ is a tight sequence.
	It is essential at this point that $ V $ is a confining potential.
	Then, applying \cref{lem:spinmomentint} and the fact that $ f $ is well localized, it follows that $ \myp{ m_{f,\Psi_N}^{\myp{1}} }_{N\geq 1} $ is tight in the position variable.
	
	On the other hand, using \eqref{eq:kinbound} to bound the kinetic energy and then combining with the expression for the kinetic energy from \cref{lem:spinkinenergy}, it follows that $ \myp{ m_{f,\Psi_N}^{\myp{1}}}_{N\geq 1} $ is also tight in the momentum variables $ \myp{p,j} \in \mathbb{R} \times \mathbb{N}_0 $.
	Now by \cref{lem:limmeasures}, the sequences $ \myp{ \smash{m_{f,\Psi_N}^{\myp{k}}} }_{N\geq k} $ are all tight for $ k \geq 1 $.
\end{proof}
We once again remind the reader of the notational convention \eqref{eq:xi}.
\begin{prop}[Convergence of states]
\label{lem:stateconv}
	Let $ \Psi_N \in \bigwedge^N L^2 \myp{\mathbb{R}^3; \mathbb{C}^2} $ be a sequence satisfying the energy bound $ \innerp{ \Psi_N }{ H_{N,\beta_N} \Psi_N } \leq C N $.
	Then there exist a subsequence $ \myp{N_\ell} \subseteq \mathbb{N} $ and a unique Borel probability measure $ \mathrm{P} $ on the set
	\begin{equation*}
		\mathcal{S} = \Set[\big]{\mu \in L^1 \myp{\Omega} \given 0 \leq \mu \leq 1, \ \frac{1}{\myp{2 \pi}^2} \frac{\beta}{1+\beta} \int_{\Omega} \mu \myp{\xi} \id \xi = 1},
	\end{equation*}
	such that for each $ k \geq 1 $ the following holds:
	\begin{enumerate}
		\item For all
		$ \varphi \in L^1 \myp{\Omega^k} + L^{\infty} \myp{\Omega^k} $,
		\begin{equation}
		\label{eq:stateconv1}
			\int_{\Omega^k} m_{f,\Psi_{N_\ell}}^{\myp{k}} \myp{\xi} \varphi \myp{\xi} \id \xi
			\longrightarrow \int_{\mathcal{S}} \myp[\Big]{ \int_{\Omega^k} \mu^{\otimes k} \myp{\xi} \varphi \myp{\xi} \id \xi} \id \mathrm{P} \myp{\mu}.
		\end{equation}
		as $ \ell $ tends to infinity.
		
		\item For $ U \in L^{5/2} \myp{\mathbb{R}^3 \times \Set{\pm 1}} + L^{\infty} \myp{\mathbb{R}^3 \times \Set{\pm 1}} $ if $ k=1 $, and for any bounded and uniformly continuous function $ U $ on $ \myp{ \mathbb{R}^3 \times \Set{\pm 1} }^{k} $ if $ k \geq 2 $, as $ \ell $ tends to infinity,
		\begin{align}
		\label{eq:stateconv2}
			\MoveEqLeft[3]	\frac{k!}{N_\ell^k} \sum\limits_{s \in \Set{\pm 1}^k} \int_{\mathbb{R}^{3k}} \widetilde{\rho}_{\Psi_{N_\ell}}^{\, \myp{k}} \myp{x,s} U \myp{x,s} \id x \nonumber \\
			&\longrightarrow \int_{\mathcal{S}} \myp[\Big]{ \sum\limits_{s \in \Set{\pm 1}^k} \int_{\mathbb{R}^{3k}} \rho_{\mu}^{\otimes k} \myp{x,s} U \myp{x,s} \id x } \id \mathrm{P} \myp{\mu},
		\end{align}
		where $ \rho_{\mu} $ is the position density
		\begin{equation*}
			\rho_{\mu} \myp{x,s} = \frac{1}{\myp{2 \pi}^2} \frac{\beta}{1+\beta} \sum\limits_{j=0}^{\infty} \int_{\mathbb{R}} \mu \myp{x,p,j,s} \id p.
		\end{equation*}
	\end{enumerate}
\end{prop}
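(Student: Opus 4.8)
The plan is to feed the energy bound into the tightness statement of \cref{lem:tightness}, extract weak limits of all the semi-classical measures along a common subsequence, recognize these limits as the moments of a de Finetti measure via \cref{thm:finetti}, and finally translate the resulting information back onto the position densities using \cref{lem:spinmomentint}. Since $\innerp{\Psi_N}{H_{N,\beta_N}\Psi_N}\le CN$, \cref{lem:tightness} gives that the families $\myp{m_{f,\Psi_N}^{\myp{k}}}_{N\ge k}$ are tight for every $k\ge1$. By \cref{lem:measurelim} together with a diagonal extraction in $k$, there are a subsequence $\myp{N_\ell}$ and symmetric densities $m_f^{\myp{k}}\in L^1\myp{\Omega^k}\cap L^{\infty}\myp{\Omega^k}$ with $0\le m_f^{\myp{k}}\le1$ and $\int_{\Omega^k}m_{f,\Psi_{N_\ell}}^{\myp{k}}\varphi\to\int_{\Omega^k}m_f^{\myp{k}}\varphi$ for all $\varphi\in L^1\myp{\Omega^k}+L^{\infty}_{\varepsilon}\myp{\Omega^k}$. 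As the one-particle sequence is tight, \cref{lem:limmeasures} applies and upgrades this to: convergence for all $\varphi\in L^1\myp{\Omega^k}+L^{\infty}\myp{\Omega^k}$; the normalization $\frac{1}{\myp{2\pi}^{2k}}\frac{\beta^k}{\myp{1+\beta}^k}\int_{\Omega^k}m_f^{\myp{k}}=1$; and the compatibility relation $\frac{1}{\myp{2\pi}^2}\frac{\beta}{1+\beta}\int_{\Omega}m_f^{\myp{k}}\myp{\xi,\xi_k}\id\xi_k=m_f^{\myp{k-1}}\myp{\xi}$ with $m_f^{\myp{0}}=1$.

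The set $\Omega=\mathbb{R}^3\times\mathbb{R}\times\mathbb{N}_0\times\Set{\pm1}$ is locally compact, so \cref{thm:finetti}, applied with $M=\Omega$ and $c=\frac{1}{\myp{2\pi}^2}\frac{\beta}{1+\beta}$, produces a unique Borel probability measure $\mathrm{P}$ on $\mathcal{S}$ with $m_f^{\myp{k}}=\int_{\mathcal{S}}\mu^{\otimes k}\id\mathrm{P}\myp{\mu}$ for every $k$. Inserting this representation into the weak convergence from the previous step yields part~(1) at once.

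For part~(2) I would begin from the identity of \cref{lem:spinmomentint}; dividing by $N_\ell^k$ and using the scaling relation $b/\hbar^2=N\beta_N/\myp{1+\beta_N}$ from \eqref{eq:scaling}, it reads
\begin{equation*}
	\frac{k!}{N_\ell^k}\myp[\big]{\widetilde{\rho}_{\Psi_{N_\ell}}^{\,\myp{k}}\ast\myp{\abs{f^\hbar}^2}^{\otimes k}}\myp{u,s}
	=\frac{1}{\myp{2\pi}^{2k}}\frac{\beta_{N_\ell}^k}{\myp{1+\beta_{N_\ell}}^k}\sum_{j\in\mathbb{N}_0^k}\int_{\mathbb{R}^k}m_{f,\Psi_{N_\ell}}^{\myp{k}}\myp{u,p,j,s}\id p .
\end{equation*}
Pairing with $U$ and summing over $s$, the right-hand side is $\frac{1}{\myp{2\pi}^{2k}}\frac{\beta_{N_\ell}^k}{\myp{1+\beta_{N_\ell}}^k}$ times $\int_{\Omega^k}m_{f,\Psi_{N_\ell}}^{\myp{k}}\myp{\xi}\,U\myp{u,s}\id\xi$. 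Because $\myp{u,p,j,s}\mapsto U\myp{u,s}$ is not an admissible test function on $\Omega^k$, I would truncate: for $R>0$ split it into $U\myp{u,s}\mathds{1}_{\Set{\abs{p_\ell}\le R,\,j_\ell\le R\ \forall\ell}}$, which belongs to $L^{\infty}\myp{\Omega^k}$ so that part~(1) applies, plus a remainder whose contribution is $\le\norm{U}_{\infty}$ times the mass of $m_{f,\Psi_{N_\ell}}^{\myp{k}}$ (respectively of $m_f^{\myp{k}}$) outside the truncation region, which goes to $0$ as $R\to\infty$ uniformly in $\ell$ by tightness. Passing to the limit and using $m_f^{\myp{k}}=\int\mu^{\otimes k}\id\mathrm{P}$, Fubini, and the elementary identity $\frac{1}{\myp{2\pi}^{2k}}\frac{\beta^k}{\myp{1+\beta}^k}\sum_j\int\mu^{\otimes k}\id p=\rho_\mu^{\otimes k}$, one recovers the right-hand side of \eqref{eq:stateconv2} but with $\widetilde{\rho}_{\Psi_{N_\ell}}^{\,\myp{k}}$ replaced by its mollification. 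To remove the mollification I would use that $\abs{f^\hbar}^2$ is an even approximate identity concentrating on scale $\sqrt\hbar\to0$: for bounded uniformly continuous $U$ the error is $\le\frac{k!}{N_\ell^k}\norm{\widetilde{\rho}_{\Psi_{N_\ell}}^{\,\myp{k}}}_1$, which is $\le1$, times the modulus of continuity of $U$ at scale $\sqrt\hbar$, hence negligible.

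Finally, for $k=1$ and a general weight $U\in L^{5/2}\myp{\mathbb{R}^3\times\Set{\pm1}}+L^{\infty}\myp{\mathbb{R}^3\times\Set{\pm1}}$, I would first run the above argument with $U\in C_c\myp{\mathbb{R}^3\times\Set{\pm1}}$ to identify, for each spin $s$, the weak $L^{5/3}$-limit of $\frac1{N_\ell}\widetilde{\rho}_{\Psi_{N_\ell}}^{\,\myp{1}}\myp{\cdot,s}$ — bounded in $L^1\cap L^{5/3}$ by \cref{lem:densitybound}, as it is dominated by $\frac1{N_\ell}\rho_{\Psi_{N_\ell}}^{\myp{1}}$ — as $\int_{\mathcal{S}}\rho_\mu\myp{\cdot,s}\id\mathrm{P}\myp{\mu}$, and then extend the class of admissible weights to $L^{5/2}+L^{\infty}_{\varepsilon}$ by the standard functional-analytic argument already used in the final part of \cref{lem:densitybound}. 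I expect the main obstacle to be exactly this passage from the semi-classical measures to the honest position densities: a test function depending only on position is inadmissible on $\Omega^k$, forcing the tightness-based truncation, and the convolution by $\abs{f^\hbar}^2$ must be peeled off, which for the weighted $L^{5/2}+L^{\infty}$ class when $k=1$ needs the detour through weak $L^{5/3}$-compactness; the de Finetti step itself is routine once the preliminary lemmas of \cref{sec:semimeasures} are in place.
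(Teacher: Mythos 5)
Your proposal follows the paper's own proof essentially step for step: tightness (\cref{lem:tightness}) feeding into \cref{lem:measurelim,lem:limmeasures}, the de Finetti theorem (\cref{thm:finetti}) producing $\mathrm{P}$, part~(1) as an immediate corollary, and part~(2) obtained from \cref{lem:spinmomentint} together with removal of the mollification for uniformly continuous $U$ and the $k=1$ extension via \cref{lem:densitybound}. One small inconsistency worth flagging: having already invoked \cref{lem:limmeasures} to conclude that the convergence in part~(1) holds for all $\varphi\in L^1\myp{\Omega^k}+L^{\infty}\myp{\Omega^k}$, you should note that for bounded $U$ the lift $\varphi\myp{u,p,j,s}:=U\myp{u,s}$ lies in $L^{\infty}\myp{\Omega^k}$ and is therefore directly admissible in \eqref{eq:stateconv1}; your claim that such $\varphi$ is ``not an admissible test function'' is incorrect, and the tightness-based truncation you propose, while valid, is an unnecessary detour that effectively re-proves item~(4) of \cref{lem:limmeasures} instead of using it --- which is precisely what the paper does.
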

\begin{proof}
	Consider the subsequence $ \myp{N_\ell} $ along with the limit measures $ \myp{ m_f^{\myp{k}} } $ from \cref{lem:measurelim}.
	Throughout the proof, we will suppress the subsequence from the notation.
	By \cref{lem:tightness}, the measures $ \myp{ m_{f,\Psi_{N}}^{\myp{k}} }_{N \geq k} $ are tight, the limit measures are ensured by \cref{lem:limmeasures} to satisfy the compatibility relation
	\begin{equation*}
		c \int_{\Omega} m_f^{\myp{k}} \myp{\xi_1, \dotsc, \xi_k} \id \xi_k = m_f^{\myp{k-1}} \myp{\xi_1, \dotsc, \xi_{k-1}},
	\end{equation*}
	with $ c = \frac{1}{\myp{2 \pi}^2} \frac{\beta}{1+\beta} $.
	Hence by the de Finetti \cref{thm:finetti} we have a unique Borel probability measure $ \mathrm{P} $ on $ \mathcal{S} $ such that
	\begin{equation*}
		m_f^{\myp{k}} = \int_{\mathcal{S}} \mu^{\otimes k} \id \mathrm{P} \myp{\mu}.
	\end{equation*}
	It follows that \eqref{eq:stateconv1} holds, since
	\begin{equation*}
	\label{eq:pintegration}
		\int_{\Omega^k} m_f^{\myp{k}} \myp{\xi} \varphi \myp{\xi} \id \xi 
		= \int_{\mathcal{S}} \myp[\Big]{ \int_{\Omega^k} \mu^{\otimes k} \myp{\xi} \varphi \myp{\xi} \id \xi } \id \mathrm{P} \myp{\mu}
	\end{equation*}
	for each $ \varphi \in L^1 \myp{\Omega^k} + L^{\infty} \myp{\Omega^k} $, by definition of the measure $ \int_{\mathcal{S}} \mu^{\otimes k} \id \mathrm{P} \myp{\mu} $.
	
	Now, if $ U $ is a bounded function on $ \myp{\mathbb{R}^3 \times \Set{\pm 1}}^{k} $, we define $ \varphi \in L^{\infty} \myp{\Omega^k} $ by $ \varphi \myp{x,p,j,s} := U \myp{x,s} $.
	Then by \eqref{eq:stateconv1} we have as $ N $ tends to infinity,
	\begin{align}
		\MoveEqLeft[3]	\sum\limits_{s \in \Set{\pm 1}^k} \int_{\mathbb{R}^{3k}} \rho_{m_{f,\Psi_{N}}^{\myp{k}}} \myp{x,s} U \myp{x,s} \id x \nonumber \\
		&\longrightarrow \frac{1}{\myp{2 \pi}^{2k}} \frac{\beta^k}{\myp{1+\beta}^k} \int_{\mathcal{S}} \myp[\Big]{ \int_{\Omega^k} \mu^{\otimes k} \myp{\xi} \varphi \myp{\xi} \id \xi} \id \mathrm{P} \myp{\mu} \nonumber \\
		&= \int_{\mathcal{S}} \myp[\Big]{ \sum\limits_{s \in \Set{\pm 1}^k} \int_{\mathbb{R}^{3k}} \rho_{\mu}^{\otimes k} \myp{x,s} U \myp{x,s} \id x } \id \mathrm{P} \myp{\mu},
	\end{align}
	so in order to show \eqref{eq:stateconv2} it suffices to see that $ \frac{k!}{N^k} \widetilde{\rho}_{\Psi_{N}}^{\, \myp{k}} $ has the same weak limit as $ \rho_{m_{f,\Psi_{N}}^{\myp{k}}} $ on the set of bounded, uniformly continuous functions on $ \myp{ \mathbb{R}^3 \times \Set{\pm 1} }^{k} $.
	However, by \cref{lem:spinmomentint},
	\begin{equation*}
		\rho_{m_{f,\Psi_{N}}^{\myp{k}}} \myp{x,s} 
		= \frac{\beta^k}{\myp{1+\beta}^k} \frac{\hbar^{2k} N^k}{b^k} \frac{k!}{N^k} \myp[\big]{ \widetilde{\rho}_{\Psi_N}^{\, \myp{k}} \ast \myp{ \abs{ f^{\hbar} }^2 }^{\otimes k} } \myp{x,s},
	\end{equation*}
	where $ \frac{\beta^k}{\myp{1+\beta}^k} \frac{\hbar^{2k} N^k}{b^k} \to 1 $ when $ N \to \infty $, so it suffices to show that $ \frac{k!}{N^k} \widetilde{\rho}_{\Psi_{N}}^{\, \myp{k}} $ and $ \frac{k!}{N^k} \widetilde{\rho}_{\Psi_N}^{\, \myp{k}} \ast \myp{ \abs{f^{\hbar}}^2 }^{\otimes k} $ have the same weak limit.
	However, this follows easily from the boundedness of $ \frac{k!}{N^k} \widetilde{\rho}_{\Psi_{N}}^{\, \myp{k}} $ in
	$ L^1 \myp{( \mathbb{R}^3 \times \Set{\pm 1} )^{k}} $, and from the fact that $ \lim_{\hbar \to 0} \normt{ U - U \ast \myp{ \abs{ f^{\hbar} }^2 }^{\otimes k} }{\infty} = 0 $ whenever $ U $ is a uniformly continuous and bounded function on $ ( \mathbb{R}^3 \times \Set{\pm 1} )^{k} $.
	
	For $ k=1 $ we appeal to \cref{lem:densitybound} and the tightness of $ \frac{1}{N} \widetilde{\rho}_{\Psi_N}^{\, \myp{1}} $ to obtain convergence for test functions $ U \in L^{5/2} \myp{\mathbb{R}^3 \times \Set{\pm 1}} + L^{\infty} \myp{\mathbb{R}^3 \times \Set{\pm 1}} $.
\end{proof}
In the case where $ \beta_N \to \infty $ we can further refine the assertions of \cref{lem:stateconv} above.
\begin{cor}[Convergence of states, strong field regime]
\label{cor:stateconvstrong}
	Suppose that $ \myp{\beta_N} $ satisfies $ \beta_N \to \infty $ and \eqref{eq:scaling2}, and that $ \Psi_N \in \bigwedge^N L^2 \myp{\mathbb{R}^3; \mathbb{C}^2} $ is a sequence satisfying the energy bound $ \innerp{ \Psi_N }{ H_{N,\beta_N} \Psi_N } \leq C N $.
	Then the measure $ \mathrm{P} $ from \cref{lem:stateconv} is supported on the set
	\begin{equation*}
		\widetilde{\mathcal{S}} = \Set[\big]{\mu \in L^1 \myp{\mathbb{R}^3 \times \mathbb{R}} \given 0 \leq \mu \leq 1, \ \frac{1}{\myp{2 \pi}^2} \iint_{\mathbb{R}^3 \times \mathbb{R}} \mu \myp{x,p} \id x \id p = 1},
	\end{equation*}
	where each $ \widetilde{\mu} \in \widetilde{\mathcal{S}} $ is identified with a density $ \mu \in \mathcal{S} $ by
	\begin{equation*}
		\mu \myp{x,p,j,s} =  \begin{cases} \widetilde{\mu} \myp{x,p}, & \text{if } j=0 \text{ and } s=-1, \\ 0, & \text{otherwise,} \end{cases}
	\end{equation*}
	and for each $ k \geq 1 $ the following holds:
	\begin{enumerate}
		\item For all $ \varphi \in L^1 \myp{\Omega^{k}} + L^{\infty} \myp{\Omega^{k}} $, as $ \ell $ tends to infinity,
		\begin{align}
			\MoveEqLeft[3]	\int_{\Omega^k} m_{f,\Psi_{N_\ell}}^{\myp{k}} \myp{\xi} \varphi \myp{\xi} \id \xi \nonumber \\
			&\longrightarrow \int_{\widetilde{\mathcal{S}}} \myp[\Big]{ \int_{\mathbb{R}^{4k}} \mu^{\otimes k} \myp{x,p} \varphi \myp{ x,p,0^{\times k},\myp{-1}^{\times k} } \id x \id p} \id \mathrm{P} \myp{\mu},
		\label{eq:stateconvstrong1}
		\end{align}
		where $ 0^{\times k} $ and $ \myp{-1}^{\times k} $ are the $ k $-dimensional vectors whose entries are all equal to $ 0 $ and $ -1 $, respectively.
		
		\item For $ U \in L^{5/2} \myp{\mathbb{R}^3 \times \Set{\pm 1}} + L^{\infty} \myp{\mathbb{R}^3 \times \Set{\pm 1}} $ if $ k=1 $, and for any bounded and uniformly continuous function $ U $ on $ \myp{\mathbb{R}^3 \times \Set{\pm 1}}^{k} $ if $ k \geq 2 $, as $ \ell $ tends to infinity,
		\begin{align}
			\MoveEqLeft[3]	\frac{k!}{N_\ell^k} \sum\limits_{s \in \Set{\pm 1}^k} \int_{\mathbb{R}^{3k}} \widetilde{\rho}_{\Psi_{N_\ell}}^{\, \myp{k}} \myp{x,s} U \myp{x,s} \id x \nonumber \\
			&\longrightarrow \int_{\widetilde{\mathcal{S}}} \myp[\Big]{ \int_{\mathbb{R}^{3k}} \rho_{\mu}^{\otimes k} \myp{x} U \myp{ x,\myp{-1}^{\times k} } \id x} \id \mathrm{P} \myp{\mu},
		\label{eq:stateconvstrong2}
		\end{align}
		where
		\begin{equation*}
			\rho_{\mu} \myp{x} = \frac{1}{\myp{2 \pi}^2} \int_{\mathbb{R}} \mu \myp{x,p} \id p.
		\end{equation*}
	\end{enumerate}
\end{cor}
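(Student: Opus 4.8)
The point of departure is \cref{lem:stateconv}, applied with $\beta = \infty$ so that $\frac{\beta}{1+\beta} = 1$: it already furnishes a subsequence $\myp{N_\ell}$ and a Borel probability measure $\mathrm{P}$ on $\mathcal{S}$ for which \eqref{eq:stateconv1}--\eqref{eq:stateconv2} hold, with de Finetti representation $m_f^{\myp{k}} = \int_{\mathcal{S}}\mu^{\otimes k}\id\mathrm{P}\myp{\mu}$. The corollary reduces to the claim that in the strong-field regime $\mathrm{P}$ is carried by those $\mu \in \mathcal{S}$ vanishing outside the lowest band $\Set{j=0,\ s=-1}$; such $\mu$ are exactly the ones identified with $\widetilde{\mu}\myp{x,p} := \mu\myp{x,p,0,-1} \in \widetilde{\mathcal{S}}$ — the normalization $\frac{1}{\myp{2\pi}^2}\int\widetilde{\mu} = 1$ being inherited from $\frac{1}{\myp{2\pi}^2}\frac{\beta}{1+\beta}\int_{\Omega}\mu = 1$. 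Once this is known, \eqref{eq:stateconvstrong1} is just \eqref{eq:stateconv1} with $\mu^{\otimes k}$ supported $\mathrm{P}$-a.s.\ on $\Set{j_1=\dots=j_k=0,\ s_1=\dots=s_k=-1}$, and \eqref{eq:stateconvstrong2} is \eqref{eq:stateconv2} after noting that $\rho_\mu\myp{x,1}=0$ while $\rho_\mu\myp{x,-1}=\frac{1}{\myp{2\pi}^2}\int_{\mathbb{R}}\widetilde{\mu}\myp{x,p}\id p = \rho_{\widetilde{\mu}}\myp{x}$, so the sum over $s \in \Set{\pm 1}^k$ collapses to the single term $s = \myp{-1}^{\times k}$.

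The substantive step is to show that the one-particle semi-classical mass in the higher bands vanishes. From the energy bound $\innerp{\Psi_N}{H_{N,\beta_N}\Psi_N} \leq CN$ and \eqref{eq:energybound}, together with the boundedness of $\frac{b+1}{\hbar^2 N}+1$ (here $\frac{b}{\hbar^2 N} = \frac{\beta_N}{1+\beta_N} \leq 1$ and $\hbar^2 N = N^{1/3}\myp{1+\beta_N}^{2/5}\to\infty$), we get the kinetic energy bound $\innerp{\Psi_N}{\sum_{j=1}^N \myp{\bm{\sigma}\cdot\myp{-i\hbar\nabla_j+bA\myp{x_j}}}^2\Psi_N} \leq CN$. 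Inserting this into the identity of \cref{lem:spinkinenergy}, using $\hbar N\int\myp{\nabla f}^2 = o\myp{N}$ (valid since $\hbar\to 0$), discarding the non-negative $p^2$-term, and retaining only the indices $\myp{j,s}\neq\myp{0,-1}$ — for which $2j+1+s\geq 2$ — we obtain, recalling $\frac{b}{\myp{2\pi\hbar}^2} = \frac{N}{\myp{2\pi}^2}\frac{\beta_N}{1+\beta_N}$ and $\hbar b = \beta_N\myp{1+\beta_N}^{-2/5}$,
\[
	\frac{\beta_N}{1+\beta_N}\,\beta_N\myp{1+\beta_N}^{-\frac{2}{5}}\sum_{\myp{j,s}\neq\myp{0,-1}}\int_{\mathbb{R}}\int_{\mathbb{R}^3} m_{f,\Psi_N}^{\myp{1}}\myp{u,p,j,s}\id u\id p \;\leq\; C.
\]
Since $\beta_N\to\infty$ makes $\beta_N\myp{1+\beta_N}^{-2/5}\to\infty$ while $\frac{\beta_N}{1+\beta_N}\to 1$, this forces
\[
	\sum_{\myp{j,s}\neq\myp{0,-1}}\int_{\mathbb{R}}\int_{\mathbb{R}^3} m_{f,\Psi_N}^{\myp{1}}\myp{u,p,j,s}\id u\id p \xrightarrow{N\to\infty} 0.
\]

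It remains to transport this to the limit. Fix $\myp{j_0,s_0}\neq\myp{0,-1}$ and $R>0$, and apply \eqref{eq:stateconv1} with $k=1$ to the function $\varphi = \mathds{1}_{\Set{\abs{u}+\abs{p}\leq R,\ j=j_0,\ s=s_0}} \in L^1\myp{\Omega}$: its left-hand side is bounded by the quantity just shown to tend to $0$, so the right-hand side $\int_{\mathcal{S}}\bigl(\int_{\Set{\abs{u}+\abs{p}\leq R}}\mu\myp{u,p,j_0,s_0}\id u\id p\bigr)\id\mathrm{P}\myp{\mu}$ vanishes; as $\mu\geq 0$, this means $\mathrm{P}$-almost every $\mu$ satisfies $\mu\myp{\nonarg,\nonarg,j_0,s_0}=0$ a.e.\ on $\Set{\abs{u}+\abs{p}\leq R}$. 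Letting $R\to\infty$ and $\myp{j_0,s_0}$ range over the countable set $\myp{\mathbb{N}_0\times\Set{\pm 1}}\setminus\Set{\myp{0,-1}}$, we conclude that $\mathrm{P}$-almost every $\mu$ is supported on $\Set{j=0,\ s=-1}$, i.e.\ corresponds to an element of $\widetilde{\mathcal{S}}$; thus $\mathrm{P}$ may be regarded as a probability measure on $\widetilde{\mathcal{S}}$, and the corollary follows from the reductions recorded in the first paragraph. The only real obstacle is the mass estimate of the second paragraph — it is the divergence of the Landau gap $2\hbar b \to\infty$ that pins the de Finetti measure to the lowest band — everything else being routine bookkeeping on top of \cref{lem:stateconv} and the de Finetti \cref{thm:finetti}.
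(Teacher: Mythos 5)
Your argument is correct and follows the paper's approach: the divergence of the Landau gap $\hbar b\to\infty$, combined with the kinetic-energy bound and \cref{lem:spinkinenergy}, kills the semi-classical mass in all bands $(j,s)\neq(0,-1)$, and the de Finetti representation together with positivity pins $\mathrm{P}$ to $\widetilde{\mathcal{S}}$. The only cosmetic difference is that you test \eqref{eq:stateconv1} against compactly supported $L^1$ indicators for each fixed $(j_0,s_0)\neq(0,-1)$, whereas the paper first deduces $m_f^{\myp{1}}\equiv 0$ off $\Set{j=0,\ s=-1}$ using bounded test functions of the form $(j+1+s)\mathds{1}_{j\le n}$ and then compares the total masses $\int_{\Omega} m_f^{\myp{1}}$ and $\int_{\mathbb{R}^3\times\mathbb{R}\times\Set{0}\times\Set{-1}} m_f^{\myp{1}}$ via the de Finetti identity.
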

\begin{proof}
	Using that $ \hbar b \to \infty $ since $ \beta_N \to \infty $, along with the expression for the kinetic energy in \cref{lem:spinkinenergy}, the Lieb-Thirring bound \eqref{eq:kinbound}, and the energy bound from the assumptions, we obtain in particular for any $ n \in \mathbb{N} $ that
	\begin{align*}
		\MoveEqLeft[3]	\sum\limits_{s= \pm 1} \sum\limits_{j = 0}^{n} \int_{\mathbb{R}^3} \int_{\mathbb{R}} \myp{j+ 1+s} m_{f}^{\myp{1}} \myp{x,p,j,s} \id p \id x \\
		&=\lim_{N \to \infty} \sum\limits_{s= \pm 1} \sum\limits_{j = 0}^{n} \int_{\mathbb{R}^3} \int_{\mathbb{R}} \myp{j+ 1+s} m_{f,\Psi_N}^{\myp{1}} \myp{x,p,j,s} \id p \id x =0
	\end{align*}
	implying that $ m_f^{\myp{1}} \myp{x,p,j,s} = 0 $ unless $ j=0 $ and $ s=-1 $.  It follows that
	\begin{align*}
		\MoveEqLeft[3]	\int_{\mathcal{S}} \myp[\Big]{ \int_{\mathbb{R}^3 \times \mathbb{R} \times \Set{0} \times \Set{-1}} \mu \myp{\xi} \id \xi } \id \mathrm{P} \myp{\mu} \\
		&= \int_{\Omega} m_f^{\myp{1}} \myp{\xi} \id \xi 
		= \int_{\mathcal{S}} \myp[\Big]{ \int_{\Omega} \mu \myp{\xi} \id \xi} \id \mathrm{P} \myp{\mu},
	\end{align*}
	so for $ \mathrm{P} $-almost every $ \mu \in \mathcal{S} $, we have
	\begin{equation*}
		\int_{\mathbb{R}^3 \times \mathbb{R} \times \Set{0} \times \Set{-1}} \mu \myp{\xi} \id \xi
		=\int_{\Omega} \mu \myp{\xi} \id \xi,
	\end{equation*}
	and hence $ \mathrm{P} $ is supported on $ \widetilde{\mathcal{S}} $.
	The rest of the corollary follows directly from Lemma \ref{lem:stateconv}.
\end{proof}

We now finally have the tools to give a proof of the lower bounds in \cref{thm:energylowbound} in the case when $ \beta_N \to \beta \in \left(0,\infty \right] $.
The proof will be split into a few lemmas, each giving a lower bound on part of the energy.
Note that if $ \Psi_N \in \bigwedge^N L^2 \myp{\mathbb{R}^3; \mathbb{C}^2} $ is a sequence of fermionic wave functions satisfying $ \innerp{ \Psi_N }{ H_{N,\beta_N} \Psi_N } = E \myp{N,\beta_N} + o\myp{N} $, then by the upper energy bound of \cref{prop:energyupbound} we have $ \innerp{ \Psi_N }{ H_{N, \beta_N} \Psi_N } \leq CN $, so that \cref{lem:stateconv} and \cref{cor:stateconvstrong} are applicable.
\begin{lem}
\label{lem:energykinlowbound}
	Suppose that the assumptions in \cref{thm:energylowbound} are satisfied, and that we have a sequence $ \Psi_N \in \bigwedge^N L^2 \myp{\mathbb{R}^3; \mathbb{C}^2} $ with $ \innerp{ \Psi_N }{ H_{N,\beta_N} \Psi_N } = E \myp{N,\beta_N} + o\myp{N} $.
	\begin{enumerate}
		\item If $ \beta_N \to \beta \in \myp{0,\infty} $, then, with $ \mathcal{S} $ as in \cref{lem:stateconv},
		\begin{align}
		\label{eq:energykinlowbound}
			C &\geq \liminf_{N \to \infty} \innerp[\Big]{ \Psi_N }{ \frac{1}{N} \sum\limits_{j=1}^N \myp{\bm{\sigma} \cdot \myp{-i \hbar \nabla_j + b A \myp{x_j}}}^2 \Psi_N } \nonumber \\
			&\geq \frac{1}{\myp{2 \pi}^2} \frac{\beta}{1 + \beta} \int_{\mathcal{S}} \myp[\Big]{ \int_{\Omega} \myp{p^2+ k_{\beta} \myp{2j+ 1+s} } \mu \myp{\xi} \id \xi } \id \mathrm{P} \myp{\mu}.
		\end{align}
		
		\item If $ \beta_N \to \infty $, then, with $ \widetilde{\mathcal{S}} $ as in \cref{cor:stateconvstrong},
		\begin{align}
		\label{eq:energykinlowboundstrong}
			C &\geq \liminf_{N \to \infty} \innerp[\Big]{ \Psi_N }{ \frac{1}{N} \sum\limits_{j=1}^N \myp{\bm{\sigma} \cdot \myp{-i \hbar \nabla_j + b A \myp{x_j}}}^2 \Psi_N } \nonumber \\
			&\geq \frac{1}{\myp{2 \pi}^2} \int_{\widetilde{\mathcal{S}}} \myp[\Big]{ \int_{\mathbb{R}^3 \times \mathbb{R}} p^2 \mu \myp{x,p} \id p \id x } \id \mathrm{P} \myp{\mu}.
		\end{align}
	\end{enumerate}
\end{lem}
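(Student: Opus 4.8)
The starting point is the exact identity for the kinetic energy in \cref{lem:spinkinenergy}, which is available since the auxiliary function $ f $ is smooth and compactly supported. Dividing that identity by $ N $ and inserting the scaling relations \eqref{eq:scaling}, namely $ \hbar b = k_{\beta_N} $ and $ b/(\hbar^2 N) = \beta_N/(1+\beta_N) $, we obtain
\begin{equation*}
	\frac{1}{N} \innerp[\Big]{ \Psi_N }{ \sum_{j=1}^N \myp{ \bm{\sigma} \cdot \myp{-i\hbar \nabla_j + bA \myp{x_j}} }^2 \Psi_N }
	= - \hbar \int_{\mathbb{R}^3} \myp{ \nabla f \myp{u} }^2 \id u + \frac{1}{\myp{2\pi}^2} \frac{\beta_N}{1+\beta_N} \sum_{\substack{ s = \pm 1 \\ j \geq 0 }} \int_{\mathbb{R}} \int_{\mathbb{R}^3} \myp{ p^2 + k_{\beta_N} \myp{2j+1+s} } m_{f,\Psi_N}^{\myp{1}} \myp{\xi} \id u \id p.
\end{equation*}
The first term tends to $ 0 $ by \eqref{eq:scaling2}. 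The upper bound $ C \geq \liminf_N (\nonarg) $ in both \eqref{eq:energykinlowbound} and \eqref{eq:energykinlowboundstrong} is immediate from the Lieb--Thirring bound \eqref{eq:kinbound} combined with $ E \myp{N,\beta_N} \leq CN $ (\cref{prop:energyupbound}), which moreover shows that the momentum sum on the right is $ O(1) $, since $ \beta_N/(1+\beta_N) $ is bounded away from $ 0 $ for large $ N $ in both regimes.

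The substantial point is passing to the liminf in the momentum sum against the weakly convergent measures $ m_{f,\Psi_N}^{\myp{1}} $; the difficulty is that the weight $ p^2 + k_{\beta_N}\myp{2j+1+s} $ is nonnegative but unbounded, and depends on $ N $. In the case $ \beta \in \myp{0,\infty} $, I would first exploit $ k_{\beta_N} \to k_\beta $: the $ O(1) $ bound above gives $ \sum_{s,j} \int\int \myp{2j+1+s} m_{f,\Psi_N}^{\myp{1}} \leq C/k_{\beta_N} \leq C' $, so replacing $ k_{\beta_N} $ by $ k_\beta $ costs only an error $ \abs{k_{\beta_N} - k_\beta}\, C' \to 0 $. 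For the resulting fixed, nonnegative weight $ \xi \mapsto p^2 + k_\beta\myp{2j+1+s} $ I would use lower semicontinuity by truncation: for each $ R > 0 $ the function $ \min\myp{ p^2 + k_\beta\myp{2j+1+s}, R } \mathds{1}_{\Set{\abs{u} \leq R}} $ lies in $ L^1 \myp{\Omega} \cap L^\infty \myp{\Omega} $, so by the tightness established in \cref{lem:tightness} and the strengthened convergence of \cref{lem:limmeasures}, its $ m_{f,\Psi_N}^{\myp{1}} $-integral converges to its $ m_f^{\myp{1}} $-integral; letting $ R \to \infty $ and using monotone convergence yields $ \liminf_N \sum_{s,j} \int\int \myp{ p^2 + k_{\beta_N}\myp{2j+1+s} } m_{f,\Psi_N}^{\myp{1}} \geq \int_\Omega \myp{ p^2 + k_\beta\myp{2j+1+s} } m_f^{\myp{1}} $. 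Finally the de Finetti representation $ m_f^{\myp{1}} = \int_{\mathcal{S}} \mu \id \mathrm{P}\myp{\mu} $ from \cref{lem:stateconv}, together with Tonelli's theorem (legitimate since the integrand is nonnegative), turns the last integral into the right-hand side of \eqref{eq:energykinlowbound}, after multiplying through by $ \beta/(1+\beta) = \lim_N \beta_N/(1+\beta_N) $.

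For $ \beta = \infty $ the same scheme works, using in addition that $ k_{\beta_N} \to \infty $. Since $ 2j+1+s \geq 0 $ with equality precisely at $ (j,s) = (0,-1) $, the $ O(1) $ bound forces $ \sum_{(j,s) \neq (0,-1)} \int\int m_{f,\Psi_N}^{\myp{1}} \leq C/k_{\beta_N} \to 0 $, whence $ m_f^{\myp{1}} $ is supported on $ \Set{j=0, s=-1} $ --- this is exactly the mechanism behind \cref{cor:stateconvstrong}. Dropping all (nonnegative) terms with $ (j,s) \neq (0,-1) $ and noting that the $ (0,-1) $ term equals $ \int\int p^2\, m_{f,\Psi_N}^{\myp{1}}\myp{\nonarg,\nonarg,0,-1} $, the truncation and monotone convergence argument gives $ \liminf_N (\nonarg) \geq \int\int p^2\, m_f^{\myp{1}}\myp{\nonarg,\nonarg,0,-1} $, and the de Finetti representation together with the identification $ \widetilde{\mathcal{S}} \hookrightarrow \mathcal{S} $ of \cref{cor:stateconvstrong}, and $ \beta_N/(1+\beta_N) \to 1 $, yields \eqref{eq:energykinlowboundstrong}. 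I expect the main obstacle to be precisely this lower semicontinuity step: one cannot test the weak convergence directly against the unbounded kinetic weight, so the truncation combined with the tightness of the semiclassical measures is what makes the inequality (rather than a nonexistent equality) go through, and in the finite-$ \beta $ regime one must additionally absorb the $ N $-dependence of $ k_{\beta_N} $ via the a priori kinetic bound.
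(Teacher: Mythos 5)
Your proof is correct and follows essentially the same route as the paper: express the kinetic energy via \cref{lem:spinkinenergy}, discard the $O(\hbar)$ localization error, get the upper bound $C$ from the Lieb--Thirring estimate and the energy upper bound, then obtain the lower bound by restricting the nonnegative kinetic weight to a compact/bounded region, passing to the limit measures via \cref{lem:stateconv} (which already packages in the de Finetti representation), and finally letting the truncation parameter $R\to\infty$ by monotone convergence. For $\beta=\infty$ the paper simply drops the $\hbar b(2j+1+s)$ term before truncating, whereas you also verify the concentration on $(j,s)=(0,-1)$; that observation is what underlies \cref{cor:stateconvstrong} and is consistent. One small slip: your truncation $\min\myp{p^2 + k_\beta\myp{2j+1+s}, R}\,\mathds{1}_{\Set{\abs{u}\le R}}$ is in $L^\infty\myp{\Omega}$ but not in $L^1\myp{\Omega}$ (for $(j,s)$ with $k_\beta(2j+1+s)\ge R$ the $p$-integral diverges); this does not hurt the argument, since item (4) of \cref{lem:limmeasures} guarantees convergence against $L^1+L^\infty$ test functions, so $L^\infty$ alone suffices. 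The paper's choice $\mathds{1}_{\Set{\abs{u}+\abs{p}+j\le R}}\myp{p^2+k_\beta\myp{2j+1+s}}$ sidesteps this since it is genuinely compactly supported. Your handling of the $N$-dependence of $k_{\beta_N}$ via the a priori bound $\sum_{j,s}\int\!\!\int(2j+1+s)\,m_{f,\Psi_N}^{\myp{1}}\le C/k_{\beta_N}$ is a clean alternative to the paper's implicit uniform convergence on the truncated region.
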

\begin{proof}
	Suppose first that $ \beta_N \to \beta < \infty $.
	By Lemma \ref{lem:ltbounds} the kinetic energy per particle is bounded, so applying \cref{lem:spinkinenergy,lem:stateconv} we obtain for any positive $ R $,
	\begin{align*}
		C &\geq \liminf_{N \to \infty} \innerp[\Big]{ \Psi_N }{ \frac{1}{N} \sum\limits_{j=1}^N \myp{\bm{\sigma} \cdot \myp{-i \hbar \nabla_j + b A \myp{x_j}}}^2 \Psi_N } \\
		&\geq \liminf_{N \to \infty} \frac{1}{\myp{2 \pi}^2} \frac{b}{\hbar^2 N} \int_{\abs{u}+\abs{p}+j \leq R} \myp{p^2+ \hbar b \myp{2j+ 1+s}} m_{f,\Psi_N}^{\myp{1}} \myp{\xi} \id \xi \\
		&= \frac{1}{\myp{2 \pi}^2} \frac{\beta}{1 + \beta} \int_{\mathcal{S}} \myp[\Big]{ \int_{\abs{u}+\abs{p}+j \leq R} \myp{p^2+ k_{\beta} \myp{2j+ 1+s}} \mu \myp{\xi} \id \xi } \id \mathrm{P} \myp{\mu}.
	\end{align*}
	Taking $ R \to \infty $, monotone convergence implies \eqref{eq:energykinlowbound}.
	
	The bound \eqref{eq:energykinlowboundstrong} follows in exactly the same way by simply discarding the term $ \hbar b \myp{2j + 1+s} $ in the integrand above, and applying \cref{cor:stateconvstrong}.
\end{proof}
\begin{lem}
\label{lem:energypotlowbound}
	Suppose that $ \myp{\beta_N} $ satisfies \eqref{eq:scaling2} and $ \beta_N \to \beta \in \left(0,\infty \right] $.
	With the assumptions in \cref{thm:energylowbound} and a sequence $ \Psi_N \in \bigwedge^N L^2 \myp{\mathbb{R}^3; \mathbb{C}^2} $ satisfying $ \innerp{ \Psi_N }{ H_{N,\beta_N} \Psi_N } = E \myp{N,\beta_N} + o\myp{N} $, we have
	\begin{equation}
	\label{eq:energypotlowbound}
		\liminf_{N \to \infty} \innerp[\Big]{ \Psi_N }{ \frac{1}{N} \sum\limits_{j=1}^{N} V \myp{x_j} \Psi_N } 
		\geq \int_{\mathcal{S}} \myp[\Big]{ \int_{\mathbb{R}^3} V \myp{x} \rho_{\mu} \myp{x} \id x} \id \mathrm{P} \myp{\mu},
	\end{equation}
	and
	\begin{align}
		\MoveEqLeft[3]	\lim_{N \to \infty} \innerp[\Big]{ \Psi_N }{ \frac{1}{N^2} \sum\limits_{1 \leq j < k \leq N} w \myp{x_j-x_k} \Psi_N } \nonumber \\
		&= \frac{1}{2} \int_{\mathcal{S}} \myp[\Big]{ \int_{\mathbb{R}^6} w \myp{x-y} \rho_{\mu} \myp{x} \rho_{\mu} \myp{y} \id x \id y} \id \mathrm{P} \myp{\mu}.
	\label{eq:energyinterlowbound}
	\end{align}
\end{lem}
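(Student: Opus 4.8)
\emph{Plan.}
For the external-potential bound the key identity is $\langle\Psi_N,\sum_j V(x_j)\Psi_N\rangle=\int_{\mathbb R^3}V\,\rho_{\Psi_N}^{\myp{1}}$, which reduces everything to the one-particle density. I would split $V=V_+-V_-$. Since $V(x)\to\infty$, the negative part $V_-$ is compactly supported and lies in $L^{5/2}(\mathbb R^3)\subseteq L^{5/2}(\mathbb R^3\times\Set{\pm1})+L^\infty(\mathbb R^3\times\Set{\pm1})$, so \cref{lem:stateconv}~(2) with $k=1$ and the spin-independent test function $U(x,s)=V_-(x)$ gives directly $\frac1N\int V_-\rho_{\Psi_N}^{\myp{1}}\to\int_{\mathcal S}\bigl(\int V_-\rho_\mu\bigr)\id\mathrm P(\mu)$, where $\rho_\mu(x)=\sum_s\rho_\mu(x,s)$. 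For $V_+$ I would use the truncations $V_+\wedge M\in L^\infty$: these are admissible test functions, so $\frac1N\int(V_+\wedge M)\rho_{\Psi_N}^{\myp{1}}\to\int_{\mathcal S}\bigl(\int(V_+\wedge M)\rho_\mu\bigr)\id\mathrm P$, and since $V_+\ge V_+\wedge M$ this yields $\liminf_N\frac1N\int V_+\rho_{\Psi_N}^{\myp{1}}\ge\int_{\mathcal S}\bigl(\int(V_+\wedge M)\rho_\mu\bigr)\id\mathrm P$; letting $M\to\infty$ with monotone convergence (in $x$, then in $\mu$) and subtracting the two statements gives \eqref{eq:energypotlowbound}. (The case $\beta=\infty$ requires only replacing $\mathcal S$ and $\rho_\mu$ by $\widetilde{\mathcal S}$ and using \cref{cor:stateconvstrong} throughout.)

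For the interaction I would start from $\langle\Psi_N,\frac1{N^2}\sum_{j<k}w(x_j-x_k)\Psi_N\rangle=\tfrac12\cdot\frac{2!}{N^2}\iint w(x-y)\rho_{\Psi_N}^{\myp{2}}(x,y)\id x\id y$. The function $(x,y)\mapsto w(x-y)$ on $(\mathbb R^3\times\Set{\pm1})^2$ is neither bounded nor uniformly continuous, so \cref{lem:stateconv}~(2) does not apply to it directly; the plan is to approximate. Given $\theta>0$ I would write $w=g+h$ with $g\in C_c(\mathbb R^3)$ (so $(x,y)\mapsto g(x-y)$ is bounded and uniformly continuous) and $h=h_1+h_2$, $h_1\in L^{3/2}(\mathbb R^3)\cap L^{5/2}(\mathbb R^3)$, $h_2\in L^\infty(\mathbb R^3)$, with $\norm{h_1}_{3/2}+\norm{h_1}_{5/2}+\norm{h_2}_\infty\le\theta$. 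Such a splitting exists: writing $w=w_1+w_2$ with $w_1\in L^{5/2}$ and $\norm{w_2}_\infty$ small, one approximates $w_1$ by $g\in C_c(\mathbb R^3)$ in $L^{5/2}$ and then splits $w_1-g=(w_1-g)\mathds{1}_{\{\abs{w_1-g}>\lambda\}}+(w_1-g)\mathds{1}_{\{\abs{w_1-g}\le\lambda\}}$, absorbing the second piece together with $w_2$ into $h_2$; on $\{\abs{w_1-g}>\lambda\}$ one has $\abs{w_1-g}^{3/2}\le\lambda^{-1}\abs{w_1-g}^{5/2}$, so first choosing $\lambda$ small and then $\norm{w_1-g}_{5/2}$ small enough makes all three norms $\le\theta$.

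With this decomposition in hand I would estimate the three pieces. For $h$: apply the Lieb--Thirring density bound \eqref{eq:ltdensity} with $f=\abs h\le\abs{h_1}+\abs{h_2}$; by the scaling relations \eqref{eq:scaling} and the constraint \eqref{eq:scaling2} the prefactor $\frac{b+1}{\hbar^2N}+\frac1{\hbar^3N}+1$ stays bounded, so $\frac1{N^2}\iint\abs{h(x-y)}\rho_{\Psi_N}^{\myp{2}}\le C\theta$ uniformly in $N$. For the limiting object: \cref{lem:energykinlowbound} bounds the limiting Vlasov kinetic energy, which combined with \eqref{eq:rhoest2} (via \cref{lem:functionalminima}) shows $\rho_\mu\in L^{5/3}(\mathbb R^3)$ for $\mathrm P$-a.e.\ $\mu$ with $\int_{\mathcal S}\norm{\rho_\mu}_{5/3}^{5/3}\id\mathrm P<\infty$; since $\norm{\rho_\mu}_1=1$, Young's and Hölder's inequalities give $\iint\abs{h(x-y)}\rho_\mu(x)\rho_\mu(y)\id x\id y\le\norm{h_1}_{5/2}\norm{\rho_\mu}_{5/3}+\norm{h_2}_\infty$, whence $\int_{\mathcal S}\iint\abs{h(x-y)}\rho_\mu\rho_\mu\id\mathrm P\le C\theta$. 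For $g$: \cref{lem:stateconv}~(2) with $k=2$ and $U(x,s)=g(x_1-x_2)$ gives $\frac{2!}{N^2}\iint g(x-y)\rho_{\Psi_N}^{\myp{2}}\to\int_{\mathcal S}\iint g(x-y)\rho_\mu(x)\rho_\mu(y)\id x\id y\id\mathrm P(\mu)$. A triangle inequality then bounds $\limsup_N\bigl\lvert\frac1{N^2}\iint w(x-y)\rho_{\Psi_N}^{\myp{2}}-\tfrac12\int_{\mathcal S}\iint w(x-y)\rho_\mu\rho_\mu\id\mathrm P\bigr\rvert$ by $C\theta$ for every $\theta>0$, giving \eqref{eq:energyinterlowbound}.

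I expect the only genuinely delicate point to be engineering the decomposition $w=g+h$ so that the error $h$ is controlled \emph{simultaneously} in the $L^{3/2}\cap L^{5/2}+L^\infty$ norm --- which is what makes \eqref{eq:ltdensity} usable uniformly in $N$ --- and against the limiting densities $\rho_\mu$ (which in turn requires extracting the $L^{5/3}$-bound on $\rho_\mu$ from \cref{lem:energykinlowbound}). The potential bound and the remaining steps are routine approximation and monotone-convergence arguments.
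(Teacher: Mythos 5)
Your proposal is correct and follows essentially the same route as the paper's proof. For the interaction term the two arguments are virtually identical: approximate $w$ by a $C_c$ function, control the remainder uniformly in $N$ via the Lieb--Thirring density estimate \eqref{eq:ltdensity}, control it against the limiting densities via Young's/Hölder's inequalities using the $L^{5/3}$ bound on $\rho_\mu$ extracted from \cref{lem:energykinlowbound} and \cref{lem:functionalminima}, and pass to the limit for the $C_c$ piece using convergence of states. For the potential term the only cosmetic difference is that you truncate the \emph{value} of $V_+$ (using $V_+\wedge M$ together with the split $V=V_+-V_-$ and the observation that $V_-$ is compactly supported), whereas the paper truncates in \emph{space} (using $V\mathds{1}_{\{|x|\leq R\}}$, which lies in $L^{5/2}$ by the local integrability assumption and bounds $V$ from below once $R$ is large); both end with monotone convergence and are equivalent.
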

\begin{proof}
	By \cref{lem:ltbounds} the potential energy per particle is bounded, so for $ R > 0 $ large enough we have by the weak convergence of $ \frac{1}{N} \rho_{\Psi_N}^{\myp{1}} $ in $ L^{5/2} \myp{\mathbb{R}^3} $ that
	\begin{align*}
		C &\geq \liminf_{N \to \infty} \innerp[\Big]{ \Psi_N }{ \frac{1}{N} \sum\limits_{j=1}^{N} V \myp{x_j} \Psi_N } \\
		&\geq \liminf_{N \to \infty} \frac{1}{N} \int_{\abs{x} \leq R} V \myp{x} \rho_{\Psi_N}^{\myp{1}} \myp{x} \id x 
		= \int_{\mathcal{S}} \myp[\Big]{ \int_{\abs{x} \leq R} V \myp{x} \rho_{\mu} \myp{x} \id x } \id \mathrm{P} \myp{\mu}.
	\end{align*}
	Taking $ R \to \infty $ yields \eqref{eq:energypotlowbound} by the monotone convergence theorem.
	
	For the interaction part, write $ w = w_1+w_2 \in L^{3/2} \myp{\mathbb{R}^3} \cap L^{5/2} \myp{\mathbb{R}^3} + L_{\varepsilon}^{\infty} \myp{\mathbb{R}^3} $ and approximate $ w_1 $ in $ L^{3/2} \myp{\mathbb{R}^3} $ and $ L^{5/2} \myp{\mathbb{R}^3} $ by some $ w_0 \in C_c \myp{\mathbb{R}^3} $.
	By the Lieb-Thirring estimate \eqref{eq:ltdensity}, we have
	\begin{align}
		\MoveEqLeft[3]	\abs[\Big]{ \innerp[\Big]{ \Psi_N }{ \frac{1}{N^2} \sum\limits_{j < k}^{N} \big( w-w_0 \big) \myp{x_j-x_k} \Psi_N } } \nonumber \\
		&= \frac{1}{N^2} \abs[\Big]{\iint_{\mathbb{R}^6} \myp[\big]{ w-w_0 } \myp{x-y} \rho_{\Psi_N}^{\myp{2}} \myp{x,y} \id x \id y} \nonumber \\
		&\leq C \myp{ \normt{w_1-w_0}{\frac{3}{2}} + \normt{w_1-w_0}{\frac{5}{2}} + \normt{w_2}{\infty} }.
	\label{eq:interactionconv}
	\end{align}
	Note that by the bathtub principle (\cref{lem:functionalminima,lem:functionalminimastrong}) and the upper bound on kinetic energy \cref{lem:energykinlowbound} it follows that either
	\begin{align*}
		\MoveEqLeft[3]	\int_{\mathcal{S}} \myp[\Big]{ \int_{\mathbb{R}^3} \tau_{\beta} \myp{\rho_{\mu} \myp{x}} \id x} \id \mathrm{P} \myp{\mu} \\
		&\leq \frac{1}{\myp{2 \pi}^2} \frac{\beta}{1 + \beta} \int_{\mathcal{S}} \myp[\Big]{ \int_{\Omega} \myp{p^2+ k_{\beta} \myp{2j+ 1+s} } \mu \myp{\xi} \id \xi } \id \mathrm{P} \myp{\mu} 
		< \infty,
	\end{align*}
	or
	\begin{align*}
		\MoveEqLeft[3]	\frac{4 \pi^4}{3} \int_{\widetilde{\mathcal{S}}} \myp[\Big]{ \int_{\mathbb{R}^3} \rho_{\mu} \myp{x}^3 \id x} \id \mathrm{P} \myp{\mu} \\
		&\leq \frac{1}{\myp{2 \pi}^2} \int_{\widetilde{\mathcal{S}}} \myp[\Big]{ \iint_{\mathbb{R}^3 \times \mathbb{R}} p^2 \mu\myp{x,p} \id p \id x} \id \mathrm{P} \myp{\mu} 
		< \infty,
	\end{align*}
	depending on the sequence $ \myp{\beta_N} $.
	Applying either the bound \eqref{eq:rhoest2} or Markov's inequality leads to the conclusion that
	\begin{equation*}
		\int_{\mathcal{S}} \normt{\rho_{\mu}}{\frac{5}{3}} \id \mathrm{P} \myp{\mu} 
		< \infty.
	\end{equation*}
	Hence we can use Young's inequality to obtain
	\begin{align*}
		\MoveEqLeft[3]	\abs[\Big]{\int_{\mathcal{S}} \myp[\Big]{\int_{\mathbb{R}^6} \myp[\big]{ w-w_0 } \myp{x-y} \rho_{\mu} \myp{x} \rho_{\mu} \myp{y} \id x \id y} \id \mathrm{P} \myp{\mu}} \\
		&\leq \int_{\mathcal{S}} \normt{w_1-w_0}{\frac{5}{2}} \normt{\rho_{\mu}}{\frac{5}{3}} + \normt{w_2}{\infty} \id \mathrm{P} \myp{\mu} 
		< \infty.
	\end{align*}
	This bound together with \eqref{eq:interactionconv} implies that it suffices to show \eqref{eq:energyinterlowbound} for $ w \in C_c \myp{\mathbb{R}^3} $.
	However, the convergence holds in this case by \cref{lem:stateconv,cor:stateconvstrong}, since the function $ \myp{x,y} \mapsto w \myp{x-y} $ is bounded and uniformly continuous on $ \mathbb{R}^3 \times \mathbb{R}^3 $.
\end{proof}
\begin{proof}[Proof of \cref{thm:energylowbound} (and \cref{thm:convstates}) for strong fields]
	Assume first that $ \beta_N \to \beta \in \myp{0,\infty} $.
	It follows from \cref{lem:energykinlowbound,lem:energypotlowbound} that for any sequence $ \Psi_N \in \bigwedge^N L^2 \myp{\mathbb{R}^3; \mathbb{C}^2} $ satisfying $ \innerp{ \Psi_N }{ H_{N,\beta_N} \Psi_N } = E \myp{N,\beta_N} + o\myp{N} $, then along the subsequence $ N_{\ell} $ from \cref{lem:stateconv},
	\begin{equation*}
		\liminf_{\ell \to \infty} \frac{\innerp{ \Psi_{N_\ell} }{ H_{N_\ell,\beta_{N_\ell}} \Psi_{N_\ell} }}{N_\ell}
		\geq \int_{\mathcal{S}} \mathcal{E}_{\beta}^{\sssup{Vla}} \myp{\mu} \id \mathrm{P} \myp{\mu}
		\geq E^{\sssup{MTF}} \myp{\beta},
	\end{equation*}
	where the last inequality follows from \cref{lem:functionalminima} and the fact that $ \mathrm{P} $ is a probability measure.
	Assume for the sake of contradiction that $ E^{\sssup{MTF}} \myp{\beta} > \liminf_{N} \frac{E \myp{N,\beta_N}}{N} $, and take a sequence $ M_k \in \mathbb{N} $ satisfying
	\begin{equation*}
		\lim_{k \to \infty} \frac{E \myp{M_k,\beta_{M_k}}}{M_k} = \liminf_N \frac{E \myp{N,\beta_N}}{N}.
	\end{equation*}
	Since we might as well have proven \cref{lem:measurelim} and \cref{lem:stateconv} starting from this sequence, we may assume that $ N_\ell $ is a subsequence of $ M_k $.
	Hence
	\begin{equation*}
		E^{\sssup{MTF}} \myp{\beta}
		> \liminf_{N \to \infty} \frac{E \myp{N,\beta_N}}{N}
		= \liminf_{\ell \to \infty} \frac{\innerp{ \Psi_{N_\ell} }{ H_{N_\ell,\beta_{N_\ell}} \Psi_{N_\ell} }}{N_\ell}
		\geq E^{\sssup{MTF}} \myp{\beta},
	\end{equation*}
	which is absurd, so we must have equality everywhere (using the already proven upper energy bound in \cref{prop:energyupbound}), concluding the proof of Theorem \ref{thm:energylowbound} for $ 0 < \beta < \infty $.
	In particular, we also have
	\begin{equation*}
		\int_{\mathcal{S}} \mathcal{E}_{\beta}^{\sssup{Vla}} \myp{\mu} - E^{\sssup{MTF}} \myp{\beta} \id \mathrm{P} \myp{\mu} =0,
	\end{equation*}
	so $ \mathrm{P} $ is supported on the set of minimizers of the Vlasov energy functional.
	Hence $ \mathrm{P} $ induces a probability measure on the set of minimizers of the magnetic Thomas-Fermi functional, completing the proof of the first part of
	\cref{thm:convstates}.
	
	In the case where $ \myp{\beta_N} $ satisfies $ \beta_N \to \infty $ and \eqref{eq:scaling2}, we apply the same argument, obtaining
	\begin{equation*}
		E^{\sssup{STF}}
		\geq \liminf_{N \to \infty} \frac{\innerp{ \Psi_N }{ H_{N,\beta_N} \Psi_N }}{N}
		\geq \int_{\widetilde{\mathcal{S}}} \mathcal{E}_{\infty}^{\sssup{Vla}} \myp{\mu} \id \mathrm{P} \myp{\mu}
		\geq E^{\sssup{STF}}.
	\end{equation*}
	In this case $ \mathrm{P} $ induces a measure on the set of minimizers of the strong Thomas-Fermi functional, completing the proof of \cref{thm:convstates}, except for the case when $ \beta_N \to 0 $.
\end{proof}

\section{Lower energy bounds, weak fields}
\label{sec:lowerbound2}
Here we consider the case where $ \beta_N \to 0 $ as $ N \to \infty $.
Again, suppose that $ V $ and $ w $ satisfy the assumptions of \cref{thm:energylowbound}, and let $ f \in C_c^{\infty} \myp{\mathbb{R}^3} $ be a real-valued, even and $ L^2 $-normalized function.
Since the distance between the Landau bands of $ \myp{\bm{\sigma} \cdot \myp{- i \hbar \nabla + b A \myp{x}}}^2 $ is $ 2 \hbar b $, and $ \beta_N \to 0 $ is equivalent to $ \hbar b = \beta_N \myp{1+\beta_N}^{-2/5} \to 0 $, we can argue without diagonalising the magnetic Laplacian as in the beginning of \cref{sec:semimeasures}.
In other words, we get the usual phase space $ \mathbb{R}^3 \times \mathbb{R}^3 \times \Set{\pm 1} $.

This means that we can follow \cite{FouLewSol-18} in our construction of the semi-classical measures, but we do, however, need a slight rescaling (exclusively to control an error term in the case where $ b \to \infty $).
Thus, in addition to $ \hbar > 0 $, we also introduce an auxiliary parameter $ \alpha > 0 $ and put
\begin{equation}
\label{eq:coherentstate}
	f_{x,p}^{\hbar,\alpha} \myp{y} = \myp{\hbar \alpha}^{-\frac{3}{4}} f \myp[\Big]{\frac{y-x}{\sqrt{\hbar \alpha}}} e^{i \frac{p \cdot y}{\hbar}},
\end{equation}
and we further define $ f^{\hbar\alpha} = f_{0,0}^{\hbar,\alpha} $ and $ g^{\hbar,\alpha} = \mathcal{F}_{\hbar} \myb{f^{\hbar \alpha}} $.
Then we have a resolution of the identity
\begin{equation*}
	\frac{1}{\myp{2 \pi \hbar}^3} \iint_{\mathbb{R}^3 \times \mathbb{R}^3} \ketbra{f_{x,p}^{\hbar,\alpha}}{f_{x,p}^{\hbar,\alpha}} \id x \id p 
	= \mathds{1}_{L^2 \myp{\mathbb{R}^3}}.
\end{equation*}
Now, denoting $ P_{x,p}^{\hbar,\alpha} = \ketbra{f_{x,p}^{\hbar,\alpha}}{f_{x,p}^{\hbar,\alpha}} \mathds{1}_{\mathbb{C}^2} $, we define the $ k $-particle semi-classical measures
\begin{equation*}
	m_{f,\Psi_N}^{\myp{k}} \myp{x,p,s} = \frac{N!}{\myp{N-k}!} \innerp[\Big]{ \Psi_N }{ \myp[\Big]{ \bigotimes_{\ell=1}^k P_{x_{\ell},p_{\ell}}^{\hbar,\alpha} \mathcal{P}_{s_{\ell}} } \otimes \mathds{1}_{N-k} \Psi_N }_{L^2 \myp{ \mathbb{R}^{3N};\mathbb{C}^{2^N} } },
\end{equation*}
where $ \Psi_N \in \bigwedge^N L^2 \myp{\mathbb{R}^3 ; \mathbb{C}^2} $ is any wave function.  The parameter $ \alpha $ is arbitrary for now, but later we will settle on a specific choice, see \eqref{eq:alphascale} below.
Going through the proofs of Lemmas $ 2.2 $, $ 2.4 $, and $ 2.5 $ in \cite{FouLewSol-18} we get the same properties of the semi-classical measures as before:
\begin{lem}
\label{lem:measureprop}
	The function $ m_{f,\Psi_N}^{\myp{k}} $ is symmetric on $ \myp{\mathbb{R}^3 \times \mathbb{R}^3 \times \Set{\pm 1}}^k $ and satisfies
	\begin{equation}
	\label{eq:measurebound}
		0 \leq m_{f,\Psi_N}^{\myp{k}} \leq 1,
	\end{equation}
	\begin{equation}
	\label{eq:compatibility1}
		\frac{1}{\myp{2 \pi \hbar}^{3k}} \sum_{s \in \Set{\pm 1}^k} \iint_{\mathbb{R}^{3k} \times \mathbb{R}^{3k}} m_{f,\Psi_N}^{\myp{k}} \myp{x, p,s} \id x \id p
		= \frac{N!}{\myp{N-k}!},
	\end{equation}
	and for $ k \geq 2 $,
	\begin{align}
		\MoveEqLeft[3]	\frac{1}{\myp{2 \pi \hbar}^3} \sum_{s_k = \pm 1} \iint_{\mathbb{R}^3 \times \mathbb{R}^3} m_{f,\Psi_N}^{\myp{k}} \myp{x_1,p_1,s_1, \dotsc, x_k,p_k,s_k} \id x_k \id p_k \nonumber \\
		&= \myp{N-k+1} m_{f,\Psi_N}^{\myp{k-1}} \myp{x_1,p_1,s_1, \dotsc, x_{k-1},p_{k-1},s_{k-1}}.
	\label{eq:compatibility2}
	\end{align}
\end{lem}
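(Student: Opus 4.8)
The plan is to mimic, essentially \emph{mutatis mutandis}, the proof of \cref{lem:spinmeasureprop} (equivalently \cite[Lemmas~2.2, 2.4 and 2.5]{FouLewSol-18}), after noting that the two differences with the magnetic setting — the Landau-band index $j$ has disappeared from the phase space, and an auxiliary rescaling parameter $\alpha$ has entered the coherent states \eqref{eq:coherentstate} — are irrelevant for all four assertions, which are purely algebraic consequences of the resolution of the identity for the $f_{x,p}^{\hbar,\alpha}$ and of the antisymmetry of $\Psi_N$. Throughout I write $e_{\pm1}$ for the canonical basis of $\mathbb{C}^2$, so that $P_{x,p}^{\hbar,\alpha}\mathcal{P}_s=\ketbra{f_{x,p}^{\hbar,\alpha}\otimes e_s}{f_{x,p}^{\hbar,\alpha}\otimes e_s}$ is a rank-one projection on $L^2\myp{\mathbb{R}^3;\mathbb{C}^2}$ with $\norm{f_{x,p}^{\hbar,\alpha}\otimes e_s}=1$, since $f$ is $L^2$-normalized.

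First I would dispatch symmetry and positivity. Permuting the arguments $\xi_1,\dotsc,\xi_k$ of $m_{f,\Psi_N}^{\myp{k}}$ amounts to conjugating $\bigotimes_{\ell=1}^k P_{x_\ell,p_\ell}^{\hbar,\alpha}\mathcal{P}_{s_\ell}\otimes\mathds{1}_{N-k}$ by the corresponding coordinate-permutation unitary on $L^2\myp{\mathbb{R}^{3N};\mathbb{C}^{2^N}}$, which leaves the quadratic form $A\mapsto\innerp{\Psi_N}{A\Psi_N}$ unchanged because $\Psi_N$ is antisymmetric; hence $m_{f,\Psi_N}^{\myp{k}}$ is symmetric, and $m_{f,\Psi_N}^{\myp{k}}\geq0$ is immediate since each $P_{x,p}^{\hbar,\alpha}\mathcal{P}_s$ is a positive operator. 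Next I would establish the compatibility and normalization identities. The key point is the spin-summed resolution of the identity
\[
	\frac{1}{\myp{2\pi\hbar}^3}\sum_{s=\pm1}\iint_{\mathbb{R}^3\times\mathbb{R}^3}P_{x,p}^{\hbar,\alpha}\mathcal{P}_s\id x\id p=\mathds{1}_{L^2\myp{\mathbb{R}^3;\mathbb{C}^2}},
\]
which follows from the resolution of the identity for $\ketbra{f_{x,p}^{\hbar,\alpha}}{f_{x,p}^{\hbar,\alpha}}$ recalled just above together with $\mathcal{P}_1+\mathcal{P}_{-1}=\mathds{1}_{\mathbb{C}^2}$. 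Integrating out $\xi_k$ in $m_{f,\Psi_N}^{\myp{k}}$ therefore replaces the $k$-th tensor factor $P_{x_k,p_k}^{\hbar,\alpha}\mathcal{P}_{s_k}$ by $\mathds{1}_{L^2\myp{\mathbb{R}^3;\mathbb{C}^2}}$ and rewrites $N!/\myp{N-k}!$ as $\myp{N-k+1}\cdot N!/\myp{N-k+1}!$, which is \eqref{eq:compatibility2}; iterating $k$ times and using $\norm{\Psi_N}=1$ gives \eqref{eq:compatibility1}.

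The only assertion that I expect to require a genuine (though entirely standard) argument is the upper bound $m_{f,\Psi_N}^{\myp{k}}\leq1$, i.e. the quantitative form of the Pauli exclusion principle. Setting $g_\ell:=f_{x_\ell,p_\ell}^{\hbar,\alpha}\otimes e_{s_\ell}$, one has $m_{f,\Psi_N}^{\myp{k}}\myp{\xi}=\tr\myb[\big]{\myp{\ketbra{g_1}{g_1}\otimes\cdots\otimes\ketbra{g_k}{g_k}}\Gamma_{\Psi_N}^{\myp{k}}}$, where $\Gamma_{\Psi_N}^{\myp{k}}:=\tfrac{N!}{\myp{N-k}!}\tr_{k+1,\dotsc,N}\ketbra{\Psi_N}{\Psi_N}$ is the $k$-particle reduced density matrix with this normalization; being the partial trace of an antisymmetric pure state it is supported on the antisymmetric $k$-particle subspace. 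Writing $P_-$ for the antisymmetrization projector there and combining the fermionic operator bound $0\leq\Gamma_{\Psi_N}^{\myp{k}}\leq k!$ with Hadamard's inequality for the Gram determinant, $\norm{P_-\myp{g_1\otimes\cdots\otimes g_k}}^2=\tfrac1{k!}\det\myp[\big]{\innerp{g_i}{g_j}}\leq\tfrac1{k!}$, one obtains $m_{f,\Psi_N}^{\myp{k}}\myp{\xi}\leq k!\cdot\tfrac1{k!}=1$. Equivalently, exactly as in \cref{lem:spinmeasureprop}, for $k=1$ the operator $N\myp{\ketbra{g}{g}\otimes\mathds{1}_{N-1}}$ is an orthogonal projection on $\bigwedge^N L^2\myp{\mathbb{R}^3;\mathbb{C}^2}$ (the off-diagonal terms of its square annihilate antisymmetric functions), hence has expectation at most $1$, and one applies this $k$ times. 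I do not anticipate any real obstacle: the argument is literally that of the magnetic case with the index $j$ deleted and the non-magnetic coherent states \eqref{eq:coherentstate} replacing \eqref{eq:fdef}, so the details can safely be left to the reader or deferred to \cite{FouLewSol-18}.
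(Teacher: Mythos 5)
Your proof follows the same approach as the paper, which for this lemma simply says ``going through the proofs of Lemmas~2.2, 2.4, and 2.5 in \cite{FouLewSol-18}'' and otherwise implicitly relies on the proof of the magnetic analogue, \cref{lem:spinmeasureprop}. Symmetry, positivity, and the two compatibility identities \eqref{eq:compatibility1}--\eqref{eq:compatibility2} are correctly reduced to the spin-summed resolution of the identity, and your argument (b) for the Pauli bound $m_{f,\Psi_N}^{\myp{k}}\leq 1$ is exactly the paper's.

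One point, however, needs to be corrected. In argument (a) you invoke the operator inequality $\Gamma_{\Psi_N}^{\myp{k}}\leq k!$ on $\bigwedge^k L^2\myp{\mathbb{R}^3;\mathbb{C}^2}$, i.e. $\gamma^{\myp{k}}\leq 1$ with $\tr\gamma^{\myp{k}}=\binom{N}{k}$. This is \emph{false} for $k\geq 2$: by Yang's theorem on off-diagonal long-range order, pairing states have a two-body reduced density matrix whose largest eigenvalue is of order $N$, which far exceeds $1$. The eigenvector achieving this is not a Slater determinant, and that is precisely why the conclusion nevertheless holds: what is true, and what your Hadamard/Gram determinant step in fact needs, is that $\innerp{\widetilde G}{\Gamma_{\Psi_N}^{\myp{k}}\widetilde G}\leq k!$ for $\widetilde G$ a \emph{normalized decomposable} antisymmetric vector, and $P_-(g_1\otimes\cdots\otimes g_k)/\norm{P_-(g_1\otimes\cdots\otimes g_k)}$ is such a vector. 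This follows from the Cauchy--Binet identity, or most cleanly from $m_{f,\Psi_N}^{\myp{k}}\myp{\xi}=\norm{a(g_k)\cdots a(g_1)\Psi_N}^2\leq\prod_\ell\norm{g_\ell}^2=1$ in second-quantized form. So argument (a) reaches the right conclusion, but via an intermediate operator inequality that is not valid; the decomposable structure of the test vector must be used. Since argument (b) is self-contained and correct, the proof as a whole stands after this repair.
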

\begin{lem}[Position densities]
\label{lem:momentint}
	Supposing that $ f $ is real, $ L^2 $-normalized and even, we have for $ 1 \leq k \leq N $ and any normalized $ \Psi_N $ that
	\begin{equation}
		\frac{1}{\myp{2 \pi \hbar}^{3k}} \int_{\mathbb{R}^{3k}} m_{f,\Psi_N}^{\myp{k}} \myp{x,p,s} \id p
		= k! \myp{ \widetilde{\rho}_{\Psi_N}^{\, \myp{k}} \ast \myp{ \abs{ f^{\hbar\alpha} }^2 }^{\otimes k} } \myp{x,s}.
	\end{equation}
\end{lem}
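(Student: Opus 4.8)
The plan is to prove this by mirroring the argument for the magnetic analogue \cref{lem:spinmomentint}; here the computation is shorter because the $f_{x,p}^{\hbar,\alpha}$ are the ordinary coherent states and no diagonalisation of the magnetic Laplacian intervenes. The computational core is a ``partial Plancherel'' identity: setting $f_x^{\hbar\alpha}\myp{y} := f^{\hbar\alpha}\myp{y-x}$ so that $f_{x,p}^{\hbar,\alpha}\myp{y} = f_x^{\hbar\alpha}\myp{y}\, e^{i p\cdot y/\hbar}$, one has $\innerp{f_{x,p}^{\hbar,\alpha}}{\psi} = \myp{2\pi\hbar}^{3/2}\,\mathcal{F}_{\hbar}\myb[\big]{\overline{f_x^{\hbar\alpha}}\,\psi}\myp{p}$ for $\psi\in L^2\myp{\mathbb{R}^3}$, and therefore, by Plancherel for $\mathcal{F}_{\hbar}$,
\begin{equation*}
	\frac{1}{\myp{2\pi\hbar}^3}\int_{\mathbb{R}^3}\abs[\big]{\innerp{f_{x,p}^{\hbar,\alpha}}{\psi}}^2\id p = \int_{\mathbb{R}^3}\abs{f^{\hbar\alpha}\myp{y-x}}^2\,\abs{\psi\myp{y}}^2\id y .
\end{equation*}
Applied in each of $k$ tensor slots, this shows that $\frac{1}{\myp{2\pi\hbar}^{3k}}\int_{\mathbb{R}^{3k}}\bigotimes_{\ell=1}^k P_{x_\ell,p_\ell}^{\hbar,\alpha}\id p$ acts as multiplication by $\prod_{\ell=1}^k\abs{f^{\hbar\alpha}\myp{y_\ell-x_\ell}}^2$ in the first $k$ position variables, leaving the trivial $\mathds{1}_{\mathbb{C}^2}$ factors untouched.

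Carrying this out, I would fix the spin labels $s=\myp{s_1,\dotsc,s_k}$ and expand $\Psi_N$ into its spin components $\Psi_N\myp{\nonarg;r}$, $r\in\Set{\pm1}^N$. After integrating the definition of $m_{f,\Psi_N}^{\myp{k}}$ over $p\in\mathbb{R}^{3k}$ and dividing by $\myp{2\pi\hbar}^{3k}$, the identity above replaces the $P^{\hbar,\alpha}$-factors by the above multiplication operator, while each $\mathcal{P}_{s_\ell}$ forces the corresponding spin component to satisfy $r_\ell=s_\ell$; the interchange of the $p$- and $y$-integrations is legitimate because all integrands are non-negative. Summing over the remaining spins $r_{k+1},\dotsc,r_N$ and integrating out $y_{k+1},\dotsc,y_N$, and using $\frac{N!}{\myp{N-k}!}=k!\binom{N}{k}$ together with the definition \eqref{eq:kpartdensityspin} of $\widetilde{\rho}_{\Psi_N}^{\,\myp{k}}$, one arrives at
\begin{equation*}
	\frac{1}{\myp{2\pi\hbar}^{3k}}\int_{\mathbb{R}^{3k}} m_{f,\Psi_N}^{\myp{k}}\myp{x,p,s}\id p = k!\int_{\mathbb{R}^{3k}}\prod_{\ell=1}^k\abs{f^{\hbar\alpha}\myp{y_\ell-x_\ell}}^2\,\widetilde{\rho}_{\Psi_N}^{\,\myp{k}}\myp{y_1,s_1;\dotsc;y_k,s_k}\id y .
\end{equation*}

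Since $f$ is real and even, $\abs{f^{\hbar\alpha}\myp{y_\ell-x_\ell}}^2=\abs{f^{\hbar\alpha}\myp{x_\ell-y_\ell}}^2$, so the right-hand side is exactly the tensorised position-space convolution $k!\,\myp[\big]{\widetilde{\rho}_{\Psi_N}^{\,\myp{k}}\ast\myp{\abs{f^{\hbar\alpha}}^2}^{\otimes k}}\myp{x,s}$, which is the claim. I do not anticipate a genuine obstacle: the only points needing care are the bookkeeping of the spin sums and the combinatorial prefactor, and the Fubini--Plancherel manipulation, all of which are entirely routine — this is precisely why the text contents itself with pointing to the corresponding arguments in \cite{FouLewSol-18}.
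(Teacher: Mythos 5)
Your proof is correct and follows essentially the same route the paper intends: the paper states that Lemmas~2.2, 2.4, 2.5 of \cite{FouLewSol-18} give this directly, and the computation you carry out is precisely the non-magnetic specialization of the proof of \cref{lem:spinmomentint} — the ``partial Plancherel'' step replacing the $p$-integral of the rank-one coherent-state projections by multiplication by $\abs{f^{\hbar\alpha}(\nonarg - x)}^2$, followed by the spin bookkeeping and the $\frac{N!}{(N-k)!}=k!\binom{N}{k}$ identity. Nothing is missing; the Fubini interchange is indeed harmless by non-negativity, and the evenness of $f$ is used exactly where needed to turn the resulting integral into a convolution.
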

\begin{lem}[Kinetic energy]
\label{lem:kinenergy}
	Suppose that $ \Psi_N \in \bigwedge^N H_{\hbar b^{-1}A}^1 \myp{\mathbb{R}^3} $ is normalized in $ L^2 $ and satisfies $ A \Psi_N \in L^2 \myp{\mathbb{R}^3;\mathbb{R}^3} $, and that $ f \in C_c^{\infty} \myp{\mathbb{R}^3} $ is real-valued, $ L^2 $-normalised and even.
	Then we have
	\begin{align}
		\MoveEqLeft[3]	\innerp[\Big]{ \Psi_N }{ \sum\limits_{j=1}^N \myp{ \bm{\sigma} \cdot \myp{-i\hbar \nabla_j + bA \myp{x_j}}}^2 \Psi_N } \nonumber \\
		={}& \frac{1}{\myp{2 \pi \hbar}^3} \sum_{s= \pm 1} \iint_{\mathbb{R}^6} \abs{p + bA\myp{x}}^2 m_{f,\Psi_N}^{\myp{1}} \myp{x,p,s} \id x \id p \nonumber \\
		&+ \hbar b N \innerp[\big]{ \Psi_N }{ \sigma_3 \Psi_N } - \frac{\hbar}{\alpha} N \int \abs{ \nabla f }^2 \nonumber \\
		&+ 2b \mathrm{Re} \innerp[\Big]{ \Psi_N }{ \sum\limits_{j=1}^N \myp[\big]{ A - A \ast \abs{ f^{\hbar \alpha} }^2 } \myp{x_j} \cdot \myp{-i \hbar \nabla_j} \Psi_N } \nonumber \\
		&+ b^2 \innerp[\Big]{ \Psi_N }{ \sum\limits_{j=1}^N \myp[\big]{ \abs{A}^2 - \abs{A}^2 \ast \abs{ f^{\hbar \alpha} }^2 } \myp{x_j} \Psi_N }.
	\label{eq:kinenergy}
	\end{align}
\end{lem}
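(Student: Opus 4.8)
The plan is to expand the magnetic kinetic energy and apply the coherent-state symbol calculus term by term, following the scheme of \cref{lem:spinkinenergy} and of Lemmas $2.2$, $2.4$ and $2.5$ of \cite{FouLewSol-18}, the only genuinely new features being the spin variable and the auxiliary parameter $\alpha$. First I would use the Pauli identity \eqref{eq:paulieq} to write $\myp{\bm{\sigma}\cdot\myp{-i\hbar\nabla_j+bA\myp{x_j}}}^2 = \myp{-i\hbar\nabla_j+bA\myp{x_j}}^2\mathds{1}_{\mathbb{C}^2}+\hbar b\sigma_3$; taking the expectation in $\Psi_N$ immediately peels off the term $\hbar b N\innerp{\Psi_N}{\sigma_3\Psi_N}$, so it remains to compute $\innerp{\Psi_N}{\sum_j\myp{-i\hbar\nabla_j+bA\myp{x_j}}^2\Psi_N}$. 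Since $A$ is the symmetric gauge, $\nabla\cdot A=0$, so as quadratic forms $\myp{-i\hbar\nabla+bA}^2 = -\hbar^2\Delta + 2bA\cdot\myp{-i\hbar\nabla} + b^2\abs{A}^2$, and I would treat these three summands separately, throughout working with the translates of $f^{\hbar\alpha}$ from \eqref{eq:coherentstate} and keeping every momentum integral under its companion position integral (the $p$-integral alone diverges).

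For $-\hbar^2\Delta$, the $b=0$ case of the IMS localisation \eqref{eq:imsloc}, applied to the $\alpha$-scaled profile and combined with Plancherel in the momentum variable exactly as in the computation of the $p^2$-contribution in \cref{lem:spinkinenergy}, gives $\innerp{\Psi_N}{\sum_j\myp{-\hbar^2\Delta_j}\Psi_N} = \frac{1}{\myp{2\pi\hbar}^3}\sum_{s=\pm1}\iint\abs{p}^2 m_{f,\Psi_N}^{\myp{1}}\myp{x,p,s}\id x\id p - \hbar^2 N\normt{\nabla f^{\hbar\alpha}}{2}^2$, and the scaling $\normt{\nabla f^{\hbar\alpha}}{2}^2 = \myp{\hbar\alpha}^{-1}\normt{\nabla f}{2}^2$ turns the error into $-\frac{\hbar}{\alpha}N\int\abs{\nabla f}^2$. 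For $b^2\abs{A}^2$, I would apply the position-density identity \cref{lem:momentint} with $k=1$, move the convolution onto $\abs{A}^2$ (legitimate because $f$ is even) and sum the spin, obtaining $\frac{1}{\myp{2\pi\hbar}^3}\sum_s\iint b^2\abs{A\myp{x}}^2 m_{f,\Psi_N}^{\myp{1}} = b^2\innerp{\Psi_N}{\sum_j\myp{\abs{A}^2\ast\abs{f^{\hbar\alpha}}^2}\myp{x_j}\Psi_N}$; subtracting this from $b^2\innerp{\Psi_N}{\sum_j\abs{A\myp{x_j}}^2\Psi_N}$ gives the last error term in \eqref{eq:kinenergy}.

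The cross term $2bA\cdot\myp{-i\hbar\nabla}$ is where the real work lies. Here I would first establish, for $\phi\in L^2\myp{\mathbb{R}^3}$, the one-body identity $\frac{1}{\myp{2\pi\hbar}^3}\int p_k\abs{\innerp{f_{x,p}^{\hbar,\alpha}}{\phi}}^2\id p = \innerp{f^{\hbar\alpha}\myp{\nonarg-x}\phi}{\myp{-i\hbar\partial_k}\myp{f^{\hbar\alpha}\myp{\nonarg-x}\phi}}$, which follows from Plancherel in $p$ (under which $p_k$ becomes $-i\hbar\partial_k$), and then expand the derivative of the product $f^{\hbar\alpha}\myp{\nonarg-x}\phi$. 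The term in which the derivative falls on $\phi$, after multiplication by $2bA_k\myp{x}$, summation over $k$, integration in $x$ and substitution of the spin components of $\Psi_N$ for $\phi$, reproduces $2b\innerp{\Psi_N}{\sum_j\myp{A\ast\abs{f^{\hbar\alpha}}^2}\myp{x_j}\cdot\myp{-i\hbar\nabla_j}\Psi_N}$; the term in which the derivative falls on $f^{\hbar\alpha}$ becomes, after an integration by parts in $x$, a multiple of $\iint\myp{\nabla\cdot A}\myp{x}\abs{f^{\hbar\alpha}\myp{y-x}}^2\abs{\phi\myp{y}}^2\id x\id y$ and hence vanishes. Since $\nabla\cdot A = \nabla\cdot\myp{A\ast\abs{f^{\hbar\alpha}}^2} = 0$, both $\innerp{\Psi_N}{\sum_j A\myp{x_j}\cdot\myp{-i\hbar\nabla_j}\Psi_N}$ and $\innerp{\Psi_N}{\sum_j\myp{A\ast\abs{f^{\hbar\alpha}}^2}\myp{x_j}\cdot\myp{-i\hbar\nabla_j}\Psi_N}$ are real, so the difference between the full cross term and its symbol may be written as $2b\,\mathrm{Re}\innerp{\Psi_N}{\sum_j\myp{A - A\ast\abs{f^{\hbar\alpha}}^2}\myp{x_j}\cdot\myp{-i\hbar\nabla_j}\Psi_N}$, which is the third error term.

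Finally, adding the three resolution-of-identity contributions and using $\abs{p}^2 + 2bp\cdot A\myp{x} + b^2\abs{A\myp{x}}^2 = \abs{p+bA\myp{x}}^2$ assembles the main term of \eqref{eq:kinenergy}, and the three errors together with the $\hbar b N$-term account for the rest. The main obstacle is the bookkeeping in the cross term — pinning down which contributions survive and which cancel, and justifying the Fourier and Fubini manipulations under the hypotheses $\Psi_N\in\bigwedge^N H_{\hbar b^{-1}A}^1\myp{\mathbb{R}^3}$ and $A\Psi_N\in L^2$ (which together force $\nabla\Psi_N\in L^2$, so that all the quadratic forms above are finite) — while never separating a momentum integral from its companion position integral.
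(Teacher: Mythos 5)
Your argument is correct and follows the same route the paper takes, which is to defer to Lemma~2.5 of \cite{FouLewSol-18}: the decomposition via the Pauli identity together with the splitting $\myp{-i\hbar\nabla+bA}^2 = -\hbar^2\Delta + 2bA\cdot\myp{-i\hbar\nabla} + b^2\abs{A}^2$, then Plancherel for the momentum weights, the $b=0$ IMS formula for the Laplacian piece, the convolution identity for the $\abs{A}^2$ piece, and the $\nabla\cdot A = 0$ cancellation in the cross term, is precisely that proof adapted to the spin variable and the $\alpha$-scaling. Your observation that the $\mathrm{Re}$ in the third error term is automatic here (since both $A$ and $A\ast\abs{f^{\hbar\alpha}}^2$ are divergence-free) is also accurate; it is written with $\mathrm{Re}$ only for conformity with the general-gauge statement in the reference.
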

\begin{lem}[Estimation of error terms]
\label{lem:errorterms}
	With our specific choice of magnetic potential, $ A \myp{x} = \frac{1}{2} \myp{-x_2,x_1,0} $, we have
	\begin{equation*}
		\innerp[\Big]{ \Psi_N }{ \sum\limits_{j=1}^N \myp[\big]{ A - A \ast \abs{ f^{\hbar \alpha} }^2 } \myp{x_j} \cdot \myp{-i \hbar \nabla_j} \Psi_N }
		=0,
	\end{equation*}
	and
	\begin{equation*}
		\abs[\Big]{ \innerp[\Big]{ \Psi_N }{ \sum\limits_{j=1}^N \myp[\big]{ \abs{A}^2 - \abs{A}^2 \ast \abs{ f^{\hbar \alpha} }^2 } \myp{x_j} \Psi_N } }
		\leq C N \hbar \alpha.
	\end{equation*}
\end{lem}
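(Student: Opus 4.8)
The plan is to use the explicit structure of the vector potential: $ A \myp{x} = \frac{1}{2} \myp{-x_2,x_1,0} $ is a \emph{linear} map and $ \abs{A \myp{x}}^2 = \frac{1}{4} \myp{x_1^2+x_2^2} $ is a homogeneous quadratic. Both assertions then reduce to elementary convolution identities in which the hypotheses that $ f $ is even and $ L^2 $-normalised do all the work. There are no analytic subtleties, so the proof is essentially a direct computation; throughout one uses the change of variables $ y = \sqrt{\hbar \alpha}\, z $, under which $ \abs{f^{\hbar \alpha} \myp{y}}^2 \id y = \abs{f \myp{z}}^2 \id z $ because $ f^{\hbar \alpha} \myp{y} = \myp{\hbar \alpha}^{-3/4} f \myp{ y / \sqrt{\hbar \alpha} } $.

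For the first identity, by linearity of $ A $ one has
\[
	\myp{A \ast \abs{f^{\hbar \alpha}}^2} \myp{x} = \int_{\mathbb{R}^3} \myp{A \myp{x} - A \myp{y}} \abs{f^{\hbar \alpha} \myp{y}}^2 \id y = A \myp{x} \int_{\mathbb{R}^3} \abs{f \myp{z}}^2 \id z - \int_{\mathbb{R}^3} A \myp{z} \abs{f \myp{z}}^2 \id z = A \myp{x},
\]
where the first integral equals $ 1 $ since $ f $ is normalised, and the second vanishes because $ \abs{f}^2 $ is even while $ A $ is odd. Thus $ A - A \ast \abs{f^{\hbar \alpha}}^2 $ is the zero function, so the one-body multiplication operator $ \sum_{j=1}^N \myp{A - A \ast \abs{f^{\hbar \alpha}}^2} \myp{x_j} \cdot \myp{-i \hbar \nabla_j} $ is identically zero and the expectation value vanishes.

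For the second identity, expand $ \abs{A \myp{x-y}}^2 = \abs{A \myp{x}}^2 - \frac{1}{2} \myp{x_1 y_1 + x_2 y_2} + \abs{A \myp{y}}^2 $; convolving against $ \abs{f^{\hbar \alpha}}^2 $, the middle term integrates to zero by evenness of $ f $, so that, again with $ y = \sqrt{\hbar \alpha}\, z $,
\[
	\myp{\abs{A}^2 \ast \abs{f^{\hbar \alpha}}^2} \myp{x} - \abs{A \myp{x}}^2 = \int_{\mathbb{R}^3} \abs{A \myp{y}}^2 \abs{f^{\hbar \alpha} \myp{y}}^2 \id y = \hbar \alpha \, \frac{1}{4} \int_{\mathbb{R}^3} \myp{z_1^2 + z_2^2} \abs{f \myp{z}}^2 \id z =: \hbar \alpha \, C_f,
\]
a finite constant since $ f \in C_c^{\infty} \myp{\mathbb{R}^3} $. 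Hence $ \sum_{j=1}^N \myp{\abs{A}^2 - \abs{A}^2 \ast \abs{f^{\hbar \alpha}}^2} \myp{x_j} $ is the constant $ -\hbar \alpha C_f N $, and since $ \Psi_N $ is normalised we obtain
\[
	\innerp[\Big]{ \Psi_N }{ \sum_{j=1}^N \myp[\big]{ \abs{A}^2 - \abs{A}^2 \ast \abs{ f^{\hbar \alpha} }^2 } \myp{x_j} \Psi_N } = - \hbar \alpha \, C_f \, N,
\]
which gives the claimed bound (in fact with equality) with $ C = C_f = \frac{1}{4} \int_{\mathbb{R}^3} \myp{z_1^2 + z_2^2} \abs{f \myp{z}}^2 \id z $. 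The only point requiring a moment's care is the bookkeeping of the scaling factors in $ f^{\hbar \alpha} $ and the verification that all first moments of $ \abs{f^{\hbar \alpha}}^2 $ vanish --- which is precisely where evenness of $ f $ enters --- so there is no genuine obstacle.
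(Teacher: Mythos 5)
Your proof is correct and takes essentially the same approach as the paper: linearity of $A$ plus evenness of $f$ kills the first term exactly, and the quadratic structure of $\abs{A}^2$ (which you write as a polarization identity, where the paper writes it as a two-term Taylor expansion with exact quadratic remainder) produces a constant error of size $\hbar\alpha$ after the odd first-moment drops. The only nitpick is a cosmetic one: after the change of variables in the first computation the vanishing integral should carry a prefactor $\sqrt{\hbar\alpha}$, but since it is zero this does not affect the conclusion.
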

\begin{proof}
	By direct computation,
	\begin{equation*}
		\abs{A}^2 \myp{x} - \abs{A}^2 \myp{y} - \nabla \abs{A}^2 \myp{x} \cdot \myp{x-y} = - \frac{1}{4} \myp{x_{\perp} - y_{\perp}}^2,
	\end{equation*}
	implying for any $ x \in \mathbb{R}^3 $, since $ f $ is even, that
	\begin{align*}
		\MoveEqLeft[3]	\abs[\big]{\abs{A}^2 \myp{x} - \abs{A}^2 \ast \abs{ f^{\hbar \alpha} }^2 \myp{x}} \\
		&= \abs[\Big]{\int \myp{ \abs{A}^2 \myp{x} - \abs{A}^2 \myp{y} } \abs{ f^{\hbar \alpha} \myp{x-y} }^2 \id y} \\
		&= \abs[\Big]{\int \myp{ \nabla \abs{A}^2 \myp{x} \cdot \myp{x-y} - \frac{1}{4} \myp{x_{\perp} - y_{\perp}}^2 } \abs{ f^{\hbar \alpha} \myp{x-y} }^2 \id y} \\
		&= \frac{1}{4} \int y_{\perp}^2 \abs{ f^{\hbar \alpha} \myp{y} }^2 \id y 
		= C \hbar \alpha.
	\end{align*}
	On the other hand, since $ A $ is linear and $ f $ is even,
	\begin{align*}
		A \myp{x} - A \ast \abs{ f^{\hbar \alpha} }^2 \myp{x}
		&= \int \myp{A \myp{x} - A \myp{y}} \abs{ f^{\hbar \alpha} \myp{x-y} }^2 \id y \\
		&= \int A \myp{y} \abs{ f^{\hbar \alpha} \myp{y} }^2 \id y = 0,
	\end{align*}
	so the error term in \eqref{eq:kinenergy} involving $ A - A \ast \abs{f^{\hbar \alpha}}^2 $ is simply not present in our case.
\end{proof}
Because of the last error term in \eqref{eq:kinenergy}, we need to distinguish two cases, depending on how fast the parameter $ \beta_N $ tends to zero.
If $ b = N^{1/3} \beta_N \myp{1+\beta_N}^{-3/5} $ is bounded from above, we can take $ \alpha = 1 $.
If, on the other hand, $ \beta_N $ goes to zero slowly enough such that $ b \to \infty $, we instead take $ \alpha = b^{-1} $.
Then all the error terms in \eqref{eq:kinenergy} will be of order at most $ \hbar b $, and furthermore $ \hbar \alpha \to 0 $, so $ \abs{f^{\hbar \alpha}}^2 $ is still an approximate identity.
For the sake of brevity we will treat both cases simultaneously by choosing
\begin{equation}
\label{eq:alphascale}
	\alpha = \myp{1+b}^{-1}.
\end{equation}
By combining \cref{lem:kinenergy,lem:errorterms} we have, since also $ \hbar \alpha^{-1} \to 0 $,
\begin{align}
\label{eq:kinapprox}
	\MoveEqLeft[3]	\innerp[\Big]{ \Psi_N }{ \sum\limits_{j=1}^N \myp{ \bm{\sigma} \cdot \myp{-i\hbar \nabla_j + bA \myp{x_j}}}^2 \Psi_N } \nonumber \\
	&= \frac{1}{\myp{2 \pi \hbar}^3} \sum_{s= \pm 1} \iint_{\mathbb{R}^6} \abs{p + bA\myp{x}}^2 m_{f,\Psi_N}^{\myp{1}} \myp{x,p,s} \id x \id p + o \myp{N}.
\end{align}
When $ b $ is unbounded, a slight complication arises from the fact that we cannot obtain tightness in the momentum variables of the semi-classical measures, due to the presence of $ b A \myp{x} $ in the above approximation.
We can, however, circumvent this by doing a simple translation in the momentum variables.
Note that doing this will not change the position densities of the measures.

For $ x \in \mathbb{R}^{3k} $ we denote $ \widetilde{A}\myp{x} = \myp{A\myp{x_1}, \dotsc, A \myp{x_k}} $ and define
\begin{equation*}
	\widetilde{m}_N^{\myp{k}} \myp{x,p,s} = m_{f,\Psi_N}^{\myp{k}} \myp{ x, p-b \widetilde{A} \myp{x} ,s }.
\end{equation*}
Then the family of sequences $ \myp{ \widetilde{m}_N^{\myp{k}} }_{N\geq k} $ still satisfies \cref{lem:measureprop,lem:momentint}, and in exactly the same way as in \cref{lem:measurelim}, we obtain (symmetric) weak limits $ \widetilde{m}^{\myp{k}} \in L^1 \myp{ \myp{\mathbb{R}^{6} \times \Set{\pm 1}}^k } \cap L^{\infty} \myp{ \myp{\mathbb{R}^{6} \times \Set{\pm 1}}^k } $ with $ 0 \leq \widetilde{m}^{\myp{k}} \leq 1 $ such that
\begin{align}
\label{eq:measurelimweak}
	\MoveEqLeft[3]	\sum_{s \in \Set{\pm 1}^k} \int_{\mathbb{R}^{6k}} \widetilde{m}_N^{\myp{k}} \myp{x,p,s} \varphi \myp{x,p,s} \id x \id p \nonumber \\
	&\longrightarrow \sum_{s \in \Set{\pm 1}^k} \int_{\mathbb{R}^{6k}} \widetilde{m}^{\myp{k}} \myp{x,p,s} \varphi \myp{x,p,s} \id x \id p
\end{align}
for all $ \varphi \in L^1 \myp{ \myp{\mathbb{R}^{6} \times \Set{\pm 1}}^k} + L_{\varepsilon}^{\infty} \myp{ \myp{\mathbb{R}^{6} \times \Set{\pm 1}}^k } $, as $ N $ tends to infinity.
\begin{lem}
	Suppose that $ \Psi_N \in \bigwedge^N L^2 \myp{\mathbb{R}^3; \mathbb{C}^2} $ is a sequence satisfying the energy bound $ \innerp{ \Psi_N }{ H_{N,\beta_N} \Psi_N } \leq C N $.
	Then we have
	\begin{enumerate}
		\item The sequence $ \myp{ \widetilde{m}_N^{\myp{k}} }_{N\geq k} $ is tight for each $ k \geq 1 $.
		
		\item The limit measures $ \widetilde{m}^{\myp{k}} $ are probability measures, i.e.,
		\begin{equation}
			\frac{1}{\myp{2 \pi}^{3k}} \sum_{s \in \Set{ \pm 1}^k} \iint_{\mathbb{R}^{3k} \times \mathbb{R}^{3k}} \widetilde{m}^{\myp{k}} \myp{x,p,s} \id x \id p = 1.
		\end{equation}
		
		\item The compatibility relation \eqref{eq:compatibility2} is preserved in the limit, that is, for $ k \geq 2 $ and almost every $ \myp{x,p,s} \in \myp{\mathbb{R}^3 \times \mathbb{R}^3 \times \Set{\pm 1}}^{k-1} $,
		\begin{equation}
			\frac{1}{\myp{2 \pi}^3} \sum_{s_k = \pm 1} \iint_{\mathbb{R}^6} \widetilde{m}^{\myp{k}} \myp{x,x_k;p,p_k;s,s_k} \id x_k \id p_k
			= \widetilde{m}^{\myp{k-1}} \myp{x,p,s}.
		\end{equation}
		
		\item The convergence in \eqref{eq:measurelimweak} holds for any $ \varphi $ in $ L^1 \myp{ \myp{\mathbb{R}^{6} \times \Set{\pm 1}}^k } + L^{\infty} \myp{ \myp{\mathbb{R}^{6} \times \Set{\pm 1}}^k } $.
	\end{enumerate}
\end{lem}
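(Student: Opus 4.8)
The plan is to follow the template of \cref{lem:tightness} and \cref{lem:limmeasures}; the only genuinely new point is momentum tightness when $b$ is unbounded, which is exactly the reason the translated measures $\widetilde{m}_N^{(k)}$ were introduced. First I would record that the fibrewise substitution $p_\ell\mapsto p_\ell - bA(x_\ell)$ is measure-preserving in each $p_\ell$ and independent of the spin variables, so $0\le\widetilde{m}_N^{(k)}\le1$, the total-mass relation \eqref{eq:compatibility1}, the position-density identity of \cref{lem:momentint}, and the compatibility relation \eqref{eq:compatibility2} all hold verbatim for the $\widetilde{m}_N^{(k)}$.

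Next I would prove tightness of $(\widetilde{m}_N^{(1)})_N$, treating position and momentum separately. Since the translation does not change position marginals, \cref{lem:momentint} shows the position density of $\widetilde{m}_N^{(1)}$ is $(2\pi\hbar)^3\bigl(\widetilde{\rho}_{\Psi_N}^{(1)}\ast|f^{\hbar\alpha}|^2\bigr)$. Feeding $\langle\Psi_N,H_{N,\beta_N}\Psi_N\rangle\le CN$ into the lower bound \eqref{eq:energybound} — whose error term $\tfrac{b+1}{\hbar^2N}+1$ is bounded by \eqref{eq:scaling} and \eqref{eq:scaling2} — yields $\tfrac1N\int V_+\rho_{\Psi_N}^{(1)}\le C$, and since $V$ is confining, $\tfrac1N\widetilde{\rho}_{\Psi_N}^{(1)}$ is tight; convolution with $|f^{\hbar\alpha}|^2$ preserves tightness because $f$ is compactly supported, giving tightness in $x$. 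For momentum, after the translation \eqref{eq:kinapprox} becomes $\langle\Psi_N,\sum_j(\bm{\sigma}\cdot(-i\hbar\nabla_j+bA(x_j)))^2\Psi_N\rangle=\tfrac1{(2\pi\hbar)^3}\sum_s\iint|p|^2\widetilde{m}_N^{(1)}(x,p,s)\,dx\,dp+o(N)$, whose left side is $\le 2\langle\Psi_N,H_{N,\beta_N}\Psi_N\rangle+CN\bigl(\tfrac{b+1}{\hbar^2N}+1\bigr)\le CN$ by \eqref{eq:kinbound} (the $F_{b/\hbar}$-term being $\ge0$). Dividing by $N$ and using $\hbar^3N=(1+\beta_N)^{3/5}\to1$ bounds $\sum_s\iint|p|^2\widetilde{m}_N^{(1)}$ uniformly in $N$, so Markov's inequality gives tightness in $p$. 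This is where the translation is indispensable: one only controls $\iint|p+bA(x)|^2m_{f,\Psi_N}^{(1)}$, which bounds no momentum moment of $m_{f,\Psi_N}^{(1)}$ itself once $b\to\infty$.

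Tightness of $(\widetilde{m}_N^{(k)})_N$ for $k\ge2$ then follows from the $k=1$ case, $0\le\widetilde{m}_N^{(k)}\le1$, and \eqref{eq:compatibility2}, via $\int_{|\xi_1|+\cdots+|\xi_k|\ge R}\widetilde{m}_N^{(k)}\le k\int_{|\xi_k|\ge R/k}\widetilde{m}_N^{(k)}$ and symmetry, exactly as for \cref{lem:limmeasures}(1). With tightness in hand, no mass escapes, so passing to the limit in \eqref{eq:compatibility1} gives $\tfrac1{(2\pi)^{3k}}\sum_s\iint\widetilde{m}^{(k)}=\lim_N\hbar^{3k}\tfrac{N!}{(N-k)!}=1$, which is (2). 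For (3), rewrite \eqref{eq:compatibility2} for the translated measures (using $\hbar^3=N^{-1}(1+\beta_N)^{3/5}$) as $\tfrac1{(2\pi)^3(1+\beta_N)^{3/5}}\sum_{s_k}\iint\widetilde{m}_N^{(k)}\,dx_k\,dp_k=\tfrac{N-k+1}{N}\widetilde{m}_N^{(k-1)}$, test against $\varphi\cdot\mathds{1}_{\{|\xi_k|\le R\}}$, apply \eqref{eq:measurelimweak} on both sides, let $R\to\infty$ using tightness to drop the truncation, and note $(1+\beta_N)^{3/5}\to1$, $\tfrac{N-k+1}{N}\to1$. Finally, (4) is the usual consequence of tightness: the convergence \eqref{eq:measurelimweak}, a priori only against test functions with small $L^\infty$-part, extends to all of $L^1+L^\infty$ once mass is known not to be lost in the limit.

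The main obstacle lies in the second paragraph: one must verify that the error terms in \cref{lem:errorterms} are genuinely $o(N)$ with the choice $\alpha=(1+b)^{-1}$ from \eqref{eq:alphascale}, so that the clean identity \eqref{eq:kinapprox} holds, and then use the translation to absorb the uncontrolled drift $bA(x)$ so that a second-moment bound on the momentum of $\widetilde{m}_N^{(1)}$ becomes available. Everything else transcribes the strong-field arguments or the corresponding lemmas of \cite{FouLewSol-18}.
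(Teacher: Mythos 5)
Your proof is correct and takes essentially the same approach as the paper: position tightness of $\widetilde{m}_N^{(1)}$ via the convolution formula of \cref{lem:momentint} together with the confining potential, momentum tightness via the translation and a Markov bound on the second moment $\iint|p|^2\widetilde{m}_N^{(1)} = \iint|p+bA(x)|^2 m_{f,\Psi_N}^{(1)}$ controlled by \eqref{eq:kinapprox} and \eqref{eq:kinbound}, and items (2)--(4) deduced from tightness exactly as in \cref{lem:limmeasures}. Your explicit checks of the prefactors (using $\hbar^3 N=(1+\beta_N)^{3/5}\to1$) and the remark about why the translation is indispensable when $b\to\infty$ are good elaborations of what the paper leaves implicit.
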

\begin{proof}
	We will only prove that $ \myp{ \widetilde{m}_N^{\myp{1}} } $ is a tight sequence.
	The rest follows in exactly the same way as in \cref{lem:limmeasures}.  Supposing that $ f $ is supported on a ball with radius $ K $ centred at the origin, we have by \cref{lem:momentint} for $ \hbar \alpha $ small that
	\begin{align*}
		\MoveEqLeft[3]	\int_{\abs{x} \geq R} \int_{\mathbb{R}^3} \widetilde{m}_N^{\myp{1}} \myp{x,p,s} \id p \id x \\
		&= \myp{2 \pi \hbar}^3 \int_{\abs{x} \geq R} \int_{\mathrm{supp} f^{\hbar \alpha}} \widetilde{\rho}_{\Psi_N}^{\, \myp{1}} \myp{x-y,s} \abs{ f^{\hbar \alpha} \myp{y} }^2 \id y \id x \\
		&\leq C \int_{\abs{x} \geq R-K} \frac{1}{N} \widetilde{\rho}_{\Psi_N}^{\, \myp{1}} \myp{x,s} \id x.
	\end{align*}
	Now, combining \cref{lem:ltbounds} with the fact that $ V $ is a confining potential, it follows that the right hand side above tends to zero uniformly in $ N $ as $ R $ tends to infinity, implying that $ \myp{ \widetilde{m}_N^{\myp{1}} } $ is tight in the position variable.
	Using the kinetic energy bound \eqref{eq:kinbound} combined with \eqref{eq:kinapprox} we also obtain
	\begin{align*}
		\MoveEqLeft[3] \int_{\abs{p}\geq R} \int_{\mathbb{R}^3} \widetilde{m}_N^{\myp{1}} \myp{x,p,s} \id x \id p
		= \int_{\mathbb{R}^3} \int_{\abs{p+bA\myp{x}}\geq R} m_{f,\Psi_N}^{\myp{1}} \myp{x,p,s} \id p \id x \\
		&\leq \frac{1}{R^2} \iint_{\mathbb{R}^6} \abs{p+bA\myp{x}}^2 m_{f,\Psi_N}^{\myp{1}} \myp{x,p,s} \id x \id p \leq \frac{C}{R^2},
	\end{align*}
	showing that $ \myp{ \widetilde{m}_N^{\myp{1}} } $ is also tight in the momentum variable.
\end{proof}
\begin{prop}[Convergence of states]
	Suppose that $ \beta_N \to 0 $, and that $ \Psi_N \in \bigwedge^N L^2 \myp{\mathbb{R}^3; \mathbb{C}^2} $ is a sequence of normalized wave functions satisfying the energy bound $ \innerp{ \Psi_N }{ H_{N,\beta_N} \Psi_N } \leq C N $.
	Then there exist a subsequence $ \myp{N_\ell} \subseteq \mathbb{N} $ and a unique Borel probability measure $ \mathrm{P} $ on the set
	\begin{equation*}
		\mathcal{S} = \Set[\big]{\mu \in L^1 \myp{\mathbb{R}^{6} \times \Set{\pm 1}} \given 0 \leq \mu \leq 1, \ \frac{1}{\myp{2 \pi}^3} \normt{\mu}{1} = 1},
	\end{equation*}
	such that for each $ k \geq 1 $ the following holds:
	\begin{enumerate}
		\item For all $ \varphi \in L^1 \myp{ \myp{\mathbb{R}^{6} \times \Set{\pm 1}}^k } + L^{\infty} \myp{ \myp{\mathbb{R}^{6} \times \Set{\pm 1}}^k } $, as $ \ell $ tends to infinity,
		\begin{align*}
			\MoveEqLeft[3]	\sum_{s \in \Set{\pm 1}^k} \iint_{\mathbb{R}^{6k}} \widetilde{m}_{N_{\ell}}^{\myp{k}} \myp{x,p,s} \varphi \myp{x,p,s} \id x \id p \\
			&\longrightarrow \int_{\mathcal{S}} \myp[\Big]{ \sum_{s \in \Set{\pm 1}^k} \iint_{\mathbb{R}^{6k}} \mu^{\otimes k} \myp{x,p,s} \varphi \myp{x,p,s} \id x \id p } \id \mathrm{P} \myp{\mu}.
		\end{align*}
		\item For $ U \in L^{5/2} \myp{\mathbb{R}^3 \times \Set{\pm 1}} + L^{\infty} \myp{\mathbb{R}^3 \times \Set{\pm 1}} $ if $ k=1 $, and for any bounded and uniformly continuous function $ U $ on $ \myp{ \mathbb{R}^{3} \times \Set{\pm 1} }^k $ if $ k \geq 2 $, as $ \ell $ tends to infinity,
		\begin{align}
		\label{eq:stateconvweak2}
			\MoveEqLeft[3]	\frac{k!}{N_\ell^k} \sum_{s \in \Set{\pm 1}^k} \int_{\mathbb{R}^{3k}} \widetilde{\rho}_{\Psi_{N_\ell}}^{\, \myp{k}} \myp{x,s} U \myp{x,s} \id x \nonumber \\
			&\longrightarrow \int_{\mathcal{S}} \myp[\Big]{ \sum_{s \in \Set{\pm 1}^k} \int_{\mathbb{R}^{3k}} \widetilde{\rho}_{\mu}^{\, \otimes k} \myp{x,s} U \myp{x,s} \id x } \id \mathrm{P} \myp{\mu},
		\end{align}
		where
		\begin{equation*}
			\widetilde{\rho}_{\mu} \myp{x,s} = \frac{1}{\myp{2 \pi}^3} \int_{\mathbb{R}^{3}} \mu \myp{x,p,s} \id p.
		\end{equation*}
	\end{enumerate}
\end{prop}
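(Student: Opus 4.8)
The plan is to transcribe the proof of \cref{lem:stateconv}, with the phase space $\Omega$ replaced by $M = \mathbb{R}^{6} \times \Set{\pm 1}$ and the translated semi-classical measures $\widetilde{m}_N^{\myp{k}}$ playing the role of $m_{f,\Psi_N}^{\myp{k}}$. First I would invoke the preceding lemma: the sequences $\myp{\widetilde{m}_N^{\myp{k}}}_{N \geq k}$ are tight and uniformly bounded in $L^1$ and $L^{\infty}$, their weak limits $\widetilde{m}^{\myp{k}}$ satisfy $0 \leq \widetilde{m}^{\myp{k}} \leq 1$ and $\frac{1}{\myp{2\pi}^{3k}}\sum_{s}\iint \widetilde{m}^{\myp{k}} = 1$, and the compatibility relation
\begin{equation*}
	\frac{1}{\myp{2\pi}^3}\sum_{s_k = \pm 1}\iint_{\mathbb{R}^6} \widetilde{m}^{\myp{k}}\myp{x,x_k;p,p_k;s,s_k}\id x_k \id p_k = \widetilde{m}^{\myp{k-1}}\myp{x,p,s}
\end{equation*}
holds. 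These are exactly the hypotheses of the de Finetti \cref{thm:finetti} with $M = \mathbb{R}^{6}\times\Set{\pm 1}$ and $c = \myp{2\pi}^{-3}$, which produces the unique Borel probability measure $\mathrm{P}$ on $\mathcal{S}$ with $\widetilde{m}^{\myp{k}} = \int_{\mathcal{S}} \mu^{\otimes k}\id\mathrm{P}\myp{\mu}$ in the sense of measures. Part (1) then follows at once: for $\varphi \in L^1\myp{M^k} + L^{\infty}\myp{M^k}$ the preceding lemma gives $\sum_s\iint \widetilde{m}_{N_\ell}^{\myp{k}}\varphi \to \sum_s\iint\widetilde{m}^{\myp{k}}\varphi$, and $\sum_s\iint\widetilde{m}^{\myp{k}}\varphi = \int_{\mathcal{S}}\myp{\sum_s\iint\mu^{\otimes k}\varphi}\id\mathrm{P}\myp{\mu}$ by definition of the barycentric measure.

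For part (2) I would, given a bounded $U$ on $\myp{\mathbb{R}^3\times\Set{\pm1}}^k$, apply part (1) to $\varphi\myp{x,p,s} := U\myp{x,s}$. Because the momentum translation $p\mapsto p - b\widetilde{A}\myp{x}$ does not change position densities, \cref{lem:momentint} yields
\begin{equation*}
	\sum_s\iint \widetilde{m}_N^{\myp{k}}\varphi\id x\id p = \myp{2\pi\hbar}^{3k} k!\sum_s\int_{\mathbb{R}^{3k}}\myp[\big]{\widetilde{\rho}_{\Psi_N}^{\,\myp{k}}\ast\myp{\abs{f^{\hbar\alpha}}^2}^{\otimes k}}\myp{x,s}U\myp{x,s}\id x,
\end{equation*}
whereas on the limiting side $\int_{\mathbb{R}^{3k}}\mu^{\otimes k}\myp{x,p,s}\id p = \myp{2\pi}^{3k}\widetilde{\rho}_{\mu}^{\,\otimes k}\myp{x,s}$. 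Since $\hbar^{3k}N^k = \myp{1+\beta_N}^{3k/5}\to 1$ by \eqref{eq:scaling}, one has $\myp{2\pi\hbar}^{3k}k! = \myp{2\pi}^{3k}\frac{k!}{N^k}\myp{1+o\myp{1}}$, so part (1) gives
\begin{equation*}
	\frac{k!}{N^k}\sum_s\int_{\mathbb{R}^{3k}}\myp[\big]{\widetilde{\rho}_{\Psi_N}^{\,\myp{k}}\ast\myp{\abs{f^{\hbar\alpha}}^2}^{\otimes k}}\myp{x,s}U\myp{x,s}\id x \longrightarrow \int_{\mathcal{S}}\myp[\Big]{\sum_s\int_{\mathbb{R}^{3k}}\widetilde{\rho}_{\mu}^{\,\otimes k}\myp{x,s}U\myp{x,s}\id x}\id\mathrm{P}\myp{\mu}.
\end{equation*}
Finally the convolution is removed exactly as in \cref{lem:stateconv}: $\frac{k!}{N^k}\widetilde{\rho}_{\Psi_N}^{\,\myp{k}}$ is bounded in $L^1\myp{\myp{\mathbb{R}^3\times\Set{\pm1}}^k}$, $\hbar\alpha\to 0$ makes $\abs{f^{\hbar\alpha}}^2$ an approximate identity, and $\normt{U - U\ast\myp{\abs{f^{\hbar\alpha}}^2}^{\otimes k}}{\infty}\to 0$ for uniformly continuous bounded $U$, so $\frac{k!}{N^k}\widetilde{\rho}_{\Psi_N}^{\,\myp{k}}$ and $\frac{k!}{N^k}\widetilde{\rho}_{\Psi_N}^{\,\myp{k}}\ast\myp{\abs{f^{\hbar\alpha}}^2}^{\otimes k}$ share the same weak limit. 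For $k=1$ the class of admissible test functions is enlarged to $L^{5/2}\myp{\mathbb{R}^3\times\Set{\pm1}}+L^{\infty}\myp{\mathbb{R}^3\times\Set{\pm1}}$ using the uniform $L^{5/3}$ bound on $\frac1N\widetilde{\rho}_{\Psi_N}^{\,\myp{1}}$ and its tightness, both furnished by \cref{lem:densitybound,lem:ltbounds}.

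I do not expect a genuine obstacle here. The one feature absent from \cref{sec:lowerbound} — the unboundedness of $b$ when $\beta_N\to 0$ slowly — has already been absorbed upstream through the choice $\alpha = \myp{1+b}^{-1}$, the bounds of \cref{lem:errorterms}, the approximation \eqref{eq:kinapprox}, and the momentum translation defining $\widetilde{m}_N^{\myp{k}}$, so that by the time this proposition is reached the argument is a pure transcription of the one in \cref{sec:lowerbound}. The only point requiring a little care is the bookkeeping of the constants $\myp{2\pi\hbar}^{3k}$ versus $\myp{2\pi}^{3k}$ together with the factor $\hbar^{3k}N^k\to 1$, as recorded above.
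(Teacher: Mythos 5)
Your proposal is correct and is essentially the argument the paper intends, since the paper states this proposition without a separate proof, relying on the reader to transcribe the proof of \cref{lem:stateconv} exactly as you do (de Finetti on $M=\mathbb{R}^6\times\{\pm1\}$ with $c=(2\pi)^{-3}$, then \cref{lem:momentint} plus the invariance of the $p$-integral under translation, the bookkeeping $\hbar^{3k}N^k\to 1$, removal of the convolution by uniform continuity, and the $k=1$ extension via \cref{lem:densitybound}). No gaps.
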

\begin{lem}
	Suppose that we have a sequence $ \Psi_N \in \bigwedge^N L^2 \myp{\mathbb{R}^3; \mathbb{C}^2} $ satisfying $ \innerp{ \Psi_N }{ H_{N,\beta_N} \Psi_N } = E \myp{N,\beta_N} + o\myp{N} $.
	Then, denoting by $ \rho_{\mu} $ the spin-summed position density of $ \mu $, we have
	\begin{align*}
		\MoveEqLeft[3]	\liminf_{N \to \infty} \innerp[\Big]{ \Psi_N }{ \frac{1}{N} \sum\limits_{j=1}^N \myp{\bm{\sigma} \cdot \myp{-i \hbar \nabla + b A \myp{x_j}}}^2 \Psi_N } \\
		&\geq \frac{1}{\myp{2 \pi}^3} \int_{\mathcal{S}} \myp[\Big]{ \sum_{s= \pm 1} \iint_{\mathbb{R}^6} p^2 \mu \myp{x,p,s} \id p \id x } \id \mathrm{P} \myp{\mu},
	\end{align*}
	\begin{equation*}
		\liminf_{N \to \infty} \innerp[\Big]{ \Psi_N }{ \frac{1}{N} \sum\limits_{j=1}^{N} V \myp{x_j} \Psi_N } 
		\geq \int_{\mathcal{S}} \myp[\Big]{ \int_{\mathbb{R}^3} V \myp{x} \rho_{\mu} \myp{x} \id x} \id \mathrm{P} \myp{\mu},
	\end{equation*}
	and
	\begin{align*}
		\MoveEqLeft[3]	\lim_{N \to \infty} \innerp[\Big]{ \Psi_N }{ \frac{1}{N^2} \sum\limits_{1 \leq j < k \leq N} w \myp{x_j-x_k} \Psi_N } \\
		&= \frac{1}{2} \int_{\mathcal{S}} \myp[\Big]{ \int_{\mathbb{R}^6} w \myp{x-y} \rho_{\mu} \myp{x} \rho_{\mu} \myp{y} \id x \id y} \id \mathrm{P} \myp{\mu}.
	\end{align*}
\end{lem}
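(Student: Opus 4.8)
The plan is to follow the same two-step scheme as \cref{lem:energykinlowbound,lem:energypotlowbound} in the strong-field case, but now working with the momentum-translated semi-classical measures $\widetilde m_N^{(k)}$ of \cref{sec:lowerbound2} and the de Finetti measure $\mathrm P$ on $\mathcal S$ supplied by the convergence-of-states proposition just above. The first observation is that since $\innerp{\Psi_N}{H_{N,\beta_N}\Psi_N}=E(N,\beta_N)+o(N)\le CN$ by \cref{prop:energyupbound}, \cref{lem:ltbounds} applies and the kinetic, potential, and interaction energies per particle are all bounded; in particular $\frac1N\rho_{\Psi_N}^{(1)}$ is bounded in $L^{5/3}$ by \cref{lem:densitybound}.

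For the kinetic energy I would start from \eqref{eq:kinapprox}, which after the momentum translation $\widetilde m_N^{(1)}(x,p,s)=m_{f,\Psi_N}^{(1)}(x,p-bA(x),s)$ becomes
\begin{equation*}
	\Big\langle\Psi_N\,\Big|\,\tfrac1N\sum_{j=1}^N\big(\bm{\sigma}\cdot\big(-i\hbar\nabla_j+bA(x_j)\big)\big)^2\Psi_N\Big\rangle=\frac{1}{(2\pi\hbar)^3N}\sum_{s=\pm1}\iint_{\mathbb R^6}p^2\,\widetilde m_N^{(1)}(x,p,s)\id x\id p+o(1).
\end{equation*}
Because $\hbar^3N=(1+\beta_N)^{3/5}\to1$ the prefactor tends to $(2\pi)^{-3}$, and for fixed $R>0$ the cutoff function $(x,p,s)\mapsto p^2\mathds{1}_{\{\abs x+\abs p\le R\}}$ lies in $L^1+L^\infty$, so the convergence-of-states proposition gives $\liminf_N$ of the left-hand side bounded below by $(2\pi)^{-3}\int_{\mathcal S}\big(\sum_s\iint_{\abs x+\abs p\le R}p^2\mu\big)\id\mathrm P(\mu)$; letting $R\to\infty$ by monotone convergence yields the stated kinetic bound. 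The potential bound is obtained the same way: using that $V\in L^{5/2}_{\mathrm{loc}}$ is confining together with the spin-summed $k=1$ case of \eqref{eq:stateconvweak2}, one gets $\frac1N\int_{\abs x\le R}V\rho_{\Psi_N}^{(1)}\to\int_{\mathcal S}\int_{\abs x\le R}V\rho_\mu\id\mathrm P(\mu)$ and then passes $R\to\infty$.

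For the interaction I would split $w=w_1+w_2$ with $w_1\in L^{3/2}\cap L^{5/2}$ and $\norm{w_2}_\infty$ arbitrarily small, and approximate $w_1$ by $w_0\in C_c(\mathbb R^3)$. The $(w-w_0)$-contribution to $\frac1{N^2}\innerp{\Psi_N}{\sum_{j<k}(\nonarg)\Psi_N}=\frac1{N^2}\iint(w-w_0)(x-y)\rho_{\Psi_N}^{(2)}$ is controlled uniformly in $N$ by the Lieb--Thirring density estimate \eqref{eq:ltdensity}, and the analogous contribution on the limiting side is controlled by Young's inequality once $\int_{\mathcal S}\norm{\rho_\mu}_{5/3}\id\mathrm P(\mu)<\infty$. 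This last integrability I would deduce from the already-proved kinetic bound: by the bathtub principle (the weak-field analogue of \cref{lem:functionalminima}, i.e.\ the identity $E^{\sssup{TF}}=E^{\sssup{Vla}}(0)$) the kinetic term dominates $\tfrac35c_{\sssup{TF}}\int_{\mathcal S}\norm{\rho_\mu}_{5/3}^{5/3}\id\mathrm P(\mu)$, which is finite, whence Markov's/H\"older's inequality gives $\int_{\mathcal S}\norm{\rho_\mu}_{5/3}\id\mathrm P(\mu)<\infty$. It then suffices to prove the convergence for $w_0\in C_c$, where $(x,y)\mapsto w_0(x-y)$ is bounded and uniformly continuous on $\mathbb R^3\times\mathbb R^3$; applying \eqref{eq:stateconvweak2} with $k=2$ to the spin-independent test function $U(x_1,s_1,x_2,s_2)=w_0(x_1-x_2)$, summing over spins, and dividing by $2$ produces exactly the claimed limit.

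The main obstacle is confined to the interaction term, on both sides of the limit. On the $N$-side one must check that the error terms appearing in \cref{lem:kinenergy}---estimated via \cref{lem:errorterms} and the choice \eqref{eq:alphascale}, so of order at most $\hbar b+\hbar\alpha^{-1}$, both of which tend to $0$---are genuinely $o(N)$, so that the clean identity \eqref{eq:kinapprox} is available even in the subcase $b\to\infty$; on the limiting side one needs the uniform integrability $\int_{\mathcal S}\norm{\rho_\mu}_{5/3}\id\mathrm P(\mu)<\infty$ described above in order to replace $w$ by $w_0$. The kinetic and potential lower bounds themselves are routine given the convergence-of-states proposition and the confining hypothesis on $V$.
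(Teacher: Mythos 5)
Your proposal is correct and follows the same route as the paper: the paper's proof is a two-line reference back to \cref{lem:energykinlowbound,lem:energypotlowbound}, using the momentum-translated identity \eqref{eq:kinapprox}, and you have simply unfolded those references — cutoff $\{|x|+|p|\le R\}$ plus monotone convergence for the kinetic term, weak $L^{5/3}$ convergence for the potential, and the $w=w_1+w_2$, $w_1\approx w_0\in C_c$ approximation with \eqref{eq:ltdensity}, the bathtub bound giving $\int_{\mathcal S}\normt{\rho_\mu}{5/3}\id\mathrm P<\infty$, and Young's inequality for the interaction.
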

\begin{proof}
	For the kinetic energy term, simply use that
	\begin{align*}
		\MoveEqLeft[3]	\innerp[\Big]{ \Psi_N }{ \sum\limits_{j=1}^N \myp{ \bm{\sigma} \cdot \myp{-i\hbar \nabla_j + bA \myp{x_j}}}^2 \Psi_N } \\
		&= \frac{1}{\myp{2 \pi \hbar}^3} \sum_{s= \pm 1} \iint_{\mathbb{R}^6} p^2 \widetilde{m}_N^{\myp{1}} \myp{x,p,s} \id x \id p + o \myp{N},
	\end{align*}
	and proceed as in the proof of \cref{lem:energykinlowbound}.
	The convergence of the potential energy terms follows exactly as in \cref{lem:energypotlowbound}.
\end{proof}
\begin{proof}[Proof of \cref{thm:energylowbound} and \cref{thm:convstates}, weak fields]
	We find exactly as in the previous cases that
	\begin{equation*}
		E_0^{\sssup{Vla}}
		\geq \liminf_{N \to \infty} \frac{\innerp{ \Psi_N }{ H_{N,\beta_N} \Psi_N }}{N}
		\geq \int_{\mathcal{S}} \mathcal{E}_0^{\sssup{Vla}} \myp{\mu} \id \mathrm{P} \myp{\mu}
		\geq E_0^{\sssup{Vla}},
	\end{equation*}
	finishing the proof of \cref{thm:energylowbound}, and implying that the de Finetti measure $ \mathrm{P} $ is supported on the set of minimizers of $ \mathcal{E}_0^{\sssup{Vla}} $.
	Since these are independent of the spin variable by \cref{rem:vlamin}, the convergence of states \eqref{eq:stateconvweak2} becomes
	\begin{align*}
		\MoveEqLeft[3]	\frac{k!}{N_\ell^k} \sum_{s \in \Set{\pm 1}^k} \int_{\mathbb{R}^{3k}} \widetilde{\rho}_{\Psi_{N_\ell}}^{\, \myp{k}} \myp{x,s} U \myp{x,s} \id x \\
		&\longrightarrow \int_{\mathcal{S}} \myp[\Big]{ \frac{1}{2^k} \sum_{s \in \Set{\pm 1}^k} \int_{\mathbb{R}^{3k}} \rho_{\mu}^{\, \otimes k} \myp{x} U \myp{x,s} \id x } \id \mathrm{P} \myp{\mu}.
	\end{align*}
	Using $ \mathrm{P} $ to induce a measure on the set of minimizers of the Thomas-Fermi functional concludes the proof of \cref{thm:convstates}.
\end{proof}

\subsection*{\textbf{Acknowledgement}}
The authors were partially supported by the Sapere Aude grant DFF--4181-00221 from the Independent Research Fund Denmark.
Part of this work was carried out while both authors visited the Mittag-Leffler Institute in Stockholm, Sweden.

\bibliographystyle{siam} 
\bibliography{biblio} 
\end{document}